\newcommand{\eqdef} {\stackrel{\rm def}{=}}
\newcommand{\lp}  {\left(}
\newcommand{\rp}  {\right)}
\newcommand{\Br}{\overline}
\newcommand{\no}{\nonumber}
\newcommand{\bqa}{\begin{eqnarray}}
\newcommand{\eqa}{\end{eqnarray}}
\newcommand{\be}{\begin{equation}}
\newcommand{\ee}{\end{equation}}
\newcommand{\al}{\alpha}
\newcommand{\ga}{\gamma}
\newcommand{\de}{\delta}
\newcommand{\la}{\lambda}
\newcommand{\ta}{\tau}
\newcommand{\ch}{\chi}
\newcommand{\Om}{\Omega}
\newcommand{\Si}{\Sigma}
\newcommand{\Ga}{\Gamma}
\theoremstyle{plain}
\newtheorem{thm}{Theorem}[section]
\newtheorem{lemma}[thm]{Lemma}
\newtheorem{prop}[thm]{Proposition}
\newtheorem{cor}[thm]{Corollary}
\theoremstyle{plain}
\newtheorem{defn}{Definition}[section]
\newtheorem*{rem}{Remark}
\newcommand{\beqa}{\begin{eqnarray}}
\newcommand{\eeqa}{\end{eqnarray}}
\newcommand{\bea}{\begin{eqnarray}}
\newcommand{\eea}{\end{eqnarray}}
\renewcommand{\mathbf}{\boldsymbol}
\newcommand{\encv}{{non-commutative}}
\newcommand{\les}{\leqslant}
\newcommand{\ges}{\geqslant}
\def\ZZ{{\mathchoice {\mathsf{Z \hspace{-0.45em} Z}} {\mathsf{Z 
        \hspace{-0.45em} Z}} {\mathsf{Z \hspace{-0.32em} Z}} 
    {\mathsf{Z \hspace{-0.23em} Z}}}}
\newcommand{\N}{\mathbb{N}}
\newcommand{\R}{\mathbb{R}}
\newcommand{\C}{\mathbb{C}}
\newcommand{\bbbone}{{\mathds{1}}}
\newcommand{\lnat}{\llbracket} 
\newcommand{\rnat}{\rrbracket} 
\newcommand{\defi}{\stackrel{\text{\tiny def}}{=}}
\newcommand{\wed}{\wedge}
\def\lbt{\left(}
  \def\rbt{\right)}
\def\labs{\left |}
  \def\rabs{\right |}
\def\lb{\left \{}
  \def\rb{\right \}}
\def\lsb{\left [}
  \def\rsb{\right ]}
\DeclareMathOperator*{\Tr}{Tr}
\DeclareMathOperator*{\seq}{\simeq}
\newcommand {\tqs}{\mathrel{:}}
\newcommand{\GN}{\text{GN}^{2}_{\Theta}}
\def\xt{\widetilde{x}}
\def\Ot{\widetilde{\Omega}}
\def\ps{\slashed{p}}
\def\xs{\slashed{x}}
\def\ys{\slashed{y}}
\def\xts{\slashed{\widetilde{x}}}
\def\yts{\slashed{\widetilde{y}}}
\def\psib{\bar{\psi}}
\newcommand\scF{{\mathscr F}}
\newcommand\cA{{\mathcal A}}
\newcommand\cF{{\mathcal F}}
\newcommand\cG{{\mathcal G}}
\newcommand\cI{{\mathcal I}}
\newcommand\cL{{\mathcal L}}
\newcommand\cM{{\mathcal M}}
\newcommand\cO{{\mathcal O}}
\newcommand\cR{{\mathcal R}}
\newcommand\cS{{\mathcal S}}
\newcommand\cW{{\mathcal W}}
\newcommand{\noi}{\noindent}
\numberwithin{equation}{section}
\begin{document}

\title{Non-Commutative Renormalization}
\author{Vincent {\sc Rivasseau}\\
Laboratoire de Physique Th\'eorique\\
 B\^at.\ 210\\
Universit\'e Paris XI\\
 F--91405 Orsay Cedex, France}

\maketitle

\begin{abstract}
A new version of scale analysis and renormalization theory has been found
on the non-commutative Moyal space. 
It could be useful for physics beyond the standard model or for standard
physics in strong external field.
The good news is that quantum field theory is better behaved on non-commutative
than on ordinary space: indeed it has no Landau ghost. Noncommutativity might therefore
be an alternative to supersymmetry. We review this rapidly growing subject. 

\end{abstract}


\section{Introduction}

The world as we know it today is made of about 61 different
scales if we use powers of ten\footnote{\label{parochial} Or of about 140 $e$-folds if we want to avoid
any parochialism due to our ten fingers. What is important is to measure distances on a 
logarithmic scale.}. Indeed there is a fundamental length
obtained by combining the three fundamental constants
of physics, Newton's gravitation constant $G$, Planck's
constant $\hbar$ and the speed of light $c$. It is the Planck length $\ell_P = \sqrt{\hbar G /c^3}$, 
whose value is about $1.6 \; 10^{-35}$ meters. Below this length
ordinary space time almost certainly has to be quantized, 
so that the very notion of scale might be modified.
But there is also a maximal observable scale or ``horizon" in the universe, not for fundamental
but for practical reasons. The current distance from the Earth to the edge of the visible universe is about 46 billion light-years in any direction\footnote{The age of the universe is only about 13.7 billion years, so one could believe the observable radius would be 13.7 billion light years.
This gives already a correct order of magnitude, but in our expanding universe spacetime is 
actually curved so that distances have to be measured in comoving coordinates. The light
emitted by matter shortly after the big-bang, that is about 13.7 billion years ago, that reaches us now
corresponds to a present distance of that matter to us that is almost three times bigger, see http://en.wikipedia.org/wiki/Observable\_universe.}. This translates into a comoving radius of the visible universe of about $4.4 \; 10^{26}$ meters, or more fundamentally
$2.7 \; 10^{61}$ Planck lengths. 
Although we do not observe galaxies that far away, 
the WMAP data indicate that the universe is really at least 80\% that big \cite{CSSK}.
The geometric mean between the size of the (observable) universe and the Planck's length stands
therefore around $10^{-4}$ meters, about the size of an (arguably very small) ant. 
In \cite{RV1}, we proposed to call this the ``antropic principle".

Among the roughly sixty scales of the universe, only about ten to eleven were 
relatively well known to ancient Greeks and Romans two thousand years
ago. We have now at least some knowledge of the 45 largest scales from $2\; 10^{-19}$ meters 
(roughly speaking the scale of 1 Tev, observable at the largest particle colliders on earth) up
to the size of the universe. This means that we know about three fourths of all scales. 
But the sixteen scales between $2\; 10^{-19}$ meters 
and the Planck length form the last true \emph{terra incognita} of physics.
Note that this year the LHC accelerator at Cern with maximum energy of about
10 Tev should start opening a window into a new power of ten. 
But that truly special treat also will mark the end of an era. The next 
fifteen scales between $2.10^{-20}$ meters and the Planck length
may remain largely out of direct reach in the foreseeable future,
except for the glimpses which are expected to come from the study
of very energetic but rare cosmic rays. Just as the Palomar mountain telescope remained 
the largest in the world for almost fifty years, we expect the LHC to
remain the machine with highest energy for a rather long time until
truly new technologies emerge\footnote{New 
colliders such as the planned linear $e^+$- $e^-$ international collider might be built soon. They
will be very useful and cleaner than the LHC, but they should remain for a long time 
with lower total energy.}.
Therefore we should try to satisfy our 
understandable curiosity about the \emph{terra incognita} in the coming decades
through more and more sophisticated indirect analysis. Here theoretical and 
mathematical physics have a large part to play because they will help us to better compare and 
recoup many indirect observations, most of them probably coming from astrophysics and cosmology, 
and to make better educated guesses.

I would like now to argue both that quantum field theory and renormalization are
some of the best tools at our disposal for such educated guesses, but also 
that very likely we shall also need some generalization of these concepts. 

Quantum field theory or QFT provides a quantum description of particles and interactions which is 
compatible with special relativity \cite{Pesk}-\cite{ItZu}-\cite{Ram}-\cite{GJ}. It is certainly essential
because it lies right at the frontier of the
\emph{terra incognita}. It is the accurate formalism
at the shortest distances we know,  
between roughly the atomic scale of $10^{-10}$ meters,
at which relativistic corrections to quantum mechanics start playing a significant role\footnote{For instance quantum electrodynamics explains the Lamb shift in the 
hydrogen atom spectrum.},
up to the last known scale of a Tev or $2 \; 10^{-19}$ meters. Over the years
it has evolved into the \emph{standard model} which explains in great detail
most experiments in particle physics and is contradicted by none.
But it suffers from at least two flaws. First it is 
not yet compatible with general relativity, that is Einstein's theory of gravitation.
Second, the standard model incorporates so many different Fermionic matter fields coupled by 
Bosonic gauge fields that it seems more some kind of new Mendeleyev table than a
fundamental theory. For these two reasons QFT and the standard model
are not supposed to remain valid without any changes until the
Planck length where gravitation should be quantized. They could 
in fact become inaccurate much before that scale.

What about renormalization? Nowadays renormalization is 
considered the heart of QFT, and even much more \cite{Salm}-\cite{PoincRen}-\cite{Riv1}.
But initially renormalization was little more than a trick, a quick fix to remove the divergences 
that plagued the computations of quantum electrodynamics. These divergences
were due to summations over exchanges of virtual particles with high momenta. 
Early renormalization theory succeeded in hiding 
these divergences into unobservable \emph{bare} parameters of the theory. In this way
the physical quantities, when expressed in terms of the \emph{renormalized}
parameters at observable scales, no longer showed any divergences. Mathematicians
were especially scornful. But many physicists also were not fully satisfied.
F. Dyson, one of the founding fathers
of that early theory, once told me: ``We believed renormalization would not last 
more than six months, just the time for us to invent something better..."

Surprisingly, renormalization survived and prospered. 
In the mid 50's Landau and others found a key difficulty, called the Landau ghost or triviality
problem, which plagued simple renormalizable QFT such as the $\phi^4_4$ theory
or quantum electrodynamics. Roughly speaking Landau showed that the
infinities supposedly taken out by renormalization were still there, 
because the bare coupling corresponding to
a non zero renormalized coupling became infinite at a very small but finite scale.
Although his argument was not mathematically fully rigorous, many physicists 
proclaimed QFT and renormalization dead and looked for a better theory. 
But in the early 70's, against all odds, they both made a spectacular comeback.
As a double consequence of better experiments but also of better computations,
quantum electrodynamics was demoted of its possibly fundamental status and incorporated 
into the larger electroweak theory of Glashow, Weinberg and Salam. 
This electroweak theory is still a QFT but with a non-Abelian gauge symmetry.
Motivated by this work 't Hooft and Veltman proved that renormalization
could be extended to non-Abelian gauge theories \cite{tHVe}. This difficult technical feat
used the new technique of dimensional renormalization to better respect the gauge symmetry. 
The next and key step was the extraordinary discovery that such non-Abelian gauge theories no longer
have any Landau ghost. This was done first by 't Hooft in some unpublished work, then by 
D. Gross, H. D. Politzer and F. Wilczek \cite{GWil1}-\cite{Pol}. D. Gross and F. Wilczek then used this 
discovery to convincingly formulate a non-Abelian gauge theory
of strong interactions \cite{GWil2}, the ones which govern nuclear forces, which they
called quantum chromodynamics. Remark that in 
every key aspect of this striking recovery, renormalization was no longer 
some kind of trick. It took a life of its own.

But as spectacular as this story might be, something even more important happened
to renormalization  around that time. In the hands of K. Wilson \cite{Wil} and others,
renormalization theory went out of its QFT cradle. Its scope expanded considerably.
Under the alas unfortunate name of renormalization group (RG), 
it was recognized as the right mathematical technique 
to move through the different scales of physics. 
More precisely over the years it became 
a completely general paradigm to study changes of scale,
whether the relevant physical phenomena are classical or quantum, 
and whether they are deterministic or statistical.
This encompasses in particular the full Boltzmann's program to deduce thermodynamics 
from statistical mechanics and potentially much more.
In the  hands of Wilson, Kadanoff, Fisher and followers, RG 
allowed to much better understand phase transitions in statistical mechanics, 
in particular the universality of critical exponents \cite{Zinn}.
The fundamental observation of K. Wilson was
that the change from bare to renormalized actions is too complex  a phenomenon 
to be described in a single step. Just like the
trajectory of a complicated dynamical system, it must be studied
step by step through a \emph{local} evolution equation. To summarize, do not jump over 
many scales at once!

Let us make a comparison between renormalization and geometry.
To describe a manifold, one needs a covering set of maps or atlas with crucial
\emph{transition} regions which must appear on different maps
and which are glued through transition functions.
One can then describe more complicated objects, such as bundles
over a base manifold, through connections which allow to parallel
transport objects in the fibers when one moves over the base.

Renormalization theory is both somewhat similar and somewhat different. It is some kind of geometry
with a very sophisticated infinite dimensional``bundle" part which loosely speaking 
describes the  effective actions. These actions 
flow in some infinite dimensional functional space. But 
at least until now the ``base" part is quite trivial:
it is a simple one-dimensional positive real axis, better viewed in fact as a full real axis if
we use logarithmic scales. We have indeed both positive and negative scales around
a reference scale of observation The negative or small spatial scales are called \emph{ultraviolet}
and the positive or large ones are called \emph{infrared} in reference to the 
origin of the theory in electrodynamics. An elementary step from one scale to the next is
called a renormalization group step. K. Wilson understood that there is an analogy
between this step and the elementary evolution step of a dynamical system. This analogy allowed
him to bring the techniques of classical dynamical systems into renormalization theory.
One can say that he was able to see the classical structure hidden in QFT.

Working in the direction opposite to K. Wilson,
G. Gallavotti and collaborators were able to see the quantum field theory
structure hidden in classical dynamics. For instance they understood secular averages
in celestial mechanics as a kind of renormalization \cite{GBGG}-\cite{GMas}. 
In classical mechanics, small denominators play the role
of high frequencies or ultraviolet divergences
in ordinary RG. The interesting physics consists 
in studying the long time behavior of the classical trajectories,
which is the analog of the infrared or large distance effects in 
statistical mechanics.

At first sight the classical structure discovered by Wilson in QFT and the quantum structure
discovered by Gallavotti and collaborators in classical mechanics are both surprising because 
classical and QFT perturbation theories look very different. Classical perturbation theory,
like the inductive solution of any deterministic equation, is indexed by trees, whether 
QFT perturbation theory is  indexed by more complicated ``Feynman graphs", 
which  contain the famous ``loops" of anti-particles responsible for the 
ultraviolet divergences \footnote{Remember that one can interpret 
antiparticles as going backwards in time.}. But the classical trees 
hidden inside QFT were revealed in many steps,
starting with  Zimmermann (which called them forests...)\cite{Zimm} through 
Gallavotti and many others,
until Kreimer and Connes viewed them as generators of Hopf algebras \cite{Kreimer:1997dp,Connes:2000uq,Connes:2001kx}. Roughly speaking 
the trees were hidden because they are not just subgraphs of the Feynman graphs.
They picture abstract inclusion relations
of the short distance connected components of the graph within the bigger components at larger scales.
Gallavotti and collaborators understood why there is a structure on the  trees which index
the classical Poincar\'e-Lindstedt perturbation series
similar to Zimmermann's forests in quantum field perturbation theory\footnote{In addition Gallavotti
also remarked that antimatter loops in Feynman graphs
can just be erased by an appropriate choice of non-Hermitian field interactions \cite{Gal}.}.

Let us make an additional remark which points to another fundamental
similarity between renormalization group flow
and time evolution. Both seem naturally \emph{oriented} flows.
Microscopic laws are expected to determine macroscopic laws, not the converse.
Time runs from past to future and entropy increases rather than decreases.
This is philosophically at the heart of standard determinism.
A key feature of Wilson's RG is to have defined in a mathematically precise way 
\emph{which} short scale  information should be forgotten through coarse graining: it is the 
part corresponding to the \emph{irrelevant operators} in the action.
But coarse graining is also fundamental
for the second law in statistical mechanics, which is the only law
in classical physics which is ``oriented in time" and also the one 
which can be only understood in terms of change of scales.

Whether this arrow common to RG and to time evolution
is of a cosmological origin remains to be further
investigated. We remark simply here that
in the distant past the big bang has to be explored and understood
on a logarithmic time scale. At the beginning of our universe
important physics is the one at very short distance.
As time passes and the universe evolves, physics at longer distances,
lower temperatures and lower momenta becomes literally visible.
Hence the history of the universe itself can be 
summarized as a giant unfolding of the renormalization group.

This unfolding can then be specialized into many
different technical versions depending on the particular 
physical context, and the particular problem at hand. RG has the potential to provide\
microscopic explanations for many phenomenological theories.
Hence it remains today a very active subject, with several 
important new brands developed in the two last decades at various levels of 
physical precision and of mathematical rigor.
To name just a few of these brands:

- the RG around extended singularities governs the quantum behavior of condensed
matter \cite{FT1}\cite{FT2}\cite{BG}. 
It should also govern the propagation of wave fronts and the long-distance
scattering of particles in Minkowski space.
Extended singularities alter dramatically the behavior of the renormalization group.
For instance because the dimension of the extended singularity of the Fermi surface
equals that of the space itself minus one, hence that of space-time minus \emph{two},
local quartic Fermionic interactions in condensed matter in \emph{any} dimension have
the same power counting than \emph{two} dimensional Fermionic field theories.
This means that condensed matter in \emph{any} dimension is similar 
to \emph{just renormalizable} field theory. Among the main consequences,
there is no critical mean field dimension in condensed matter except at infinity, but
there is a rigorous way to handle non perturbative phase transitions such as the
BCS formation of superconducting pairs through a dynamical $1/N$ expansion \cite{FMRT1}.

- the RG trajectories in dimension 2 between conformal theories
with different central charges have been pioneered in \cite{Zam}. 
Here the theory is less advanced, but again the $c$-theorem is a 
very tantalizing analog of Boltzmann's H-theorem.

- the functional RG of \cite{Wie} governs the behavior of many disordered systems.
It might have wide applications from spin glasses to surfaces.

\medskip

Let us return to our desire to glimpse into
the \emph{terra incognita} from currently known physics. We
are in the uncomfortable situations of salmons returning to their birthplace, since
we are trying to run against the RG flow. Many different bare actions lead to
the same effective physics, so that we may be lost in a maze.
However the region of \emph{terra incognita} closest to us is still far from the Planck scale.
In that region we can expect that any non renormalizable terms in the action 
generated at the Planck scale have been washed out by the RG flow
and renormalizable theories should still dominate 
physics. Hence renormalizability remains a guiding principle
to lead us  into the maze of speculations
at the entrance of \emph{terra incognita}. Of course we should also be alert and ready to incorporate
possible modifications of QFT as we progress towards the Planck scale, since we know 
that quantization of gravity at that scale will not happen through standard field theory.

String theory \cite{GSW} is currently the leading candidate for such a quantum theory of gravitation.
Tantalizingly the spectrum of massless particles of the closed string contains particles
up to spin 2, hence contains a candidate for the graviton. Open strings 
only contain spin one massless particles such as gauge Bosons. 
Since closed strings must form out of open strings through interactions,
it has been widely argued that string theory provides an
explanation for the existence of quantum gravity as a necessary complement
to gauge theories. This remains the biggest success of the theory up to now. 
It is also remarkable that string theory (more precisely membrane theory) allows 
some microscopic derivations of the Beckenstein-Hawking formula
for blackhole entropy \cite{StroVafa}. 

String theory also predicts two crucial features which are unobserved up to now, 
supersymmetry and six or seven new Kaluza-Klein dimensions of space time at short distance. 
Although no superpartner of any real particle has been found yet, there are some
indirect indications of supersymmetry, such as the careful study of the flows of the
running non-Abelian standard model gauge couplings\footnote{The three couplings join better at a single
very high scale if supersymmetry is included in the picture. Of course sceptics can remark that 
this argument requires to continue these flows deep within \emph{terra incognita}, where new physics
could occur.}. Extra dimensions might also be welcome, especially if they are significantly larger than the Planck scale, because they might provide an explanation 
for the puzzling weakness of gravitation with respect to other forces.
Roughly speaking gravitation could be weak because in string theory it propagates very naturally into 
such extra dimensions in contrast with other interactions which may remain stuck to our ordinary 
four dimensional universe or ``brane".

But there are several difficulties with string theory which cast some doubt on its usefulness
to guide us into the first scales of \emph{terra incognita}. First the theory
is really a very bold stroke to quantize gravity at the Planck scale, very far from current observations.
This giant leap runs directly against the step by step philosophy of the RG.
Second the mathematical structure of string theory is complicated 
up to the point where it may become depressing. For instance great effort
is needed to put the string theory at two loops on some rigorous footing \cite{HokPhong},
and three loops seem almost hopeless.
Third, there was for some time the hope that string theory and the phenomenology at lower energies derived
from it might be unique. This hope has now vanished with the discovery of a very complicated \emph{landscape} of different possible string vacua and associated long distance 
phenomenologies.

In view of these difficulties some physicists have started to openly criticize 
what they consider a disproportionate amount of intellectual resources 
devoted to the study of string theory compared to other alternatives \cite{Smolin}.

I do not share these critics. I think in particular that string theory has been
very successful as a brain storming tool. It has lead already to 
many spectacular insights into pure mathematics and geometry.
But my personal bet would be that if somewhere in the mountains near the Planck scale
string theory might be useful, or even correct, we should also search for other 
complementary and more reliable principles to guide us in the maze of waterways
at the entrance of \emph{terra incognita}. If these other complementary principles
turn out to be compatible with string theory at higher scales, so much the better.

It is a rather natural remark that since gravity alters the very geometry of ordinary space,
any quantum theory of gravity should quantize ordinary space,
not just the phase space of mechanics, as quantum mechanics does.
Hence at some point at or before the Planck scale we should expect 
the algebra of ordinary coordinates or observables 
to be generalized to a non commutative algebra.
Alain Connes, Michel Dubois-Violette, Ali Chamseddine and others have forcefully advocated 
that the \emph{classical} Lagrangian of the current standard
model arises much more naturally on simple non-commutative geometries
than on ordinary commutative Minkowsky space. We refer to Alain's lecture here for these arguments.
They remain in the line of Einstein's classical unification of Maxwell's electrodynamics 
equations through the introduction of a new four dimensional space-time. The next logical step seems to 
find the analog of quantum electrodynamics. It should be quantum field theory on 
non-commutative geometry, or NCQFT. The idea of NCQFT goes back at least to 
Snyders \cite{Snyders}.

A second line of argument ends at the same conclusion. String theorists 
realized in the late 90's that NCQFT is an effective theory of strings
\cite{a.connes98noncom,Seiberg1999vs}. Roughly this 
is because in addition to the symmetric tensor $g_{\mu\nu}$ the spectrum 
of the closed string also contains an antisymmetric tensor $B_{\mu\nu}$. There is no reason
for this antisymmetric tensor not to freeze at some lower scale into a classical field,
just as $g_{\mu\nu}$ is supposed to freeze into the classical metric of
Einstein's general relativity. But such a freeze of $B_{\mu\nu}$ 
precisely induces an effective non commutative geometry. 
In the simplest case of flat Riemannian metric and trivial constant antisymmetric tensor,
the geometry is simply of the Moyal type; it reduces to 
a constant anticommutator between (Euclidean) space-time coordinates. 
This made NCQFT popular among string theorists. A good review of these ideas 
can be found in \cite{DouNe}

These two lines of arguments, starting at both ends of \emph{terra incognita}
converge to the same conclusion: there should be an intermediate regime
between QFT and string theory where NCQFT is the right formalism. 
The breaking of locality and the appearance of cyclic-symmetric rather than
fully symmetric interactions in NCQFT is fully consistent with this intermediate
status of NCQFT between fields and strings. The ribbon graphs 
of NCQFT may be interpreted either as ``thicker particle world-lines"
or as ``simplified open strings world-sheets"
in which only the ends of strings appear but not yet their internal
oscillations. 
However until recently a big stumbling block remained. The simplest NCQFT on Moyal space,
such as $\phi^{\star 4}_4$, were found not to be renormalizable because of a surprising phenomenon called \emph{uv/ir mixing}. Roughly speaking this $\phi^{\star 4}_4$ theory still has infinitely many
ultraviolet divergent graphs but fewer than the ordinary $\phi^{4}_4$ theory. The new ``ultraviolet
convergent" graphs, such as the non-planar tadpole \includegraphics[scale=.7]{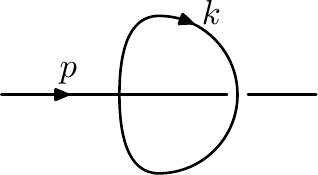} generate completely unexpected 
infrared divergences which are not of the renormalizable 
type \cite{MiRaSe}.

However three years ago the solution out of this riddle was found.
H. Grosse and R. Wulkenhaar in a brilliant series of papers
discovered how to renormalize $\phi^{\star 4}_4$ \cite{GrWu03-1,GrWu03-2,c}. 
This ``revolution" happened quietly without mediatic fanfare, but it might turn out to 
develop into a good Ariane's thread at the entrance of the maze.
Indeed remember the argument of Wilson: renormalizable theories
are the building blocks of physics because they are the ones who
survive RG flows...

It is always very interesting to develop a new brand of RG, but 
\emph{that} new brand on non commutative Moyal space is especially exciting.
Indeed it changes the very definition of scales in a new and non trivial way. 
Therefore it may ultimately change our view of locality 
and causality, hence our very view of the deterministic 
relationship from small to large distances. It is fair to say that the same is true
of string theory, where $T$-dualities also change small into large distances
and vice-versa. But in contrast with string theory, this new brand of
NCQFT is mathematically tractable, not at one or two loops, but as we shall see below,
at any  number of loops and probably even non-perturbatively!
This just means that we can do complicated computations in these NCQFT's
with much more ease and confidence than in string theory.

The goal of these lectures is to present this new set of burgeoning ideas.

We start with a blitz introduction to standard renormalization group
concepts in QFT: functional integration and Feynman graphs.
The system of Feynman graphs of the $\phi^4_4$ theory provide the simplest example
to play and experiment with the idea of renormalization. It is straightforward to analyze the
basic scaling behavior of high energy subgraphs within graphs 
of lower energy. In this way one discovers relatively  easily 
the most important physical effect under change
of the observation scale, namely the flow of the coupling constant. It leads
immediately to the fundamental difficulty associated to the short distance behavior
of the theory, namely the Landau ghost or triviality problem.
That ghost disappears in the ``asymptotically free" non-Abelian gauge theories \cite{GWil1}-\cite{Pol}. 
With hindsight this result might perhaps be viewed in a not so distant future as the first 
glimpse of NCQFT...

Grosse and Wulkenhaar realized that previous studies of NCQFT had used the wrong propagator! 
Moyal interactions were noticed to obey a certain Langmann-Szabo duality\cite{LaSz},
which exchanges space and momentum variables.
Grosse and Wulkenhaar realized that the propagator should be modified to also 
respect this symmetry \cite{c}. 
This means that NCQFT on Moyal spaces has to be based on the Mehler kernel,
which governs propagation in a harmonic potential,
rather than on the heat kernel, which governs 
ordinary propagation in commutative space.
Grosse and Wulkenhaar were able to compute for the first time the Mehler kernel in the
\emph{matrix base} which transforms the Moyal product 
into a matrix product. This is a real {\it tour de force}! The matrix based Mehler kernel
is quasi-diagonal, and they were able to use their computation to prove 
perturbative renormalizability of the theory, up to some estimates 
which were finally proven in \cite{Rivasseau2005bh}. 

By matching correctly propagator and interaction to respect symmetries, 
Grosse and Wulkenhaar were following one of the main successful thread of 
quantum field theory. Their renormalizability result is in the direct footsteps of 
't Hooft and Veltman, who did the same for non Abelian gauge theories thirty years before.
However I have often heard two main critics raised, which I would like to answer here. 

The first critic is that it is no wonder that adding a harmonic potential gets rid 
of the infrared problem. It is naive because the harmonic potential 
is the only partner of the Laplacian under LS duality. No other infrared 
regulator would make the theory renormalizable. 
The theory has infinitely many degrees of freedom, and
infinitely many divergent graphs, so the new BPHZ theorem 
obtained by Grosse and Wulkenhaar is completely non-trivial.
In fact now that the RG flow corresponding to these theories is better understood, 
we understand the former uv/ir mixing just as an ordinary anomaly 
which signaled a missing marginal term in the Lagrangian under that RG flow.

The second and most serious critic is that since the
Mehler kernel is not translation invariant, the Grosse and Wulkenhaar ideas will
 never be able to describe any mainstream physics in which there should be no preferred origin. 
This is just \emph{wrong} but for a more subtle reason.
We have shown that the Grosse-Wulkenhaar method can be extended to renormalize
theories such as the Langmann-Szabo-Zarembo $\bar\phi\star\phi \star\bar\phi\star\phi$ 
model \cite{Langmann2003if,Langmann2003cg,Langmann2002ai} 
in four dimensions or the Gross-Neveu model in two dimensions. 
In these theories the ordinary Mehler kernel is replaced by a related kernel which governs propagation 
of charged particles in a constant background field. This kernel, which we now propose to call
the \emph{covariant Mehler kernel}\footnote{Initially we called such NCQFT theories \emph{critical}, but it was pointed to us that this word may create
confusion with critical phenomena, so we suggest now to call them
\emph{covariant}.}, is still not translation
invariant because it depends on non translation-invariant gauge choice. 
It oscillates rather than decays when particles move away from a preferred origin.
But in such theories physical observables, which are gauge invariant, do not feel 
that preferred origin. That's why translation invariant phenomena can be described! 

We proposed to call the whole new class of NCQFT theories built either on the Mehler kernel 
or on its covariant generalizations \emph{vulcanized} (may be we should have spelled Wulkenized?)
because renormalizability means that 
their structure resist under change of scale \footnote{\label{vulca} Vulcanization is a technological operation
which adds sulphur to natural rubber to improve its mechanical properties and its resistance to
temperature change, and temperature is a scale in imaginary time...}. 

These newly discovered vulcanized theories or NCVQFT and their associated RG flows
absolutely deserve a thorough and systematic investigation, 
not only because they may be relevant for physics beyond the standard 
model, but also (although this is often less emphasized) because they 
may provide explanation of non-trivial 
effective physics in our ordinary standard world 
whenever strong background gauge fields are present. Many examples
come to mind, from various aspects of the quantum Hall effect to 
the behavior of two dimensional charged polymers under magnetic fields or even to
quark confinement. In such cases appropriate generalizations of the 
vulcanized RG may be the right tool to show how 
the correct effective non-local interactions emerge out of local interactions. 

At the Laboratoire de physique th\'eorique at Orsay we have embarked 
on such a systematic investigation of NCVQFTs and of their RG flows. 
This program is also actively pursued elsewhere.
Let us review briefly the main recent results and open problems.

\begin{itemize}

\item{\bf Multiscale Analysis}

The initial Grosse-Wulkenhaar breakthrough used sharp cutoffs
in matrix space, which like sharp cutoffs in ordinary direct and momentum space
are not so well suited to rigorous bounds and multiscale analysis.
By replacing these cutoffs by smoother cutoffs which cut directly the Mehler
parameter into slices, we could derive rigorously the estimates that were only numerically checked
in \cite{c} hence close the last gaps in the BPHZ theorem for vulcanized 
non commutative $\phi^{\star 4}_4$
\cite{Rivasseau2005bh}. We could
also replace the somewhat cumbersome recursive use of the Polchinski equation \cite{Polch}
by more direct and explicit bounds in a multiscale analysis.

\item{\bf Direct Space}

Although non translation invariant propagators and non local vertices are unfamiliar,
the direct space representation of NCVQFT remains closer to our ordinary intuition than the matrix base. 
Using direct space methods, we have provided a new proof of the BPHZ theorem for vulcanized 
non commutative $\phi^{\star 4}_4$ \cite{xphi4-05}.
We have also extended the Grosse-Wulkenhaar results to the $\bar\phi\star\phi \star\bar\phi\star\phi$ 
LSZ model \cite{Langmann2003if}. Our proof relies on a multiscale analysis analogous to 
\cite{Rivasseau2005bh} but in direct space. It allows a more transparent understanding of
the \emph{Moyality} of the counterterms for \emph{planar} subgraphs 
at higher scales when seen through external propagators at lower scales.
This is the exact analog of the \emph{locality} in ordinary QFT of general subgraphs 
at higher scales when seen through external propagators at lower scales. Such propagators do not 
distinguish short distance details, and ordinary locality could be summarized as the obvious remark
that from far enough away any object looks roughly like a point. But
\emph{Moyality}  could be summarized as a more surprising fact: viewed from lower RG
scales\footnote{These scales being defined in the new RG sense, we no longer say ``from far away".
Although I hate to criticize, I feel a duty here to warn the reader that 
often cited previous ``proofs of Moyality" such as \cite{Chepelev2000hm,CheRoi} 
should be dismissed. The main theorem in \cite{Chepelev2000hm}, whose proof 
never appeared, is simply wrong; and even more importantly 
the analysis in \cite{CheRoi} does not lead to any BPHZ theorem nor 
to any sensible RG flow. This is because using the old definition of 
RG scales it misses vulcanization.}, planar higher scale effects, which are the 
only ones large enough to require renormalization,
look like Moyal products.

\item{\bf Fermionic theories}

To enlarge the class of renormalizable non-commutative field theories and
to attack the quantum Hall effect problem it is essential
to extend the results of Grosse-Wulkenhaar
to Fermionic theories. Vulcanized Fermionic propagators have been computed
and their scaling properties established, both in matrix base and 
direct space, in \cite{toolbox05}. They seem to be necessarily of the covariant type. 

The simplest Fermionic NCVQFT theory, corresponding to the two-dimensional 
ordinary Gross-Neveu model, was then proved 
renormalizable to all orders in \cite{RenNCGN05}. 
This was done using the $x$-space version which seems also the most promising for a complete 
non-perturbative construction, using Pauli's principle to control the
apparent (fake) divergences of perturbation theory. 


\item{\bf Ghost Hunting} 

Grosse and Wulkenhaar made the first non trivial 
one loop RG computation in NCVQFT in \cite{GrWu04-2}.
Although they did not word it initially in this way, their result 
means that at this order there is no Landau ghost in NCVQFT!
A non trivial fixed point of the renormalization group develops at
high energy, where the Grosse-Wulkenhaar parameter $\Omega$ tends to 
the \emph{self-dual point} $\Omega=1$,
so that Langmann-Szabo duality become exact, and the beta function vanishes. 
This stops the growth of the bare coupling constant in the 
ultraviolet regime, hence kills the ghost. So after all NCVQFT is not
only as good as QFT with respect to renormalization, it is definitely 
better! This vindicates, although in a totally unexpected way, the initial 
intuition of Snyders \cite{Snyders}, who like many after him was at least partly motivated
by the hope to escape the divergences in QFT which were considered ugly.
Remark however that the ghost is not killed because of asymptotic freedom.
Both the bare and the renormalized coupling are non zero. They can
be made both small if the renormalized $\Omega$ is not too small, in which case  
perturbation theory is expected to remain valid all along the complete RG 
trajectory. It is only in the singular limit $\Omega_{ren} \to 0$ that the ghost begins
to reappear. 

For mathematical physicists who like me 
came from the constructive field theory program, the Landau ghost has always been
a big frustration. Remember that because non Abelian gauge theories are very complicated
and lead to confinement in the infrared regime, there is no good four dimensional
rigorous field theory without unnatural cutoffs up to now\footnote{We have only renormalizable constructive theories for two dimensional Fermionic theories \cite{GK1}-\cite{FMRS1} and for the infrared side of $\phi^4_4$\cite{GK2}-\cite{FMRS2}.}. 
I was therefore from the start very excited by the possibility to build 
non perturbatively the $\phi^{\star 4}_4$ theory as the first such rigorous
four dimensional field theory without unnatural cutoffs, even if it lives on the Moyal space 
which is not the one initially expected, and does not obey the usual axioms of ordinary QFT.

For that happy scenario to happen, two main non trivial steps are needed. The first
one is to extend the vanishing of the beta function at the self-dual point
$\Omega=1$ to all orders of perturbation theory. This has been done in 
\cite{DisertoriRivasseau2006,beta2-06}, using the matrix version of the theory.
First the result was checked by brute force computation
at two and three loops. Then we devised a general method for all orders. 
It relies on Ward identities inspired by those of similar 
theories with quartic interactions in which the beta function vanishes \cite{MdC,BM1,BM2}. 
However the relation of these Ward identities to the underlying LS symmetry remains unclear
and we would also like to develop again an $x$-space version of that 
result to understand better its relation to the LS symmetry.

The second step is to extend in a proper way constructive methods 
such as cluster and Mayer expansions to build non perturbatively 
the connected functions of NCVQFT in a single RG slice.
Typically we would like a theorem of Borel summability \cite{Sok} in the coupling constant 
for these functions which has to be \emph{uniform} in the slice index.
This is in progress. A construction of the model and of its full RG trajectory
would then presumably follow from a multiscale analysis 
similar to that of \cite{Abd2}.

\item{\bf $\phi^{\star 3}_6$ and Kontsevich model}

The noncommutative $\phi^{\star 3}$ model in 6 dimensions has been shown to be renormalizable, asymptotically free, and solvable genus by genus by mapping it to the Kontsevich model, in \cite{Grosse2005ig,Grosse2006qv,GrStei}.
The running coupling constant has also been computed exactly, and found
to decrease more rapidly than predicted by the one-loop beta function.
That model however is not expected to have a non-perturbative definition because
it should be unstable at large $\phi$.

\item{\bf Gauge theories}

A very important and difficult goal is to properly vulcanize gauge theories
such as Yang-Mills in four dimensional Moyal space or Chern-Simons on the two
dimensional Moyal plane plus one additional ordinary commutative time direction.
We do not need to look at complicated gauge groups since the $U(1)$ pure gauge theory 
is non trivial and interacting on non commutative geometry even without matter fields. 
What is not obvious is to find a proper compromise between gauge and Langmann-Szabo 
symmetries which still has a well-defined perturbation theory  around a computable vacuum
after gauge invariance has been fixed through appropriate Faddeev-Popov or BRS procedures. 
We should judge success in my opinion by one main criterion, namely
renormalizability. Recently de Goursac, Wallet and Wulkenhaar computed the
non commutative action for gauge fields which can be induced 
through integration of a scalar renormalizable matter field minimally coupled to the 
gauge field \cite{GourWW}; the result exhibits both gauge symmetry and LS covariance,
hence vulcanization, but the vacuum looks non trivial so that to check whether the associated 
perturbative expansion is really renormalizable seems difficult.

Dimensional regularization and renormalization better respect gauge symmetries
and they were the key to the initial 'tHooft-Veltman proof of renormalizability 
of ordinary gauge theories. 
Therefore no matter what the final word will be on 
NCV gauge theories, it should be useful to have the corresponding tools 
ready at hand in the non commutative context\footnote{The Connes-Kreimer works also use abundantly dimensional regularization and renormalization, and this is another motivation.}. 
This requires several steps, the first of which is

\item{\bf Parametric Representation}

In this compact representation, 
direct space or momentum variables have been integrated out for each Feynman amplitude.
The result is expressed as integrals over the heat kernel parameters of each propagator,
and the integrands are the topological polynomials of the graph\footnote{Mathematicians
call these polynomials Kirchoff polynomials, 
and physicist call them 
Symanzik polynomials in the quantum field 
theory context.}. These integrals
can then be shown analytic in the dimension $D$ of space-time for $\Re D$ small enough. They 
are in fact meromorphic in the complex plane, and ultraviolet divergences 
can be extracted through appropriate inductive contour integrations.

The same program can be accomplished in NCVQFT because the Mehler kernel is still quadratic
in space variables\footnote{This is true provided ``hypermomenta" are introduced to Fourier 
transform the space conservation at vertices which in Moyal space is the LS dual to ordinary 
momentum conservation.}. The corresponding topological hyperbolic
polynomials are richer than in ordinary field theory since
they are invariants of the ribbon graph which for instance
contain information about the genus of the surface on which these graphs live.
They can be computed both for ordinary NCVQFT \cite{gurauhypersyman}
and in the more difficult case of covariant theories such as
the LSZ model \cite{RivTan}.

\item{\bf Dimensional Regularization and Renormalization}

From the parametric representation the corresponding regularization and minimal dimensional
renormalization scheme should follow for NCVQFTs. However
appropriate factorization of the leading terms of 
the new hyperbolic polynomials under rescaling of the parameters of any subgraph is required.
This is indeed the analog in the parameter representation of 
the ``Moyality" of the counterterms in direct space. 
This program is under way \cite{GurTan}.

\item{\bf Quantum Hall Effect} 

NCQFT and in particular the non commutative Chern Simons theory has been 
recognized as effective theory of the quantum hall effect
already for some time \cite{Suss}-\cite{Poly}-\cite{HellRaam}.
We also refer to the lectures of V. Pasquier and of A. Polychronakos in this volume.
But the discovery of the vulcanized RG holds promises for a better explanation
of how these effective actions are generated from the microscopic level.

In this case there is an interesting reversal of
the initial Grosse-Wulkenhaar problematic. In the
$\phi^{\star 4}_4$ theory the vertex is given
a priori by the Moyal structure, and it is LS invariant. The challenge was
to find the right propagator 
which makes the theory renormalizable, and it turned out to
have LS duality. 

Now to explain the (fractional) quantum Hall effect, which is a bulk effect
whose understanding requires electron interactions, we can almost invert this logic. The propagator
is known since it corresponds to non-relativistic electrons in two dimensions
in a constant magnetic field. It has LS duality. But the effective 
theory should be anionic hence not local. 
Here again we can argue that among all possible non-local interactions, a few renormalization
group steps should select the only ones which form a renormalizable theory with the corresponding
propagator. In the commutative case (i.e. zero magnetic field)
local interactions such as those of the Hubbard model are just 
renormalizable in any dimension because of the extended nature of the Fermi-surface singularity.
Since the non-commutative electron propagator (i.e. in non zero magnetic field)
looks very similar to the Grosse-Wulkenhaar propagator (it is in fact a 
generalization of the Langmann-Szabo-Zarembo propagator)
we can conjecture that the renormalizable interaction corresponding to this propagator
should be given by a Moyal product. That's why we hope
that non-commutative field theory and a suitable generalization of the Grosse-Wulkenhaar RG
might be the correct framework for a 
microscopic {\it ab initio} understanding 
of the fractional quantum Hall effect which is currently lacking.

\item{\bf Charged Polymers in Magnetic Field}

Just like the heat kernel governs random motion,
the covariant Mahler kernel governs random motion of charged particles 
in presence of a magnetic field. Ordinary polymers can be studied as random walk
with a local self repelling or self avoiding interaction. They can be treated by 
QFT techniques using the $N=0$ component limit or the supersymmetry trick
to erase the unwanted vacuum graphs. Many results, such as various exact 
critical exponents in two dimensions, approximate ones in three dimensions,
and infrared asymptotic freedom in four dimensions have been computed
for self-avoiding polymers through renormalization group techniques.
In the same way we expect that charged polymers under magnetic field should be studied
through the new non commutative vulcanized RG. The relevant interactions
again should be of the Moyal rather than of the local type, and there is no reason that
the replica trick could not be extended in this context. Ordinary observables such as
$N$ point functions would be only translation \emph{covariant}, but translation invariant physical
observables such as density-density correlations
should be recovered out of gauge invariant observables. In this way it 
might be possible to deduce new scaling properties of 
these systems and their exact critical exponents through the generalizations of the
techniques used in the ordinary commutative case \cite{Duplantier}.

More generally we hope that the conformal invariant two dimensional theories, the RG flows between them
and the $c$ theorem of Zamolodchikov \cite{Zam} should have 
appropriate \emph{magnetic generalizations} which should involve vulcanized flows
and Moyal interactions.

\item{\bf Quark Confinement}

It is less clear that NCVQFT gauge theories might shed light on confinement,
but this is also possible.

Even for regular commutative field theory such as non-Abelian gauge theory,
the strong coupling or non-perturbative regimes may be studied fruitfully through
their non-commutative (i.e. non local) counterparts. 
This point of view is forcefully suggested in \cite{Seiberg1999vs}, where a mapping is proposed
between ordinary and non-commutative gauge fields which do not preserve the gauge groups
but preserve the gauge equivalent classes. 
Let us further remark that the effective physics of confinement should be
governed by a non-local interaction, as is the case in effective strings or bags models.
The great advantage of NCVQFT over the initial 
matrix model approach of 'tHooft \cite{thooft} is that in the latter the planar graphs dominate
because a gauge group $SU(N)$ with $N$ large is introduced in an \emph{ad hoc} way instead of
the physical  $SU(2)$ or $SU(3)$, whether in the former case, 
there is potentially a perfectly physical explanation 
for the planar limit, since it should just emerge naturally out of 
a renormalization group effect. We would like the
large $N$ \emph{matrix} limit in NCVQFT's to parallel the large 
$N$ \emph{vector} limit which allows to understand the formation of
Cooper pairs in supraconductivity \cite{FMRT1}. In that case $N$ is not arbitrary but
is roughly the number of effective quasi particles or sectors around the extended Fermi 
surface singularity at the superconducting transition temperature.
This number is automatically very large if this temperature is very low. 
This is why we called this phenomenon a \emph{dynamical} large $N$ \emph{vector} limit.
NCVQFTs provides us with the first clear example of a 
\emph{dynamical} large $N$ \emph{matrix} limit. We hope therefore that it should be ultimately useful 
to understand bound states in ordinary commutative non-Abelian gauge theories, hence quark confinement.

\item{\bf Quantum Gravity}

Although ordinary renormalizable QFTs seem more or less to have 
NCVQFT analogs on the Moyal space, there is no renormalizable commutative field theory
for spin 2 particles, so that the NCVQFTs alone should 
not allow quantization of gravity. However
quantum gravity might enter the picture of NCVQFTs at a later and more advanced stage.
Since quantum gravity appears in closed strings, it may have
something to do with doubling the ribbons of some NCQFT 
in an appropriate way. But because there is no reason
not to quantize the antisymmetric tensor $B$ which defines the non commutative geometry
as well as the symmetric one $g$ which defines the metric, we should clearly no longer 
limit ourselves to Moyal spaces. A first step towards a non-commutative
approach to quantum gravity along these lines should be to search for
the proper analog of vulcanization in more general non-commutative 
geometries. It might for instance describe physics in the vicinity of 
a charged rotating black hole generating a strong magnetic field. However 
we have to admit that any theory of quantum gravity will probably remain highly 
conjectural for many decades or even centuries...

\end{itemize}

We would like to conclude this introduction on a slightly mind-provocative question:
could non-commutativity be an attractive alternative to supersymmetry?

In the version of the standard model developped by Alain Connes and followers
\cite{CCM} there is some non commutative geometry but restricted to a very simple 
internal space. This model when fed with the spectral action
principle reproduces in astonishing detail all the standard model terms. Furthermore
it has some natural unification scale (without requiring a bigger non-Abelian gauge group 
and proton decay!). When prolonged  through \textit{ordinary}
commutative renormalization group flows on ordinary $\R^4$ from that 
unification scale back to the
Tev or Gev scales, it  postdicts within a few percent the top quark mass
and predicts the expected Higgs mass. Hence it seems a good starting point for 
understanding the standard model, just waiting for some additional fine tuning.

Now one of the strongest argument in favor of the existence of (still unobserved)
\emph{supersymmetry} is that it tames ultraviolet flows by adding
loops of superpartners to the ordinary loops. In particular a main argument for supersymmetry is
that it makes the three flows of the standard model 
$U(1)$, $SU(2)$ and $SU(3)$  couplings better
converge at a single unification scale (see \cite{BJLL} 
and references therein for a discussion of this subtle question).
The taming of loops by superpartners is also very important
to improve the ultraviolet behavior of supergravity and ultimately of superstrings.



But we have now a new way to tame ultraviolet flows, namely
non-commutativity of space-time! The mechanism which killed
the Landau ghost could become therefore a substitute for supersymmetry,
especially if superpartners are not found at the LHC.

If at some energy scale in the presumed ``desert" (that is somewhere
between  the Tev and the Planck scale) non-commutativity escapes
the internal space of A. Connes and invades  
ordinary space-time itself, it might manifest itself first in the form
of a tiny non-zero commutator between pairs of space time variables.
From that scale up  towards grand unification and Planck scale, we should presumably use 
the non-commutative scale decomposition and the non-commutative renormalization group 
reviewed below rather than the ordinary one. Although we don't know fully yet
how non-Abelian gauge theories will behave in this respect, it may provide the neessary
fine tuning of the Connes model. Just like for $\phi^4_4$, the flows 
should become milder and may grind to a halt. 

In short the lack of Landau ghosts in non-commutative field theory discussed below
means that non-commutative geometry might be an attractive alternative to supersymmetry
to tame ultraviolet flows without introducing new particles.

\medskip
\noindent{\bf Acknowledgments}
\medskip

I would like to warmly thank all the collaborators who contributed in various ways to the elaboration
of this material, in particular M. Disertori, R. Gurau, J. Magnen, A. Tanasa,
F. Vignes-Tourneret, J.C. Wallet and 
R. Wulkenhaar. Special thanks are due to F. Vignes-Tourneret since this review is largely 
based on our common recent review \cite{RivTour}, with introduction and
sections added on commutative renormalization, ghost hunting and the parametric representation.
I would like also to sincerely apologize to the many people whose work in this area would be worth of citation but has not been cited here: this is because of my lack of time or competence 
but not out of bad will.

\section{Commutative Renormalization, a Blitz Review}

This section is a summary of \cite{Riv2} which we include for self-containedness.

\subsection{Functional integral}

In QFT, particle number is not conserved. Cross sections in scattering experiments
contain the physical information of the theory. 
They are the matrix elements of the diffusion matrix $\cS$. 
Under suitable conditions they are expressed in terms of 
the Green functions $G_{N}$ 
of the theory through so-called ``reduction formulae" 

Green's functions are time ordered vacuum expectation values
of the field $\phi$, which is operator valued and acts on the Fock space:
\be  G_{N}(z_{1},...,z_{N})  =  <\psi_{0}, 
T[\phi(z_{1})...\phi (z_{N})]\psi_{0}> \,  .
\ee
Here $\psi_{0}$ is the vacuum state and the
$T$-product orders 
$\phi(z_{1})...\phi (z_{N})$ according to times.

Consider a Lagrangian field theory, and split 
the total Lagrangian as the sum of a free plus an
interacting piece, $\cL=  \cL_{0} + \cL_{int}$. 
The Gell-Mann-Low formula expresses
the Green functions as vacuum expectation values of a similar product
of free fields with an $e^{i\cL_{int}}$ insertion:
\be G_{N}(z_{1},...,z_{N})  = \frac{ <\psi_{0}, 
T\biggl[ \phi(z_{1})...\phi (z_{N})e^{i\int dx \cL_{int}(\phi(x))}\biggr]
 \psi_{0}>}     {<\psi_{0}, 
T(e^{i\int dx \cL_{int}(\phi(x))})\psi_{0}>    } .    
\ee

In the functional integral formalism proposed by Feynman \cite{FH},
the Gell-Mann-Low formula is 
replaced by a functional integral in terms of an (ill-defined) 
``integral over histories"
which is formally the product of Lebesgue measures over all space time. 
The corresponding formula is the Feynman-Kac formula:
\be G_{N}(z_{1},...,z_{N}) =
  = \frac { \int  \prod\limits_{j}\phi(z_{j}) e^{i\int \cL(\phi(x))
  dx} D\phi}  
{\int   e^{i\int \cL(\phi(x)) dx} D\phi }    .   \label{funct}
\ee

The integrand in (\ref{funct}) contains now 
the full Lagrangian $\cL=\cL_{0}+\cL_{int}$ instead of the interacting one. 
This is  interesting to expose symmetries of the theory which may not be separate 
symmetries of the free and interacting Lagrangians, for instance 
gauge symmetries. Perturbation theory
and the Feynman rules can still be derived as explained in the next subsection. 
But (\ref{funct}) is also well adapted to constrained quantization
and to the study of non-perturbative effects. 
Finally there is a deep analogy between the Feynman-Kac formula and the 
formula which expresses correlation functions in classical statistical mechanics. For instance, the 
correlation functions for a lattice Ising model are given by

\be \bigl<   \prod_{i=1}^{n} \sigma_{x_{i}} \bigr>
  = \frac { \sum\limits_{ \{ \sigma_{x} = \pm 1 \} } e^{- L(\sigma)}
 \prod\limits_{i}\sigma_{x_{i}}} 
{\sum\limits_{ \{ \sigma_{x} = \pm 1 \} } e^{- L(\sigma) }}   ,
\label{ising}    \ee
where $x$ labels the discrete sites of the lattice, the sum
is over configurations $\{ \sigma_{x} = \pm 1 \}$ which associate
a ``spin'' with value +1 or -1 to each such site and 
$L(\sigma)$ contains usually nearest neighbor interactions 
and possibly a magnetic field h: 
\be L(\sigma) = \sum_{x, y \ {\rm nearest\  neighbors}} J \sigma_{x} 
\sigma_{y} + \sum_{x} h \sigma_{x} .\ee

By analytically continuing (\ref{funct}) to imaginary time, or 
Euclidean space, it is possible to complete the analogy with (\ref{ising}), 
hence to establish a firm contact with statistical mechanics
\cite{Zinn,ItDr,Par}.

This idea also allows 
to give much better meaning to the  
path integral, at least for a free bosonic field. Indeed the
free Euclidean measure  can be defined easily as a Gaussian measure, because
in Euclidean space $L_{0}$ is a quadratic form of positive type\footnote{However 
the functional space
that supports this measure is not in general a space of smooth functions,
but rather of distributions. This was already true for functional integrals
such as those of Brownian motion, which are supported by continuous but
not differentiable paths. Therefore ``functional integrals'' in quantum
field theory should more appropriately be 
called ``distributional integrals''.}.

The Green functions continued to Euclidean points are called the Schwinger 
functions of the model, and are given by the Euclidean Feynman-Kac formula:
\be S_{N}(z_{1},...,z_{N}) =
Z^{-1} \int  \prod_{j=1}^{N} \phi(z_{j}) e^{-\int \cL (\phi(x)) dx} 
D\phi   
\ee
\be Z= \int   e^{-\int \cL (\phi(x)) dx} 
D\phi  .\ee

The simplest interacting field theory is the theory of a one component scalar 
bosonic field $\phi$ with quartic interaction $\lambda\phi^{4}$ ($\phi^{3}$ which is 
simpler is unstable).  In ${\R}^{d}$ it is called the $\phi^{4}_{d}$ 
model. For $d=2,3$ the model is 
superrenormalizable and has been built non perturbatively by constructive 
field theory. For $d=4$ it is just renormalizable, and it provides the simplest 
pedagogical introduction to perturbative renormalization theory. But 
because of the Landau ghost Landau ghost or triviality problem explained in subsection 
\ref{Landaughost}, the model presumably does not exist as a true interacting
theory at the non perturbative level. Its non commutative version should exist 
on the Moyal plane, see section \ref{Noghost}.

Formally the Schwinger functions of $\phi^{4}_{d}$ are the moments of the 
measure:
\be  d\nu  = \frac {1}{Z} e^{-\frac{\lambda}{4!}\int \phi^{4} -(m^{2} / 2)
\int \phi^{2} - (a/2) \int (\partial _{\mu } \phi \partial ^{\mu }\phi  )
} D\phi , \label{mesurenu}\ee
where
\begin{itemize}
\item $\lambda$ is the coupling constant, usually assumed positive or complex 
with positive real part; remark the convenient 1/4! factor to take into account
the symmetry of permutation of all fields at a local vertex. In the non commutative
version of the theory permutation symmetry becomes the more restricted cyclic symmetry and 
it is convenient to change the 1/4! factor to 1/4.

\item $m$ is the mass, which fixes an energy scale for the theory;

\item $a$ is the wave function constant. It can be set to 1 by a rescaling of the field.

\item $Z$ is a normalization factor which makes (\ref{mesurenu}) a probability 
measure;

\item $D\phi$ is a formal (mathematically ill-defined) product 
$\prod\limits_{x \in {\R}^{d} }d\phi(x)$
of Lebesgue measures at every point of ${\R}^{d}$.
\end{itemize}

The Gaussian part of the measure is
\be  d\mu(\phi)  =  \frac{1}{Z_0} e^{-(m^{2} / 2)
\int \phi^{2} - (a/2) \int (\partial _{\mu } \phi \partial ^{\mu }\phi  )
} D\phi . \label{mesuremu}
\ee
where $Z_0$ is again the normalization factor which makes
(\ref{mesuremu}) a probability  measure.

More precisely if we consider the translation invariant 
propagator $C(x,y) \equiv C(x-y)$ (with slight abuse of notation), whose 
Fourier transform is

\be  C(p) =  \frac{1}{(2\pi)^{d}}  \frac{1}{p^{2}+m^{2}} , \ee
we can use Minlos theorem and the general theory of Gaussian processes  
to define $d\mu(\phi)$ as the centered Gaussian measure on the
Schwartz space of tempered distributions $S'({\R}^{d})$
whose covariance is $C$. A Gaussian measure is uniquely defined 
by its moments, or the integral of polynomials of fields. 
Explicitly this integral is zero for a monomial of odd degree, and for $n=2p$
even it is equal to 

\be   \int  \phi  (x_1 ) ...\phi  (x_n ) d\mu(\phi) = \sum_{\ga} 
\prod_{\ell \in \ga} C(x_{i_\ell},x_{j_\ell}) ,
\ee
where the sum runs over all the $2p!! = (2p-1)(2p-3)...5.3.1$
pairings $\ga$ of the $2p$ arguments into $p$ disjoint
pairs $\ell= (i_\ell,j_\ell)$.

Note that since for $d\ge 2$, $C(p)$ is not integrable,
$C(x,y)$ must be understood as a distribution. It is therefore
convenient to also use regularized kernels, for instance
\be     C_{\kappa}(p) =  \frac{1}{(2\pi)^{d}}
\frac{e^{-\kappa (p^{2}+m^{2})}}{p^{2}+m^{2}} 
= \int_{\kappa}^{\infty} e^{-\alpha (p^{2}+m^{2})} d\alpha \ee
whose Fourier transform $  C_{\kappa}(x,y)$ is now a smooth function
and not a distribution:
\be     C_{\kappa}(x,y) = \int_{\kappa}^{\infty} e^{-\alpha m^{2}- (x-y)^{2}/4\alpha} 
\frac{d\alpha}{\alpha^{D/2}} \ee
$\alpha^{-D/2} e^{- (x-y)^{2}/4\alpha}$
is the \emph{heat kernel} and therefore this $\alpha$-representation has also an
interpretation in terms of Brownian motion:
\begin{equation}
C_{\kappa}(x,y) = \int_{\kappa}^{\infty} d\alpha \exp (-m^2 \alpha)\, P(x,y;\alpha)
\end{equation} 
where $P(x,y;\alpha)=(4 \pi \alpha)^{-d/2}
\exp (-{\vert x-y\vert}^2/4\alpha)$ is the Gaussian probability
distribution of a Brownian path going from $x$ to $y$ in time $\alpha$.

Such a regulator $\kappa$ is called
an ultraviolet cutoff, and we have (in the distribution sense)
$\lim_{\kappa \to 0} C_{\kappa}(x,y)= C(x,y)$. Remark that due to the
non zero $m^2$ mass term, the kernel $C_{\kappa}(x,y)$ decays
exponentially at large $\vert x-y\vert$ with rate $m$. 
For some constant $K$ and $d>2$ we have:
\be   \vert C_{\kappa}  (x,y) \vert \le K 
\kappa ^{1-d/2} e^{- m \vert x-y \vert } .  \label{decay}
\ee

It is a standard useful construction to build from the 
Schwinger functions the connected Schwinger functions, given by:

\be  C_{N} (z_{1},...,z_{N}) = \sum_{P_{1}\cup ... \cup P_k = \{1,...,N\}; 
\, P_{i} \cap P_j =0}    (-1)^{k+1}  (k-1)! \prod_{i=1}^{k}  S_{p_i} 
(z_{j_1},...,z_{j_{p_i}}) , \ee
where the sum is performed over all distinct partitions of $\{1,...,N\}$ 
into $k$ 
subsets $P_1,...,P_k$, $P_i$ being made of $p_i$ elements called 
$j_1,...,j_{p_i}$. For instance in the $\phi^4$ theory, where
all odd Schwinger functions vanish due to the unbroken $\phi \to -\phi$ symmetry,
the connected 4-point function is simply:
\bqa C_{4} (z_{1},...,z_{4})  &=&  
S_{4} (z_{1},...,z_{4}) - S_{2}(z_{1},z_{2})
S_{2}(z_{3},z_{4}) \no\\
&& \quad \quad - S_{2}(z_{1},z_{3})S_{2}(z_{2},z_{4}) 
-S_{2}(z_{1},z_{4})S_{2}(z_{2},z_{3})  . \eqa

\subsection{Feynman Rules}

The full interacting measure may now be defined as the multiplication of
the Gaussian measure $d\mu(\phi)$ by the interaction factor:

\be  d\nu  =  \frac{1}{Z} e^{-\frac{\lambda}{4!} \int \phi^{4}(x) dx} d\mu(\phi)
\label{mesurenui}\ee
and the Schwinger functions are the normalized moments of this measure:
\be  S_{N} (z_1,...,z_N )  = \int     
\phi(z_1)...\phi(z_N) d\nu (\phi) . \label{schwimom}
\ee
Expanding the exponential as a power series in the coupling constant $\lambda$, 
one obtains a formal expansion for the Schwinger functions:

\be  S_{N} (z_1,...,z_N )  = 
\frac{1}{Z}  \sum_{n=0}^{\infty} \frac{(-\lambda)^{n}}{n!} \int
\bigl[ \int \frac{\phi^{4}(x) dx }{ 4!} \bigr]^{n}  
\phi(z_1)...\phi(z_N)  d\mu (\phi) \label{schwi}
\ee
It is now possible to perform explicitly the functional integral of
the corresponding polynomial.
The result gives at any order $n$ a sum over $(4n+N-1)!!$
``Wick contractions schemes $\cW$",  i.e. ways of pairing together 
$4n+N$ fields into $2n+N/2$ pairs. 
At order $n$ the result of this perturbation scheme is therefore simply
the sum over all these schemes $\cW$ of the spatial integrals
over $x_1,...,x_n$ of the integrand
$\prod_{\ell \in \cW} C(x_{i_\ell},x_{j_\ell}) $ 
times the factor 
$\frac{1}{n!}(\frac{-\lambda}{4!})^{n}$. These integrals are then functions (in 
fact distributions) of the external positions $z_{1},...,z_N$.
But they may diverge 
either because they are integrals over all of ${\R}^{4}$ (no volume cutoff) 
or because of the singularities in the propagator $C$ at coinciding points.

Labeling the $n$ dummy integration variables in (\ref{schwi})
as $x_1,...,x_n$, we draw a line $\ell$ for each contraction of two fields.
Each position $x_1,...,x_n$ is then associated to a four-legged vertex
and each external source $z_i$ to a one-legged vertex, as shown in 
Figure \ref{graph1}.

\begin{figure}
\centerline{\includegraphics[width=12cm]{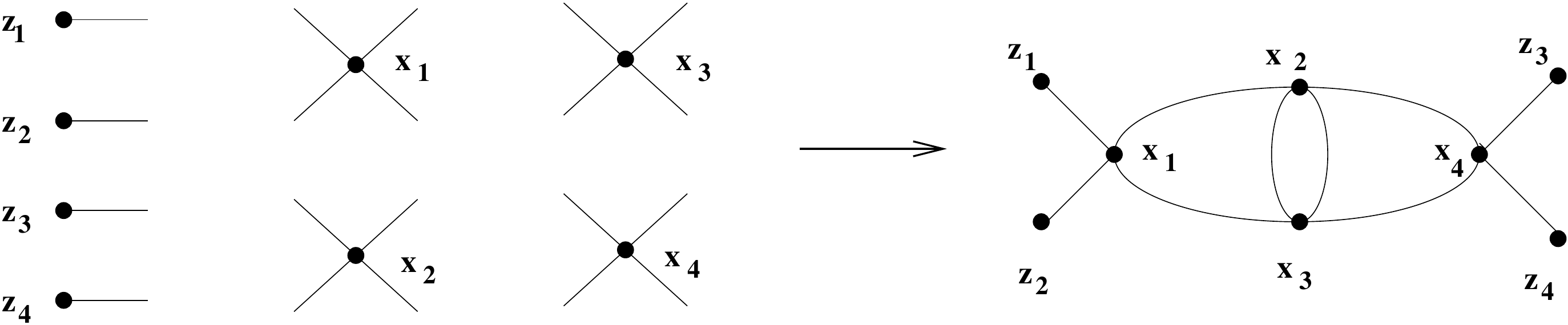}}
\caption{A possible contraction scheme with $n=N=4$.}
\label{graph1}
\end{figure}

For practical computations, it is obviously more convenient to gather all the 
contractions which lead to the same drawing, hence to the same 
integral. This leads to the notion of Feynman graphs.
To any such graph is associated a contribution or amplitude, 
which is the sum of the contributions associated with the corresponding set of 
Wick contractions. The ``Feynman rules" summarize how to compute this 
amplitude with its correct combinatoric factor.

We always use the following notations for a graph $G$:

\begin{itemize}
\item $n(G)$ or simply $n$ 
is the number of internal vertices of $G$, or the 
order of the graph.
\item $l(G)$ or $l$ is the number of internal lines of $G$, i.e. lines 
hooked at both ends to an internal vertex of $G$.
\item $N(G)$ or $N$ is the number of external vertices of $G$; 
it corresponds 
to the order of the Schwinger function one is looking at. 
When $N=0$ the graph 
is a vacuum graph, otherwise it is called an $N$-point graph.
\item $c(G)$ or $c$ is the number of connected components of $G$,
\item $L(G)$ or $L$ is the number of independent loops of G.
\end{itemize}

For a $regular$ $\phi^{4}$ graph, i.e. a graph which has 
no line hooked at both 
ends to external vertices, we have the relations:
\be     l(G) = 2n(G) - N(G)/2 , \label{externu} \ee
\be      L(G)  = l(G) - n(G) + c(G) =  n(G) +1 -N(G)/2 . \label{externu1} \ee
where in the last equality we assume connectedness of $G$, hence $c(G)=1$.

A $subgraph$ $F$ of a graph $G$ is a subset 
of internal lines of $G$, together
with the corresponding attached vertices. 
Lines in the subset defining $F$ are the internal lines 
of $F$, and their number is simply $l(F)$, as before. 
Similarly all the vertices of $G$ hooked to at least one of these 
internal lines of F are called the internal vertices of $F$ and considered 
to be in $F$; their number by definition is $n(F)$. 
Finally a good convention is to call external half-line of $F$
every half-line of $G$ which is not in $F$ but which is hooked to a vertex of 
$F$; it is then the number of such  external half-lines which we call $N(F)$. 
With these conventions one has for $\phi^{4}$ subgraphs the same 
relation (\ref{externu}) as for regular $\phi^{4}$ graphs.

To compute the amplitude associated to a $\phi^{4}$ graph, 
we have to add the contributions of the corresponding contraction schemes. 
This is summarized by the ``Feynman rules":
\begin{itemize}

\item To each line $\ell$ with end vertices at positions $x_\ell$ and $y_\ell$, 
associate a propagator $C(x_j,y_j)$.
\item To each internal vertex, associate $(-\lambda)/4!$.
\item Count all the contraction schemes giving this diagram. The number 
should be of the form $(4!)^n  n!/S(G)$ 
where $S(G) $ is an integer called the 
symmetry factor of the diagram.  
The $4!$ represents the permutation of the 
fields hooked to an internal vertex.
\item Multiply all these factors, divide by $n!$ and sum over the position 
of all internal vertices.
\end{itemize}

The  formula for the bare amplitude of a graph is therefore, as a distribution 
in $z_1,....z_N$:
\be  A_{G}(z_1,...,z_N) \equiv  \int \prod_{i=1}^{n}
 dx_i  \prod_{\ell \in G}  C(x_\ell,y_\ell) . \ee
This is the ``direct'' or ``$x$-space'' representation of a Feynman integral.
As stated above, this integral suffers of possible divergences. But 
the corresponding quantities with both volume cutoff and ultraviolet cutoff 
$\kappa$ are well defined. They are:

\be  A_{G,\Lambda}^{\kappa}(z_1,...,z_N) \equiv
\int_{\Lambda^{n}} \prod_{i=1}^{n}
dx_i  \prod_{\ell \in G}  C_{\kappa}(x_\ell,y_\ell) . \ee
The integrand is indeed bounded and the integration 
domain is a compact box $\Lambda$.

The $unnormalized$ Schwinger functions are therefore formally given 
by the sum over all graphs with the right number of external lines of the 
corresponding Feynman amplitudes:

\be ZS_{N} =   \sum_{\phi^{4}{\rm \ graphs \ } G {\rm\ with \ } N(G)=N} 
\frac{(-\lambda)^{n(G)}}{ S(G)} A_G
           . \label{series}\ee
$Z$ itself, the normalization, is given by the sum of all vacuum amplitudes:
\be Z =   \sum_{\phi^{4}{\rm \ graphs \ } G {\rm\ with \ } N(G)=0} 
\frac{(-\lambda)^{n(G)}}{S(G)} A_G. \ee

Let us remark that since the total number of Feynman graphs
is $(4n+N)!!$, taking into account Stirling's formula and the 
symmetry factor $1/n!$ from the exponential we expect perturbation
theory at large order to behave as $K^n n!$ for some constant $K$. 
Indeed at order $n$ 
the amplitude of a Feynman graph is a 4n-dimensional integral.
It is reasonable to expect that in average it should behave
as $c^n$ for some constant $c$. But this means that one should expect
zero radius of convergence for the series (\ref{series}).
This is not too surprising.
Even the one-dimensional integral 
\be F(g) = \int_{-\infty}^{+\infty}
e^{-x^2/2 - \lambda x^4 /4!} dx 
\ee 
is well-defined only for $\lambda \ge 0$.
We cannot hope infinite dimensional functional
integrals of the same kind to 
behave better than this one dimensional 
integral. In mathematically precise terms,
$F$ is not analytic near $\lambda=0$, but only Borel summable \cite{Sok}.
Borel summability
is therefore the best we can hope for the $\phi^4$ theory,
and it has indeed been proved
for the theory in dimensions 2 and 3 \cite{EMS,MS}.

From translation invariance, we do not expect $A_{G,\Lambda}^{\kappa}$ to have 
a limit as $\Lambda \to \infty$ if there are vacuum subgraphs in $G$.
But we can remark that an amplitude factorizes 
as the product of the amplitudes 
of its connected components.

With simple combinatoric verification 
at the level of contraction schemes we can 
factorize the sum over all vacuum graphs in the expansion of 
unnormalized Schwinger functions, hence get for the normalized functions
a formula analog to (\ref{series}):
\be 
S_N =  \sum_{\substack{\scriptstyle \phi^{4}{\rm  \ graphs \ } G {\rm \ with \ }
 N(G)=N\\
 \scriptstyle G {\rm\ without \ any \ vacuum \ subgraph }} }
\frac{(-\lambda)^{n(G)}}{S(G)} A_G . \label{normalfey} 
\ee
Now in (\ref{normalfey}) 
it is possible to pass to the thermodynamic limit (in the 
sense of formal power series) because using the exponential decrease of the 
propagator, each individual graph has a limit at fixed external arguments. 
There is of course no need to divide by the volume for that because each 
connected component in 
(\ref{normalfey})  is tied to at least one external source, and 
they provide the necessary breaking of translation invariance. 

Finally one can find the perturbative expansions for the 
connected Schwinger functions and 
the vertex functions. As expected, the connected Schwinger functions are given 
by sums over connected amplitudes:

\be  C_N =  \sum_{\phi^{4}{\rm \ connected \ graphs \ } G {\rm \  with \ } 
N(G)=N} \frac{(-\lambda)^{n(G)}}{ S(G)} A_{G} 
\ee
and the vertex functions are the sums of the $amputated$ amplitudes for 
proper graphs, also called one-particle-irreducible. 
They are the graphs which remain 
connected even after removal of any given internal line. 
The amputated amplitudes are defined in momentum space by 
omitting the Fourier transform of the propagators 
of the external lines. It is 
therefore convenient to write these amplitudes in the so-called momentum 
representation:

\be  \Gamma_N (z_1,...,z_N)=  \sum_{\phi^{4}{\rm \ proper \ graphs \ } 
G {\rm \  with \ } 
N(G)=N} \frac{(-\lambda)^{n(G)}}{S(G)} A_{G}^T (z_1,...,z_N) , \ee
\be A_{G}^T (z_1,...,z_N) \equiv  \frac{1}{(2\pi)^{dN/2}}
\int dp_1...dp_N  e^{i\sum p_iz_i}  A_G (p_1,...,p_N ) ,  \ee
\be  A_G (p_1,...,p_N ) = \int  \prod_{\ell \ {\rm internal \ line \ of\ }G} 
\frac{d^d p_\ell}{p_\ell^2 + m^2}
\prod_{v \in G} \delta ( \sum_\ell \epsilon_{v,\ell} \;  p_\ell  )  .
\label{momrep}\ee
Remark in (\ref{momrep}) the $\delta$ functions which 
ensure momentum conservation at 
each internal vertex $v$; the sum inside is over both internal and external 
momenta; each internal line is oriented in an arbitrary way and each 
external line is oriented towards 
the inside of the graph. The incidence 
matrix  
$\epsilon(v,\ell)$ is 1 if the line $\ell$ arrives at $v$, -1 if it starts from $v$ 
and 0 otherwise. Remark also that there is an overall momentum conservation 
rule $\delta(p_1 + ... + p_N)$ hidden in (\ref{momrep}). The drawback of 
the momentum representation lies in the necessity for practical 
computations to eliminate the $\delta$ functions by a ``momentum routing" 
prescription, and there is no canonical choice for that. Although this is 
rarely explicitly explained in the quantum field theory literature, such a choice of a 
momentum routing is equivalent to the choice of a particular spanning tree
of the graph.

\subsection{Scale Analysis and Renormalization}

In order to analyze the ultraviolet or short distance limit according to the renormalization group method,
we can cut the propagator $C$ into slices $C_i$ so that $C= \sum_{i=0}^{\infty}C_i$. 
This can be done conveniently within 
the parametric representation, since $\alpha $ in this representation roughly corresponds to
$1/p^2$. So we can define the propagator within a slice 
as 
\be C_{0} =  \int_{1}^{\infty} e^{-m^{2}\alpha -  
\frac{\vert x-y \vert^{2}}{ 4\alpha }} \frac{d\alpha}{\alpha^{d/2}}\ , \ \ 
C_{i} =  \int_{M^{-2i}}^{M^{-2(i-1)}} e^{-m^{2}\alpha -  
\frac{\vert x-y \vert^{2}}{ 4\alpha }}  \frac{d\alpha}{\alpha^{d/2}} \ \ {\rm for}\  i\ge 1 .
\label{slicing}
\ee
where $M$ is a fixed number, for instance 10, or 2, or $e$ (see footnote 
\ref{parochial} in the Introduction).
We can intuitively imagine $C_i$ as the piece of the field oscillating
with Fourier momenta essentially of size $M^{i}$.
In fact it is easy to prove the bound (for $d>2$)
\be 
\vert C_{i} (x,y) \vert \le K. M^{(d-2)i} 
e^{-  M^{i} \vert x-y\vert }
\label{boundslicing}
\ee
where $K$ is some constant.

Now the full propagator with ultraviolet
cutoff $M^{\rho}$, $\rho$ being a large integer, may be viewed as a sum
of slices:
\be C_{\le \rho} =  \sum_{i=0}^{\rho}  C_{i} \ee

Then the basic renormalization group step is made of two main operations:

\begin{itemize}
\item A functional integration 
\item The computation of a logarithm
\end{itemize}

Indeed decomposing a covariance in a Gaussian process
corresponds to a decomposition of the field into
independent Gaussian random variables $\phi^i$, each distributed with a
measure $d\mu_i$ of covariance $C_i$. 
Let us introduce
\be \Phi_{i} = \sum_{j=0}^{i} \phi_j .
\ee
This is the ``low-momentum" field for all frequencies
lower than $i$. The RG idea is that starting from
scale $\rho$ and performing $\rho -i$ steps,  
one arrives at an effective action for the remaining field 
$\Phi_{i}$. Then, writing $\Phi_i = \phi_i + \Phi_{i-1}$, one 
splits the field into a ``fluctuation" field
$\phi_i$ and a ``background" field $\Phi_{i-1}$. 
The first step, functional integration, is performed solely
on the fluctuation field, so it computes
\be Z_{i-1} (\Phi_{i-1}) = 
\int d\mu_i (\phi_i) e^{- S_i (\phi_i + \Phi_{i-1}) } .
\label{rengrou1}\ee
Then the second step rewrites this quantity as the exponential 
of an effective action, hence simply computes
\be S_{i-1}  (\Phi_{i-1}) = - \log [ Z_{i-1} (\Phi_{i-1}) ]
\label{rengrou2}\ee
Now $Z_{i-1} =e^{- S_{i-1}}$ and one can iterate!
The flow from the initial bare action $S=S_{\rho}$ for the full field to
an effective renormalized action $S_0$ for the last ``slowly varying" 
component $\phi_0$ of the field is similar to the
flow of a dynamical system. Its evolution
is decomposed into a sequence of discrete 
steps from $S_i$ to $S_{i-1}$.

This renormalization group strategy can be best understood on the system of Feynman graphs
which represent the perturbative expansion of the theory. The first step,
functional integration over fluctuation fields, means that we have to consider 
subgraphs with all their internal lines in higher slices than any of their external lines.
The second step, taking the logarithm, means that we have to consider only
\emph{connected} such subgraphs. We call such connected subgraphs \emph{quasi-local}.
Renormalizability is then a non trivial result that combines
locality and power counting for these quasi-local subgraphs. 

Locality  simply means that \emph{quasi-local} subgraphs $S$
look \emph{local} when seen through their external lines. Indeed since they are connected and since
their internal lines have scale say $\ge i$,
all the internal vertices are roughly at distance $M^{-i}$.
But the external lines have scales $\le i-1$, which only distinguish details larger than $M^{-(i-1)}$.
Therefore they  cannot distinguish the internal vertices of $S$ one from the other. 
Hence quasi-local subgraphs look like
``fat dots" when seen through their external lines, see Figure \ref{graph2}. 
Obviously this locality principle is completely independent of dimension.

\begin{figure}
\centerline{\includegraphics[scale=0.4]{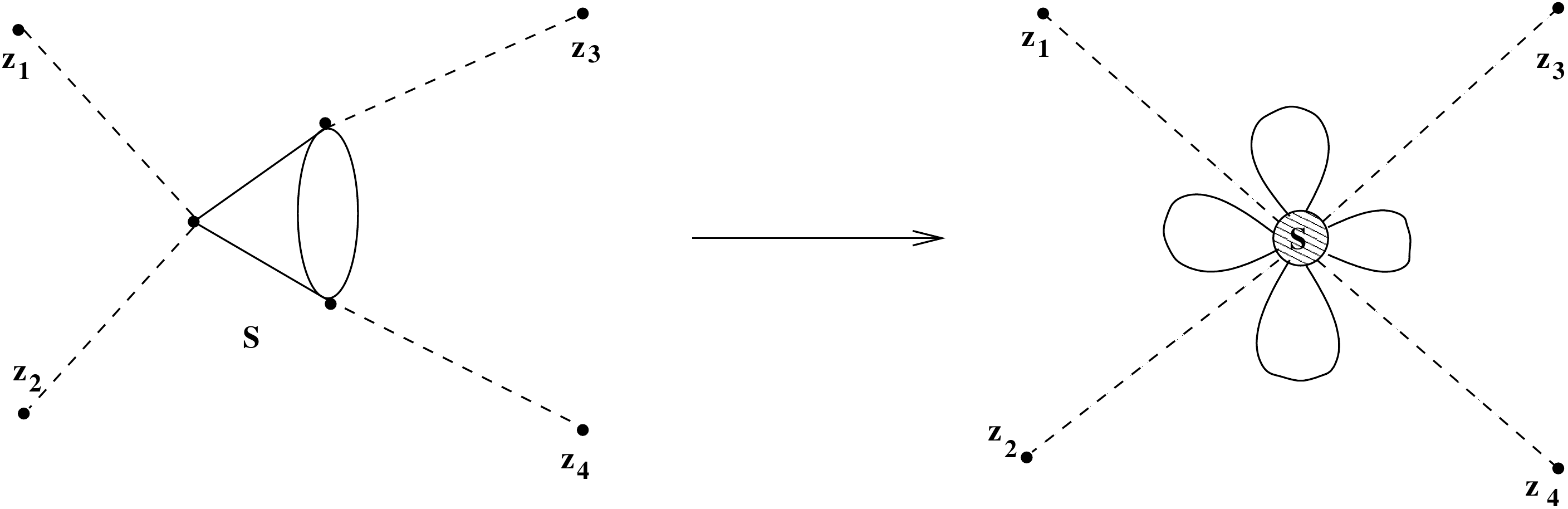}}
\caption{A high energy subgraph {\bf S} seen from lower energies looks quasi-local.}
\label{graph2}
\end{figure}

Power counting is a rough estimate which compares the size of a fat dot such as $S$ in Figure \ref{graph2} with $N$ external legs to
the  coupling constant that would be in front of an \emph{exactly local} $\int \phi^N (x) dx$ interaction term if it were in the Lagrangian. To simplify we now assume that the internal scales 
are all equal to $i$, the external scales are 
$O(1)$, and we do not care about constants and so on, but only about the dependence in
$i$ as $i$ gets large. 
We must first save one internal position such as the barycentre of the fat dot or the position of a particular internal vertex to represent the $\int dx$ integration in $\int \phi^N (x) dx$.
Then we must integrate
over the positions of all internal vertices of the subgraph \emph{save that one}. 
This brings about a weight $M^{-di(n-1)}$, because since $S$ is connected 
we can use the decay of the internal lines
to evaluate these $n-1$ integrals. Finally we should not forget the prefactor $M^{(D-2)li}$
coming from (\ref{boundslicing}), for the $l$ internal lines. Multiplying 
these two factors and using relation (\ref{externu})-(\ref{externu1})
we obtain that the "coupling constant" or factor in front of the fat dot
is of order $M^{-di(n-1) +2i(2n-N/2)}= M^{\omega (G)}$, if we define
the superficial degree of  divergence of a $\phi^{4}_{d}$ connected graph as:
\be   \omega (G) = (d-4)n(G) + d - \frac{d-2}{2} N(G) .  \ee
So power counting, in contrast with locality, depends on the space-time dimension.

Let us return to the concrete example of Figure \ref{graph2}. 
A 4-point subgraph made of three vertices and four internal lines at a high slice $i$
index. If we suppose the four external dashed lines have much lower index, say of order unity,
the subgraph looks almost local, like a fat dot at this unit scale. We have to save one vertex
integration for the position of the fat dot. Hence
the coupling constant of this fat dot is 
made of two vertex integrations and the four weights of the internal lines (in order not to forget these
internal line factors we kept internal lines apparent
as four tadpoles attached to the fat dot in the right of Figure \ref{graph2}).
In dimension 4 this total weight turns out to be independent of the scale.

At lower scales propagators can branch either through the initial bare coupling or through any
such fat dot in all possible ways because of the combinatorial rules of functional integration.
Hence they feel effectively a new coupling which is the sum of the bare 
coupling plus all the fat dot corrections coming from higher scales.
To compute these new couplings only graphs with $\omega (G) \ge 0$,
which are called primitively divergent, really matter
because their weight does not decrease as the gap $i$ increases.

- If $d=2$, we find $  \omega (G) = 2-2n  $, so the only primitively divergent
graphs have $n=1$, and $N=0$ or $N=2$. The only divergence
is due to the ``tadpole'' loop   $\int   \frac{d^2 p}{(p^2 + m^2)}$
which is logarithmically divergent.

- If $d=3$, we find $ \omega (G) = 3- n -N/2  $, so the only primitively divergent
graphs have $n\le 3 $, $N=0$, or $n \le 2$ and $N=2$. 
Such a theory with only a finite number of ``primitively divergent''
subgraphs is called superrenormalizable.

- If $d=4$, $\omega (G) = 4-N$.
Every two point graph is quadratically divergent
and every four point graph is logarithmically divergent.
This is in agreement with the superficial degree of 
these graphs being respectively 2 and 0. 
The couplings that do not decay with $i$ all correspond to terms that 
were already present in the Lagrangian, namely
$\int \phi^4$, $\int \phi^2$ and $\int (\nabla \phi ).(\nabla \phi )$\footnote{Because the graphs 
with $N=2$ are quadratically divergent we must Taylor expand the quasi local fat dots until we get 
convergent effects. Using parity and rotational symmetry, this generates only a
logarithmically divergent $\int (\nabla \phi ).(\nabla \phi )$ term beyond the 
quadratically divergent $\int \phi^2$. Furthermore this term starts only at $n=2$
or two loops, because the first tadpole graph at $N=2$, $n=1$ is \emph{exactly} local.}. 
Hence the structure of the Lagrangian resists under change of scale,
although the values of the coefficients can change.
The theory is called just renormalizable.

- Finally for $d >4$ we have infinitely many primitively divergent graphs
with arbitrarily large number of external legs, and the theory is called
non-renormalizable, because fat dots with $N$ larger than 4 are important and they 
correspond to new couplings generated by the renormalization group  which are not present
in the initial bare Lagrangian.

To summarize:

\begin{itemize}

\item Locality means that quasi-local subgraphs 
look local when seen through their external lines. It holds in any dimension.

\item  Power counting gives the rough size of the new couplings associated to these subgraphs
as a function of their number $N$ of external legs, of their order $n$
and of the dimension of space time $d$.

\item  Renormalizability (in the ultraviolet regime)
holds if the structure of the Lagrangian resists under change of scale,
although the values of the coefficients or coupling constants may change. 
For $\phi^4$ it occurs if $d\le 4$, with $d=4$ the most interesting case.

\end{itemize}

\subsection{The BPHZ Theorem}

The BPHZ theorem is both a brilliant historic
piece of mathematical physics which gives precise mathematical meaning to
the notion of renormalizability, using the mathematics
of formal power series, but it is also ultimately a bad way to understand and express 
renormalization. Let us try to explain both statements.

For the massive Euclidean $\phi_4^4$ theory
we could for instance state the following normalization 
conditions on the connected functions in momentum space at zero momenta:
\be C^{4} (0,0,0,0) = -\lambda_{ren} ,        \ee
\be C^{2} (p^{2}=0)  = \frac{1}{ m^{2}_{ren}} , \ee
\be  \frac{d}{dp^{2}} C^{2} \vert _{p^{2}=0}  
= -\frac{a_{ren}}{m^{4}_{ren}} . \ee
Usually one puts $a_{ren}=1$ by rescaling the field $\phi$.

Using the inversion theorem on formal power series for any \emph{fixed ultraviolet cutoff $\kappa$}
it is possible to reexpress any formal power series in $\lambda_{bare}$ with bare propagators
$1/(a_{bare}p^2 +m^2_{bare})$
for any Schwinger functions as a formal power series 
in $\lambda_{ren}$ with renormalized propagators
$1/(a_{ren}p^2 +m^2_{ren})$.
The BPHZ theorem then states that that formal perturbative formal power series  has finite coefficients order by order when the ultraviolet cutoff $\kappa$ is lifted. The first proof by Hepp
relied on the inductive Bogoliubov's recursion scheme. Then a completely explicit
expression for the coefficients of the renormalized series was written by
Zimmermann and many followers. The coefficients of that renormalized series
can be written as sums of renormalized Feynman amplitudes. They are similar
to Feynman integrals but with additional subtractions indexed by 
Zimmermann's forests. Returning to an inductive rather than explicit
scheme, Polchinski remarked that it is possible to also deduce the BPHZ theorem from 
a renormalization group equation and inductive bounds which does not decompose each order of 
perturbation theory into Feynman graphs \cite{Polch}. This method was
clarified and applied by C. Kopper and coworkers, see \cite{kopper}.

The solution of the difficult ``overlapping" divergence problem 
through Bogoliubov's or Polchinski's
recursions and Zimmermann's forests becomes particularly
clear in the parametric representation using Hepp's sectors. A Hepp sector is simply
a complete ordering of
the $\alpha$ parameters for all the lines of the graph. 
In each sector there is a different classification of forests
into packets so that each packet
gives a finite integral \cite{BLam}\cite{CR}.

But from the physical point of view we cannot conceal the fact
that purely perturbative renormalization theory is not very satisfying. 
At least two facts hint at 
a better theory which lies behind:

- The forest formula seems unnecessarily complicated,
with too many terms. 
For instance in any given 
Hepp sector only one particular packet of forests is 
really necessary to make the renormalized amplitude finite, the
one which corresponds to the quasi-local divergent subgraphs of \emph{that} sector. 
The other packets seem useless, a little bit like ``junk DNA". They are there just because they are necessary for other sectors. 
This does not look optimal.

- The theory makes renormalized amplitudes finite, but at tremendous cost!
The size of some of these renormalized amplitudes becomes 
unreasonably large as the size of the graph increases. 
This phenomenon is called the ``renormalon problem".
For instance it is easy to check that 
the renormalized amplitude (at 0 external momenta)
of the graphs $P_n$ with 6 external legs and $n+2$ internal 
vertices in Figure \ref{graph8} becomes as large as $c^n n!$ when $n \to \infty$.
Indeed at large $q$ the renormalized
amplitude $A_{G_2}^R$ in Figure \ref{oneloop} grows like $\log \vert
q\vert$. Therefore the chain of $n$ such graphs in Figure \ref{graph8} behaves as
$[\log \vert q\vert]^n$, and the total amplitude of $P_n$
behaves as 
\be  \int 
[\log \vert q \vert]^n  \frac{d^4 q}{[q^2 + m^2 ]^3} \simeq_{n \to\infty} 
c^n n!
\ee
So after renormalization some families of graphs acquire so large values
that they cannot be resumed! Physically this is just as bad as if infinities were still there.
\begin{figure}
\centerline{\includegraphics[width=10cm]{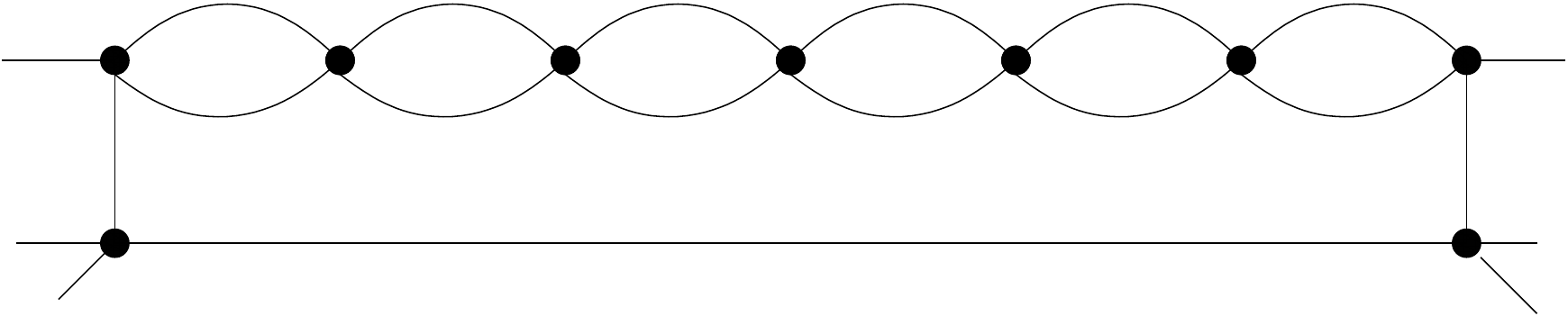}}
\caption{A family of graphs $P_n$ producing a renormalon.}
\label{graph8}
\end{figure}
These two hints are in fact linked. As their name
indicates, renormalons are due to
renormalization. Families of completely convergent graphs such as the graphs
$Q_n$ of Figure \ref{graph9}, are 
bounded by $c^n$, and produce no renormalons. 
\begin{figure}
\centerline{\includegraphics[width=10cm]{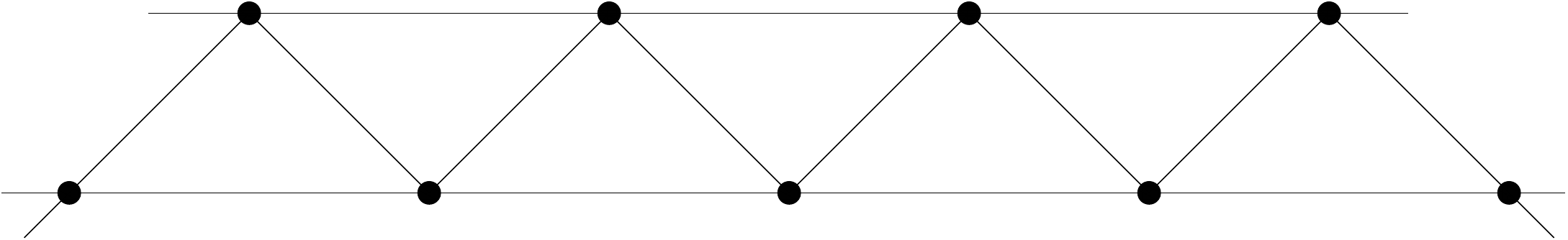}}
\caption{A family of convergent graphs $Q_n$, that do 
not produce any renormalon.}
\label{graph9}
\end{figure}

Studying more carefully
renormalization in the $\alpha$ parametric representation
one can check that renormalons are solely due
to the forests packets that we compared to ``junk DNA".
Renormalons are due to subtractions that are not necessary to ensure convergence, just like 
the strange $\log \vert q\vert$ growth of $A_{G_0}^R$ at large $q$
is solely due to the counterterm in the region where this counterterm
is not necessary to make the amplitude finite.

We can therefore conclude that subtractions are not organized
in an optimal way by the Bogoliubov recursion.
What is wrong 
from a physical point of view in the BPHZ theorem
is to use the size of the graph as the relevant parameter to
organize Bogoliubov's induction. It is rather
the size of the line momenta that should be used to
better organize the renormalization subtractions. 

This leads to the point of view advocated in \cite{Riv1}: neither the bare
nor the renormalized series are optimal. Perturbation should be organized
as a power series in an infinite set of effective expansions, which are
related through the RG flow equation. In the end exactly the same contributions are resumed
than in the bare or in the renormalized series, but they are regrouped in a much better
way.

\subsection{The Landau ghost and Asymptotic Freedom}
\label{Landaughost}

In the case of $\phi^4_4$ only the flow of the coupling constants
really matters, because the flow of $m$ and of $a$ for different reasons 
are not very important in the ultraviolet limit:

- the flow of $m$ is governed at leading order by the tadpole.
The bare mass $m^2_i$ corresponding to a finite positive physical mass $m^2_{ren}$
is negative and grows as $\lambda M^{2i}$ with the slice index $i$. But
since $p^2$ in the $i$-th slice
is also of order $M^{2i}$ but without the $\lambda$, as long as the coupling $\lambda$ 
remains small it remains much larger than $m^2_i$. Hence the mass term plays no significant role
in the higher slices. It was remarked in \cite{Riv1} that because  there are no
overlapping problem associated to 1PI two point subgraphs, there is in fact no inconvenience
to use the full renormalized $m_{ren}$ all the way from the bare to renormalized scales,
with subtractions on 1PI two point subgraphs independent of their scale.

- the flow of $a$ is also not very important. Indeed it really starts at two loops
because the tadpole is exactly local. So this flow is in fact bounded, and generates no renormalons. 
In fact as again remarked in \cite{Riv1} for theories of the $\phi^4_4$
type one might as well use the bare value $a_{bare}$ all the way 
from bare to renormalized scales and perform no second Taylor subtraction on any 
1PI two point subgraphs,.

But the physics of $\phi^4_4$ in the ultraviolet limit really depends of the flow
of $\lambda$. By a simple second order
computation there are only 2 connected graphs with $n=2$ and $N=4$
pictured in Figure \ref{oneloop}. They govern at leading order the flow of the coupling constant.

\begin{figure}
\centerline{\includegraphics[width=10cm]{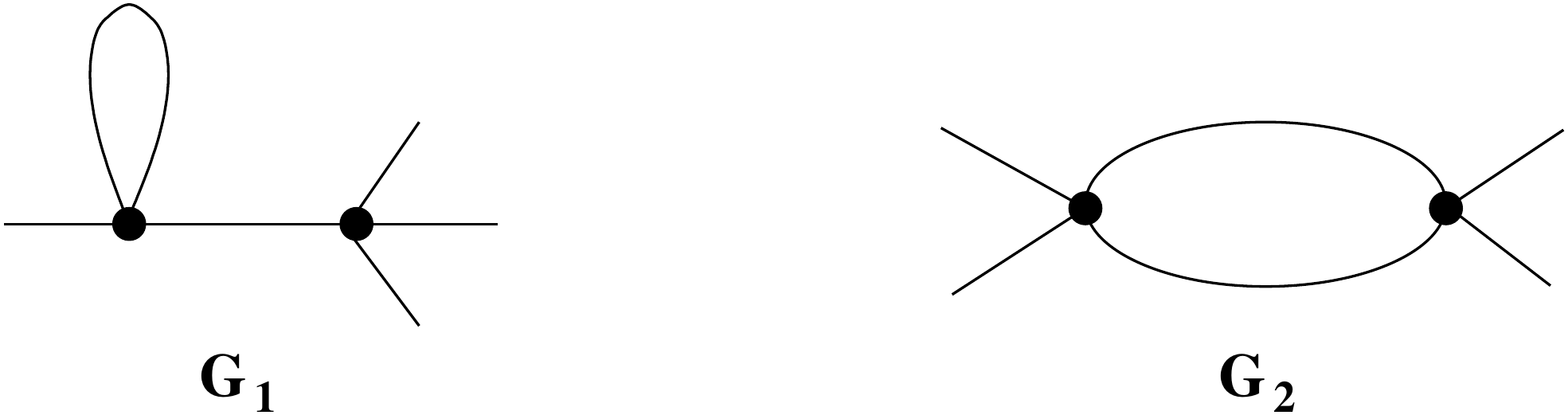}}
\caption{The $\phi^4$ connected graphs with $n=2$, $N=4$.}
\label{oneloop}
\end{figure}

In the commutative $\phi^4_4$ theory the graph $G_1$
does not contribute to the coupling constant flow. This 
can be seen in many ways, for instance after mass renormalization
the graph $G_1$ vanishes exactly because it contains a tadpole
which is not quasi-local but \emph{exactly} local. One can also remark
that the graph is one particle reducible. In ordinary translation-invariant,
hence momentum-conserving theories, one-particle-reducible
quasi-local graphs never contribute significantly to RG flows.
Indeed they become very small when the gap $i$ between internal 
and external scales grows. This is because
by momentum conservation the momentum of any one-particle-reducible line $\ell$ has to be the sum
of a finite set of external momenta on one of its sides.
But a finite sum of small momenta remains small and this clashes directly with the fact that
$\ell$ being internal its momentum should grow as the gap $i$ grows. 
Remark that this is no longer be true in non commutative vulcanized
$\phi^{\star 4}_4$, because that theory is not translation invariant, and that's
why it will ultimately escape the Landau ghost curse.

So in $\phi^4_4$ the flow is intimately linked to the sign
of the graph $G_2$ of Figure \ref{oneloop}. More precisely, we
find that at second order the relation between $\lambda_i$ and $\lambda_{i-1}$
is
\bqa \lambda_{i-1} &\simeq& \lambda_i - \beta \lambda_{i}^2 \label{flow1}
\eqa
(remember the minus sign in the exponential of the action),
where $\beta$ is a constant, namely the asymptotic value of
$\sum_{j, j' / \inf(j,j') =i} \int d^4y C_{j}(x,y)  C_{j'}(x,y) $
when $i \to \infty$. Clearly this constant is positive.
So for the normal stable $\phi_4^4$ theory, the relation
(\ref{flow1}) inverts into
\be \lambda_i \simeq \lambda_{i-1} + \beta  \lambda_{i-1}^2 ,
\ee 
so that fixing the renormalized
coupling seems to lead at finite $i$ to a large, diverging bare coupling,
incompatible with perturbation theory. This is the Landau ghost problem, which affects both the 
$\phi^4_4$ theory and electrodynamics.
Equivalently if one keeps $\lambda_i$ finite as $i$ gets large, $\lambda_0=\lambda_{ren}$
tends to zero and the final effective theory is ``trivial" which means it is a free
theory without interaction, in contradiction with the physical observation
e.g. of a coupling constant of about $1/137$ in electrodynamics.

But in non-Abelian gauge theories an extra
minus sign is created by the algebra of the
Lie brackets. This surprising discovery has deep 
consequences. The flow relation becomes approximately
\be \lambda_i \simeq \lambda_{i-1} - \beta \lambda_i \lambda_{i-1} ,
\ee 
with $\beta > 0 $, or, dividing by $\lambda_i \lambda_{i-1}$,
\be 1/\lambda_i \simeq 1/\lambda_{i-1} + \beta  ,
\ee 
with solution $\lambda_i \simeq \frac{\lambda_0}{1 + \lambda_0 \beta i}$. 
A more precise computation to third order in fact
leads to 
\be \lambda_i \simeq \frac{\lambda_0}{1 + \lambda_0 (\beta i + 
\gamma \log i + O(1)) }. \ee 
Such a theory is called asymptotically free
(in the ultraviolet limit) because the effective coupling
tends to 0 with the cutoff 
for a finite fixed small renormalized coupling. Physically
the interaction is turned off at small distances. This
theory is in agreement with scattering experiments 
which see a collection of almost free particles
(quarks and gluons) inside the hadrons at very high energy.
This was the main initial argument to adopt
quantum chromodynamics, a non-Abelian gauge
theory with $SU(3)$ gauge group, as the theory of strong interactions \cite{GWil2}. 

Remark that in such asymptotically free theories which form the backbone of today's standard model,
the running coupling constants remain bounded between far ultraviolet ``bare" scales 
and the lower energy scale where
renormalized couplings are measured. Ironically the point of view on early 
renormalization theory as a trick to hide the ultraviolet divergences of QFT
into infinite unobservable bare parameters could not turn out to be more wrong
than in the standard model. Indeed the bare coupling constants
tend to 0 with the ultraviolet cutoff, and what can be farther from infinity than 0?


\section{Non-commutative field theory}

\subsection{Field theory on Moyal space}
\label{sec:new-divergences}

The recent progresses concerning the renormalization of non-commutative field theory have been 
obtained on a very simple non-commutative space namely the Moyal space. From the point of view of 
quantum field theory, it is certainly the most studied space. Let us start with its precise definition.

\subsubsection{The Moyal space \texorpdfstring{${\mathbb R}^{D}_{\theta}$}{}}

Let us define $E=\lb x^{\mu},\,\mu\in\lnat 1,D\rnat\rb$ and $\C\langle E\rangle$ the free algebra 
generated by $E$. Let $\Theta$ a $D\times D$ non-degenerate skew-symmetric matrix (which requires 
$D$ even) and $I$ the ideal of $\C\langle E\rangle$ generated by the elements $x^{\mu}x^{\nu}-x^{\nu}
x^{\mu}-\imath\Theta^{\mu\nu}$. The Moyal algebra $\cA_{\Theta}$ is the quotient $\C\langle E\rangle/I$. 
Each element in $\cA_{\Theta}$ is a formal power series in the $x^{\mu}$'s for which the relation $\lsb x^
{\mu},x^{\nu}\rsb=\imath\Theta^{\mu\nu}$ holds.

Usually, one puts the matrix $\Theta$ into its canonical form :
\begin{eqnarray}
  \Theta= 
  \begin{pmatrix}
    \begin{matrix} 0 &\theta_{1} \\ 
      \hspace{-.5em} -\theta_{1}&0
    \end{matrix}    &&     (0)
    \\ 
    &\ddots&\\
    (0)&&
    \begin{matrix}0&\theta_{D/2}\\
      \hspace{-.5em}-\theta_{D/2}&0
    \end{matrix}
  \end{pmatrix}.\label{eq:Thetamatrixbase}
\end{eqnarray}
Sometimes one even set $\theta=\theta_{1}=\dotsm =\theta_{D/2}$. The preceeding algebraic definition 
whereas short and precise may be too abstract to perform real computations. One then needs a more 
analytical definition. A representation of the algebra $\cA_{\Theta}$ is given by some set of functions on 
$\R^{d}$ equipped with a non-commutative product: the \emph{Groenwald-Moyal} product. What follows 
is based on \cite{Gracia-Bondia1987kw}.

\paragraph{The Algebra $\cA_{\Theta}$}
\label{sec:lalgebre-ca_theta}

The Moyal algebra $\cA_{\Theta}$ is the linear space of smooth and rapidly decreasing functions $\cS
(\R^{D})$ equipped with the \encv{} product defined by: $\forall f,g\in\cS_{D}\defi\cS(\R^{D})$,
\begin{align}
  (f\star_{\Theta} g)(x)=&\int_{\R^D} \frac{d^{D}k}{(2\pi)^{D}}d^{D}y\, f(x+{\textstyle\frac 12}\Theta\cdot
  k)g(x+y)e^{\imath k\cdot y}\\
  =&\frac{1}{\pi^{D}\labs\det\Theta\rabs}\int_{\R^D} d^{D}yd^{D}z\,f(x+y)
  g(x+z)e^{-2\imath y\Theta^{-1}z}\; .
  \label{eq:moyal-def}
\end{align}
This algebra may be considered as the  ``functions on the Moyal space $\R^{D}_{\theta}$''. In the 
following we will write $f\star g$ instead of $f\star_{\Theta}g$ and use : $\forall f,g\in\cS_{D}$, $\forall j\in
\lnat 1,2N\rnat$,
\begin{align}
  (\scF f)(x)=&\int f(t)e^{-\imath tx}dt  
\end{align}
for the Fourier transform and
\begin{align}
  (f\diamond g)(x)=&\int f(x-t)g(t)e^{2\imath x\Theta^{-1}t}dt  
\end{align}
for the twisted convolution. As on $\R^{D}$, the Fourier transform exchange product and convolution:
\begin{align}
    \scF(f\star g)=&\scF(f)\diamond\scF(g)\label{eq:prodtoconv}\\
    \scF(f\diamond g)=&\scF(f)\star\scF(g)\label{eq:convtoprod}.
\end{align}
One also shows that the Moyal product and the twisted convolution are \textbf{associative}:
\begin{align}
  ((f\diamond g)\diamond h)(x)=&\int f(x-t-s)g(s)h(t)e^{2\imath(x\Theta^{-1}t+(x-t)\Theta^{-1}s)}ds\,dt\\
  =&\int f(u-v)g(v-t)h(t)e^{2\imath(x\Theta^{-1}v-t\Theta^{-1}v)}dt\,dv\notag\\
  =&(f\diamond(g\diamond h))(x).
\end{align}
Using \eqref{eq:convtoprod}, we show the associativity of the $\star$-product. The complex conjugation 
is \textbf{involutive} in $\cA_{\Theta}$
\begin{align}
  \overline{f\star_{\Theta}g}=&\bar{g}\star_{\Theta}\bar{f}.\label{eq:Moyal-involution}  
\end{align}
One also have
\begin{align}
  f\star_{\Theta}g=&g\star_{-\Theta}f.\label{eq:Moyal-commutation}  
\end{align}
\begin{prop}[Trace]\label{prop:trace}
  For all $f,g\in\cS_{D}$,
  \begin{align}
    \int dx\,(f\star g)(x)=&\int dx\,f(x)g(x)=\int dx\,(g\star f)(x)\; .\label{eq:Moyal-trace}
  \end{align}
\end{prop}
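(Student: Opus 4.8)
The plan is to reduce the three claimed equalities in \eqref{eq:Moyal-trace} to a single computation, namely that $\int dx\,(f\star g)(x) = \int dx\, f(x) g(x)$, since the second equality $\int dx\, f(x)g(x) = \int dx\,(g\star f)(x)$ then follows by applying the first with the roles of $f$ and $g$ exchanged (the right-hand side $\int f g$ is manifestly symmetric under $f\leftrightarrow g$ because pointwise multiplication is commutative). So the whole statement collapses to one identity.

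First I would write $(f\star g)(x)$ using the integral kernel form \eqref{eq:moyal-def}, i.e.
\begin{equation}
  (f\star g)(x) = \frac{1}{\pi^{D}\labs\det\Theta\rabs}\int d^{D}y\, d^{D}z\; f(x+y)\, g(x+z)\, e^{-2\imath y\Theta^{-1}z}\,,
\end{equation}
and then integrate both sides over $x\in\R^{D}$. Invoking Fubini (justified since $f,g\in\cS_{D}=\cS(\R^D)$, so the triple integrand is absolutely integrable) I would perform the $x$-integration first. The key point is the translation-invariance of Lebesgue measure: after the change of variables $x\mapsto x-y$ (say), the $x$-integral of $f(x+y)g(x+z)$ becomes $\int d^{D}x\, f(x)\, g(x+z-y)$, which no longer depends on $y$ and $z$ separately but only on the combination $z-y$. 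At this stage the oscillatory factor $e^{-2\imath y\Theta^{-1}z}$ must be handled: changing variables from $(y,z)$ to $(u,w)$ with $w=z-y$ (and $u=y$, say) turns the remaining integral over $u$ into $\int d^{D}u\, e^{-2\imath u\Theta^{-1}(u+w)} = e^{0}\int d^{D}u\, e^{-2\imath u\Theta^{-1}w}$, using that $u\Theta^{-1}u=0$ by antisymmetry of $\Theta^{-1}$. The $u$-integral is then (a constant multiple of) a Dirac delta in $w$ — more precisely $\int d^{D}u\, e^{-2\imath u\Theta^{-1}w} = \pi^{D}\labs\det\Theta\rabs\,\delta^{(D)}(w)$ with exactly the normalization that cancels the prefactor — which forces $w=0$, i.e. $z=y$, and collapses the expression to $\int d^{D}x\, f(x)g(x)$.

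The main obstacle, and the only place where care is genuinely required, is the manipulation of the oscillatory Gaussian-type integral $\int e^{-2\imath y\Theta^{-1}z}$ and the identification of its normalization constant: strictly speaking this is a Fourier-inversion / tempered-distribution statement rather than an absolutely convergent integral, so one should either (i) read \eqref{eq:moyal-def} together with the Fourier-transform formulation \eqref{eq:prodtoconv}–\eqref{eq:convtoprod} already established in the excerpt, or (ii) insert a Gaussian regulator $e^{-\e|y|^2-\e|z|^2}$, carry out the computation, and take $\e\to 0$ at the end. Actually the cleanest route is (i): from \eqref{eq:prodtoconv} one has $\scF(f\star g)=\scF(f)\diamond\scF(g)$, and $\int dx\,(f\star g)(x) = (\scF(f\star g))(0) = (\scF f \diamond \scF g)(0)$; evaluating the twisted convolution at $0$ gives $\int (\scF f)(-t)(\scF g)(t)\, dt$ (the twisting phase $e^{2\imath\cdot 0\cdot\Theta^{-1}t}$ is trivial at the origin), which by the ordinary Parseval/Plancherel identity on $\R^D$ equals $\int f(x)g(x)\,dx$ up to the standard $(2\pi)$-power, and one checks the conventions fixed by the Fourier-transform definition $(\scF f)(x)=\int f(t)e^{-\imath tx}dt$ match so that the constant is exactly $1$. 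Either way, once the delta-function normalization is pinned down the rest is bookkeeping, and the symmetric equality follows as noted above, completing the proof.
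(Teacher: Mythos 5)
Your route (i) is essentially the paper's proof. The paper writes
$\int(f\star g)=\scF(f\star g)(0)=(\scF f\diamond\scF g)(0)=\int\scF f(-t)\,\scF g(t)\,dt=(\scF f\ast\scF g)(0)=\scF(fg)(0)=\int fg$,
i.e.\ pass to Fourier transforms, use $\scF(f\star g)=\scF f\diamond\scF g$, observe that the twist phase in $\diamond$ trivializes at the origin, and finish with the ordinary product/convolution Fourier identity. Your appeal to Parseval at the last step is the same fact in different clothing, and your observation that the second equality follows from the first by $f\leftrightarrow g$ symmetry is correct and exactly the point (since $\int fg$ is manifestly symmetric). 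So that part matches the paper.

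Your route (a) --- expanding $\star$ via the kernel in \eqref{eq:moyal-def}, translating in $x$, changing to $(u,w)=(y,z-y)$, killing the $u\Theta^{-1}u$ term by antisymmetry, and identifying $\int du\,e^{-2\imath u\Theta^{-1}w}=\pi^{D}|\det\Theta|\,\delta^{(D)}(w)$ --- is a genuinely different and more elementary calculation, and the normalization you state is correct. But the Fubini justification you give is wrong: the triple integrand $f(x+y)g(x+z)e^{-2\imath y\Theta^{-1}z}$ is \emph{not} absolutely integrable over $(x,y,z)$. After the shift in $x$, $\int dx\,|f(x+y)g(x+z)|$ depends only on $z-y$ and is constant in the $y+z$ direction, so $\int|f(x+y)g(x+z)|\,dx\,dy\,dz=\infty$; the oscillation is doing essential work that cannot be discarded by absolute-value bounds. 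You clearly sense this in the next paragraph (``strictly speaking this is a Fourier-inversion / tempered-distribution statement''), but the earlier parenthetical ``(justified since \ldots the triple integrand is absolutely integrable)'' is simply false and should be removed. With the Gaussian regulator $e^{-\e(|y|^2+|z|^2)}$ you propose, or by reading the $u$-integral as Fourier inversion in $\cS'$, route (a) becomes rigorous; as literally written it has a gap.
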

\begin{proof}
  \begin{align}
    \int dx\,(f\star g)(x)=&\scF(f\star g)(0)=(\scF f\diamond\scF g)(0)\\
    =&\int\scF f(-t)\scF g(t)dt=(\scF f\ast\scF g)(0)=\scF(fg)(0)\notag\\
    =&\int f(x)g(x)dx\notag
  \end{align}
where $\ast$ is the ordinary convolution.
\end{proof}

In the following sections, we will need lemma \ref{lem:Moyal-prods} to compute the interaction terms for 
the $\Phi^{\star 4}_{4}$ and Gross-Neveu models. We write $x\wed y\defi 2x\Theta^{-1}y$.
\begin{lemma}\label{lem:Moyal-prods}
  For all $j\in\lnat 1,2n+1\rnat$, let $f_{j}\in\cA_{\Theta}$. Then
  \begin{align}
    \lbt f_{1}\star_{\Theta}
    \dotsb\star_{\Theta}f_{2n}\rbt(x)=&\frac{1}{\pi^{2D}\det^{2}\Theta}\int\prod_{j=1}^{2n}
    dx_{j}f_{j}(x_{j})\,e^{-\imath
      x\wed\sum_{i=1}^{2n}(-1)^{i+1}x_{i}}\,e^{-\imath\varphi_{2n}},\\
    \lbt f_{1}
    \star_{\Theta}\dotsb\star_{\Theta}f_{2n+1}\rbt(x)=&\frac{1}{\pi^{D}\det\Theta}\int\prod_{j=1}^{2n+1}
    dx_{j}f_{j}(x_{j})\,\delta\Big(x-\sum_{i=1}^{2n+1}(-1)^{i+1}x_{i}\Big)\,e^{-\imath\varphi_{2n+1}},\\
  \forall p\in\N,\,\varphi_{p}=&\sum_{i<j=1}^{p}(-1)^{i+j+1}x_{i}\wed x_{j}.
  \end{align}
\end{lemma}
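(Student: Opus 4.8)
The plan is to prove both formulas by induction on $n$, treating the even and odd cases together by reducing everything to an iterated application of the basic two-fold product \eqref{eq:moyal-def}. First I would establish the base cases: for the odd formula with $n=0$ the statement is just $f_1(x) = \int dx_1\, f_1(x_1)\,\delta(x-x_1)$ with $\varphi_1 = 0$ (the sum defining $\varphi_p$ being empty), which is trivial; for the even formula the smallest case $n=1$ is precisely the second line of \eqref{eq:moyal-def} after the change of variables $x_1 = x+y$, $x_2 = x+z$, which turns $e^{-2\imath y\Theta^{-1}z}$ into $e^{-\imath(x_1-x)\wed(x_2-x)/1}$; expanding the exponent and using skew-symmetry of $\Theta^{-1}$ gives exactly $e^{-\imath x\wed(x_1-x_2)}e^{-\imath\varphi_2}$ with $\varphi_2 = -\,x_1\wed x_2$ (matching the sign convention $(-1)^{i+j+1}$ at $i=1,j=2$). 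One must be slightly careful that $\frac{1}{\pi^{2D}\det^2\Theta}$ in the even case is $\left(\frac{1}{\pi^D|\det\Theta|}\right)^2$, which is consistent since associativity (established above) lets us bracket an even product as a product of $n$ two-fold factors; I would absorb sign/absolute-value bookkeeping of $\det\Theta$ once and for all at the start.

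For the inductive step I would write $f_1\star\dots\star f_{p+1} = (f_1\star\dots\star f_p)\star f_{p+1}$, insert the inductive hypothesis for the $p$-fold product (which is a known function of a single variable, call it $g_p(u)$, given by an integral over $x_1,\dots,x_p$ with phase $e^{-\imath u\wed\sum(-1)^{i+1}x_i - \imath\varphi_p}$ or the $\delta$-constrained analogue), and then apply \eqref{eq:moyal-def} once more to $g_p\star f_{p+1}$. The bulk of the work is then a Gaussian/oscillatory integral manipulation: one integrates out the internal variable $u$ (and $y,z$ or $k,y$ from \eqref{eq:moyal-def}). When $p$ is even, passing to $p+1$ the integral over the intermediate variable produces a $\delta$ function (the twisted-convolution structure collapses to ordinary support constraint), which is exactly the shape of the odd formula; when $p$ is odd, the $\delta$ from the hypothesis gets convolved against a fresh oscillatory kernel and re-opens into the $\frac{1}{\pi^{2D}\det^2\Theta}$ prefactor of the even formula. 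In both directions the new phase that appears is $(-1)^{p+1+1}x_{p+1}\wed(\text{previous variables})$ plus $-\imath x\wed(\pm x_{p+1})$, and checking that this reassembles into $\varphi_{p+1} = \sum_{i<j\le p+1}(-1)^{i+j+1}x_i\wed x_j$ together with the correct $x\wed\sum(-1)^{i+1}x_i$ is the combinatorial heart of the argument.

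The main obstacle I expect is precisely this sign bookkeeping — tracking the alternating signs $(-1)^{i+1}$ in the ``momentum'' combination and the $(-1)^{i+j+1}$ in $\varphi_p$ through the change of variables at each induction step, and making sure the new term $x_{p+1}$ gets the sign dictated by its parity, not by an accident of how one brackets the product. A clean way to control this is to do the change of variables in \eqref{eq:moyal-def} symmetrically (writing both arguments as $x$ plus a fluctuation) so that the $x$-dependent and fluctuation-dependent parts of the phase separate at every step, and to verify the identity $y\Theta^{-1}z \leftrightarrow \tfrac12(x_i - x)\wed(x_j-x)$ keeping all four cross-terms, since the $x\wed x$ term cancels and the mixed $x\wed x_i$ terms are exactly what feed the ``external'' phase. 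Once the parity of the newly added variable is handled correctly in one step, the induction closes without further difficulty; the underlying analytic facts (absolute convergence on $\cS_D$, Fubini, associativity) are all already available from Proposition \ref{prop:trace} and the associativity computation preceding it.
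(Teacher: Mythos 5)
You are right that the paper itself never proves Lemma \ref{lem:Moyal-prods} (it is quoted as background from the Gracia-Bond{\'\i}a--V{\'a}rilly framework), so your induction on the number of factors --- alternating between the oscillatory even kernel and the $\delta$-constrained odd kernel by one application of \eqref{eq:moyal-def} per step --- is the natural route, and the phase bookkeeping you describe does close: the integral over the intermediate variable $u$ of $e^{-\imath u\wed(\cdot)}$ produces $\pi^{D}\vert\det\Theta\vert\,\delta(\cdot)$ in the even-to-odd step, the $\delta$ is simply consumed in the odd-to-even step, and the new cross terms $x_i\wed x_{p+1}$ appear with exactly the sign $(-1)^{i+(p+1)+1}$ needed for $\varphi_{p+1}$. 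One small correction: expanding $(x_1-x)\wed(x_2-x)$ gives $x\wed(x_1-x_2)+x_1\wed x_2$, so $\varphi_2=+\,x_1\wed x_2$, consistent with $(-1)^{1+2+1}=+1$; your $\varphi_2=-x_1\wed x_2$ is precisely the kind of sign slip you warn against.

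The genuine gap is in your treatment of the prefactors. The two-fold formula \eqref{eq:moyal-def} carries the constant $(\pi^{D}\vert\det\Theta\vert)^{-1}$, not $(\pi^{2D}\det^{2}\Theta)^{-1}$, so the even case $n=1$ is \emph{not} ``precisely the second line of \eqref{eq:moyal-def}'', and the argument that bracketing an even product into two-fold factors yields the squared constant independently of $n$ is not a proof. If you run your own induction and keep track of constants, the even-to-odd step preserves the prefactor (the $\pi^{D}\vert\det\Theta\vert$ produced by the $u$-integral cancels the $(\pi^{D}\vert\det\Theta\vert)^{-1}$ of the extra star), while the odd-to-even step multiplies it by $(\pi^{D}\vert\det\Theta\vert)^{-1}$; starting from the trivial one-factor kernel you therefore obtain the $n$-dependent constant $(\pi^{D}\vert\det\Theta\vert)^{-n}$ for both the $2n$- and $(2n+1)$-fold kernels. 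This coincides with the constants printed in the Lemma only for three and four factors --- the cases actually used later, e.g. in \eqref{eq:interaction-phi4} and Corollary \ref{cor:int-Moyal}, where overall constants are anyway absorbed into the coupling. A quick cross-check: integrating the two-fold kernel over $x$ reproduces the trace property of Proposition \ref{prop:trace} only if its prefactor is $(\pi^{D}\vert\det\Theta\vert)^{-1}$. So as written your base case and constant bookkeeping fail (at $n=1$ in the even case, at $n\ge 2$ in the odd case); you should either prove the statement with the corrected normalization $(\pi^{D}\vert\det\Theta\vert)^{-n}$ and flag the discrepancy with the printed constants, or restrict the claim to the $p=3,4$ cases the paper actually needs. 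The phases, which are the only content used later, are handled correctly in your outline.
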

\begin{cor}\label{cor:int-Moyal}
  For all $j\in\lnat 1,2n+1\rnat$, let $f_{j}\in\cA_{\Theta}$. Then
  \begin{align}
    \int dx\,\lbt
    f_{1}\star_{\Theta}\dotsb\star_{\Theta}f_{2n}\rbt(x)=&\frac{1}{\pi^{D}\det\Theta}
    \int\prod_{j=1}^{2n}
    dx_{j}f_{j}(x_{j})\,\,\delta
    \Big(\sum_{i=1}^{2n}(-1)^{i+1}x_{i}\Big)e^{-\imath\varphi_{2n}},\label{eq:int-Moyal-even}\\
    \int dx\,\lbt f_{1}\star_{\Theta}
    \dotsb\star_{\Theta}f_{2n+1}\rbt(x)=&\frac{1}{\pi^{D}\det\Theta}\int\prod_{j=1}^{2n+1}
    dx_{j}f_{j}(x_{j})\,e^{-\imath\varphi_{2n+1}},\\
    \forall p\in\N,\,\varphi_{p}=&\sum_{i<j=1}^{p}(-1)^{i+j+1}x_{i}\wed x_{j}.
  \end{align}
\end{cor}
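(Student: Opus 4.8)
The plan is to derive the corollary directly from Lemma \ref{lem:Moyal-prods} together with the trace property of Proposition \ref{prop:trace}, so that no new oscillatory-integral computation is needed. For the even case I would start from the formula in Lemma \ref{lem:Moyal-prods} for $\lbt f_{1}\star_{\Theta}\dotsb\star_{\Theta}f_{2n}\rbt(x)$, which already has the form $\frac{1}{\pi^{2D}\det^{2}\Theta}\int\prod_{j}dx_{j}\,f_{j}(x_{j})\,e^{-\imath x\wed\sum_{i}(-1)^{i+1}x_{i}}\,e^{-\imath\varphi_{2n}}$, and simply integrate over $x\in\R^{D}$. The only $x$-dependence is in the factor $e^{-\imath x\wed\sum_{i}(-1)^{i+1}x_{i}}=e^{-2\imath x\Theta^{-1}\sum_{i}(-1)^{i+1}x_{i}}$, and since $\Theta$ is non-degenerate the substitution $k=2\Theta^{-1}\sum_{i}(-1)^{i+1}x_{i}$ shows $\int dx\,e^{-\imath x\wed\sum_{i}(-1)^{i+1}x_{i}}=(2\pi)^{D}\labs\det(2\Theta^{-1})\rabs^{-1}\,\delta\big(\sum_{i}(-1)^{i+1}x_{i}\big)=\pi^{D}\labs\det\Theta\rabs\,\delta\big(\sum_{i}(-1)^{i+1}x_{i}\big)$. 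Multiplying by the prefactor $\frac{1}{\pi^{2D}\det^{2}\Theta}$ collapses it to $\frac{1}{\pi^{D}\det\Theta}$ (up to the sign/absolute-value bookkeeping, which is harmless since $\det\Theta>0$ for a canonical skew matrix), giving exactly \eqref{eq:int-Moyal-even}.

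For the odd case the argument is even shorter: Lemma \ref{lem:Moyal-prods} already gives $\lbt f_{1}\star_{\Theta}\dotsb\star_{\Theta}f_{2n+1}\rbt(x)=\frac{1}{\pi^{D}\det\Theta}\int\prod_{j}dx_{j}\,f_{j}(x_{j})\,\delta\big(x-\sum_{i}(-1)^{i+1}x_{i}\big)\,e^{-\imath\varphi_{2n+1}}$, so integrating over $x$ merely consumes the delta function $\delta\big(x-\sum_{i}(-1)^{i+1}x_{i}\big)$ and leaves $\frac{1}{\pi^{D}\det\Theta}\int\prod_{j}dx_{j}\,f_{j}(x_{j})\,e^{-\imath\varphi_{2n+1}}$, which is the claimed identity. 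In both cases the phase $\varphi_{p}=\sum_{i<j}(-1)^{i+j+1}x_{i}\wed x_{j}$ is carried along verbatim, so there is nothing further to check there.

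The only genuine subtlety — and the step I would be most careful about — is the interchange of the $x$-integration with the $x_{j}$-integrations and the interpretation of the resulting $\delta$: one should either invoke rapid decrease of the $f_{j}\in\cS_{D}$ to justify Fubini after a mild regularization (e.g. inserting a Gaussian cutoff $e^{-\e|x|^2}$ and letting $\e\to 0$), or, more in the spirit of Proposition \ref{prop:trace}, recognize $\int dx\,\lbt f_{1}\star\dotsb\star f_{2n}\rbt(x)$ as the trace applied to the single function $f_{1}\star\dotsb\star f_{2n-1}$ against $f_{2n}$ and reduce inductively. Either route is routine; the essential content of the corollary is entirely contained in Lemma \ref{lem:Moyal-prods}, and the corollary is really just the specialization ``evaluate the $\star$-monomial's integral'' obtained by performing the outer $x$-integral.
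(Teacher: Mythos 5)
Your proof is correct and is the natural derivation the paper intends: the corollary is stated without a separate proof precisely because it follows from Lemma \ref{lem:Moyal-prods} by the two steps you describe (a Fourier $\delta$-identity in $x$ for the even case, and collapsing the explicit $\delta\bigl(x-\sum(-1)^{i+1}x_i\bigr)$ for the odd case). Your accounting of the constant, $\frac{1}{\pi^{2D}\det^2\Theta}\cdot\pi^D|\det\Theta|=\frac{1}{\pi^D\det\Theta}$ using $\det\Theta>0$ in the canonical form, and your remark on justifying the interchange of integrals for Schwartz-class $f_j$ are both right and exactly the kind of bookkeeping being left to the reader.
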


The cyclicity of the product, inherited from proposition \ref{prop:trace} implies: $\forall f,g,h\in\cS_{D}$,
\begin{align}
  \langle f\star g,h\rangle=&\langle f,g\star h\rangle=\langle g,h\star f\rangle
\end{align}
and allows to extend the Moyal algebra by duality into an algebra of tempered distributions.

\paragraph{Extension by Duality}
\label{sec:extens-par-dual}

Let us first consider the product of a tempered distribution with a Schwartz-class function. Let $T\in\cS'_
{D}$ and $h\in\cS_{D}$. We define $\langle T,h\rangle\defi T(h)$ and $\langle T^\ast,h\rangle =\overline
{\langle T,\overline{h}\rangle}$.
\begin{defn}\label{defn:Tf}
  Let $T\in\cS'_{D}$, $f,h\in\cS_{D}$, we define $T\star f$ and $f\star T$ by
  \begin{align}
    \langle T\star f,h\rangle=&\langle T,f\star h\rangle,\\
    \langle f\star T,h\rangle=&\langle T,h\star f\rangle.
  \end{align}
\end{defn}
For example, the identity $\bbbone$ as an element of $\cS'_{D}$ is the unity for the $\star$-product: $
\forall f,h\in\cS_{D}$,
\begin{align}
  \langle\bbbone\star f,h\rangle=&\langle\bbbone,f\star h\rangle\\
  =&\int(f\star h)(x)dx=\int f(x)h(x)dx\notag\\
  =&\langle f,h\rangle.\notag
\end{align}
We are now ready to define the linear space $\cM$ as the intersection of two sub-spaces $\cM_{L}$ and 
$\cM_{R}$ of $\cS'_{D}$.
\begin{defn}[Multipliers algebra]\label{defn:M}
  \begin{align}
    \cM_{L}=&\lb S\in\cS'_{D}\tqs\forall f\in\cS_{D},\,S\star f\in\cS_{D}\rb,\\
    \cM_{R}=&\lb R\in\cS'_{D}\tqs\forall f\in\cS_{D},\,f\star R\in\cS_{D}\rb,\\
    \cM=&\cM_{L}\cap\cM_{R}.
  \end{align}
\end{defn}
One can show that $\cM$ is an associative $\ast$-algebra. It contains, among others, the identity, the 
polynomials, the $\delta$ distribution and its derivatives. Then the relation
\begin{align}
  \lsb x^{\mu},x^{\nu}\rsb=&\imath\Theta^{\mu\nu}, 
\end{align}
often given as a definition of the Moyal space, holds in $\cM$ (but not in $\cA_{\Theta}$).

\subsubsection{ The \texorpdfstring{$\Phi^{\star 4}$}{phi4}-theory on \texorpdfstring{${\mathbb R}^{4}_{\theta}
$}{Moyal space}}

The simplest \encv{} model one may consider is the $\Phi^{\star 4}$-theory on the four-dimensional Moyal 
space. Its Lagrangian is the usual (commutative) one where the pointwise product is replaced by the 
Moyal one:
\begin{align}
  S[\phi] =&\int d^4x \Big( -\frac{1}{2} \partial_\mu \phi
\star \partial^\mu \phi  + \frac{1}{2} m^2
\,\phi \star \phi
+ \frac{\lambda}{4} \phi \star \phi \star \phi \star
\phi\Big)(x).\label{eq:phi4naif}
\end{align}
Thanks to the formula \eqref{eq:moyal-def}, this action can be explicitly computed. 
The interaction part is 
given by the corollary \ref{cor:int-Moyal}:
\begin{align}
  \int dx\, \phi^{\star
    4}(x)=&\int\prod_{i=1}^{4}dx_{i}\,\phi(x_{i})\,\delta(x_{1}-x_{2}+x_{3}-x_{4})e^{\imath\varphi},
  \label{eq:interaction-phi4}\\
  \varphi=&\sum_{i<j=1}^{4}(-1)^{i+j+1}x_{i}\wed x_{j}.\nonumber
\end{align}
The most obvious characteristic of the Moyal product is its non-locality. But its 
non-commutativity implies that the 
vertex of the model \eqref{eq:phi4naif} is only invariant under cyclic permutation of the fields. This 
restricted invariance incites to represent the associated Feynman graphs with ribbon propagators. One can 
then make a clear distinction between planar and non-planar graphs. This will be detailed in section \ref
{sec:multi-scale-analysisMatrix}.

Thanks to the delta function in \eqref{eq:interaction-phi4}, the oscillation may be written in different ways:
\begin{subequations}
  \begin{align}
    \delta(x_{1}-x_{2}+x_{3}-x_{4})e^{\imath\varphi}=&\delta(x_{1}-x_{2}+x_{3}-x_{4})e^{\imath
      x_{1}\wed x_{2}+\imath x_{3}\wed x_{4}}\\
    =&\delta(x_{1}-x_{2}+x_{3}-x_{4})e^{\imath
      x_{4}\wed x_{1}+\imath x_{2}\wed x_{3}}\\
    =&\delta(x_{1}-x_{2}+x_{3}-x_{4})\exp\imath(x_{1}-x_{2})\wed(x_{2}-x_{3}).\label{eq:oscill-trans}
  \end{align} 
\end{subequations}
The interaction is real and positive\footnote{Another way to prove it is from \eqref{eq:Moyal-involution}, $
\overline{\phi^{\star 4}}=\phi^{\star 4}$.}:
\begin{align}
  &\int\prod_{i=1}^{4}dx_{i}\phi(x_{i})\,\delta(x_{1}-x_{2}+x_{3}-x_{4})e^{\imath\varphi}\label{eq:int-
positive}\\
  =&\int dk\lbt\int dxdy\,\phi(x)\phi(y)e^{\imath k(x-y)+\imath x\wed y}\rbt^{\!\!2}\in\R_{+}.\notag
\end{align}
It is also translation invariant as shows equation \eqref{eq:oscill-trans}.

The property \ref{prop:trace} implies that the propagator is the usual one: $\hat{C}(p)=1/(p^{2}+m^{2})$.

\subsubsection{UV/IR mixing}
\label{sec:uvir-mixing}

In the article \cite{Filk1996dm}, Filk computed the Feynman rules 
corresponding to \eqref{eq:phi4naif}. He showed that 
the planar amplitudes equal the commutative ones whereas the non-planar ones give rise to 
oscillations coupling the internal and external legs. Hence contrary perhaps
to overoptimistic initial expectations, non commutative geometry alone does not eliminate
the ultraviolet divergences of QFT. Since there are infinitely many planar graphs with four external legs,
the model \eqref{eq:phi4naif} might at best be just renormalizable in the
ultraviolet regime, as ordinary $\phi^4_4$.

In fact it is not.
Minwalla, Van Raamsdonk 
and Seiberg  discovered that the model \eqref{eq:phi4naif} exhibits a new type of divergences making it non-renormalizable \cite{MiRaSe}.  A typical example is the non-planar tadpole:
\begin{align} 
\raisebox{-0.4\totalheight}{\includegraphics{oberfig-1.pdf}}&=\frac{\lambda}{12} 
\int \frac{d^4k}{(2\pi)^4} 
\frac{e^{ip_{\mu} k_{\nu}
      \Theta^{\mu\nu} }}{k^2 + m^2}\notag\\ 
  &=  \frac{\lambda}{48\pi^2}  \sqrt{\frac{m^2}{(\Theta p)^2}}  K_1(\sqrt{m^2
    (\Theta p)^2})\seq_{p\to 0}p^{-2}.
\end{align}
If $p\neq 0$, this amplitude is finite but, for small $p$, it diverges like $p^{-2}$. In other words, if we put 
an ultraviolet cut-off $\Lambda$ to the $k$-integral, the two 
limits $\Lambda\to\infty$ and $p\to 0$ do not 
commute. This is the UV/IR mixing phenomena. A chain of non-planar tadpoles, inserted in bigger 
graphs, makes divergent any function (with six points or more for example). But this divergence is not 
local and can't be absorbed in a mass redefinition. This is what makes the model non-renormalizable. 
We will see in sections \ref{sec:prop-et-renorm} and \ref{sec:direct-space} that the UV/IR mixing results 
in a coupling of the different scales of the theory. We will also note that we should 
distinguish different 
types of mixing.

The UV/IR mixing was studied by several groups. First, Chepelev and Roiban \cite{Chepelev2000hm} 
gave a power counting for different scalar models. They were able to identify the divergent graphs and to 
classify the divergences of the theories thanks to the topological data of the graphs. Then V.~Gayral \cite{Gayral2004cs} showed that UV/IR mixing is present on all isospectral deformations (they consist in 
curved generalizations of the Moyal space and of the \encv{} torus). For this, he considered a scalar 
model \eqref{eq:phi4naif} and discovered contributions to the effective action which diverge when the 
external momenta vanish. The UV/IR mixing is then a general characteristic of the \encv{} theories, at 
least on these deformations.

\subsection{The Grosse-Wulkenhaar breakthrough}

The situation remained unchanged until H.~Grosse and R.~Wulkenhaar discovered a way to define a 
renormalizable \encv{} model. We will detail their result in 
section \ref{sec:multi-scale-analysisMatrix} but 
the main message is the following. By adding an harmonic term to the Lagrangian \eqref{eq:phi4naif},
\begin{align}
    S[\phi] =&\int d^4x \Big( -\frac{1}{2} \partial_\mu \phi
\star \partial^\mu \phi +\frac{\Omega^2}{2} (\tilde{x}_\mu \phi )\star (\tilde{x}^\mu \phi ) + \frac{1}{2} m^2
\,\phi \star \phi
+ \frac{\lambda}{4} \phi \star \phi \star \phi \star
\phi\Big)(x)\label{eq:phi4Omega}
\end{align}
where $\xt=2\Theta^{-1} x$ and the metric is Euclidean, the model, in four dimensions, is renormalizable 
at all orders of perturbation \cite{c}. We will see in section \ref{sec:direct-space} 
that this additional term 
give rise to an infrared cut-off and allows to decouple the different scales of the theory. The new model 
(\ref{eq:phi4Omega}), which we call vulcanized $\Phi^{\star 4}_{4}$, does not exhibit any mixing. This result is very important because it opens the way towards other \encv{} field theories. Remember that we call \emph{vulcanization} the procedure consisting in adding a new term to a 
Lagrangian of a \encv{} theory in order to make it renormalizable, see footnote \ref{vulca}.\\

The propagator $C$ of this $\Phi^{4}$ theory is the kernel of the inverse operator $-\Delta+\Omega^{2}
\xt^{2}+m^{2}$. It is known as the Mehler kernel \cite{simon79funct,toolbox05}:
\begin{equation}
  \label{eq:Mehler}
  C(x,y)=\frac{\Omega^{2}}{\theta^{2}\pi^{2}}\int_{0}^{\infty}\frac{dt}{\sinh^{2}(2\Ot
  t)}\,e^{-\frac{\Ot}{2}\coth( 2\Ot t)(x-y)^{2}-\frac{\Ot}{2}\tanh(2\Ot t)(x+y)^{2}-m^{2}t}.
\end{equation}
Langmann and Szabo remarked that the quartic interaction with Moyal product is invariant under a 
duality transformation. It is a symmetry between momentum and direct space. The interaction part of the 
model \eqref{eq:phi4Omega} is (see equation \eqref{eq:int-Moyal-even})
\begin{align}
  S_{\text{int}}[\phi]=&\int d^{4}x\,\frac{\lambda}{4}(\phi\star\phi\star\phi\star\phi)(x)\\
  =&\int\prod_{a=1}^{4}d^{4}x_{a}\,\phi(x_{a})\,V(x_{1},x_{2},x_{3},x_{4})\label{eq:Vx}\\
  =&\int\prod_{a=1}^{4}\frac{d^{4}p_{a}}{(2\pi)^{4}}\,\hat{\phi}(p_{a})\,\hat{V}(p_{1},p_{2},p_{3},p_{4})\label
{eq:Vp}
  \intertext{with}
  V(x_{1},x_{2},x_{3},x_{4})=&\frac{\lambda}{4}\frac{1}{\pi^{4}\det\Theta}\delta(x_{1}-x_{2}+x_{3}-x_{4})
\cos(2(\Theta^{-1})_{\mu\nu}(x_{1}^{\mu}x_{2}^{\nu}+x_{3}^{\mu}x_{4}^{\nu}))\notag\\
  \hat{V}(p_{1},p_{2},p_{3},p_{4})=&\frac{\lambda}{4}(2\pi)^{4}\delta(p_{1}-p_{2}+p_{3}-p_{4})\cos(\frac 12
\Theta^{\mu\nu}(p_{1,\mu}p_{2,\nu}+p_{3,\mu}p_{4,\nu}))\notag
\end{align}
where we used a \emph{cyclic} Fourier transform: $\hat{\phi}(p_{a})=\int dx\,e^{(-1)^{a}\imath p_{a}x_{a}}
\phi(x_{a})$. The transformation
\begin{align}
  \hat{\phi}(p)\leftrightarrow\pi^{2}\sqrt{|\det\Theta|}\,\phi(x),&\qquad p_{\mu}\leftrightarrow\xt_{\mu}  
\end{align}
exchanges \eqref{eq:Vx} and \eqref{eq:Vp}. In addition, the free part of the model \eqref{eq:phi4naif} isn't 
covariant under this duality. The vulcanization adds a term to the Lagrangian which restores the 
symmetry. The theory \eqref{eq:phi4Omega} is then covariant under the Langmann-Szabo duality:
\begin{align}
  S[\phi;m,\lambda,\Omega]\mapsto&\Omega^{2}\,S[\phi;\frac{m}{\Omega},\frac{\lambda}{\Omega^{2}},
\frac{1}{\Omega}].  
\end{align}
By symmetry, the parameter $\Omega$ is confined in $\lsb 0,1\rsb$. Let us note that for $\Omega=1$, 
the model is invariant.

\paragraph{}
\label{parag:interpHarm}
The interpretation of that harmonic term is not yet clear. But the vulcanization procedure already allowed 
to prove the renormalizability of several other models on Moyal spaces such that $\Phi^{\star 4}_{2}$ \cite
{GrWu03-2}, $\phi^{3}_{2,4}$ \cite{Grosse2005ig,Grosse2006qv} and the LSZ models \cite
{Langmann2003if,Langmann2003cg,Langmann2002ai}. These last ones are of the type
\begin{align}
    S[\phi] =&\int d^nx \Big( \frac{1}{2} \bar{\phi}\star(-\partial_\mu+\xt_{\mu}+m)^{2}\phi
+ \frac{\lambda}{4} \bar{\phi} \star \phi \star \bar{\phi} \star\phi\Big)(x).\label{eq:LSZintro}
\end{align}
By comparison with \eqref{eq:phi4Omega}, one notes that here the additional term is formally equivalent 
to a fixed magnetic background. Therefore such a model is invariant under
magnetic translations which combine a translation and a phase shift on the field. 
This model is invariant under the above duality and is exactly soluble. 
Let us remark that the complex interaction in \eqref
{eq:LSZintro} makes the Langmann-Szabo duality more natural. It doesn't need a cyclic Fourier 
transform. The $\phi^{\star 3}$ model at $\Omega=1$ also 
exhibits a soluble structure \cite{Grosse2005ig,Grosse2006qv,GrStei}.

\subsection{The non-commutative Gross-Neveu model}
\label{sec:non-comm-gross}

Apart from the $\Phi^{\star 4}_{4}$, the modified Bosonic LSZ model \cite{xphi4-05}
and supersymmetric theories, we now know several renormalizable \encv{} field
theories. Nevertheless they either are super-renormalizable ($\Phi^{\star 4}_{2}$
\cite{GrWu03-2}) or (and) studied at a special point in the parameter space
where they are solvable ($\Phi^{\star 3}_{2},\Phi^{\star 3}_{4}$
\cite{Grosse2005ig,Grosse2006qv}, the LSZ models
\cite{Langmann2003if,Langmann2003cg,Langmann2002ai}). Although only
logarithmically divergent for parity reasons, the \encv{} Gross-Neveu model is a
just renormalizable quantum field theory as $\Phi^{\star 4}_{4}$. One of its main
interesting features is that it can be interpreted as a non-local
Fermionic field theory in a constant magnetic background. Then apart from
strengthening the ``vulcanization'' procedure to get renormalizable \encv{} field
theories, the Gross-Neveu model may also be useful for the study of the
quantum Hall effect. It is also a good first candidate for a constructive
study \cite{Riv1} of a \encv{} field theory as Fermionic models are usually
easier to construct. Moreover its commutative counterpart being asymptotically
free and exhibiting dynamical mass generation
\cite{Mitter1974cy,Gross1974jv,KMR}, a study of the physics of this model
would be interesting.\\

The \encv{} Gross-Neveu model ($\GN$) is a Fermionic quartically interacting quantum field theory on 
the Moyal plane $\R^{2}_{\theta}$. The skew-symmetric matrix $\Theta$ is
\begin{align}
  \Theta=&
  \begin{pmatrix}
    0&-\theta\\\theta&0
  \end{pmatrix}.
\end{align}
The action is
\begin{align}\label{eq:actfunctGN}
  S[\psib,\psi]=&\int
  dx\lbt\psib\lbt-\imath\slashed{\partial}+\Omega\xts+m+\mu\,\gamma_{5}\rbt\psi+V_{\text{o}}(\psib,\psi)
  +V_{\text{no}}(\psib,\psi)\rbt(x)
\end{align}
where $\xt=2\Theta^{-1}x$, $\gamma_{5}=\imath\gamma^{0}\gamma^{1}$ and $V=V_{\text{o}}+V_{\text
{no}}$ is the interaction part given hereafter. The $\mu$-term appears at two-loop order. We use a 
Euclidean metric and the Feynman convention $\slashed{a}=\gamma^{\mu}a_{\mu}$. The $\gamma^{0}
$ and $\gamma^{1}$ matrices form a two-dimensional representation of the Clifford algebra $\{\gamma^
{\mu},\gamma^{\nu}\}=-2\delta^{\mu\nu}$. Let us remark that the $\gamma^{\mu}$'s are then skew-
Hermitian: $\gamma^{\mu\dagger}=-\gamma^{\mu}$.

\paragraph{Propagator}
The propagator corresponding to the action \eqref{eq:actfunctGN} is given by the following lemma:
\begin{lemma}[Propagator \cite{toolbox05}]\label{xpropa1GN}
  The propagator of the Gross-Neveu model is
  \begin{align}
    C(x,y)=&\int d\mu_{C}(\psib,\psi)\,\psi(x)\psib(y)=\lbt-\imath\slashed{\partial}+\Omega\xts+m\rbt^{-1}
(x,y)\\
    =&\ \int_{0}^{\infty}dt\, C(t;x,y),\notag\\
    C(t;x,y)=&\ -\frac{\Omega}{\theta\pi}\frac{e^{-tm^{2}}}{\sinh(2\Ot t)}\,
    e^{-\frac{\Ot}{2}\coth(2\Ot t)(x-y)^{2}+\imath\Omega x\wed y}\\
    &\times\lb\imath\Ot\coth(2\Ot t)(\xs-\ys)+\Omega(\xts-\yts)-m\rb
    e^{-2\imath\Omega t\gamma\Theta^{-1}\gamma}\notag
  \end{align}
  with $\Ot=\frac{2\Omega}{\theta}$ et $x\wed y=2x\Theta^{-1}y$.\\
We also have $e^{-2\imath\Omega t\gamma\Theta^{-1}\gamma}=\cosh(2\Ot t)\mathds{1}_{2}-\imath\frac
{\theta}{2}\sinh(2\Ot
  t)\gamma\Theta^{-1}\gamma$.
\end{lemma}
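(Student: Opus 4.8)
The plan is to obtain the Gross--Neveu propagator by inverting the operator $D = -\imath\slashed{\partial} + \Omega\xts + m$ through a Schwinger proper-time integral, reducing everything to the already-known scalar Mehler kernel of \eqref{eq:Mehler}. The key observation is that $D$ is a first-order operator whose square $D^{2}$ is, up to a spin term, the scalar Mehler operator $-\Delta + \Omega^{2}\xt^{2} + m^{2}$. Writing $C = D^{-1} = D^{\dagger}(D D^{\dagger})^{-1}$ (or more symmetrically $D^{-1} = (-\imath\slashed{\partial} + \Omega\xts - m)\,(\text{second-order operator})^{-1}$ after tracking signs carefully with the skew-Hermitian $\gamma^{\mu}$), I would compute $DD^{\dagger}$ explicitly. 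The cross terms between $-\imath\slashed{\partial}$ and $\Omega\xts$ produce a commutator $[\partial_{\mu}, \xt^{\nu}]$ proportional to $\Theta^{-1}$, contracted with $\gamma^{\mu}\gamma^{\nu}$; this is the origin of the spin factor $e^{-2\imath\Omega t\gamma\Theta^{-1}\gamma}$ and of the $\imath\Omega x\wed y$ phase (the latter being exactly the magnetic phase familiar from the covariant Mehler kernel discussed around \eqref{eq:phi4Omega} and in the references to \cite{toolbox05}).

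Concretely, the steps in order would be: (i) compute $D^{2}$, isolating the scalar part $H := -\Delta + \Omega^{2}\xt^{2} + m^{2}$ and the constant spin matrix $\Sigma := $ (something proportional to $\gamma\Theta^{-1}\gamma$), so that $D^{2} = H + (\text{spin/phase corrections})$; (ii) use the proper-time representation $D^{-1} = \int_{0}^{\infty} (\text{numerator}) \, e^{-tD^{2}}\,dt$, where the numerator carries the single power of $D^{\dagger}$; (iii) insert the explicit heat kernel $e^{-tH}$, which is precisely the scalar Mehler kernel \eqref{eq:Mehler} read with $\Ot = 2\Omega/\theta$ in two dimensions, so that the Gaussian factor $e^{-\frac{\Ot}{2}\coth(2\Ot t)(x-y)^{2}}$ and the prefactor $\frac{\Omega}{\theta\pi}\frac{1}{\sinh(2\Ot t)}$ appear directly; (iv) diagonalize the spin part: since $\Sigma$ is a constant matrix commuting with the $t$-flow, $e^{-t D^{2}}$ factorizes into $e^{-tH}$ times $e^{-2\imath\Omega t\gamma\Theta^{-1}\gamma}$, and the identity $e^{-2\imath\Omega t\gamma\Theta^{-1}\gamma}=\cosh(2\Ot t)\mathds{1}_{2}-\imath\frac{\theta}{2}\sinh(2\Ot t)\gamma\Theta^{-1}\gamma$ follows from $(\gamma\Theta^{-1}\gamma)^{2}$ being a scalar (compute it from $\{\gamma^{\mu},\gamma^{\nu}\}=-2\delta^{\mu\nu}$ and the canonical form of $\Theta$); (v) apply $(-\imath\slashed{\partial} + \Omega\xts + m)$ (acting in $x$) to the scalar kernel to generate the bracket $\{\imath\Ot\coth(2\Ot t)(\xs-\ys)+\Omega(\xts-\yts)-m\}$ — the derivative hitting the Gaussian pulls down $\coth(2\Ot t)(x-y)$, the $\xts$ multiplication and the mass term contribute directly, and one must check the magnetic phase $\imath\Omega x\wed y$ is reproduced consistently.

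The main obstacle I expect is step (v) combined with the bookkeeping of signs and factors of $\imath$ coming from the skew-Hermiticity $\gamma^{\mu\dagger}=-\gamma^{\mu}$ and from the Euclidean Clifford normalization $\{\gamma^{\mu},\gamma^{\nu}\}=-2\delta^{\mu\nu}$: it is easy to get the wrong sign in the spin exponent or to misplace the $\imath\Omega x\wed y$ phase between the "worldline" Gaussian and the numerator. A clean way to control this is to verify directly that the claimed $C(x,y)=\int_0^\infty dt\,C(t;x,y)$ satisfies $D_x C(x,y) = \delta(x-y)\mathds{1}_2$: the proper-time integrand is a total $t$-derivative up to the scalar heat equation $\partial_t e^{-tH} = -H e^{-tH}$, so $D_x C$ telescopes, the $t\to\infty$ boundary term vanishes by the $e^{-tm^2}$ and $\sinh$ factors, and the $t\to 0^{+}$ boundary term reproduces the delta function from the short-time asymptotics of the Mehler kernel (which degenerates to the flat heat kernel as $t\to 0$). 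This reduces the whole lemma to a finite and essentially mechanical check once $D^{2}$ has been computed, so I would present the proof in that order: compute $D^{2}$ and the spin identity first, then verify the Green's-function equation.
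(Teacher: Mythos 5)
Your overall strategy --- square the Dirac operator, invert the resulting second-order operator through a Schwinger proper-time integral, and obtain the first-order numerator by acting with a conjugate Dirac operator (together with the consistency check $D_{x}C(x,y)=\delta(x-y)\mathds{1}_{2}$) --- is the right one and is surely the approach of the cited reference. The concrete gap is in step (i): the ``scalar part'' you isolate, $H=-\Delta+\Omega^{2}\xt^{2}+m^{2}$, is wrong, and since you then take $e^{-tH}$ to be the Mehler kernel \eqref{eq:Mehler}, which carries \emph{no} oscillation, your plan cannot produce the $\imath\Omega\,x\wed y$ phase that must appear in $C(t;x,y)$. Computing $A^{2}$ with $A=-\imath\slashed{\partial}+\Omega\xts$, the cross term is $-\imath\Omega\{\slashed{\partial},\xts\}$, and acting on a test function $\{\slashed{\partial},\xts\}=\{\gamma^{\mu},\gamma^{\nu}\}\xt_{\nu}\partial_{\mu}+\gamma^{\mu}\gamma^{\nu}(\partial_{\mu}\xt_{\nu})$ splits into \emph{two} pieces: a constant spin matrix proportional to $\gamma\Theta^{-1}\gamma$ coming from $\partial_{\mu}\xt_{\nu}=2(\Theta^{-1})_{\nu\mu}$, and a first-order differential operator $-2\xt^{\mu}\partial_{\mu}$ coming from $\{\gamma^{\mu},\gamma^{\nu}\}=-2\delta^{\mu\nu}$. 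Only the first exponentiates to $e^{-2\imath\Omega t\gamma\Theta^{-1}\gamma}$. The second, which you dropped from $H$, is the angular-momentum (magnetic $B$-) term; it is precisely the term in Lemma~\ref{HinXspace} responsible for the $\imath\frac{\Omega\sinh Bt}{\sinh\Omega t}x\wedge x'$ phase (the Gross--Neveu case being $B=\Ot$), and it is \emph{this} term, not the commutator $[\partial_{\mu},\xt^{\nu}]$, that produces $\imath\Omega\,x\wed y$. Your attribution of the phase to the commutator is the misidentification at the root of the error.

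Once $H$ is replaced by the covariant operator $H'=-\Delta+\Omega^{2}\xt^{2}-2\imath\Omega\xt^{\mu}\partial_{\mu}+m^{2}$, the rest of your outline goes through: the constant spin matrix commutes with $H'$ and exponentiates on its own; $(\gamma\Theta^{-1}\gamma)^{2}=-4/\theta^{2}$ (from $\{\gamma^{\mu},\gamma^{\nu}\}=-2\delta^{\mu\nu}$ and the canonical $\Theta$) yields the stated $\cosh/\sinh$ identity; and the first-order bracket is most cleanly obtained by writing $D^{-1}=Q\,(A^{2}-m^{2})^{-1}$ with $Q=-\imath\slashed{\partial}+\Omega\xts-m$, which is preferable to your $D^{\dagger}(DD^{\dagger})^{-1}$ suggestion because $DQ=A^{2}-m^{2}$ exactly and the $m$ cross terms drop. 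The direct verification $D_{x}C=\delta\,\mathds{1}_{2}$ you propose is sound and would in fact have exposed the omission: with the plain $H$ the proper-time integrand does not telescope, since $DQ$ applied to the kernel reproduces the covariant $H'$ plus the spin constant, not $H$.
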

If we want to study a $N$-\emph{color} model, we can consider a propagator diagonal in these color 
indices.

\paragraph{Interactions}
Concerning the interaction part $V$, recall that (see corollary \ref{cor:int-Moyal}) 
for any $ f_{1}$, $f_{2}$, $f_{3}$, $f_{4}$ in $\cA_{\Theta}$,
\begin{align}
  \int dx\,\lbt f_{1}\star f_{2}\star f_{3}\star
  f_{4}\rbt(x)=&\frac{1}{\pi^{2}\det\Theta}\int\prod_{j=1}^{4}dx_{j}f_{j}(x_{j})\,
  \delta(x_{1}-x_{2}+x_{3}-x_{4})e^{-\imath\varphi},\label{eq:interaction-GN}\\
  \varphi=&\sum_{i<j=1}^{4}(-1)^{i+j+1}x_{i}\wed x_{j}.
\end{align}
This product is non-local and only invariant under cyclic 
permutations of the fields. Then, contrary to the 
commutative Gross-Neveu model, for which there exits only one spinorial interaction, the $\GN$ model 
has, at least, six different interactions: the \emph{orientable} ones
\begin{subequations}\label{eq:int-orient}
  \begin{align}
    V_{\text{o}}=\phantom{+}&\frac{\lambda_{1}}{4}\int
    dx\,\lbt\psib\star\psi\star\psib\star\psi\rbt(x)\label{eq:int-o-1}\\
    +&\frac{\lambda_{2}}{4}\int
    dx\,\lbt\psib\star\gamma^{\mu}\psi\star\psib\star\gamma_{\mu}\psi\rbt(x)\label{eq:int-o-2}\\
    +&\frac{\lambda_{3}}{4}\int
    dx\,\lbt\psib\star\gamma_{5}\psi\star\psib\star\gamma_{5}\psi\rbt(x),&\label{eq:int-o-3}
  \end{align}
\end{subequations}
where $\psi$'s and $\psib$'s alternate and the \emph{non-orientable} ones
\begin{subequations}\label{eq:int-nonorient}
  \begin{align}
    V_{\text{no}}=\phantom{+}&\frac{\lambda_{4}}{4}\int
    dx\,\lbt\psi\star\psib\star\psib\star\psi\rbt(x)\label{eq:int-no-1}&\\
    +&\frac{\lambda_{5}}{4}\int
    dx\,\lbt\psi\star\gamma^{\mu}\psib\star\psib\star\gamma_{\mu}\psi\rbt(x)\label{eq:int-no-2}\\
    +&\frac{\lambda_{6}}{4}\int
    dx\,\lbt\psi\star\gamma_{5}\psib\star\psib\star\gamma_{5}\psi\rbt(x).\label{eq:int-no-3}
  \end{align}
\end{subequations}
All these interactions have the same $x$ kernel thanks to the equation \eqref{eq:interaction-GN}. The 
reason for which we call these interactions orientable or not will be clear in section \ref{sec:direct-space}.

\section{Multi-scale analysis in the matrix basis}
\label{sec:multi-scale-analysisMatrix}

The matrix basis is a basis for Schwartz-class functions. In this basis, the Moyal product becomes a 
simple matrix product. Each field is then represented by an infinite matrix \cite{Gracia-Bondia1987kw,GrWu03-2,vignes-tourneret06:PhD}.

\subsection{A dynamical matrix model}
\label{sec:phi4-matrixbase}

\subsubsection{From the direct space to the matrix basis}
\label{sec:de-lespace-direct}

In the matrix basis, the action~\eqref{eq:phi4Omega} takes the form:
\begin{align}
  S[\phi]=&(2\pi)^{D/2}\sqrt{\det\Theta}\Big(\frac 12\phi\Delta\phi+\frac{\lambda}{4}\Tr\phi^{4}\Big)\label
{eq:SPhi4matrix}
\end{align}
where $\phi=\phi_{mn},\,m,n\in\N^{D/2}$ and
\begin{align}
    \Delta_{mn,kl}=&\sum_{i=1}^{D/2}\Big(\mu_{0}^{2}+\frac{2}{\theta}(m_{i}
    +n_{i}+1)\Big)\delta_{ml}\delta_{nk} \ -\frac{2}{\theta}
    (1-\Omega^{2})\label{eq:formequadMatrixPhi4}\\
    &\hspace{-1.5cm}\Big(\sqrt{(m_{i}+1)(n_{i}+1)}\,
    \delta_{m_{i}+1,l_{i}}\delta_{n_{i}+1,k_{i}}+\sqrt{m_{i}n_{i}}\,\delta_{m_{i}-1,l_{i}}
    \delta_{n_{i}-1,k_{i}}\Big)\prod_{j\neq i}\delta_{m_{j}l_{j}}\delta_{n_{j}k_{j}}.\notag
\end{align}
The (four-dimensional) matrix $\Delta$ represents the quadratic part of the Lagragian. The first difficulty 
to study the matrix model \eqref{eq:SPhi4matrix} is the computation of its propagator $G$ defined as the 
inverse of $\Delta$ :
\begin{align}
  \sum_{r,s\in\N^{D/2}}\Delta_{mn;rs}G_{sr;kl}
  =\sum_{r,s\in\N^{D/2}}G_{mn;rs}\Delta_{sr;kl}=\delta_{ml}\delta_{nk}.
\end{align}

Fortunately, the action is invariant under $SO(2)^{D/2}$ thanks to the form \eqref{eq:Thetamatrixbase} of 
the $\Theta$ matrix. It implies a conservation law
\begin{align}
  \Delta_{mn,kl}=&0\iff m+k\neq n+l.\label{eq:conservationindices}
\end{align}
The result is \cite{c,GrWu03-2}
\begin{align}
  \label{eq:propaPhimatrix}
  G_{m, m+h; l + h, l} 
&= \frac{\theta}{8\Omega} \int_0^1 d\alpha\,  
\dfrac{(1-\alpha)^{\frac{\mu_0^2 \theta}{8 \Omega}+(\frac{D}{4}-1)}}{  
(1 + C\alpha )^{\frac{D}{2}}} \prod_{s=1}^{\frac{D}{2}} 
G^{(\alpha)}_{m^s, m^s+h^s; l^s + h^s, l^s},
\\
 G^{(\alpha)}_{m, m+h; l + h, l}
&= \left(\frac{\sqrt{1-\alpha}}{1+C \alpha} 
\right)^{m+l+h} \sum_{u=\max(0,-h)}^{\min(m,l)}
   {\mathcal A}(m,l,h,u)
\left( \frac{C \alpha (1+\Omega)}{\sqrt{1-\alpha}(1-\Omega)} 
\right)^{m+l-2u},\notag
\end{align}
where ${\mathcal A}(m,l,h,u)=\sqrt{\binom{m}{m-u}
\binom{m+h}{m-u}\binom{l}{l-u}\binom{l+h}{l-u}}$ and $C$ is a function in $\Omega$ : $C(\Omega)=
\frac{(1-\Omega)^2}{4\Omega}$. The main advantage of the matrix basis is that it simplifies the 
interaction part: $\phi^{\star 4}$ becomes $\Tr\phi^{4}$. But the propagator becomes very complicated.\\

Let us remark that the matrix model \eqref{eq:SPhi4matrix} is \emph{dynamical}: its quadratic part is not 
trivial. Usually, matrix models are \emph{local}.
\begin{defn}
A matrix model is called
{\bf local} if $G_{mn;kl}=G(m,n)\delta_{ml}\delta_{nk}$ and {\bf non-local} otherwise.
\end{defn}
In the matrix theories, the Feynman graphs are ribbon graphs. The propagator $G_{mn;kl}$ is then 
represented by the Figure \ref{fig:propamatrix}.
\begin{figure}[htb]
  \centering
  \includegraphics{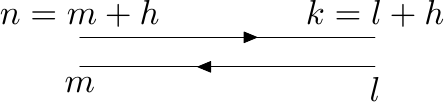}
  \caption{Matrix Propagator}
  \label{fig:propamatrix}
\end{figure}
In a local matrix model, the propagator preserves the index values 
along the trajectories (simple lines).

\subsubsection{Topology of ribbon graphs}
\label{sec:topologie-des-graphes}

The power counting of a matrix model depends on the topological data of its graphs. The figure \ref
{fig:ribbon-examples} gives two examples of ribbon graphs.
\begin{figure}[htbp]
  \centering 
  \subfloat[Planar]{{\label{fig:ribongraph1}}\includegraphics[scale=1]{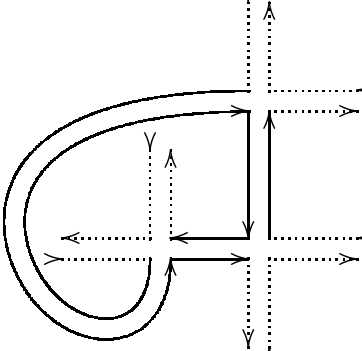}}\qquad
  \subfloat[Non-planar]{\label{fig:ribongraph2}\includegraphics[scale=1]{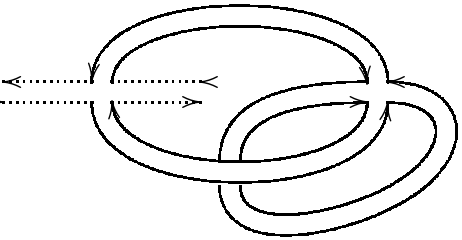}}
  \caption{Ribbon Graphs}
  \label{fig:ribbon-examples}
\end{figure}
Each ribbon graph may be drawn on a two-dimensional manifold. Actually each graph defines a
surface on which it is drawn. Let a graph $G$ with $V$ vertices, $I$ internal propagators (double lines) 
and $F$ faces (made of simple lines). The Euler characteristic
\begin{align}
  \chi=&2-2g=V-I+F\label{eq:Eulercar}
\end{align}
gives the genus $g$ of the manifold. One can make this clear by passing to the \textbf{dual graph}. The 
dual of a given graph $G$ is obtained by exchanging faces and vertices. The dual graphs of the 
$\Phi^{\star 4}$ 
theory are tesselations of the surfaces on which they are drawn. Moreover each direct face broken 
by external legs becomes, in the dual graph, a \textbf{puncture}. If among the $F$ faces of a graph, $B$ 
are broken, this graph may be drawn on a surface of genus $g=1-\frac 12(V-I+F)$ with $B$ punctures. 
The figure \ref{fig:topo-ruban} gives the topological data of the graphs of the figure \ref{fig:ribbon-examples}.
\begin{figure}[hbtp]
  \centering
  \begin{minipage}[c]{3cm}
    \centering
    \includegraphics[width=3cm]{gt1.pdf}
  \end{minipage}%
  \begin{minipage}[c]{2cm}
    \centering
    $ \Longrightarrow$
  \end{minipage}%
  \begin{minipage}[c]{2.6cm}
    \centering
    \includegraphics[width=2.6cm]{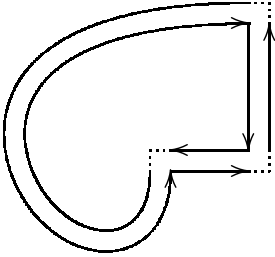}
  \end{minipage} \hspace{1cm}
   $\left .
     \begin{array}{c}
       $V=3$\\
       $I=3$\\
       $F=2$\\
       $B=2$
     \end{array}\rb
     \Longrightarrow\ g=0$\\
     \vspace{1cm}
     \begin{minipage}[c]{4cm}
       \centering
       \includegraphics[width=4cm]{gt3.pdf}
     \end{minipage}%
     \begin{minipage}[c]{2cm}
       \centering
       $\Longrightarrow$
     \end{minipage}%
     \begin{minipage}[c]{3cm}
       \centering
       \includegraphics[width=3cm]{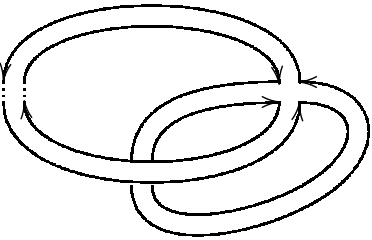}
     \end{minipage} \hspace{1cm}
     $\left .
       \begin{array}{c}
         $V=2$\\
         $I=3$\\
         $F=1$\\
         $B=1$
       \end{array}\rb
       \Longrightarrow\ g=1$
       \caption[Topological Data]{Topological Data of Ribbon Graphs}
       \label{fig:topo-ruban}
\end{figure}

\subsection{Multi-scale analysis}
\label{sec:analyse-multi-echell-matrix}

In \cite{Rivasseau2005bh}, a multi-scale analysis was introduced to complete the rigorous
study of the power counting of the \encv{} $\Phi^{\star 4}$ theory.

\subsubsection{Bounds on the propagator}
\label{sec:bornes-sur-le}

Let $G$ a ribbon graph of the $\Phi^{ \star 4}_{4}$ theory with $N$ external legs, $V$ vertices, $I$ internal lines 
and $F$ faces. Its genus is then $g=1 - \frac 12(V-I+F)$. Four indices $\{m,n;k,l\} \in\N^2$ are associated 
to each internal line of the graph and two indices to each external line, that is to say $4I+2N =8V$ 
indices. But, at each vertex, the left index of a ribbon equals the right one of the neighbor ribbon. This 
gives rise to $4V$ independent identifications which allows to write each index in terms of a set $
\mathcal{I}$ made of $4V$ indices, four per vertex, for example the left index of each half-ribbon.\\

The graph amplitude is then
\begin{align}  
  A_{G} = \sum_{\mathcal{I}} \prod_{\delta \in G}
  G_{m_{\delta}(\mathcal{I}),n_{\delta}(\mathcal{I});k_{\delta}(\mathcal{I}),l_{\delta}(\mathcal{I})}\;
    \delta_{m_{\delta}-l_{\delta},n_{\delta}-k_{\delta}} \;,
\label{IG}
\end{align}
where the four indices of the propagator $G$ of the line $\delta$ are function of $\mathcal{I}$ and written
\\
$\{m_{\delta}(\mathcal{I}),n_{\delta}(\mathcal{I}); k_{\delta}(\mathcal{I}),l_{\delta}(\mathcal{I})\} $. We 
decompose each propagator, given by \eqref{eq:propaPhimatrix}:
\begin{align}  
G = \sum_{i=0}^{\infty}G^i\qquad \text{thanks to }\int_{0}^{1}d\alpha=\sum_{i=1}^{\infty}\int_{M^{-2i}}^{M^
{-2(i-1)}}d\alpha,\;M>1.
\end{align}
We have an associated decomposition for each amplitude
\begin{align}  
A_G &= \sum_{\mu} A_{G,\mu}\;,
\\
A_{G,\mu} &= \sum_{\mathcal{I}} \prod_{\delta \in G} G^{i_{\delta}}_{
m_{\delta}(\mathcal{I}),n_{\delta}(\mathcal{I});
k_{\delta}(\mathcal{I}),l_{\delta}(\mathcal{I})}  \;
\delta_{m_{\delta}(\mathcal{I})-l_{\delta}(\mathcal{I}),
n_{\delta}(\mathcal{I})-k_{\delta}(\mathcal{I})} \;,
\label{IGmu}
\end{align}
where $\mu=\{i_{\delta}\}$ runs over the all possible assignments of a positive integer $i_{\delta}$ to 
each line $\delta$. We proved the following four propositions:
\begin{prop}
\label{thm-th1}
For $M$ large enough, there exists a constant $K$ such that, for $\Omega\in [0.5,1]$, we have the 
uniform bound
\begin{equation} 
    \label{th1}
    G^i_{m,m+h;l+h,l}\les 
    KM^{-2i} e^{-\frac{\Omega}{3}M^{-2i}\|m+l+h\|}.
  \end{equation}
\end{prop}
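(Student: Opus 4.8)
The plan is to work directly from the closed formula \eqref{eq:propaPhimatrix}. First I would specialise to $D=4$, where the exponent $\tfrac D4-1$ vanishes, so that the scalar prefactor in \eqref{eq:propaPhimatrix} reduces to $(1-\al)^{\mu_0^2\theta/(8\Om)}(1+C\al)^{-2}\in(0,1]$ and the propagator factorises over the two symplectic blocks:
\[
G_{m,m+h;l+h,l}=\frac{\theta}{8\Om}\int_0^1 d\al\,\frac{(1-\al)^{\mu_0^2\theta/(8\Om)}}{(1+C\al)^{2}}\prod_{s=1}^{2}G^{(\al)}_{m^s,m^s+h^s;l^s+h^s,l^s}.
\]
Restricting $\al$ to the $i$-th slice $[M^{-2i},M^{-2(i-1)}]$, whose length is $M^{-2i}(M^2-1)$, and using $\tfrac{\theta}{8\Om}\les\tfrac\theta4$ on $\Om\in[\tfrac12,1]$, one sees that the prefactor $KM^{-2i}$ of \eqref{th1}, with $K=\tfrac\theta4(M^2-1)$, will come out of the $\al$-integration \emph{provided the integrand is bounded uniformly}. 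Hence everything reduces to a pointwise estimate on a single block, $G^{(\al)}_{m,m+h;l+h,l}\les e^{-\frac\Om3\al(m+l+h)}$: once this holds, replacing $\al$ by its slice minimum $M^{-2i}$ and multiplying the two blocks yields $e^{-\frac\Om3 M^{-2i}\|m+l+h\|}$, the decay required in \eqref{th1}.

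For the single-block bound, write $G^{(\al)}_{m,m+h;l+h,l}=a^{m+l+h}\,\Si$ with $a=\tfrac{\sqrt{1-\al}}{1+C\al}$, $Y=\tfrac{C\al(1+\Om)}{(1-\Om)\sqrt{1-\al}}$ and $\Si=\sum_u {\mathcal A}(m,l,h,u)\,Y^{m+l-2u}$, where $m+l-2u=(m-u)+(l-u)\ges 0$. The combinatorial heart is the bound $\Si\les(1+Y)^{m+l+h}$. I would obtain it by writing ${\mathcal A}(m,l,h,u)=\sqrt{\binom{m}{m-u}\binom{m+h}{m-u}}\,\sqrt{\binom{l}{l-u}\binom{l+h}{l-u}}$ and $Y^{m+l-2u}=Y^{m-u}Y^{l-u}$, applying Cauchy--Schwarz in the summation index $u$, and then estimating each of the two resulting sums by the Vandermonde inequality $\binom{a}{v}\binom{b}{v}\les\binom{a+b}{2v}$ (a single term of $\binom{a+b}{2v}=\sum_j\binom{a}{j}\binom{b}{2v-j}$): enlarging the range of summation, $\sum_{v\ges0}\binom{m}{v}\binom{m+h}{v}Y^{2v}\les\sum_{v\ges0}\binom{2m+h}{2v}Y^{2v}\les(1+Y)^{2m+h}$, and likewise with $l$ in place of $m$. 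Taking square roots and multiplying gives $\Si\les(1+Y)^{m+l+h}$, hence
\[
G^{(\al)}_{m,m+h;l+h,l}\les\bigl(a(1+Y)\bigr)^{m+l+h}.
\]
One already reads off $a(1+Y)<1$ for all $\al\in(0,1)$, $\Om\in[\tfrac12,1]$, so some exponential decay is automatic; the point of the next step is to pin down the rate.

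It then remains to prove the elementary inequality $a(1+Y)\les e^{-\frac\Om3\al}$, uniformly for $\al\in(0,1)$ and $\Om\in[\tfrac12,1]$, and this is the step I expect to be the main obstacle, since it is where the slack accumulated in the crude bounds above must still leave a genuine decay. For $\al$ bounded away from $1$ one can take logarithms and use $\ln(1-\al)\les-\al$, $\ln(1+Y)\les Y$, $-\ln(1+C\al)\les-C\al+\tfrac12 C^2\al^2$, reducing to a polynomial inequality in $\al$ after substituting $C=\tfrac{(1-\Om)^2}{4\Om}$ and the explicit $Y$; for $\al$ near $1$ one argues instead directly that $a\to0$ while $aY$ stays bounded, so $a(1+Y)=a+aY\to\tfrac{1-\Om}{1+\Om}\les\tfrac13<e^{-\Om/3}$. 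The particular constant $\tfrac\Om3$ — rather than a larger one — is dictated precisely by the need to absorb this slack while keeping a rate that does not degenerate (indeed $C,Y\to0$ and $a\to\sqrt{1-\al}$ as $\Om\to1$, so the estimate only improves there). Granting it, the assembly is as in the first paragraph: combine the two blocks, replace $\al$ by $M^{-2i}$, integrate over the slice, and \eqref{th1} follows. The hypothesis ``$M$ large enough'' is carried along for uniformity with the three companion bounds and is not otherwise used in this one.
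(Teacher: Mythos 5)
The paper defers the proof of this proposition to \cite{Rivasseau2005bh}, so there is no in-text proof to compare against, but your strategy is the right one: factor over the symplectic blocks, prove a pointwise-in-$\al$ bound for one block, integrate over the slice. Your treatment of the combinatorial sum is clean and correct: splitting ${\mathcal A}$ as a product of two square roots and $Y^{m+l-2u}=Y^{m-u}Y^{l-u}$, Cauchy--Schwarz in $u$ followed by the single-Vandermonde-term bound $\binom{m}{v}\binom{m+h}{v}\les\binom{2m+h}{2v}$ and relaxation of the range of summation indeed gives $\sum_u{\mathcal A}(m,l,h,u)Y^{m+l-2u}\les(1+Y)^{m+l+h}$, hence $G^{(\al)}\les(a(1+Y))^{m+l+h}$; the slice integration producing $KM^{-2i}$ is also sound.

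The gap is in the scalar inequality $a(1+Y)\les e^{-\Om\al/3}$, which you correctly single out as the crux but do not actually establish, and the two regimes you propose do not patch together. The bound $\ln(1+Y)\les Y$ leaves the term $Y\propto\al/\sqrt{1-\al}$, which diverges as $\al\to1$; after the cancellation at $\al=0$ the resulting condition is $\tfrac12 C^2\al+\tfrac{1-\Om^2}{4\Om}\bigl(\tfrac1{\sqrt{1-\al}}-1\bigr)\les\tfrac\Om6$, and already at $\Om=\tfrac12$ this fails once $\al\gtrsim\tfrac13$, so ``$\al$ bounded away from $1$'' is not the regime it covers. Meanwhile the $\al\to1$ computation giving $\tfrac{1-\Om}{1+\Om}$ is only a pointwise limit, not an estimate on a neighbourhood, so the middle of the interval is left untreated. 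Fortunately the inequality admits a short direct proof that avoids logarithms altogether. Write $a(1+Y)=\dfrac{\sqrt{1-\al}+b\al}{1+C\al}$ with $b=\tfrac{1-\Om^2}{4\Om}$, and note that $\tfrac12-b=\tfrac\Om2-C$. Then $\sqrt{1-\al}\les1-\tfrac\al2$ bounds the numerator by $1-(\tfrac\Om2-C)\al$, while $e^{-\Om\al/3}\ges1-\tfrac{\Om\al}3$ gives $(1+C\al)e^{-\Om\al/3}\ges1+C\al-\tfrac{\Om\al}3-\tfrac{C\Om\al^2}3$; the difference of this lower bound minus the numerator bound is $\tfrac{\Om\al}6(1-2C\al)\ges0$, since $C=\tfrac{(1-\Om)^2}{4\Om}\les\tfrac18$ for $\Om\in[\tfrac12,1]$. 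With that substitution your argument is complete.
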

\begin{prop}
\label{thm-th2}
For $M$ large enough, there exists two constants $K$ and $K_{1}$ such that, for $\Omega \in [0.5,1]$, 
we have the uniform bound
\begin{align} 
&   G^i_{m,m+h;l+h,l}
\nonumber
\\*
& \les K M^{-2i} e^{-\frac{\Omega}{4}M^{-2i}\|m+l+h\|} 
\prod_{s=1}^{\frac{D}{2}} \min\left( 1, 
\left(
    \frac{K_{1}\min(m^s,l^s,m^s+h^s,l^s+h^s)}{M^{2i}}
\right)^{\!\!\frac{|m^s-l^s|}{2}} \right).
\label{th2}
\end{align}
\end{prop}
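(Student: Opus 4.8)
The plan is to obtain \eqref{th2} by the same slicewise strategy that underlies Proposition \ref{thm-th1}, but keeping track of the extra off-diagonal decay. Since the propagator \eqref{eq:propaPhimatrix} is, up to the overall $\frac{\theta}{8\Omega}\int_0^1d\alpha$ and the bounded factor $(1-\alpha)^{\cdots}(1+C\alpha)^{-D/2}$, a product over the $D/2$ directions $s$ of the one-dimensional kernels $G^{(\alpha)}_{m^s,m^s+h^s;l^s+h^s,l^s}$, the first step is to prove, uniformly for $\Omega\in[0.5,1]$ and $\alpha$ in the slice $M^{-2i}\le\alpha\le M^{-2(i-1)}$, the one-dimensional estimate
\[
  0\le G^{(\alpha)}_{m,m+h;l+h,l}\le K\,e^{-\frac\alpha2(m+l+h)}\,
  \min\!\Big(1,\Big(\tfrac{K_1\min(m,l,m+h,l+h)}{M^{2i}}\Big)^{\frac{|m-l|}{2}}\Big),
\]
and only then multiply the $D/2$ bounds and perform $\frac{\theta}{8\Omega}\int_{M^{-2i}}^{M^{-2(i-1)}}d\alpha$: the integral yields the prefactor $M^{-2i}$, and $\alpha\ge M^{-2i}$ in the exponentials turns $e^{-\frac\alpha2(m+l+h)}$ per direction into $e^{-\frac\Omega4 M^{-2i}\|m+l+h\|}$ overall, with room to spare — this is where part of the $\frac\Omega3$-decay of Proposition \ref{thm-th1} is traded for the new factor. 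Moreover, as the target is a product of terms $\min(1,\cdot)$, it is enough to establish this one-dimensional bound in the regime $\min(m,l,m+h,l+h)<M^{2i}/K_1$; in every direction where this fails one replaces the min-factor by $1$ and invokes Proposition \ref{thm-th1}. Finitely many low slices $i$ can be disposed of directly.

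For the one-dimensional bound I would first use the manifest symmetries of the formula \eqref{eq:propaPhimatrix} in the variables $(m,l,h)$ — it is invariant under $m\leftrightarrow l$, and under $(m,l,h)\mapsto(m+h,l+h,-h)$ — to reduce to $h\ge0$ and $m\le l$, so that $p:=\min(m,l,m+h,l+h)=m$ and the sum over $u$ runs in $\lnat 0,m\rnat$. Then I would peel off the prefactor: since $\frac{\sqrt{1-\alpha}}{1+C\alpha}\le\sqrt{1-\alpha}\le e^{-\alpha/2}$, one has $q^{m+l+h}\le e^{-\frac\alpha2(m+l+h)}$ with $q:=\frac{\sqrt{1-\alpha}}{1+C\alpha}$; I would keep $e^{-\frac\Omega4\alpha(m+l+h)}$ for the statement and reserve the surplus $e^{-(\frac12-\frac\Omega4)\alpha(m+l+h)}$, which is genuine since $\frac12-\frac\Omega4>0$ on $[0.5,1]$. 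It remains to bound $\sum_{u=0}^{m}{\mathcal A}(m,l,h,u)\,t^{m+l-2u}$, where $t=\frac{C\alpha(1+\Omega)}{\sqrt{1-\alpha}(1-\Omega)}$, and here I would use the two elementary facts, uniform on $[0.5,1]$, that $qt\le\frac{1-\Omega}{1+\Omega}\le\frac13$ and $t\le K\alpha$ for $i$ not too small.

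The substitution $u=m-v$ rewrites the sum as $t^{|m-l|}\sum_{v=0}^{m}{\mathcal A}(m,l,h,m-v)\,t^{2v}$ with ${\mathcal A}(m,l,h,m-v)=\sqrt{\binom{m}{v}\binom{m+h}{v}\binom{l}{m-v}\binom{l+h}{m-v}}$, exhibiting the desired power $t^{|m-l|}\sim\alpha^{|m-l|}$. I would then bound the binomials that carry the small index $m$ by $\binom{m}{v}\le m^v/v!$ (and likewise $\binom{m+h}{v}$) — which at $v=0$ is responsible for producing a coefficient controlled by $p=m$ — and the two remaining binomials by powers of $l$ and $l+h$ over factorials, resum the resulting series in $v$ into an exponential, and finally absorb those powers of $l,l+h$ (all dominated by powers of $m+l+h$) against the reserved surplus $e^{-(\frac12-\frac\Omega4)\alpha(m+l+h)}$; inserting $t\le K M^{-2(i-1)}$ then converts $t^{|m-l|}(K_1 p)^{|m-l|/2}$ into $(K_1p/M^{2i})^{|m-l|/2}$. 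Keeping instead merely $qt\le\frac13$ and the crude resummation $\sum_v\binom{m}{v}^{2}t^{2v}\le(1+t)^{2m}$ gives back exactly Proposition \ref{thm-th1}, so the two regimes match at $p\sim M^{2i}$ (and the case $\Omega=1$ is trivial, the propagator being diagonal since then $t=0$). I expect this last step to be the main obstacle: arranging the estimates on the square-rooted binomials ${\mathcal A}(m,l,h,u)$ so that the power of $t$ comes out \emph{exactly} $|m-l|$ with a coefficient governed by the \emph{smallest} of the four indices, while showing that the residual $v$-summation does not eat up the exponential decay, is a delicate balance — it is precisely what forces the loss from $\frac\Omega3$ to $\frac\Omega4$, and it requires care to stay uniform as $\Omega\to1$, where $C(\Omega)\to0$ and all of the $t$-decay must come from $\alpha$. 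If the bare-hands resummation turns out too cumbersome, an alternative is to read $G^{(\alpha)}$ off the coherent-state generating function of the Mehler kernel and extract the two-regime bound by Cauchy estimates on the resulting Gaussian.
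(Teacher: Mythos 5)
First, a point of reference: the paper itself contains no proof of Proposition \ref{thm-th2}; it refers to \cite{Rivasseau2005bh}, where the four propositions are established by direct estimates on the explicit $u$-sum in \eqref{eq:propaPhimatrix}, i.e.\ the same elementary kind of argument you envisage. Your scaffolding matches that route and is correct as far as it goes: factorization over the $D/2$ symplectic pairs, a bound at fixed $\alpha$ in the slice followed by $\int_{M^{-2i}}^{M^{-2(i-1)}}d\alpha\les M^{-2i}$, the symmetry reduction to $h\ges 0$, $m\les l$, and the uniform facts $q:=\frac{\sqrt{1-\alpha}}{1+C\alpha}\les e^{-\alpha/2}$, $qt\les\frac{1-\Omega}{1+\Omega}\les\frac13$, $t\les K\alpha$, together with the fallback on Proposition \ref{thm-th1} when the minimum index exceeds $M^{2i}/K_1$, are exactly the working ingredients.

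The gap is in the single step that must produce the factor $\bigl(K_1\min(m,l,m+h,l+h)/M^{2i}\bigr)^{|m-l|/2}$, and your sketch mislocates its origin. With $h\ges0$, $m\les l$, $u=m-v$, $\delta=l-m$, the binomials $\binom{m}{v}$ and $\binom{m+h}{v}$ equal $1$ at the extremal term $v=0$ and only ever contribute $(m(m+h))^{v/2}$, i.e.\ with exponent $v$, never $\delta/2$; all the $\delta$-dependence of the coefficient sits in $\binom{l}{\delta+v}$ and $\binom{l+h}{\delta+v}$, which you propose to bound by powers of $l$, $l+h$ over factorials and absorb into the reserved exponential. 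But that absorption turns each such power into a factor of order $\alpha^{-1}$, so the extremal term becomes $t^{\delta}\,O(\alpha^{-\delta})=O(1)^{\delta}$: you recover Proposition \ref{thm-th1} (the ``$1$'' branch of the min) and no power of the smallest index appears at all. The missing idea is to keep those two binomials and extract the small index from them, e.g.\ $\binom{l}{\delta+v}\les\binom{l}{\delta}\binom{m}{v}$ (similarly for $l+h$), then $\binom{m+\delta}{\delta}\les(m+\delta)^{\delta}/\delta!$ with the split $(m+\delta)^{\delta}\les2^{\delta}(m^{\delta}+\delta^{\delta})$, $\delta^{\delta}\les e^{\delta}\delta!$, spending only half of the $\alpha^{\delta}$ coming from $t^{\delta}$ to absorb $(l+h)^{\delta/2}$ exponentially and keeping $(\mathrm{const}\,\alpha\,m)^{\delta/2}\les(K_1m/M^{2i})^{\delta/2}$; the residual $v$-sum is then handled by $\sum_v\binom{m}{v}\binom{m+h}{v}t^{2v}\les(1+t)^{2m+h}$ paired with $q^{2m+h}$ as in Proposition \ref{thm-th1}. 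Your ``resum and absorb'' step as stated also has a constants problem near $\Omega=1$ (the surplus is only $e^{-\frac{\alpha}{4}(m+l+h)}$ while $t/\alpha$ is not small, so the exponential generated by the $v$-resummation is not automatically dominated); the separation above, or a Cauchy--Schwarz on the $u$-sum, is what closes the bookkeeping and is where the degradation from $\frac{\Omega}{3}$ to $\frac{\Omega}{4}$ and the restriction $\Omega\ges\frac12$ actually enter. Finally, carried out correctly the argument yields $1+\min$ rather than $\min$ in the numerator; as literally written \eqref{th2} fails when $\min(m^s,l^s,m^s+h^s,l^s+h^s)=0$ with $m^s\neq l^s$ (for one pair, $G^{(\alpha)}_{0,0;1,1}=q\,t>0$ for $\Omega<1$ while the right-hand side vanishes), so you should prove, and use, the bound with $1+\min(\cdots)$, which is all that any application requires.
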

This bound allows to prove that the only diverging graphs have either a constant index along the 
trajectories or a total jump of $2$.
\begin{prop}\label{prop:bound3}
For $M$ large enough, there exists a constant $K$ such that, for $\Omega \in [\frac 23,1]$, we have the 
uniform bound
\begin{align}
\sum_{l =-m}^p G^i_{m,p-l,p,m+l} &\les 
K M^{-2i} \,e^{-\frac{\Omega}{4} M^{-2i} (\|p\|+\|m\|) }\;.
\label{thsum}
\end{align}
\end{prop}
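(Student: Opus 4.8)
The plan is to perform the sum over $l$ directly on the explicit $\alpha$-representation \eqref{eq:propaPhimatrix} of $G^{i}$, rather than summing the pointwise envelope \eqref{th2}, which by itself is too lossy: in the region where its $\min$-factor is trivial it overcounts the tail of the sum by a power of $M^{i}$, and the true cancellation only appears once the face index is resummed against the binomial structure of the matrix elements. Before that, a little bookkeeping. Reading the four indices of $G^{i}_{m,p-l;p,m+l}$ against the template $G^{i}_{a,a+h;b+h,b}$ of Proposition~\ref{thm-th2} gives $a=m$, $b=m+l$, $h=p-l-m$; hence the combination $a+b+h$ governing the Gaussian factor there equals $m+p$, \emph{independently of the summation variable $l$}, and $\|m+p\|=\|m\|+\|p\|$ because $m,p$ have non-negative components. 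So the factor $e^{-\frac{\Omega}{4}M^{-2i}(\|m\|+\|p\|)}$ comes out of the sum, and it is enough to bound $\sum_{l}$ of what remains by a constant, uniformly in $m$, $p$, $i$ and in $\Omega\in[\tfrac23,1]$.

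First I would factorise over the $D/2$ blocks: by \eqref{eq:propaPhimatrix} the sliced propagator is a product of one-dimensional kernels, the delta constraint splits blockwise, and $l^{s}\in[-m^{s},p^{s}]$ is precisely the positivity range of the four matrix indices in block $s$. It thus suffices to prove, for a single block, $\sum_{q=-m^{s}}^{p^{s}}G^{i}_{m^{s},p^{s}-q;p^{s},m^{s}+q}\les KM^{-2i}e^{-\frac{\Omega}{4}M^{-2i}(m^{s}+p^{s})}$, and a product of $D/2$ such bounds is again of the required form. I would then split the $q$-range according to the size of $\min(m^{s},m^{s}+q,p^{s}-q,p^{s})$: where it is $\les M^{2i}/(4K_{1})$, the $\min$-factor of \eqref{th2} is at most $(\tfrac14)^{|q|/2}=2^{-|q|}$ and these terms sum to $O(1)$ by a geometric series. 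The remaining ``flat'' range --- which may contain of order $p^{s}$ values of $q$ --- is the real point: there I would insert the closed form \eqref{eq:propaPhimatrix}, exchange $\sum_{q}$ with $\int_{M^{-2i}}^{M^{-2(i-1)}}d\alpha$, and carry out the resulting double sum over $q$ and over the internal index $u$ of \eqref{eq:propaPhimatrix} in closed form, using Vandermonde-type identities for the coefficients $\mathcal{A}(m,l,h,u)$ (the model case being $\sum_{q}x^{q}/q!=e^{x}$, which turns the apparent volume factor into a harmless shift of the exponential rate). The upshot should be an $\alpha$-integrand carrying only the $\alpha$-dependence of a single propagator --- a power of $\sqrt{1-\alpha}/(1+C\alpha)$ times a bounded prefactor --- whose integral over the slice is $KM^{-2i}$ times $e^{-cM^{-2i}(m^{s}+p^{s})}$ with $c\ges\Omega/4$.

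The hard part is this resummation: proving that tracing out a face index does not reinstate the naive volume factor. It is a genuinely combinatorial--analytic computation with the exact Mehler matrix elements that the envelope \eqref{th2} cannot detect, and I expect the delicate book-keeping to be the $\Omega$-dependence of the various geometric ratios and of the base $C\alpha(1+\Omega)/(\sqrt{1-\alpha}(1-\Omega))$ occurring in \eqref{eq:propaPhimatrix}; keeping these small enough, with room to absorb the subleading terms of the resummation and to cover the case $m^{s}\neq p^{s}$, is presumably why one restricts to $\Omega\in[\tfrac23,1]$ here rather than to $[\tfrac12,1]$ as in Propositions~\ref{thm-th1}--\ref{thm-th2} (recall $C(\Omega)=(1-\Omega)^{2}/(4\Omega)$). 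Finally I would reassemble the blocks and fix $M$ large, exactly as in the proofs of the preceding propositions, to obtain \eqref{thsum}.
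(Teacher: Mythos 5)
The paper does not itself prove Proposition~\ref{prop:bound3} --- it refers to \cite{Rivasseau2005bh} for all four estimates \eqref{th1}--\eqref{thsummax} --- so there is no internal proof to check you against; what follows is an assessment of your plan on its own terms.

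Your setup is sound: matching $G^{i}_{m,p-l;p,m+l}$ to the template of Propositions~\ref{thm-th1}--\ref{thm-th2} does give $m+l+h=m+p$, independent of the summation index, and $\|m+p\|=\|m\|+\|p\|$ since the components are non-negative, so the Gaussian factor does come out of the sum. You are also right that \eqref{th2} alone cannot close the argument: in the ``flat'' range where all four matrix indices are $\gtrsim M^{2i}$ the $\min$-factor is trivial and the sum over $l$ acquires a polynomial factor in $\|m\|+\|p\|$ that the margin between the rates $\Omega/3$ and $\Omega/4$ cannot absorb uniformly in $i$. But that is precisely where the proposal stops being a proof. The step you describe for this region --- exchange $\sum_{l}$ with $\int d\alpha$, then resum over $l$ and over the internal index $u$ of \eqref{eq:propaPhimatrix} ``in closed form, using Vandermonde-type identities'' --- is the entire content of the proposition, and you acknowledge you are not carrying it out. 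Moreover, the model case you invoke ($\sum_q x^q/q!=e^x$) does not match what actually appears: setting $v=m-u$, $w=m+q-u$ (so that $q=w-v$), the double sum over $q$ and $u$ becomes, per block and at fixed $\alpha$,
\begin{equation*}
  \Bigl(\tfrac{\sqrt{1-\alpha}}{1+C\alpha}\Bigr)^{m+p}
  \sum_{v=0}^{m}\sum_{w=0}^{p}
  \sqrt{\binom{m}{v}\binom{p-w+v}{v}\binom{m+w-v}{w}\binom{p}{w}}\;X^{v+w},
  \qquad
  X=\tfrac{C\alpha(1+\Omega)}{\sqrt{1-\alpha}\,(1-\Omega)}\,,
\end{equation*}
and the square root over a product of binomials whose arguments couple $v$ and $w$ is not a Vandermonde convolution; getting this under a bound of the form $(\text{const})\cdot(1+O(X))^{m+p}$ is a genuine combinatorial--analytic estimate (Cauchy--Schwarz plus control of the cross-coupled factors $\binom{p-w+v}{v}\binom{m+w-v}{w}$), not a standard identity. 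There is also a smaller flaw of formulation: it is not quite true that ``it suffices to prove, for a single block,'' a $G^{i}$-level bound and then take a product over blocks. In \eqref{eq:propaPhimatrix}, $G^{i}$ is a single $\alpha$-integral of a product over the $D/2$ blocks, not a product of per-block $\alpha$-integrals, so a literal product of per-block $G^{i}$ estimates would miscount the power of $M^{-2i}$. The block factorisation has to be performed on $G^{(\alpha)}$ at fixed $\alpha$ (as your remark about exchanging $\sum_l$ and $\int d\alpha$ already hints), with the slice integral over $\alpha$ taken once at the end.
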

This bound shows that the propagator is almost local in the following sense: with $m$ fixed, the sum 
over $l$ doesn't cost anything (see Figure \ref{fig:propamatrix}). Nevertheless the sums we'll have to 
perform are entangled (a given index may enter different propagators) so that we need the following 
proposition.
\begin{prop}\label{prop:bound4}
For $M$ large enough, there exists a constant $K$ such that, for $\Omega \in [\frac 23,1]$, we have the 
uniform bound
  \begin{equation} \label{thsummax}
   \sum_{l=-m}^\infty\max_{p\ges\max(l,0)}G^i_{m,p-l;p,m+l}
\les  KM^{-2i}e^{-\frac{\Omega}{36}M^{-2i}\|m\|} \;.   
  \end{equation}
\end{prop}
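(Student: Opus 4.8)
The plan is to deduce Proposition~\ref{prop:bound4} from the refined propagator estimate of Proposition~\ref{thm-th2}, which is precisely the bound designed to make index jumps along a single line summable, supplemented where necessary by the explicit representation~\eqref{eq:propaPhimatrix}. First I would put the summand in the form covered by \eqref{th2}: with $h=p-l-m$ one has $G^i_{m,p-l;p,m+l}=G^i_{m,m+h;(m+l)+h,m+l}$, so the exponential factor is $e^{-\frac{\Omega}{4}M^{-2i}\|m+p\|}$, the jump $|m^s-l^s|$ equals $|l^s|$, and the controlling minimal index is, componentwise, $\min(m,p-l)$ when $l\ges 0$ and $\min(m+l,p)$ when $-m\les l<0$. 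Since every factor of the propagator is positive, $\max_p$ distributes over the factors and $e^{-\frac{\Omega}{4}M^{-2i}\|m\|}$ pulls out at once; it then remains to estimate, uniformly in $l$, the maximum over $p$ of $e^{-\frac{\Omega}{4}M^{-2i}\|p\|}$ times the jump factor, and to sum the outcome over $l$.

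The easy regime is when the relevant minimal index is $\les M^{2i}/(4K_1)$ — in particular whenever $\|m\|\les M^{2i}/(4K_1)$, which simultaneously bounds the number $\prod_s(m^s+1)$ of negative values of $l$. There the jump factor is $\les 2^{-|l|}$ uniformly in $p$ (bound the minimal index by $m$), so maximizing over $p$ costs only the constant $\max_q e^{-\frac{\Omega}{4}M^{-2i}q}=1$, the geometric series over $l$ of both signs converges to a constant, and one is left with $KM^{-2i}e^{-\frac{\Omega}{4}M^{-2i}\|m\|}$, which is stronger than claimed. For large $\|m\|$ the jump factor no longer helps — pushing $p-l$ up to $\sim M^{2i}$ to neutralize it only turns that factor into an $O(1)$ constant — but then $e^{-\frac{\Omega}{4}M^{-2i}\|m\|}$ is exponentially small, so I would split it as $e^{-\frac{\Omega}{36}M^{-2i}\|m\|}\cdot e^{-(\frac{\Omega}{4}-\frac{\Omega}{36})M^{-2i}\|m\|}$ and spend the second piece to absorb both the $O(M^{2i})$ cost of the unrestricted sum over $l$ (using $\sum_{l\ges 0}e^{-\frac{\Omega}{4}M^{-2i}l}\les CM^{2i}$ for $\Omega\ges\frac23$, exactly as in the proof of Proposition~\ref{prop:bound3}) and the polynomial prefactor $\prod_s(m^s+1)$ from the range $-m\les l<0$, since $x^{D/2}e^{-\varepsilon x}$ is bounded. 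The successive shrinkings of the constant in the exponent — from $\frac{\Omega}{3}$ in Proposition~\ref{thm-th1} to $\frac{\Omega}{4}$ in Proposition~\ref{thm-th2} to $\frac{\Omega}{36}$ here — are exactly what pays for these summations while leaving a genuine residual decay.

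The step I expect to be the main obstacle is the crossover $\|m\|\sim M^{2i}$, where for the $p$ maximizing \eqref{th2} the minimal index along the line is itself of order $M^{2i}$, so that neither the geometric jump-decay nor the exponential decay is by itself decisive. Here I would not use \eqref{th2} as a black box but go back to \eqref{eq:propaPhimatrix}, integrating $\alpha$ over the slice $[M^{-2i},M^{-2(i-1)}]$: when $p-l$ is pushed high the relative jump $|h|$ is forced to be large, the sum $\sum_{u}$ concentrates near its upper endpoint, and there the binomial weights ${\mathcal A}(m,l,h,u)$ combine with the small powers of $C\alpha$ to produce a factorial-type decay in $l$ that \eqref{th2} does not capture but which is just what is needed to keep $\sum_l\max_p$ of the true propagator comparable to $\max_p\sum_l$ controlled by Proposition~\ref{prop:bound3}. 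Collecting the three regimes — small $\|m\|$, the crossover, and large $\|m\|$ — and gathering constants then yields the stated uniform bound.
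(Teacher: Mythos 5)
The paper itself does not prove Proposition~\ref{prop:bound4}; immediately after stating Propositions~\ref{thm-th1}--\ref{prop:bound4} it refers to \cite{Rivasseau2005bh} for all four proofs, so there is no in-text argument to compare against. I therefore assess the proposal on its own terms, and there is a real gap exactly where you flag ``the main obstacle'': the crossover regime is not merely the hardest step, it is the step that your two surrounding regimes fail to cover, and what you offer there is a sketch of intent (``go back to \eqref{eq:propaPhimatrix}, the binomial weights give factorial-type decay'') rather than an estimate.

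Let me make the hole quantitative. Your reduction is correct: after pulling out $e^{-\frac{\Omega}{4}M^{-2i}\|m\|}$ and writing $p^{s}=\max(l^{s},0)+x^{s}$, bound \eqref{th2} gives, for each component and each $l^{s}\ges 0$,
\begin{equation*}
\max_{x^{s}\ges 0}\ e^{-\frac{\Omega}{4}M^{-2i}x^{s}}\min\Bigl(1,\bigl(K_{1}\min(m^{s},x^{s})/M^{2i}\bigr)^{l^{s}/2}\Bigr)
\ \les\ \min\Bigl(1,\bigl(K_{1}m^{s}/M^{2i}\bigr)^{l^{s}/2}\Bigr).
\end{equation*}
In the easy regime $m^{s}\les M^{2i}/(4K_{1})$ this is $\les 2^{-l^{s}}$ and the geometric series closes, exactly as you say. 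But once $m^{s}>M^{2i}/(4K_{1})$ the jump factor supplies nothing, and the best one can extract from \eqref{th2} is $\sum_{l^{s}\ges 0}e^{-\frac{\Omega}{4}M^{-2i}l^{s}}\sim M^{2i}$ per component, hence $\sim M^{2i\cdot D/2}=M^{4i}$ in $D=4$ (your ``$O(M^{2i})$'' understates this by a power of two, though that is a secondary issue). To absorb this you propose spending $e^{-(\frac{\Omega}{4}-\frac{\Omega}{36})M^{-2i}\|m\|}=e^{-\frac{2\Omega}{9}M^{-2i}\|m\|}$. That factor kills $M^{4i}$ only when $M^{-2i}\|m\|\gtrsim i\ln M$, i.e.\ $\|m\|\gtrsim M^{2i}\,i\ln M$. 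At the lower end of the regime, $\|m\|\sim M^{2i}/(4K_{1})$, the factor $e^{-\frac{2\Omega}{9}M^{-2i}\|m\|}$ is merely a constant close to $1$, and the right-hand side of your estimate blows up like $M^{4i}$. The window $M^{2i}/(4K_{1})\lesssim\|m\|\lesssim M^{2i}\,i\ln M$ is therefore left entirely uncovered, and it is not an artifact: it is precisely the regime where bound \eqref{th2}, used as a black box, \emph{is} too lossy. The same problem recurs in your handling of the $-m\les l<0$ range, where the prefactor $\prod_{s}(m^{s}+1)$ that you propose to beat with $x^{D/2}e^{-\varepsilon x}$ is again $\sim M^{4i}$ at $\|m\|\sim M^{2i}$ and $\varepsilon M^{-2i}\|m\|$ is again only order one.

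So the approach is plausible in architecture --- using Proposition~\ref{thm-th2}, splitting by the size of $\|m\|$ relative to $M^{2i}$, and trading the surplus of exponential decay $\frac{\Omega}{4}\to\frac{\Omega}{36}$ for the cost of the $l$-sum --- but the middle regime is where the whole content of the proposition lies, and it is the one step you explicitly leave as ``I would not use \eqref{th2} as a black box\ldots''. Until that is actually carried out (for instance by extracting, directly from \eqref{eq:propaPhimatrix} after integration over the slice, a decay in $|l^{s}|$ that survives the maximization over $p$ when $m^{s}\sim M^{2i}$), the argument does not establish \eqref{thsummax}.
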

We refer to \cite{Rivasseau2005bh} for the proofs of these four propositions.

\subsubsection{Power counting}
\label{sec:vari-indep}

About half of the $4V$ indices initially associated to a graph is determined by the external indices and 
the delta functions in (\ref{IG}). The other indices are summation indices. The power counting consists in 
finding which sums cost $M^{2i}$ and which cost $\mathcal{O}(1)$ thanks to (\ref{thsum}). The $M^{2i}$ 
factor comes from (\ref{th1}) after a summation over an index\footnote{Recall that each index is in fact 
made of two indices, one for each symplectic pair of $\R^{4}_{\theta}$.} $m\in\N^2$,
\begin{align}
\sum_{m^1,m^2=0}^\infty e^{- c M^{-2i}(m^1+m^2)} = \frac{1}{(1-e^{- c
    M^{-2i}})^2} = \frac{M^{4i}}{c^2} (1 + \mathcal{O}(M^{-2i})).
\label{summ1m2}
\end{align}

We first use the delta functions as much as possible to reduce the set $\mathcal{I}$ to a true minimal set 
$\cI'$ of independent indices. For this, it is convenient to use the dual graphs where the resolution of the 
delta functions is equivalent to a usual momentum routing.

The dual graph is made of the same propagators than the direct graph except the position of their 
indices. Whereas in the original graph we have $G_{mn;kl}=\raisebox{-1ex}{\includegraphics[scale=.6]
{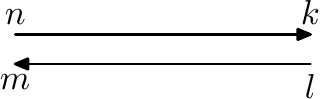}}$, the position of the indices in a dual propagator is
\begin{align}
G_{mn;kl} = \raisebox{-1ex}{\includegraphics[scale=.6]{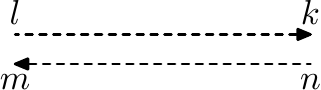}}\;.
\label{dual-assign}
\end{align}
The conservation $\delta_{l-m,-(n-k)}$ in (\ref{IG}) implies that the difference $l-m$ is conserved along 
the propagator. These differences behave like \emph{angular momenta} and the conservation of 
the differences $\ell=l-m$ and $-\ell=n-k$ is nothing else than the conservation of the angular 
momentum thanks to the symmetry $SO(2)\times SO(2)$ of the action \eqref{eq:SPhi4matrix}:
\begin{align}
\raisebox{-1.5ex}{\includegraphics[scale=1]{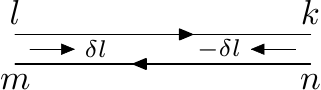}}&\qquad
l=m+\ell\;,~~ n=k+(-\ell).
\label{angmom}
\end{align}
The cyclicity of the vertices implies the vanishing of the sum of the angular momenta entering a vertex. 
Thus the angular momentum in the dual graph behaves exactly like the usual momentum in ordinary 
Feynman graphs.\\

We know that the number of independent momenta is exactly the number $L'$ ($=I-V'+1$ for a 
connected graph) of loops in the dual graph. Each index at a (dual) vertex is then given by a unique 
\emph{reference index} and a sum of momenta. If the dual vertex under consideration is an external one, 
we choose an external index for the reference index. The reference indices in the dual graph 
correspond to the loop indices in the direct graph. The number of summation indices is then $V'-B + L' = I 
+ (1-B)$ where $B\ges 0$ is the number of broken faces of the direct graph or the number of external 
vertices in the dual graph.\\

By using a well-chosen order on the lines, an optimized tree and a $L^{1}-L^{\infty}$ bound, one can 
prove that the summation over the angular momenta does not cost anything thanks to (\ref{thsum}). 
Recall that a connected component is a subgraph for which all internal lines have indices greater than 
all its external ones. The power counting is then:
\begin{align}
  A_G\les& K'^{V}\sum_{\mu}\prod_{i,k}M^{\omega(G^{i}_{k})}\\
  \text{with }\omega(G^{i}_{k})=&4(V'_{i,k}-B_{i,k})-2I_{i,k}=4(F_{i,k}-B_{i,k})-2I_{i,k}\\
  =&(4-N_{i,k})-4(2g_{i,k}+B_{i,k}-1)\notag
\end{align}
where $N_{i,k}$, $V_{i,k}$, $I_{i,k}=2V_{i,k}-\frac{N_{i,k}}{2}$, $F_{i,k}$ and $B_{i,k}$ are respectively the 
numbers of external legs, of vertices, of (internal) propagators, of faces and broken faces of the 
connected component $G^{i}_{k}$ ; $g_{i,k}= 1 - \frac{1}{2} (V_{i,k}-I_{i,k}+F_{i,k})$ is its genus. We have
\begin{thm}
\label{pc-slice}
The sum over the scales attributions $\mu$ converges if $\forall i,k,\, \omega(G^{i}_{k}) <0$.
\end{thm}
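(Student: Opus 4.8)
The plan is to reduce everything to a bound on the scalar quantity $\Sigma(G)\defi\sum_\mu\prod_{i,k}M^{\omega(G^i_k)}$, the sum running over all attributions $\mu$ (each $i_\delta\in\N$), since the power-counting inequality $A_G\les K'^{V}\sum_\mu\prod_{i,k}M^{\omega(G^i_k)}$ is already in hand: it suffices to show $\Sigma(G)\les K^{n(G)}$, which is exactly the uniform-in-cutoff statement making the limit $\rho\to\infty$ exist. I would base the whole argument on the Gallavotti--Nicol\`o forest attached to a fixed $\mu$: the connected components $\{G^i_k\}$, ordered by inclusion, form a laminar family, i.e.\ a forest of nested connected subgraphs; the subgraph underlying $G^i_k$ only shrinks as $i$ grows, each component sits inside a unique component one scale lower (when such exists), and the pairs $(i,k)$ for which this subgraph equals a fixed connected $H$ are the consecutive scales at which $H$ occurs -- a ``vertical segment'' of the forest.

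The hypothesis should be used in its sharp form. Since every vertex is four-valent, $N_{i,k}=4V_{i,k}-2I_{i,k}$ is even, so $\omega(G^i_k)=(4-N_{i,k})-4(2g_{i,k}+B_{i,k}-1)$ is an even integer; hence $\omega(G^i_k)<0$ in fact forces $\omega(G^i_k)\les-2$ for \emph{every} component that can occur, a strictly negative bound uniform in $G$ and in $\mu$.

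I would then carry out the sum over $\mu$ by peeling the forest from its leaves (highest scales) toward its roots. Fixing all line scales but the largest, $i_{\max}$, the components living between $i_{\max}$ and the next lower scale are small connected pieces $H$, each contributing $M^{\omega(H)\,\tau(H)}$ with $\tau(H)\ges1$ the number of scales in its vertical segment and $\omega(H)\les-2$; summing $i_{\max}$ over its range then produces a convergent geometric series, bounded by a constant times the same product with that top slice deleted. Iterating through the at most $I(G)$ levels of the forest collapses $\Sigma(G)$ into a product of at most $I(G)$ such geometric sums, each bounded independently of the cutoff, which gives $\Sigma(G)\les K^{n(G)}$ -- a growth that is harmless once the bounds are reinserted into the perturbative series, being compensated by the $1/n!$ from the vertices. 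Equivalently, one could telescope $\sum_{i,k}\omega(G^i_k)$ into a sum over internal lines bounded line by line, in the same spirit as the proofs of Propositions \ref{prop:bound3} and \ref{prop:bound4}.

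The hard part will be that the scale summations attached to different lines are \emph{entangled} -- each component carries many lines, each line lies in many nested components, and a single matrix index of $\R^4_\theta$ may run through several propagators -- so $\Sigma(G)$ does not factorize naively over lines. I would handle this exactly as in \cite{Rivasseau2005bh}: choose an optimized spanning tree, order the internal lines so that each loop line is summed first by means of the almost-locality bound \eqref{thsum}, and only then carry out the scale peeling along the Gallavotti--Nicol\`o forest. Keeping the branching combinatorics of that forest under control -- so that only $O(I(G))$ genuinely convergent geometric sums survive and all constants stay independent of the cutoff -- is the real content of the argument.
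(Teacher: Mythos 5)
Your proposal is correct and follows essentially the same route as the paper's proof: the paper states the theorem and defers the details to \cite{Rivasseau2005bh}, where, once the entangled index sums have been performed to reach the bound $A_G\les K'^{V}\sum_\mu\prod_{i,k}M^{\omega(G^{i}_{k})}$, the sum over scale attributions is controlled precisely by the Gallavotti--Nicol\`o forest argument you describe, i.e.\ geometric series in the scale gaps of each vertical segment, yielding a bound $K^{n(G)}$ uniform in the ultraviolet cutoff. (Your sharpening $\omega\les -2$ via parity is harmless but not needed, since $\omega\les -1$ already makes each geometric sum convergent, and note that the ``entanglement'' worry in your last paragraph concerns the derivation of the premise, not the $\mu$-sum itself.)
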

We recover the power counting obtained in \cite{GrWu03-1}.

From this point on, renormalizability of $\Phi^{\star 4}_4$ can proceed (however remark that it remains 
limited to $\Omega\in [0.5,1]$ by the technical estimates such as (\ref{th1}); this limitation
is overcome in the direct space method below).

The multiscale analysis allows to define the so-called effective expansion,
in between the bare and the renormalized expansion, which is optimal, 
both for physical and for constructive purposes \cite{Riv1}.
In this effective expansion only the subcontributions with all \textit{internal} scales higher 
than all \textit{external} scales have to be renormalized by counterterms 
of the form of the initial Lagrangian.

In fact only planar such subcontributions with a single external face
must be renormalized by such counterterms. This follows
simply from the Grosse-Wulkenhaar moves defined in \cite{GrWu03-1}. 
These moves 
translate the external legs along the outer border of the planar graph, up to irrelevant corrections, 
until they all merge together into a term of the proper Moyal form, which is then absorbed in the 
effective constants definition. 
This requires only the estimates (\ref{th1})-(\ref{thsummax}), which were checked 
numerically in \cite{GrWu03-1}.

In this way the relevant and marginal counterterms can be shown to be of the Moyal type,
namely renormalize the parameters $\lambda$, $m$ and $\Omega$\footnote{The wave function 
renormalization
i.e. renormalization of the $\partial_\mu\phi\star \partial^\mu\phi$ term can be absorbed in a rescaling
of the field, called ``field strength renormalization.''}.

Notice that in the multiscale analysis there is no need for the relatively complicated use of 
Polchinski's equation \cite{Polch} made in \cite{GrWu03-1}. Polchinski's method, 
although undoubtedly very elegant for proving perturbative renormalizability
does not seem directly suited to constructive purposes, even in the case of simple 
Fermionic models such as the commutative Gross Neveu model, see e.g. \cite{DiRi}.

The BPHZ theorem itself for the renormalized expansion follows from finiteness
of the effective expansion by developing the
counterterms still hidden in the effective couplings. Its own finiteness can be checked 
e.g. through the standard classification of forests \cite{Riv1}. Let us however
recall once again that in our opinion the effective expansion, not the renormalized one
is the more fundamental object, both to describe the physics and to attack deeper mathematical 
problems, such as those of constructive theory \cite{Riv1,Riv2}.

\section{Hunting the Landau Ghost}

\label{Noghost}

The matrix base simplifies very much at $\Omega =1$, where the matrix propagator becomes diagonal,
i.e. conserves exactly indices. This property has been used for the general proof that the
beta function of the theory vanishes in the ultraviolet regime 
\cite{beta2-06}. At the moment this is the only
concrete result that shows that NCVQFT is definitely \emph{better} behaved than QFT. It 
also opens the perspective of a full non-perturbative construction of the model.

We summarize now the sequence of three papers 
\cite{GrWu04-2}-\cite{DisertoriRivasseau2006}-\cite{beta2-06} 
which lead to this exciting result,
using the simpler notations of \cite{beta2-06}.

\subsection{One Loop}

The propagator in the matrix base at $\Omega=1$ is
\be \label{propafixed}
C_{mn;kl} = G_{mn} \delta_{ml}\delta_{nk} \ ; \ 
G_{mn}= \frac{1}{A+m+n}\  ,
\ee
where $A= 2+ \mu^2 /4$, $m,n\in \mathbb{N}^2$ ($\mu$ being the mass)
and we use the notations
\be
\de_{ml} = \de_{m_1l_1} \de_{m_2l_2}\ , \qquad m+m = m_1 + m_2 + n_1 + n_2 \ .
\ee

We focus on the complex $\bar\phi \star \phi \star \bar\phi \star \phi $ theory, since the result
for the real case is similar \cite{DisertoriRivasseau2006}.

The generating functional is:
\bea
&&Z(\eta,\bar{\eta})=\int d\phi d\bar{\phi}~e^{-S(\bar{\phi},\phi)+F(\bar{\eta},\eta,;\bar{\phi},\phi)}\nonumber\\
&&F(\bar{\eta},\eta;\bar{\phi},\phi)=  \bar{\phi}\eta+\bar{\eta}\phi \nonumber\\
&&S(\bar{\phi},\phi)=\bar{\phi}X\phi+\phi X\bar\phi+A\bar{\phi}\phi+
\frac{\lambda}{2}\phi\bar{\phi}\phi\bar{\phi}
\eea
where traces are implicit and the matrix $X_{m n}$ stands for $m\delta_{m n}$. $S$ is the action and $F$ the external sources. 

We denote $\Gamma^4(0,0,0,0)$ the amputated one particle irreducible four point function  
and $\Sigma(0,0)$ the amputated one particle irreducible two point function 
with external indices set to zero. The wave function renormalization is
$\partial_L \Sigma = \partial_R \Sigma = \Sigma (1,0) - \Sigma (0,0)$ \cite{DisertoriRivasseau2006},
and the corresponding field strength renormalization is 
$Z=(1-\partial_{L}\Sigma(0,0))= (1-\partial_{R}\Sigma(0,0))$
The main result to prove is that \emph{after field strength renormalization}\footnote{We recall
that in the ordinary commutative $\phi^4_4$ field theory there is no one loop
wave-function renormalization, hence the Landau ghost can be seen
directly on the four point function renormalization at one loop.}
the effective coupling is asymptotically constant, hence:

\begin{thm}\label{killghost}
The equation:
\bea\label{beautiful}
\Gamma^{4}(0,0,0,0)=\lambda Z^2
\eea
holds up to irrelevant terms to {\bf all} orders of perturbation, 
either as a bare equation with fixed ultraviolet cutoff,
or as an equation for the renormalized theory. In the latter case $\lambda $ should still be understood 
as the bare constant, but reexpressed as a series in powers of $\lambda_{ren}$.
\end{thm}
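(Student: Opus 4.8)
The equality \eqref{beautiful} encodes the vanishing of the beta function after field strength renormalization, so a graph-by-graph check (feasible at two or three loops) cannot be the all-orders argument; instead the plan is to derive it from a Ward identity expressing the $U(\infty)$ invariance that survives at the self-dual point $\Omega=1$. The crucial structural fact is that, in the matrix formulation with $X_{mn}=m\,\delta_{mn}$, both the interaction $\tfrac{\lambda}{2}\,\Tr\,\phi\bar\phi\phi\bar\phi$ and the mass term $A\,\Tr\,\bar\phi\phi$ are invariant under the left action $\phi\mapsto U\phi$, $\bar\phi\mapsto\bar\phi U^\dagger$ ($U$ unitary), while only the ``kinetic'' piece $\Tr\,\bar\phi X\phi$ fails to be invariant, its infinitesimal variation being $i\,\Tr\big(B\,[\phi\bar\phi,X]\big)$ for $U=e^{iB}$. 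First I would perform this change of variables inside $Z(\eta,\bar\eta)$ of the complex $\bar\phi\star\phi\star\bar\phi\star\phi$ model; since the (formal) matrix measure is invariant, the derivative at $U=\mathbb{1}$ in an arbitrary Hermitian direction vanishes, yielding, for each matrix element $(a,b)$, a functional Ward identity of the schematic form
\[
\big\langle\,[\phi\bar\phi,X]_{ba}\,\big\rangle_{\eta,\bar\eta}=\big\langle\,(\phi\bar\eta)_{ba}-(\eta\bar\phi)_{ba}\,\big\rangle_{\eta,\bar\eta}\,,
\]
the right action $\phi\mapsto\phi V$, $\bar\phi\mapsto V^\dagger\bar\phi$ giving the $\phi\leftrightarrow\bar\phi$ conjugate identity.

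\textbf{Extracting Green function relations.} Next I would differentiate this identity with respect to the sources. Two source derivatives give a relation for the connected two-point function: because $[X,M]_{mn}=(m-n)M_{mn}$ while the propagator at $\Omega=1$ is $G_{mn}=1/(A+m+n)$, see \eqref{propafixed}, the commutator insertion literally produces a \emph{finite difference} of propagators along the marked line, so after amputation the identity reads as an exact relation between $\Sigma(m,n)$, $\Sigma(0,n)$ and the insertion of one extra $\phi\bar\phi$ at zero index. Four source derivatives give a relation for the connected four-point function in which two external legs are ``glued through'' an $X$-insertion; passing to one-particle-irreducible functions via the Legendre transform converts this into a relation among $\Gamma^4$, $\Sigma$ and $\partial_L\Sigma=\partial_R\Sigma=\Sigma(1,0)-\Sigma(0,0)$.

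\textbf{Conclusion.} Finally, setting all external indices to zero and passing from the finite difference to its ``derivative'' limit, the two-point relation identifies the relevant $\phi\bar\phi$-insertion with $\partial_L\Sigma(0,0)=1-Z$; substituting this into the four-point relation and using the definition of $\Gamma^4$ as the amputated 1PI four-point function with vanishing external indices yields
\[
\Gamma^4(0,0,0,0)=\lambda\,(1-\partial_L\Sigma(0,0))^2=\lambda Z^2\,,
\]
which is \eqref{beautiful}. The Ward identity being an exact algebraic consequence of the symmetry, valid at any fixed ultraviolet cutoff, the bare form of the statement follows at once; re-expressing the bare $\lambda$ as a formal power series in $\lambda_{ren}$ through the renormalization conditions gives the renormalized form, $\lambda$ being then understood as that series.

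\textbf{Main obstacle.} The hard part is not the formal manipulation but controlling the ``up to irrelevant terms'' clause. The commutator insertion does not produce only the leading finite difference, and the amputation/Legendre step mixes in subleading contributions that, once one divides by the index difference and sends the external indices to zero, could a priori feed back into the marginal $\Tr\,\phi^4$ coupling. One must show, order by order, that every such extra piece is genuinely irrelevant, i.e.\ renormalizes only higher-dimensional or vanishing operators. This is where the multiscale machinery of Section~\ref{sec:multi-scale-analysisMatrix} is needed: using the power-counting Theorem~\ref{pc-slice} together with the propagator estimates (\ref{th1})--(\ref{thsummax}) one checks that these terms carry a strictly negative scaling dimension relative to the four-point vertex and hence drop out of the asymptotic identity \eqref{beautiful}.
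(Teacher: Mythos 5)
Your proposal correctly identifies the Ward identity arising from the left/right unitary change of variables as the central algebraic input, and the derivation of that identity (including the observation that the commutator $[\phi\bar\phi,X]$ produces finite differences of Green functions) matches the paper. However, there is a genuine gap in the step that turns the Ward identity into the relation $\Gamma^4(0,0,0,0)=\lambda Z^2$.

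You assert that ``passing to one-particle-irreducible functions via the Legendre transform converts this into a relation among $\Gamma^4$, $\Sigma$ and $\partial_L\Sigma$.'' That step, as stated, does not work. The Ward identity constrains \emph{insertions}: it expresses $(a-b)\,G^{2p}_{\mathrm{ins}}(a,b;\dots)$ as a finite difference of $G^{2p}$'s. By itself this never produces a factor of the bare vertex $\lambda$, nor does a naive Legendre transform convert an insertion identity into a closed relation between $\Gamma^4$ and $\Sigma$. The paper's proof requires a second, independent structural input: a \emph{Dyson equation of motion}, obtained by classifying graphs according to the topology of the first line met when turning right at the vertex attached to a chosen external $\bar\phi$. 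This decomposes $G^4(0,m,0,m)=G^4_{(1)}+G^4_{(2)}+G^4_{(3)}$, and it is only the piece $G^4_{(1)}=\lambda\, C_{0m}\,G^2(0,m)\,G^2_{\mathrm{ins}}(0,0;m)$ which exhibits the explicit bare coupling $\lambda$ to which the Ward identity is then applied. The remaining piece $G^4_{(3)}$ needs the four-point Ward identity together with a careful accounting of the mass counterterm ``lost'' when opening a face, and of the split $\Sigma=T^L+\Sigma^R$ into the generalized left tadpole and the rest. None of this appears in your sketch, and it cannot be absorbed into ``the Legendre transform'': it is genuinely combinatorial information about how planar regular graphs factor at a marked external vertex. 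Without the Dyson decomposition, the argument does not close. Your ``main obstacle'' paragraph correctly flags that irrelevant terms must be controlled by the multiscale bounds, but that is a secondary issue compared to the missing equation of motion.
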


The field strength renormalization at one loop is
\be  
Z = 1 - a   \lambda  
\ee
where we can keep in $a$ only the coefficient of the logarithmic divergence, as the rest
does not contribute but to finite irrelevant corrections.

\begin{figure}[!htbp]
\label{tadpoleupdown}
\centerline{\includegraphics[width=8cm]{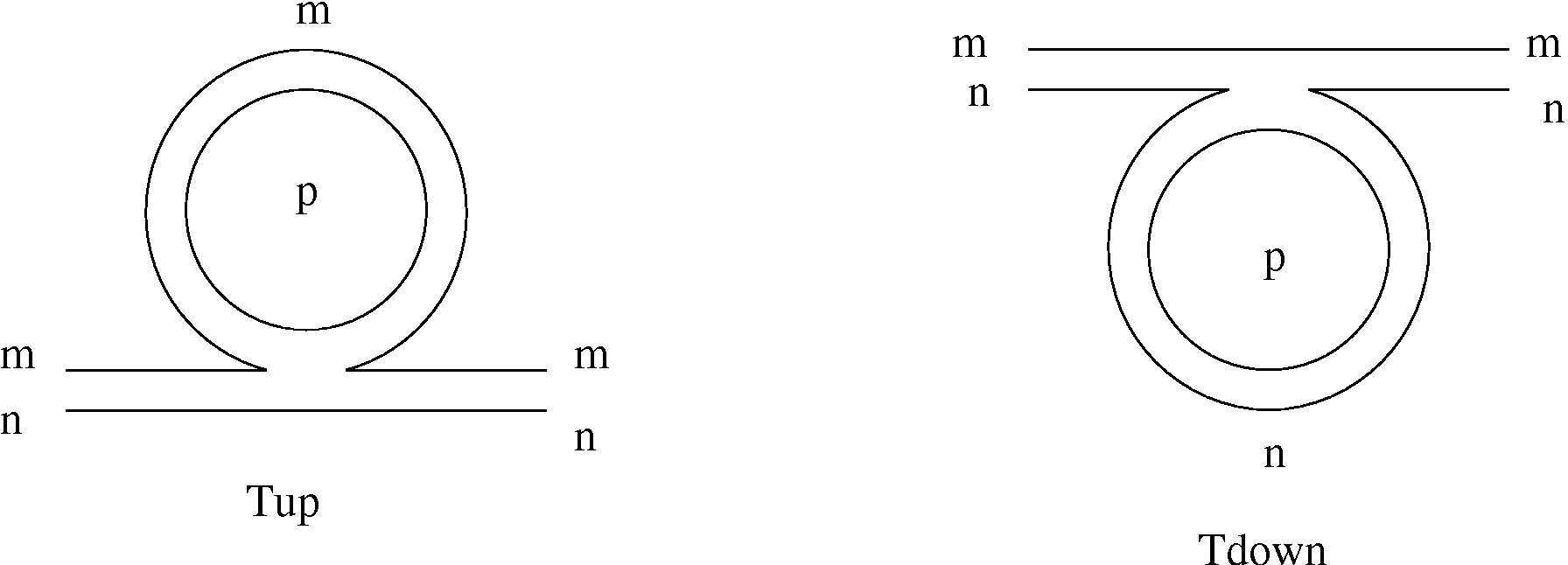}}
\caption{Two Point Graphs at one Loops: the up and down tadpoles}
\label{2p1tad}
\end{figure}

To compute $a$ we should add the wave function renormalization for the
two tadpoles  $T\ up$ and $T\ down$ of Figure \ref{2p1tad}. These two graphs
have both a coupling constant $-\lambda/2$, and a combinatorial factor 2
for choosing to which leg of the vertex the external $\bar \phi $ contracts. 
Then the logarithmic divergence of $T\ up$ is 
\be
\sum_{p} (\frac{1}{m+p +A} - \frac{1}{p +A}) = - \sum_{p} [\frac{m}{(m+p +A)(p +A)}]
\ee
so it corresponds to the renormalization of 
the coefficient of the $m$ factor in $G_{m,n}$ in \ref{propafixed}, with
logarithmic divergence $\lambda \sum_{p} [\frac{1}{p^2}] $. Similarly
the logarithmic divergence of $T\ down$ gives the same renormalization but for  
the $n$ factor in $G_{m,n}$ in \ref{propafixed}.

Altogether we find therefore that
\be
a = + \sum_{p} [\frac{1}{p^2}] 
\ee

In the real case we have a combinatoric factor $4$ instead of 2,
but the coupling constant is $\lambda/4$, so $a$ is the same. 

\begin{figure}[!htbp]\label{fouroneloop}
\centerline{\includegraphics[width=2cm,angle=-90]{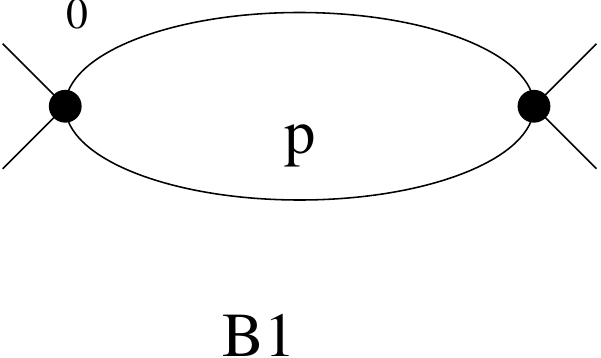}}
\caption{Four Point Graph at one Loop}

\end{figure}
The four point function perturbative expansion at one loop is
\be  
\Gamma_4 (0,0,0,0)  = - \lambda  
[1 -a' \lambda  ] .
\ee
Only the graph B1 of Figure \ref{fouroneloop}) contributes to $a'$. It has a prefactor
$\frac{1}{2!} (\lambda/2)^2$ and a combinatoric factor $2^4$ for contractions,
since there is a factor 2 to choose whether the bubble is "vertical or horizontal"
ie if the horizontal bubble of Figure \ref{fouroneloop}) is of $\bar\phi\star  \phi\   \star \bar\phi\star \phi  $
or of $\bar\phi\star  \phi\   \star \bar\phi\star \phi  $ type, then a factor 2 to choose to which vertex the first external; $\bar\phi$ contracts, then a factor 2 for the leg to which it contracts
in that vertex and finally another factor 2 for the leg to which the other external $\bar \phi$ contracts.

The corresponding sum gives
\be
a'=  (2^4  \lambda /8)  \sum_{p} \tfrac{1}{p^2}  = 2a \qquad (B1) \ .
\ee
so that at one loop equation \ref{beautiful} holds.
In the real case we have a combinatoric factor $4^3$ instead of $2^4$,
but the coupling constant is $\lambda/4$, so $a$ is the same and \ref{beautiful} holds.

\begin{figure}
\centerline{\includegraphics[width=5cm,angle=-90]{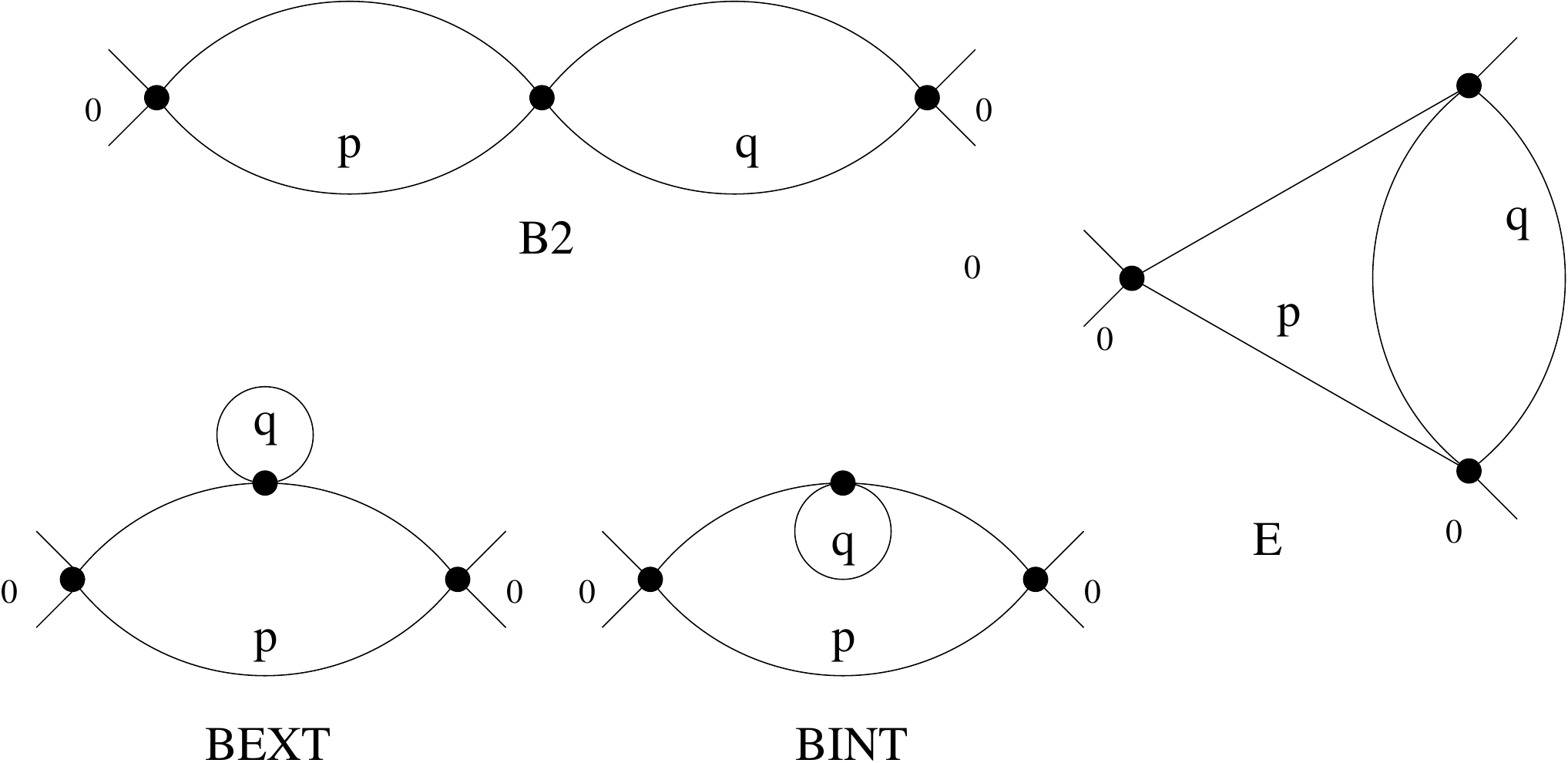}}
\caption{Four Point Graphs at two Loops}
\label{4p12l}\end{figure}

\subsection{Two and Three Loops}

This computation was extended to two and three loops in \cite{DisertoriRivasseau2006}. 
The results were given in the form of tables for the discrete divergent sums
and combinatoric weights of all  planar regular graphs which appears at two and three loops in 
$\Ga_4$ and $Z$. The equation \ref{beautiful} holds again, both in the real and complex cases.

Here we simply reproduce the list of contributing Feynman graphs. Indeed it is interesting to notice
that although at large order there are less planar regular graphs than the general graphs
of the commutative theory, the effect is opposite at small orders.

\begin{figure}
\label{twopointtwoloops}
\centerline{\includegraphics[width=7cm,angle=-90]{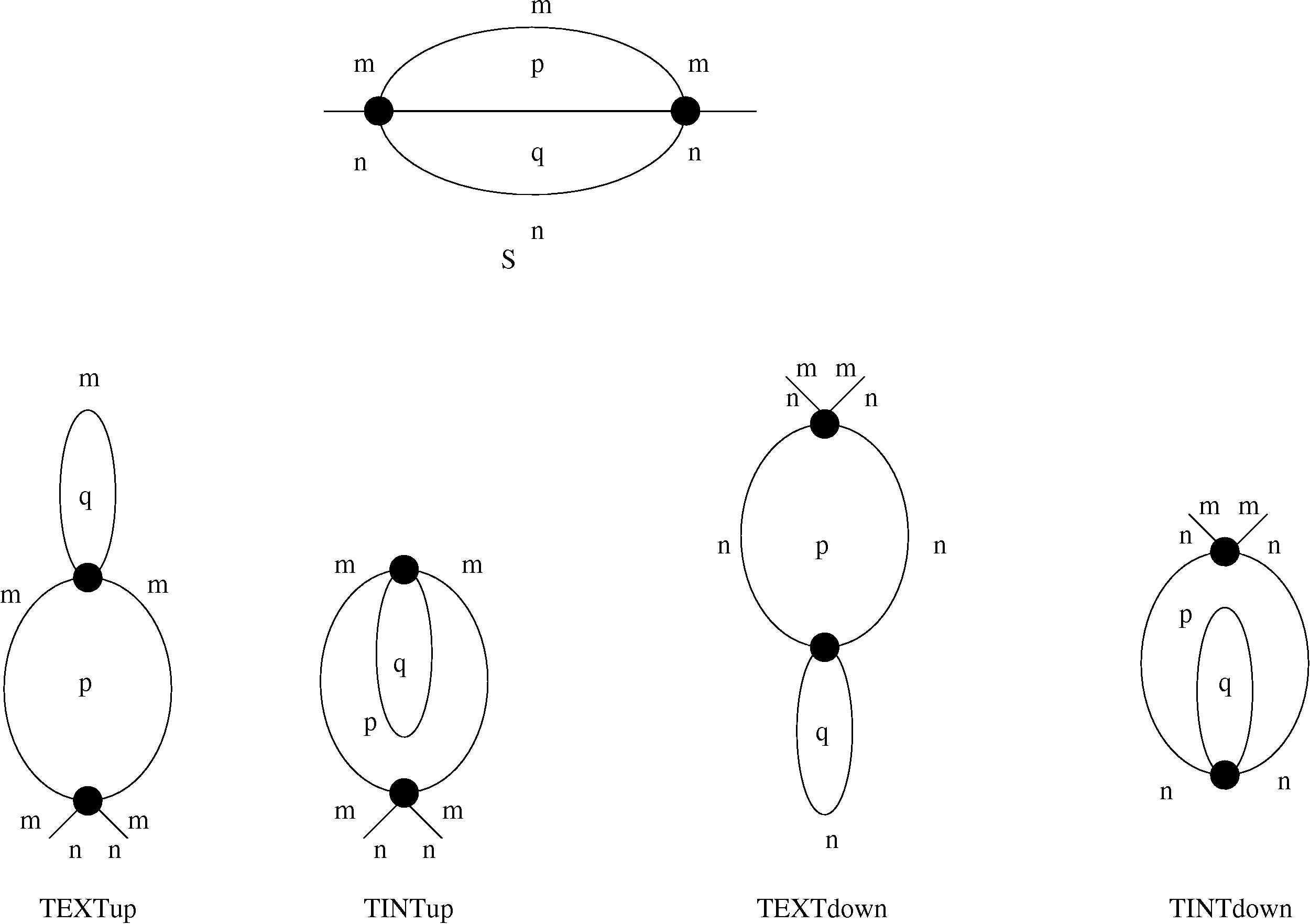}}
\caption{Two Point Graphs at Two Loops}
\label{2p2l}
\end{figure}

\begin{figure}\label{twopointthreeloops}
\centerline{\includegraphics[width=11cm]{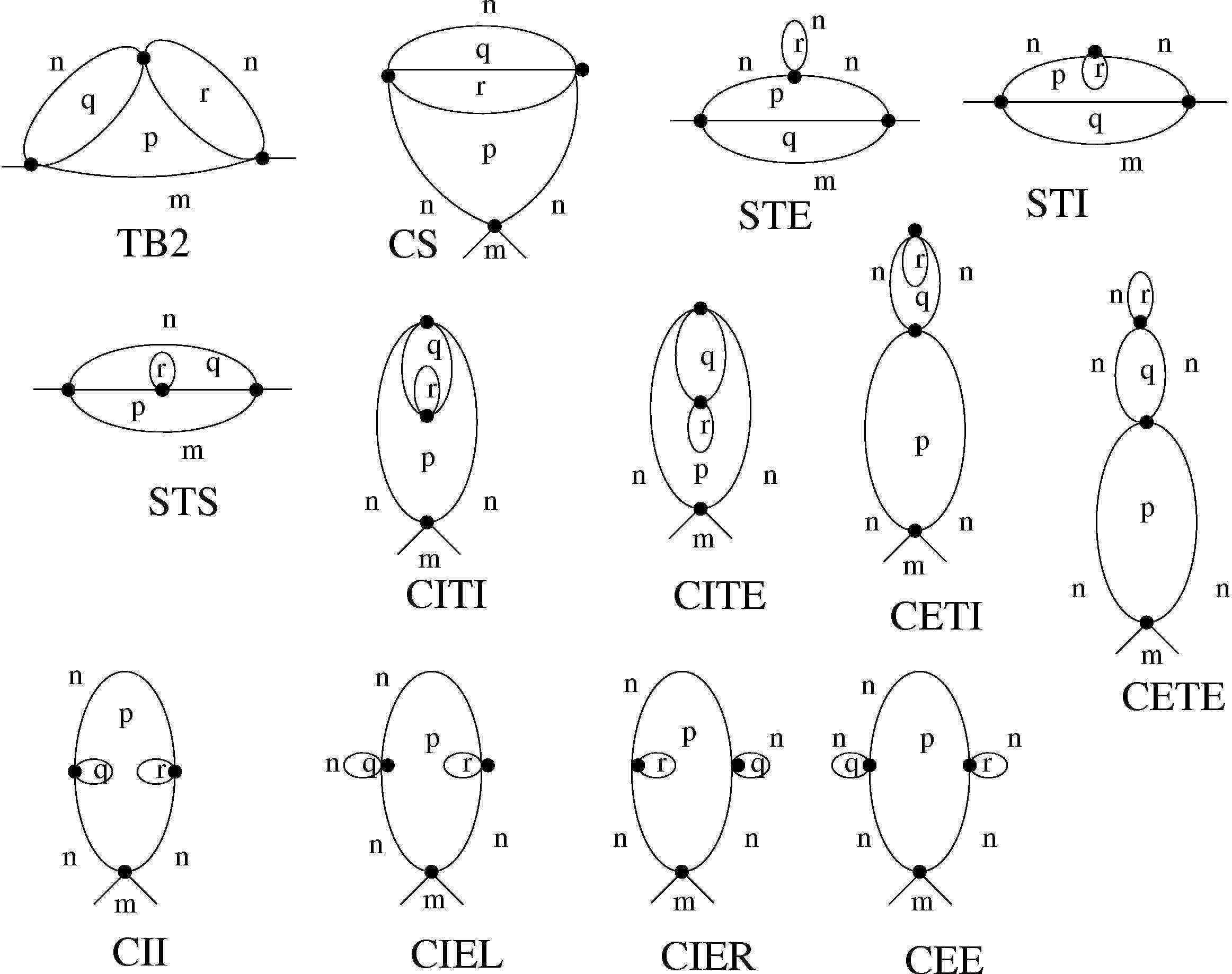}}
\caption{Two Point Graphs at Three Loops}
\label{2p3l}
\end{figure}

\begin{figure}
\centerline{\includegraphics[width=10cm]{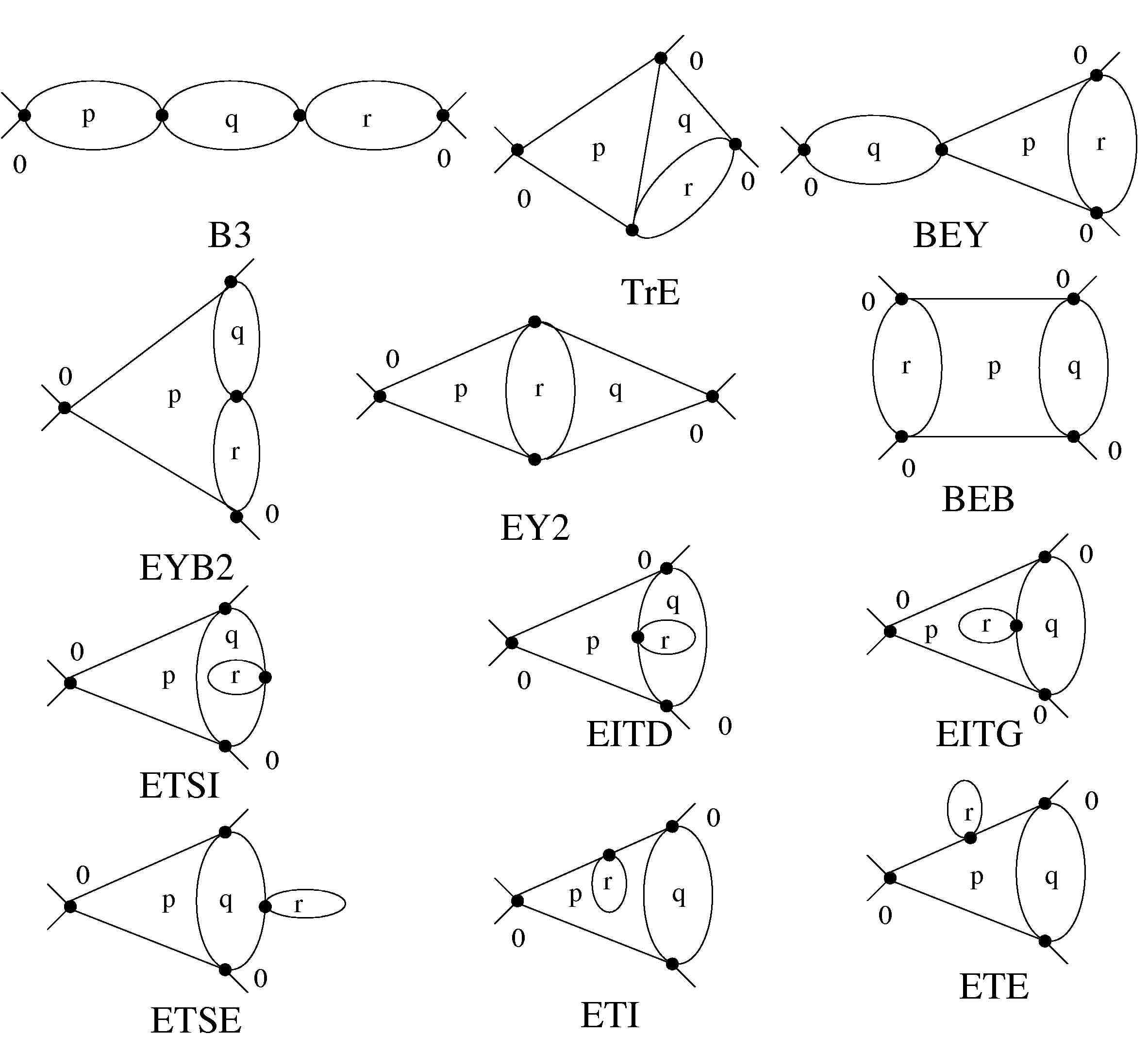}}
\caption{Four Point Graphs at Three Loops, Part I}
\label{4p3l1}\end{figure}

\begin{figure}
\centerline{\includegraphics[width=10cm]{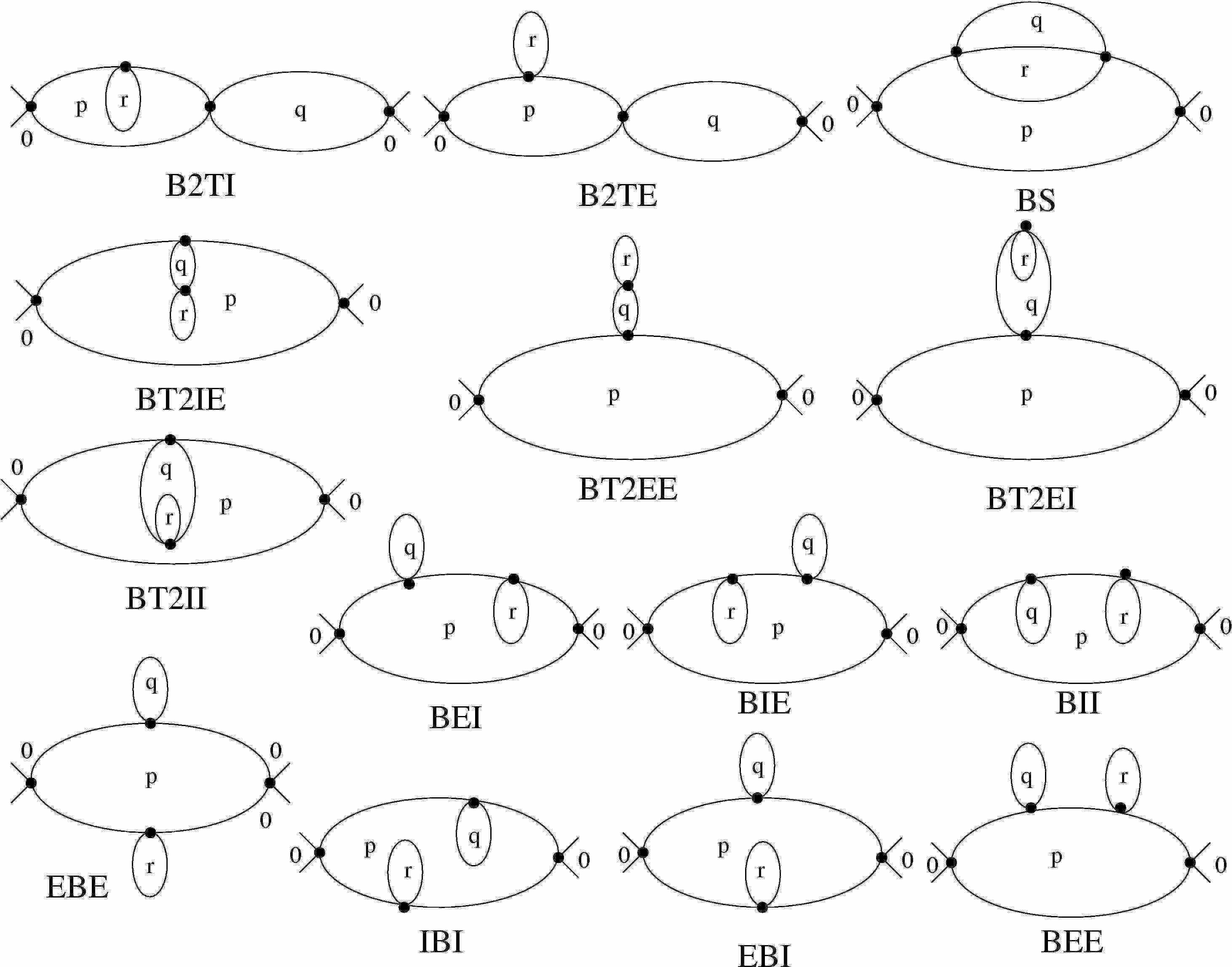}}
\caption{Four Point Graphs at Three Loops, Part II}
\label{4p3l2}\end{figure}


\subsection{The General Ward Identity}

In this section, essentially reproduced from \cite{beta2-06},
we prove a general Ward identity which allows to check
that theorem \ref{killghost} continue to hold at any order
in perturbation theory.

We orient the propagators from a $\bar{\phi}$ to a $\phi$.
For a field $\bar{\phi}_{a b}$ we call the index $a$ a 
{\it left index} and the index, $b$ a {\it right index}. The first (second) index of a $\bar{\phi}$ {\it always} contracts with the second (first) index of a $\phi$.  Consequently for $\phi_{c d}$, 
$c$ is a {\it right index} and $d$ is a {\it left index}.

Let $U=e^{\imath B}$ with $B$ a small hermitian matrix. We consider the ``left" (as it acts only on the left indices) change of variables:
\bea
\phi^U=\phi U;\bar{\phi}^U=U^{\dagger}\bar{\phi} \ .
\eea
There is a similar ``right" change of variables. The variation of the action is, at first order:
\bea
\delta S&=&\phi U X U^{\dagger}\bar{\phi}-\phi X \bar{\phi}\approx
\imath\big{(}\phi B X\bar{\phi}-\phi X B \bar{\phi}\big{)}\nonumber\\
&=&\imath B\big{(}X\bar{\phi}\phi-\bar{\phi}\phi X \big{)}
\eea
and the variation of the external sources is:
\bea
\delta F&=&U^{\dagger}\bar{\phi}\eta-\bar{\phi}\eta+\bar{\eta}\phi U-\bar{\eta}\phi 
        \approx-\imath B \bar{\phi}\eta+\imath\bar{\eta}\phi B\nonumber\\
	&=&\imath B\big{(}-\bar{\phi}\eta+\bar{\eta}\phi{)} .
\eea
We obviously have:
\bea
&&\frac{\delta \ln Z}{\delta B_{b a}}=0=\frac{1}{Z(\bar{\eta},\eta)}\int d\bar{\phi} d\phi
   \big{(}-\frac{\delta S}{\delta B_{b a}}+\frac{\delta F}{\delta B_{b a}}\big{)}e^{-S+F}\nonumber\\
   &&=\frac{1}{Z(\bar{\eta},\eta)}\int d\bar{\phi} d\phi  ~e^{-S+F}
\big{(}-[X \bar{\phi}\phi-\bar{\phi}\phi X]_{a b}+
       [-\bar{\phi}\eta+\bar{\eta}\phi]_{a b}\big{)} \  .
\eea

We now apply $\partial_{\eta}\partial_{\bar{\eta}}|_{\eta=\bar{\eta}=0}$ 
on the above expression. As we have at most two insertions,
we get only the connected components of the correlation functions.
\bea
0=<\partial_{\eta}\partial_{\bar{\eta}}\big{(}
-[X \bar{\phi}\phi-\bar{\phi}\phi X]_{a b}+
       [-\bar{\phi}\eta+\bar{\eta}\phi]_{a b}\big{)}e^{F(\bar{\eta},\eta)} |_0>_c \ ,
\eea
which gives:
\bea
<\frac{\partial(\bar{\eta}\phi)_{a b}}{\partial \bar{\eta}}\frac{\partial(\bar{\phi}\eta)}{\partial \eta}
-\frac{\partial(\bar{\phi}\eta)_{a b}}{\partial \eta}\frac{\partial (\bar{\eta}\phi)}{\partial \bar{\eta}}
- [X \bar{\phi}\phi-\bar{\phi}\phi X]_{a b}
\frac{\partial(\bar{\eta}\phi)}{\partial \bar{\eta}}\frac{\partial (\bar{\phi}\eta)}{\partial\eta}>_c=0 .
\eea
Using the explicit form of $X$ we get:
\bea
(a-b)<[ \bar{\phi}\phi]_{a b}
\frac{\partial(\bar{\eta}\phi)}{\partial \bar{\eta}}\frac{\partial (\bar{\phi}\eta)}{\partial\eta}>_c=
<\frac{\partial(\bar{\eta}\phi)_{a b}}{\partial \bar{\eta}}\frac{\partial(\bar{\phi}\eta)}{\partial \eta}>_c
-<\frac{\partial(\bar{\phi}\eta)_{a b}}{\partial \eta}\frac{\partial (\bar{\eta}\phi)}{\partial \bar{\eta}}> \ ,
\nonumber
\eea
and for $\bar{\eta}_{ \beta \alpha} \eta_{ \nu \mu}$ we get:
\bea
(a-b)<[ \bar{\phi}\phi]_{a b} \phi_{\alpha \beta} 
\bar{\phi}_{\mu \nu }>_c=
<\delta_{a \beta}\phi_{\alpha b} \bar{\phi}_{\mu \nu}>_c
-<\delta _{b \mu }\bar{\phi}_{a \nu} \phi_{\alpha \beta}>_c
\eea

We restrict to terms in the above expressions which are planar with a single external face,
as all others are irrelevant. Such terms have $\alpha=\nu$, $a=\beta$ and $b=\mu$. 
The Ward identity for the $2$ point function reads:
\bea\label{ward2point}
(a-b)<[ \bar{\phi}\phi]_{a b} \phi_{\nu a} 
\bar{\phi}_{b \nu }>_c=
<\phi_{\nu b} \bar{\phi}_{b \nu}>_c
-<\bar{\phi}_{a \nu} \phi_{\nu a}>_c
\eea
(repeated indices are not summed). 

\begin{figure}[hbt]
\centerline{
\includegraphics[width=100mm]{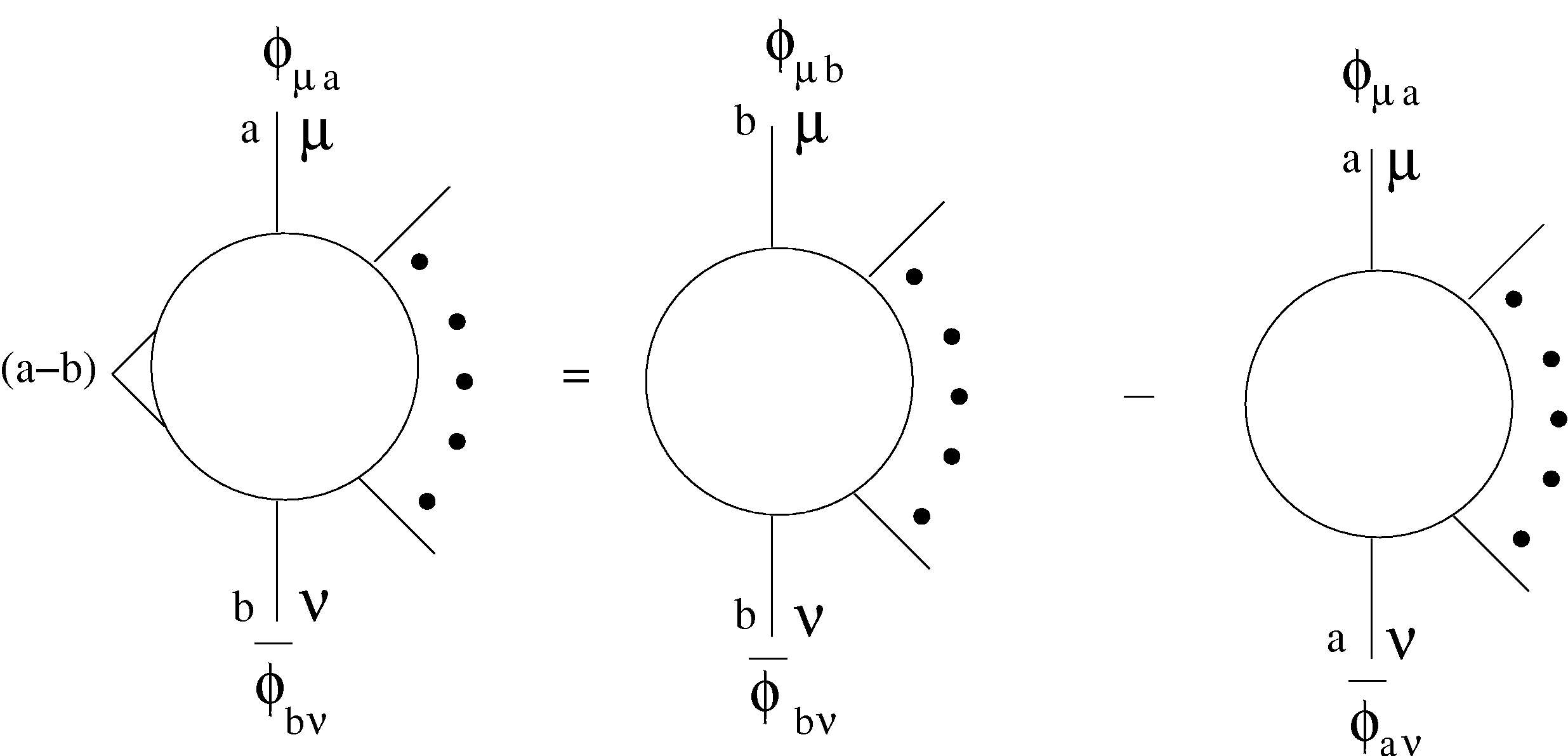}
}
\caption{The Ward identity for a 2p point function with insertion on the left face}\label{fig:Ward}
\end{figure}

Derivating further we get:
\bea
&&(a-b)<[\bar{\phi}\phi]_{a b}\partial_{\bar{\eta}_1}(\bar{\eta}\phi)
\partial_{\eta_1}(\bar{\phi}\eta) \partial_{\bar{\eta}_2}(\bar{\eta}\phi)
\partial_{\eta_2}(\bar{\phi}\eta) >_c=\\
&&<\partial_{\bar{\eta}_1}(\bar{\eta}\phi)
\partial_{\eta_1}(\bar{\phi}\eta)\big{[}
 \partial_{\bar{\eta_2}}
 (\bar{\eta}\phi)_{ab}\partial_{\eta_2}(\bar{\phi}\eta)-\partial_{\eta_2}(\bar{\phi}\eta)_{a b}
 \partial_{\bar{\eta}_2}(\bar{\eta}\phi) \big{]}>_c+1 \leftrightarrow 2 \ .\nonumber
\eea
Take $\bar{\eta}_{1~\beta \alpha}$, $\eta_{1~ \nu\mu}$, $\bar{\eta}_{2~\delta \gamma}$ and $\eta_{2~\sigma \rho}$.
We get:
\bea
&&(a-b)<[\bar{\phi}\phi]_{ab}\phi_{\alpha \beta}\bar{\phi}_{\mu \nu}\phi_{\gamma \delta}
\bar{\phi}_{\rho \sigma}>_c\\
&&=<\phi_{\alpha \beta}\bar{\phi}_{\mu \nu}  \delta_{a \delta}\phi_{\gamma b}\bar{\phi}_{\rho \sigma}>_c
-<\phi_{\alpha \beta}\bar{\phi}_{\mu \nu}\phi_{\gamma \delta}\bar{\phi}_{a \sigma}\delta_{b \rho}>_c+
\nonumber\\
&&<\phi_{\gamma \delta}\bar{\phi}_{\rho \sigma}  \delta_{a \beta}\phi_{\alpha b}\bar{\phi}_{\mu \nu}>_c
-<\phi_{\gamma \delta}\bar{\phi}_{\rho \sigma}\phi_{\alpha \beta}\bar{\phi}_{a \nu}\delta_{b \mu}>_c \ .
\nonumber
\eea
Again neglecting all terms which are not planar with a single external face leads to
\bea\label{ward4point}
(a-b) <\phi_{\alpha a}[\bar{\phi}\phi]_{ab}\bar{\phi}_{b\nu}\phi_{\nu \delta}\bar{\phi}_{\delta \alpha}>_c=
<\phi_{\alpha b}\bar{\phi}_{b \nu}\phi_{\nu \delta}\bar{\phi}_{\delta\alpha}>_c-
<\phi_{\alpha a}\bar{\phi}_{a \nu}\phi_{\nu \delta}\bar{\phi}_{\delta\alpha}>_c \ .
\nonumber
\eea
Clearly there are similar identities for $2p$ point functions for any $p$.

The indices $a$ and $b$ are left indices, so that we have the Ward identity with an insertion on a left face as 
represented in  Fig. \ref{fig:Ward}.
There is a similar Ward identity obtained with the ``right" transformation, consequently with the insertion on a right face.

\subsubsection{Proof of Theorem \ref{killghost}}

We start this section by some definitions:
we will denote $G^{4}(m,n,k,l)$ the connected four point function restricted to the planar one broken face case, where $m,n,k,l$  are the indices of the external face in the correct cyclic order. The first index $m$ always represents a left index.

Similarly, $G^{2}(m,n)$ is the connected planar one broken face two point function with $m,n$ the indices on the external face (also called the {\bf dressed} propagator, see Fig. \ref{fig:propagators}). $G^{2}(m,n)$ and $\Sigma(m,n)$ are related by:
\bea
\label{G2Sigmarelation}
  G^{2}(m,n)=\frac{C_{m n}}{1-C_{m n}\Sigma(m,n)}=\frac{1}{C_{m n}^{-1}-\Sigma(m,n)} \, .
\eea 

\begin{figure}[hbt]
\centerline{
\includegraphics[width=60mm]{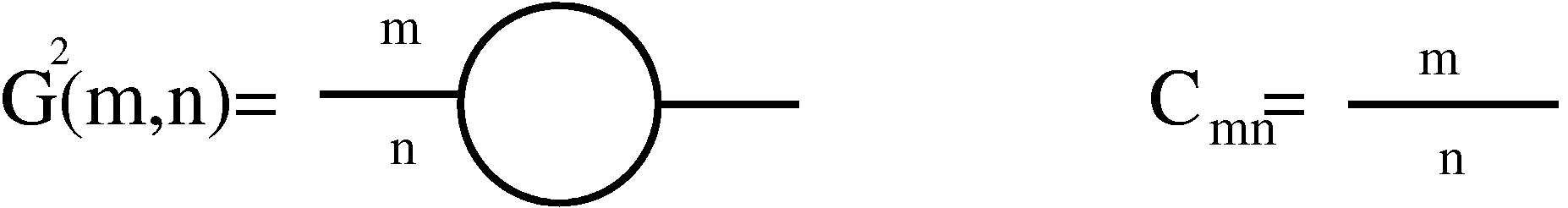}
}
\caption{The {\bf dressed} and the bare propagators}\label{fig:propagators}
\end{figure}

$G_{ins}(a,b;...)$ will denote the planar one broken face 
connected function with one insertion on the left border where the matrix index jumps from $a$ to $b$. With this notations the Ward identity (\ref{ward2point}) writes:
\bea
(a-b) ~ G^{2}_{ins}(a,b;\nu)=G^{2}(b,\nu)-G^{2}(a,\nu)\, .
\eea

All the identities we use, either Ward identities or the Dyson equation of motion
can be written either for the bare  theory or for the theory with complete mass renormalization, which is the one considered in \cite{DisertoriRivasseau2006}. In the first case the parameter $A$ in (\ref{propafixed}) is the bare one, $A_{bare}$
and there is no mass subtraction. In the second case the parameter $A$ in (\ref{propafixed}) 
is $A_{ren}= A_{bare} - \Sigma(0,0)$, and every two point 1PI subgraph is subtracted at 0 external indices\footnote{These mass subtractions need not be rearranged into forests 
since 1PI 2point subgraphs never overlap non trivially.}. $\partial_{L}$ denotes the derivative with respect to a left index and $\partial_{R}$ the one with respect to a right index. When the two derivatives are equal we use 
the generic notation $\partial$. 

Let us prove first the Theorem in the mass-renormalized case, then in the next subsection
in the bare case. Indeed the mass renormalized theory used is free from any quadratic divergences. Remaining logarithmic subdivergences in the ultra violet cutoff can be removed easily by passing to the effective series
as explained in \cite{DisertoriRivasseau2006}. 

\begin{figure}[hbt]
\centerline{
\includegraphics[width=120mm]{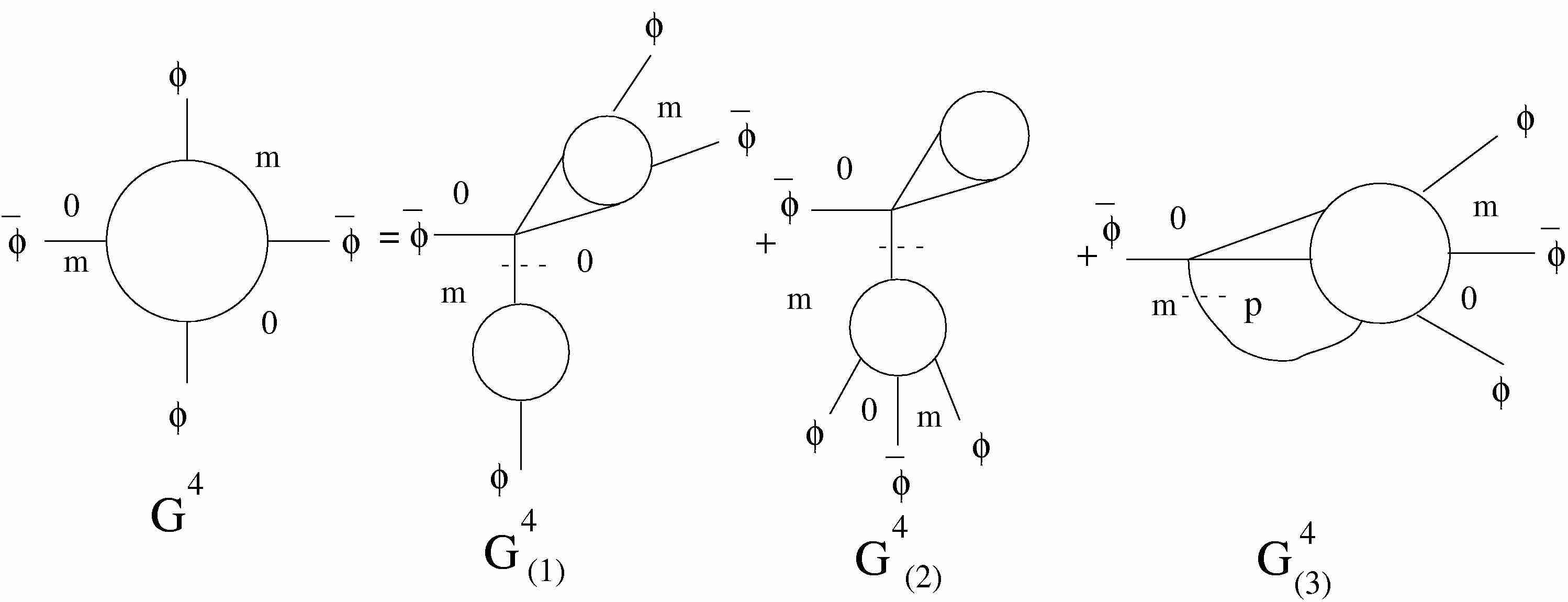}
}
\caption{The Dyson equation}\label{fig:dyson}
\end{figure}

We analyze a four point connected function $G^4(0,m,0,m)$ with index $m \ne 0$ on the right borders. 
This explicit break of left-right symmetry is adapted to our problem.

Consider a $\bar{\phi}$ external line and the first vertex hooked to it. 
Turning right on the $m$ border at this vertex we meet a new line (the slashed line in Fig. \ref{fig:dyson}). The slashed line either separates the graph into two disconnected components ($G^{4}_{(1)}$ and $G^{4}_{(2)}$ in Fig. \ref{fig:dyson}) or not 
($G^{4}_{(3)}$ in Fig. \ref{fig:dyson}). Furthermore, if the slashed line separates the graph into two disconnected components the first vertex may either belong to the four point component ($G^{4}_{(1)}$ in Fig. \ref{fig:dyson}) or to the two point component
($G^{4}_{(2)}$ in Fig. \ref{fig:dyson}). 

We stress that this is a {\it classification} of graphs: the different components depicted in Fig. \ref{fig:dyson} take into account all the combinatoric factors. Furthermore, the setting of the external indices to $0$ on the left borders and $m$ on the right borders distinguishes the $G^{4}_{(1)}$ and $G^{4}_{(2)}$ from their counterparts ``pointing upwards": indeed, the latter are classified in $G^{4}_{(3)}$!

We have thus the Dyson equation:
\bea
\label{Dyson}
 G^4(0,m,0,m)=G^4_{(1)}(0,m,0,m)+G^4_{(2)}(0,m,0,m)+G^4_{(3)}(0,m,0,m)\, .
\eea    

The second term,  $G^{4}_{(2)}$, is zero. Indeed the mass renormalized two point insertion is zero, as it has the external left index set to zero. Note that this is an insertion exclusively on the left border. The simplest case of such an insertion is a
 (left) tadpole. We will (naturally) call a general insertion touching only the left border a ``generalized left tadpole". 

We will prove that $G^{4}_{(1)}+G^{4}_{(3)}$ yields 
$\Gamma^4=\lambda (1-\partial \Sigma)^2$ after amputation of the four external propagators.

We start with $G^{4}_{(1)}$. It is of the form:
\bea
G^4_{(1)}(0,m,0,m)=\lambda C_{0 m} G^{2}(0, m) G^{2}_{ins}(0,0;m)\,.
 \eea

By the Ward identity we have:
\bea
G^{2}_{ins}(0,0;m)&=&\lim_{\zeta\rightarrow 0}G^{2}_{ins}(\zeta ,0;m)=
\lim_{\zeta\rightarrow 0}\frac{G^{2}(0,m)-G^{2}(\zeta,m)}{\zeta}\nonumber\\
&=&-\partial_{L}G^{2}(0,m) \, .
\eea
Using the explicit form of the bare propagator we have $\partial_L C^{-1}_{ab}=\partial_R C^{-1}_{ab}=\partial C^{-1}_{ab}=1$. Reexpressing $G^{2}(0,m)$ by eq.  (\ref{G2Sigmarelation}) we conclude that:
\bea\label{g41}
G^4_{(1)}(0,m,0,m)&=&\lambda
C_{0m}\frac{C_{0m}C^2_{0m}[1-\partial_{L}\Sigma(0,m)]}{[1-C_{0m}\Sigma(0,m)]
(1-C_{0m}\Sigma(0,m))^2}\nonumber\\
&=&\lambda [G^{2}(0,m)]^{4}\frac{C_{0m}}{G^{2}(0,m)}[1-\partial_{L}\Sigma(0,m)]\, .
\eea
The self energy is (again up to irrelevant terms (\cite{c}):
\bea
\label{PropDressed}
\Sigma(m,n)=\Sigma(0,0)+(m+n)\partial\Sigma(0,0) 
\eea 
Therefore up to irrelevant terms ($C^{-1}_{0m}=m+A_{ren}$) we have:
\bea
\label{G2(0,m)}
G^{2}(0,m)=\frac{1}{m+A_{bare}-\Sigma(0,m)}=\frac{1}{m[1-\partial\Sigma(0,0)]+A_{ren}}
\, ,
\eea
and
\bea \label{cdressed}
\frac{C_{0m}}{G^{2}(0,m)}=1-\partial\Sigma(0,0)+\frac{A_{ren}}{m+A_{ren}}\partial\Sigma(0,0) \, .
\eea
Inserting eq. (\ref{cdressed}) in eq. (\ref{g41}) holds:
\bea
\label{g41final}
G^4_{(1)}(0,m,0,m)&=&\lambda [G^{2}(0,m)]^{4}\bigl( 
1-\partial\Sigma(0,0)+\frac{A_{ren}}{m+A_{ren}}\partial\Sigma(0,0)
\bigr) \nonumber\\
&&[1-\partial_{L}\Sigma(0,m)]\, .
\eea

\begin{figure}[hbt]
\centerline{
\includegraphics[width=120mm]{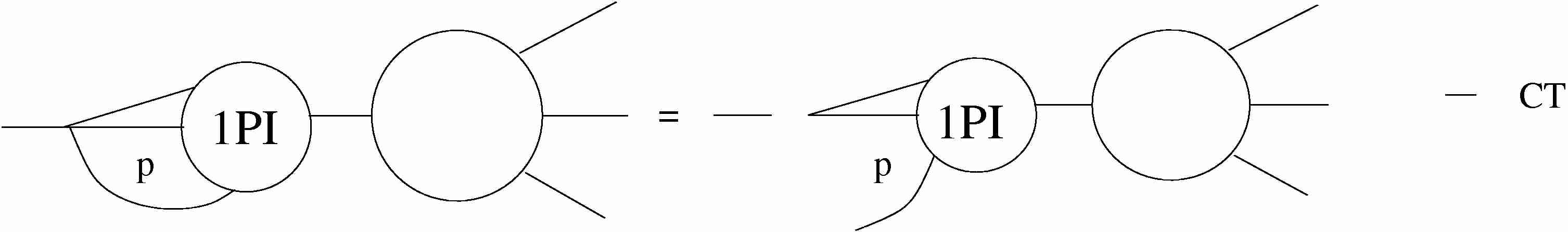}
}
\caption{Two point insertion and opening of the loop with index $p$}\label{fig:insertion}
\end{figure}

For the $G^4_{(3)}(0,m,0,m)$ one starts by ``opening" the face which is ``first on the right''. The summed index of this face is called  $p$ (see Fig. \ref{fig:dyson}).  For bare Green functions this reads:
\bea
\label{opening}
G^{4,bare}_{(3)}(0,m,0,m)=C_{0m}\sum_{ p} G^{4,bare}_{ins}(p,0;m,0,m)\, .
\eea
When passing to mass renormalized Green functions one must be cautious. It is possible that the face $p$ belonged to a  1PI two point insertion in $G^{4}_{(3)}$ (see the left hand side in Fig. \ref{fig:insertion}).
Upon opening the face $p$ this 2 point insertion disappears (see right hand side of Fig. \ref{fig:insertion})! 
When renormalizing, the counterterm  coresponding to this kind of two point insertion will be subtracted on the left hand side of  eq.(\ref{opening}), but not on the right hand side. In the equation for $G^{4}_{(3)}(0,m,0,m)$ one must 
therefore \textit{add its missing counterterm}, so that:
\bea
\label{Open2}
G^4_{(3)}(0,m,0,m)&=& C_{0m}\sum_{p} G^{4}_{ins}(0,p;m,0,m)\nonumber\\
     &-&C_{0m}(CT_{lost})G^{4}(0,m,0,m)\,.
\eea

It is clear that not all 1PI 2 point insertions on the left hand side of Fig. \ref{fig:insertion} will be ``lost" on the right hand side. If the insertion is a ``generalized left tadpole" it is not ``lost" by opening the face $p$ (imagine a tadpole pointing upwards in Fig.\ref{fig:insertion}: clearly it will not be opened by opening the line). 
We will call the 2 point 1PI insertions ``lost" on the right hand side $\Sigma^R(m,n)$. 
Denoting the generalized left tadpole $T^{L}$ we can write (see Fig .\ref{fig:selfenergy}):
\bea
\label{eq:leftright}
  \Sigma(m,n)=T^{L}(m,n)+\Sigma^R(m,n)\, .
\eea
Note that as $T^{L}(m,n)$ is an insertion exclusively on the left border, it does not depend upon the right index $n$. We therefore have $\partial\Sigma(m,n)=\partial_R\Sigma(m,n)=\partial_R\Sigma^R(m,n)$.

\begin{figure}[hbt]
\centerline{
\includegraphics[width=90mm]{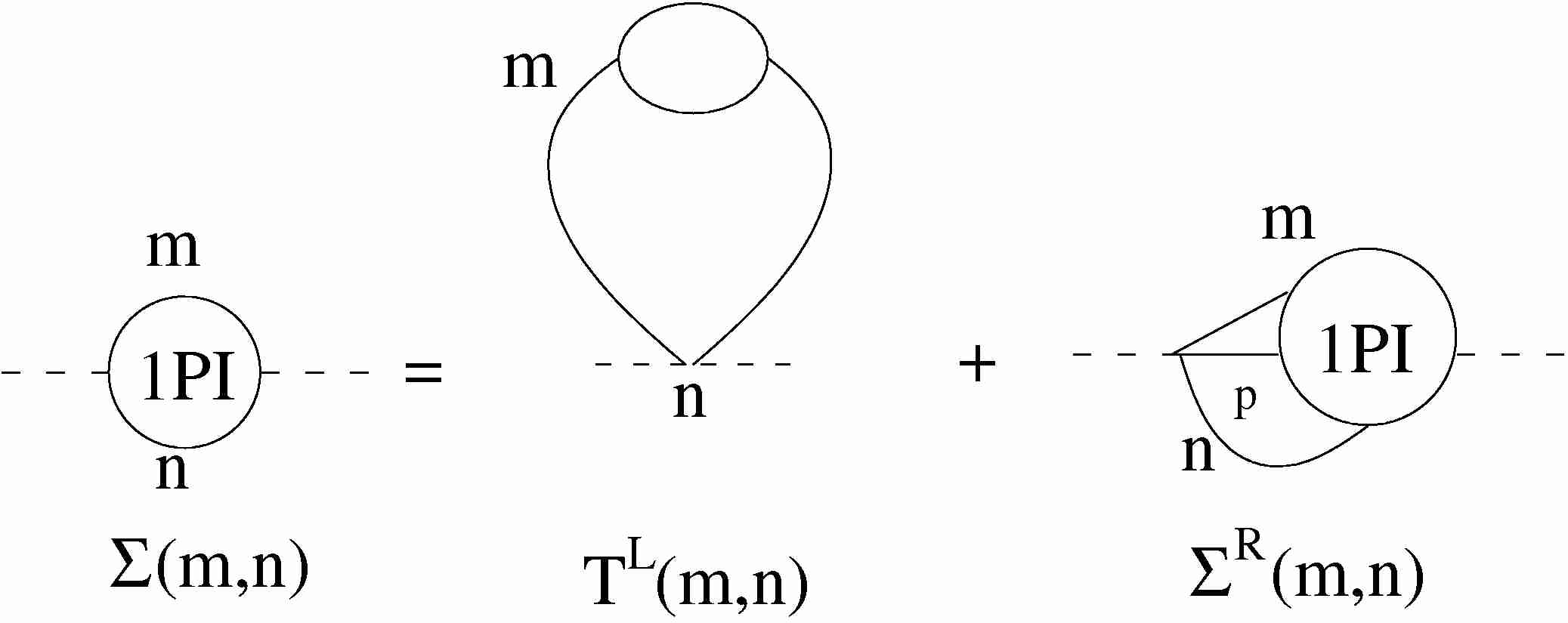}
}
\caption{The self energy}\label{fig:selfenergy}
\end{figure}

The missing mass counterterm writes:
\bea\label{lostct}
CT_{lost}=\Sigma^R(0,0)=\Sigma(0,0)-T^{L}\, .
\eea
In order to evaluate $\Sigma^{R}(0,0)$ we procede
by opening its face $p$ and using the Ward identity (\ref{ward2point}), to obtain:
\bea
\label{S2}
\Sigma^R(0,0)&=&\frac{1}{G^{2}(0,0)}\sum_{p}G^2_{ins}(0,p;0)\nonumber\\
    &=&\frac{1}{G^{2}(0,0)}\sum_{p}\frac{1}{p}[G^2(0,0)-G^2(p,0)]\nonumber\\
    &=&\sum_{p}\frac{1}{p} \biggl(1 -\frac{G^{2}(p,0)}{G^{2}(0,0)}\biggr) \, .
\eea

Using eq. (\ref{Open2}) and eq. (\ref{S2}) we have:
\bea\label{S3}
G^4_{(3)}(0,m,0,m)&=& C_{0m}\sum_{p} G^{4}_{ins}(0,p;m,0,m)\nonumber\\
&-&C_{0m} G^{4}(0,m,0,m) \sum_{p}\frac{1}{p}  
\biggl( 1- \frac{G^{2}(p,0)}{G^2(0,0)}  \biggr)  \, .
\eea

But by the Ward identity (\ref{ward4point}): 
\bea
\label{Ward4}
C_{0m} \sum_{p} G^{4}_{ins}(0,p;m,0,m)=C_{0m} \sum_{p} \frac{1}{p}\biggl( G^{4}(0,m,0,m)-G^{4}(p,m,0,m) \biggr)\, ,
\eea
The second term in eq. (\ref{Ward4}), having at least three denominators 
linear in $p$, is irrelevant \footnote{Any perturbation order of $G^4(p,m,0,m)$ is a polynomial in $\ln(p)$ divided by $p^2$. Therefore the sums over $p$ above are always convergent.}
. Substituting eq. (\ref{Ward4}) in eq . (\ref{S3}) we have:
\bea
\label{G3}
G^4_{(3)}(0,m,0,m) =C_{0m}\frac{G^{4}(0,m,0,m)}{G^2(0,0)}\sum_{p}
\frac{G^{2}(p,0)}{p} \, .
\eea
To conclude we must evaluate the sum in eq. (\ref{G3}). Using eq. (\ref {G2(0,m)}) we have:
\bea
\label{derivee}
\sum_{p}\frac{G^{2}(p,0)}{p}=\sum_{p}\frac{G^{2}(p,0)}{p}\bigl( \frac{1}{G^{2}(0,1)}-\frac{1}{G^{2}(0,0)}\bigr)
\frac{1}{1-\partial\Sigma(0,0)}
\eea

In order to interpret the two terms in the above equation we start by performing the same manipulations as in eq (\ref{S2}) for $\Sigma^R(0,1)$. We get:
\bea
\label{S2new}
\Sigma^R(0,1)&=&\sum_{p}\frac{1}{p} 
\biggl(1 -\frac{G^{2}(p,1)}{G^2(0,1)}\biggr) =\sum_{p}\frac{1}{p} 
\biggl(1 -\frac{G^{2}(p,0)}{G^2(0,1)}\biggr)\, .
\eea
where in the second equality we have neglected an irrelevant term. 

Substituting  eq. (\ref{S2}) and eq. (\ref{S2new}) in eq. (\ref{derivee}) we get:
\bea
\sum_{p}\frac{G^{2}(p,0)}{p}=\frac{\Sigma^R(0,0)-\Sigma^R(0,1)}{1-\partial\Sigma(0,0)}
=-\frac{\partial_{R}\Sigma^R(0,0)}{1-\partial\Sigma(0,0)}
=-\frac{\partial\Sigma(0,0)}{1-\partial\Sigma(0,0)}\, .
\eea
as $\partial_R\Sigma^R=\partial\Sigma$. Hence:
\bea\label{g43}
G^4_{(3)}(0,m,0,m;p)&=&-C_{0m}G^{4}(0,m,0,m)\frac{1}{G^{2}(0,0)}\frac{\partial\Sigma(0,0)}{1-\partial\Sigma(0,0)}
\nonumber\\
&=&-G^{4}(0,m,0,m)
\frac{A_{ren} \; \partial\Sigma(0,0)}{(m+A_{ren}) [1-\partial\Sigma(0,0)]} \ .
\eea
Using (\ref{g41final}) and (\ref{g43}), equation (\ref{Dyson}) rewrites as:
\bea
\label{final}
&&G^4(0,m,0,m)\Big{(}1+
\frac{A_{ren}\; \partial\Sigma(0,0)}{(m+A_{ren}) \; [ 1-\partial\Sigma(0,0)] }\Big{)}
\\
&&=\lambda_{bare} (G^{2}(0,m))^{4}\Big{(}1-\partial\Sigma(0,0)+\frac{A_{ren}}{m+A_{ren}}\partial\Sigma(0,0)\Big{)}
[1-\partial_{L}\Sigma(0,m)]\, .\nonumber
\eea
We multiply (\ref{final}) by $[1-\partial\Sigma(0,0)]$ and amputate four times. As the differences $\Gamma^4(0,m,0,m,)-\Gamma^4(0,0,0,0)$ and $\partial_L\Sigma(0,m)-\partial_L\Sigma(0,0)$ are irrelevant we get:
\bea
\Gamma^{4}(0,0,0,0)=\lambda (1-\partial\Sigma(0,0))^2\, .
\eea
\qed 

\subsubsection{Bare identity}

Let us explain now why the main theorem is also true as an identity between bare functions, without
any renormalization, but with ultraviolet cutoff.

Using the same Ward identities, all the equations go through with only few differences:

- we should no longer add the lost mass counterterm in (\ref{lostct})

- the term $G^{4}_{(2)}$ is no longer zero.

- equation (\ref{cdressed}) and all propagators now involve the bare $A$ parameter.

But these effects compensate. Indeed the bare  $G^{4}_{(2)}$ term is the left generalized
tadpole $\Sigma - \Sigma^R$, hence
\begin{equation} \label{newleft}
G^{4}_{(2)}  (0,m,0,m) = C_{0,m} \bigl(  \Sigma(0,m) - \Sigma^R (0,m) \bigr) G^4(0,m,0,m)\; .
\end{equation}
Equation (\ref{cdressed}) becomes up to irrelevant terms
\bea \label{cdressedbare}
\frac{C^{bare}_{0m}}{G^{2,bare}(0,m)}=1-\partial_{L}\Sigma(0,0)+
\frac{A_{bare}}{m+A_{bare}}\partial_{L}\Sigma(0,0) 
- \frac{1}{m+A_{bare}}\Sigma(0,0) 
\eea
The first  term proportional to $ \Sigma(0,m) $ in (\ref{newleft})  combines with 
the new term in (\ref{cdressedbare}), and the second term proportional to $ \Sigma^R(0,m) $ in (\ref{newleft})
is exactly the former ``lost counterterm" (\ref{lostct}). This proves (\ref{beautiful}) in the bare case.

\subsection{The RG Flow}

It remains to understand better the meaning of the 
Langmann-Szabo symmetry which certainly lies behind this Ward identity.
Of course we also need to develop a non-perturbative or constructive analysis of the theory
to fully confirm the absence of the Landau ghost. If this constructive
analysis confirms the perturbative picture the expected 
non perturbative flow for the effective parameters $\lambda$ and $\Omega$ should be:

\bqa
 \frac{d\la_i}{di} &\simeq a (1- \Omega_i) F(\la_i)\ ,\\  
\frac{d\Omega_i}{di} &\simeq b(1- \Omega_i) G(\la_i) \ ,
\eqa
where $F(\la_i) = \la_i^2 + O(\la_i^3)$, $G(\la_i) = \la_i + O(\la_i^2)$
and $a,b\in \mathbb{R}$ are two constants.
The behavior of this system is qualitatively the same as
the simpler system
\bqa  \frac{d\la_i}{di} &\simeq& a (1- \Omega_i) \la_i^2 \ , \\
\frac{d\Omega_i}{di} &\simeq& b (1- \Omega_i) \la_i \ ,
\eqa
whose solution is
\be
\la_i = \la_0 e^{\frac{a}{b} (\Omega_i-\Omega_0)}\ ,
\ee
with $\Omega_i$ solution of  
\be
b \; i \; \la_0 = \int_{1-\Omega_i}^{1-\Omega_0} e^{\frac{au}{b}} \frac{du}{u}\ ,
\ee
hence going exponentially fast to 1 as $i$ goes to infinity. The
corresponding numerical flow is drawn on Figure \ref{flow}.
\begin{figure}[t]
\centerline{\includegraphics[width=6cm]{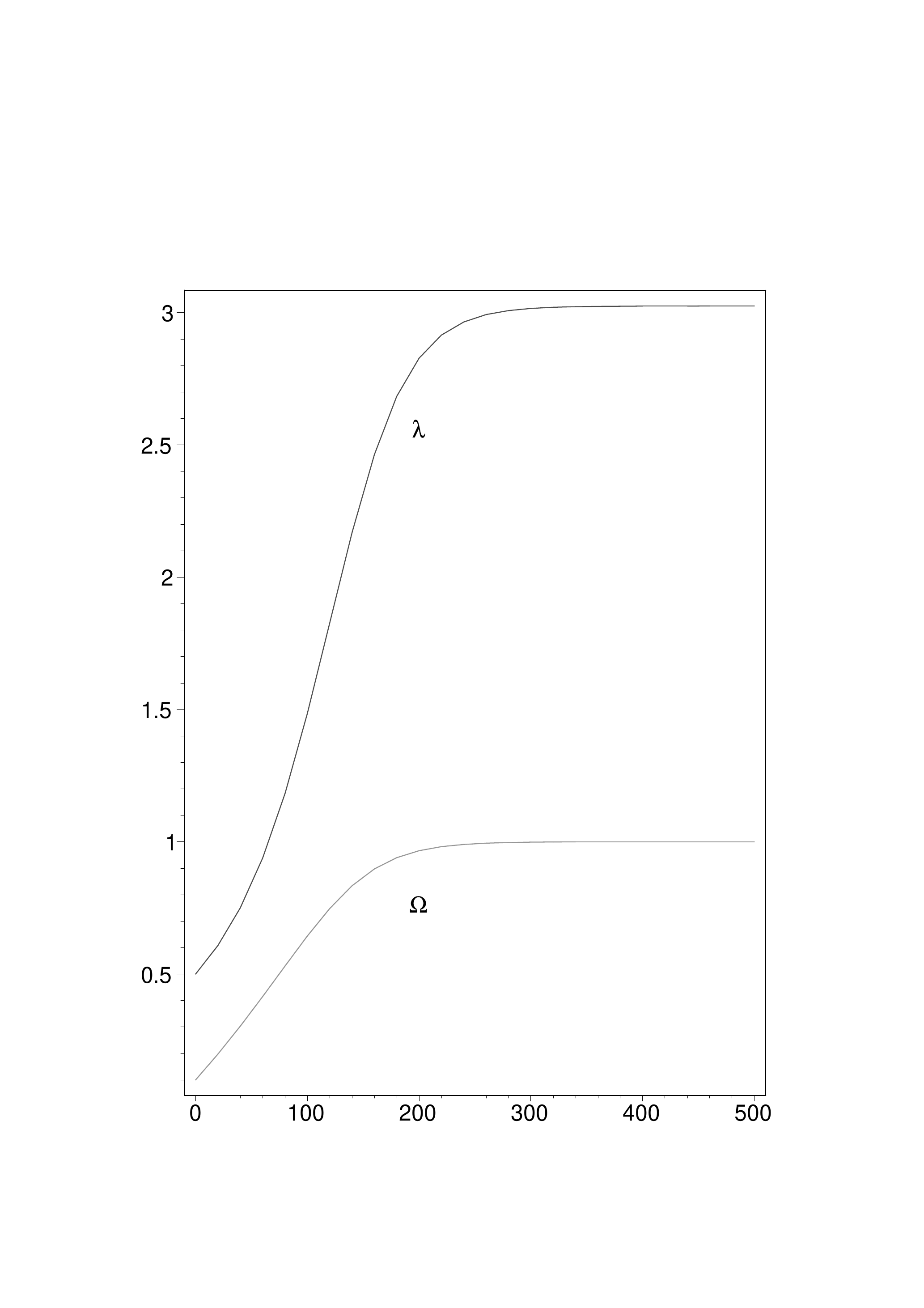}}
\caption{Numerical flow for $\la$ and $\Om$}
\label{flow}\end{figure}

Of course to establish fully rigorously this picture is beyond the
reach of perturbative theorems and requires a constructive analysis.

\section{Propagators on \encv{} space}
\label{sec:boite-outils}

We give here the results we get in \cite{toolbox05}. In this article, we computed the $x$-space and matrix 
basis kernels of operators which generalize the Mehler kernel \eqref{eq:Mehler}. Then we proceeded to 
a study of the scaling behaviors of these kernels in the matrix basis. This work is useful to study the 
\encv{} Gross-Neveu model in the matrix basis.

\subsection{Bosonic kernel}

The following lemma generalizes the Mehler kernel \cite{simon79funct}:
\begin{lemma}
  \label{HinXspace}Let $H$ the operator:
  \begin{equation}
    H=\frac{1}{2}\Big(-\Delta+ \Omega^2x^2-2\imath B(x_0\partial_1-x_1\partial_0)\Big).\label{eq:HinMat}
  \end{equation}
  The $x$-space kernel of $e^{-tH}$ is:
  \begin{equation}
    e^{-tH}(x,x')=\frac{\Omega}{2\pi\sinh\Omega t}e^{-A},\label{eq:propaxboson}
  \end{equation}
  \begin{equation}
    A=\frac{\Omega\cosh\Omega t}{2\sinh\Omega t}(x^2+x'^2)-
    \frac{\Omega\cosh Bt}{\sinh\Omega t}x\cdot x'-\imath
    \frac{\Omega\sinh Bt}{\sinh\Omega t}x\wedge x'.
  \end{equation}
\end{lemma}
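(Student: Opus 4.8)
The plan is to diagonalise the ``magnetic'' piece of $H$ against the harmonic piece and thereby reduce everything to the ordinary two-dimensional Mehler kernel. Write $H = H_0 - \imath B L$, with $H_0 = \tfrac12(-\Delta + \Omega^2 x^2)$ the isotropic two-dimensional harmonic oscillator and $L = x_0\partial_1 - x_1\partial_0$ the generator of planar rotations. Since $H_0$ is rotation invariant, $[H_0, L] = 0$ on the common core spanned by the Hermite functions, hence the operator identity $e^{-tH} = e^{-tH_0}\,e^{\imath t B L}$. First I would record the two factors separately. For $e^{-tH_0}$ one uses the classical two-dimensional Mehler kernel, obtained as the tensor product over the two coordinates of the one-dimensional one:
\[
e^{-tH_0}(x,y)=\frac{\Omega}{2\pi\sinh\Omega t}\,\exp\Bigl(-\frac{\Omega\cosh\Omega t}{2\sinh\Omega t}(x^2+y^2)+\frac{\Omega}{\sinh\Omega t}\,x\cdot y\Bigr).
\]
For $e^{sL}$ one checks, by differentiating in $s$, that $(e^{sL}f)(x)=f(\mathcal R_s x)$ where $\mathcal R_s$ is the rotation of the plane by angle $s$; thus $e^{\imath t B L}$ is ``rotation by the imaginary angle $\imath tB$''.

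Composing the two, and noting that $\det\mathcal R_s = 1$ for every $s\in\mathbb C$ so that the corresponding change of variables carries no Jacobian, one gets $e^{-tH}(x,y) = e^{-tH_0}\bigl(x,\mathcal R_{-\imath tB}\,y\bigr)$. Then I would substitute the elementary identities $(\mathcal R_\phi y)^2 = y^2$ and $x\cdot\mathcal R_\phi y = \cos\phi\,(x\cdot y) - \sin\phi\,(x\wedge y)$, with $x\wedge y := x_0 y_1 - x_1 y_0$, specialised at $\phi = -\imath tB$, where $\cos(-\imath tB) = \cosh Bt$ and $-\sin(-\imath tB) = \imath\sinh Bt$. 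Inserting these into the Mehler exponent reproduces exactly the claimed $e^{-tH}(x,x') = \frac{\Omega}{2\pi\sinh\Omega t}\,e^{-A}$ with the stated $A$: the rotation of the bilinear term $x\cdot y$ generates simultaneously the $\cosh Bt\,(x\cdot x')$ and the $\imath\sinh Bt\,(x\wedge x')$ contributions, while $x^2+x'^2$ and the prefactor are untouched. Finally I would fix all sign and normalisation conventions by two sanity checks: at $B=0$ the formula must collapse to the two-dimensional Mehler kernel, and as $t\to 0^+$ one must recover $e^{-tH}(x,y)\to\delta(x-y)$, i.e. $A\sim |x-y|^2/(2t)$ with prefactor $\sim (2\pi t)^{-1}$.

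The one point that needs care is the justification of the operator identity together with the handling of the \emph{unbounded} operator $e^{\imath tBL}$. For $\Omega > |B|$ the self-adjoint operator $H$ is bounded below, the operators $H_0$, $L$ and $H$ are simultaneously diagonalised in the Landau/Hermite basis, and boundedness of the product $e^{-tH_0}e^{\imath tBL}$ on $L^2(\mathbb R^2)$ is immediate; the identity of integral kernels is then rigorous, and the general range of parameters follows by analytic continuation in $(\Omega,B)$. This is really the main (and essentially only) obstacle; the rest is bookkeeping.

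If one prefers to avoid the imaginary-rotation manoeuvre, two equivalent routes are available. The first is a direct verification: posit the Gaussian ansatz $e^{-tH}(x,y) = C(t)\exp\bigl(-a(t)(x^2+y^2) - b(t)\,x\cdot y - c(t)\,x\wedge y\bigr)$, the most general quadratic form invariant under simultaneous rotation of $x$ and $y$, insert it into the heat equation $\partial_t K = -\tfrac12(-\Delta_x + \Omega^2 x^2 + 2\imath B L_x)K$, and read off a closed first-order (Riccati-type) system of ODEs for $a,b,c$ and $\log C$; integrating it from the initial data imposed by $K\to\delta(x-y)$ yields precisely the stated coefficients, with no conventions to track. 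The second, which is in fact the natural route for the companion matrix-basis statement, is to pass to the complex coordinate $z=x_0+\imath x_1$: in creation/annihilation variables $H$ splits as a sum of two independent harmonic oscillators of frequencies $\Omega+B$ and $\Omega-B$, so $e^{-tH}$ is a product of two one-dimensional Mehler kernels, which upon returning to $x$-coordinates reassembles into \eqref{eq:propaxboson}.
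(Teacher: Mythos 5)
Your argument is correct, and it is essentially self-contained, which is worth noting because the paper itself does not prove this lemma: it only states it, importing the result from \cite{toolbox05}, so there is no in-text proof to compare against. The core of your proof is sound: $H=H_0-\imath BL$ with $[H_0,L]=0$ on the Hermite basis, $e^{-tH}=e^{-tH_0}e^{\imath tBL}$, and the identification of $e^{\imath tBL}$ as rotation by the imaginary angle $\imath tB$, so that $e^{-tH}(x,y)=e^{-tH_0}(x,\mathcal R_{-\imath tB}y)$; the identities $(\mathcal R_\phi y)^2=y^2$ and $x\cdot\mathcal R_\phi y=\cos\phi\,(x\cdot y)-\sin\phi\,(x\wedge y)$ then reproduce exactly the stated $A$, including the unchanged prefactor $\Omega/(2\pi\sinh\Omega t)$ since the (complexified) rotation has unit Jacobian. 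You also correctly isolate the only delicate point, the unboundedness of $e^{\imath tBL}$: the product is bounded precisely because the spectrum of $H$ is $(\Omega+B)(n+\tfrac12)+(\Omega-B)(m+\tfrac12)$, so the statement should be understood for $\Omega\ges|B|$ (the covariant case $B=\Omega$ being the boundary; beyond that $e^{-tH}$ does not exist as a bounded semigroup, so ``analytic continuation to all parameters'' should be read with that caveat). A quick confirmation that your normalization and signs are right, complementary to your two sanity checks: the diagonal integral of the claimed kernel gives $\Tr e^{-tH}=[2(\cosh\Omega t-\cosh Bt)]^{-1}$, which matches the product of the two geometric series for oscillators of frequencies $\Omega\pm B$, i.e.\ your third route. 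The alternative derivations you sketch (Gaussian ansatz plus Riccati-type ODEs, or splitting into circular creation/annihilation modes) are both standard and would equally serve; the second is indeed the natural bridge to the matrix-basis formula that follows in the paper.
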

\begin{rem}
  The Mehler kernel corresponds to $B=0$. The limit $\Omega=B\to 0$ gives the usual heat kernel.
\end{rem}
\begin{lemma}
  Let $H$ be given by \eqref{eq:HinMat} with $\Omega (B)\to 2\Omega/\theta (2B\theta)$. 
  Its inverse in the matrix basis is:
  \begin{align}
    H^{-1}_{m,m+h;l+h,l}=\frac{\theta}{8\Omega} \int_0^1 d\alpha\,  
\dfrac{(1-\alpha)^{\frac{\mu_0^2 \theta}{8 \Omega}+(\frac{D}{4}-1)}}{  
(1 + C\alpha )^{\frac{D}{2}}} (1-\alpha)^{-\frac{4B}{8\Omega}h}\prod_{s=1}^{\frac{D}{2}} 
G^{(\alpha)}_{m^s, m^s+h^s; l^s + h^s, l^s},\label{eq:propbosonmatrix}
\\
 G^{(\alpha)}_{m, m+h; l + h, l}
= \left(\frac{\sqrt{1-\alpha}}{1+C \alpha} 
\right)^{m+l+h} \sum_{u=\max(0,-h)}^{\min(m,l)}
   {\mathcal A}(m,l,h,u)
\left( \frac{C \alpha (1+\Omega)}{\sqrt{1-\alpha}(1-\Omega)} 
\right)^{m+l-2u},\notag
\end{align}
where ${\mathcal A}(m,l,h,u)=\sqrt{\binom{m}{m-u}
\binom{m+h}{m-u}\binom{l}{l-u}\binom{l+h}{l-u}}$ and $C$ is a function of $\Omega$ : $C(\Omega)=
\frac{(1-\Omega)^2}{4\Omega}$.
\end{lemma}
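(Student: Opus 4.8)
The plan is to obtain $H^{-1}$ by integrating the heat kernel of Lemma~\ref{HinXspace} and then re-expressing the result in the matrix basis. First I would write $H^{-1}=\int_{0}^{\infty}dt\,e^{-tH}$, where a mass term $\mu_{0}^{2}$ is understood as part of $H$ (or added) to ensure convergence at large $t$; it is precisely this Gaussian damping $e^{-t\mu_{0}^{2}/2}$ that, after the change of variables performed below, produces the power $(1-\alpha)^{\mu_{0}^{2}\theta/(8\Omega)}$. The integrand in $x$-space is then the explicit Gaussian $e^{-tH}(x,x')=\frac{\Omega}{2\pi\sinh\Omega t}\,e^{-A}$ furnished by Lemma~\ref{HinXspace}, see \eqref{eq:propaxboson}.

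Next I would disentangle the rotation term. Writing $H=H_{0}-\imath BL$ with $H_{0}=\tfrac12(-\Delta+\Omega^{2}x^{2})$ and $L=x_{0}\partial_{1}-x_{1}\partial_{0}$ the generator of rotations in the $(x_{0},x_{1})$-plane, one has $[H_{0},L]=0$, hence $e^{-tH}=e^{-tH_{0}}\,e^{\imath tBL}$. In the matrix basis $f_{mn}$ the operator $L$ is diagonal, multiplying $f_{mn}$ by a quantity linear in $m-n$, so that $e^{\imath tBL}$ contributes only a factor depending on $t$ and on the index difference $h$ carried by the line. The matrix elements of $e^{-tH_{0}}$ are nothing but the Grosse--Wulkenhaar elements already recorded in the $B=0$ case \eqref{eq:propaPhimatrix}, and the product $\prod_{s=1}^{D/2}$ over symplectic planes is inherited from the factorisation of the $D$-dimensional harmonic oscillator and of the Moyal matrix basis into $D/2$ two-dimensional blocks.

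Then I would carry out the substitution $t\mapsto\alpha$, with $\alpha$ a suitable monotone function of $\tanh\Omega t$ (equivalently of $e^{-2\Omega t}$) normalised so that $t\in(0,\infty)\mapsto\alpha\in(0,1)$. Under this change the prefactor $\Omega/(2\pi\sinh\Omega t)$, the mass damping and the quadratic form $A$ recombine into the announced powers of $(1-\alpha)$ and of $(1+C\alpha)$ with $C(\Omega)=(1-\Omega)^{2}/(4\Omega)$, reproducing the $B=0$ integrand $G^{(\alpha)}_{m,m+h;l+h,l}$ with its coefficients $\mathcal{A}(m,l,h,u)=\sqrt{\binom{m}{m-u}\binom{m+h}{m-u}\binom{l}{l-u}\binom{l+h}{l-u}}$, while the $t$-dependent factor coming from $e^{\imath tBL}$ becomes the extra $(1-\alpha)^{-\frac{4B}{8\Omega}h}$, once the rescalings $\Omega\to 2\Omega/\theta$ and $B\to 2B\theta$ of the statement are inserted.

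The main obstacle is the matrix-basis expansion of the Gaussian kernel $e^{-tH}(x,x')$ itself: one must expand $e^{-A}$, Gaussian in $(x,x')$ but containing the off-diagonal pieces $x\cdot x'$ and $x\wedge x'$, on the product basis $f_{mn}(x)\,\overline{f_{kl}(x')}$. This is a Mehler-type generating identity for products of Laguerre polynomials, and the delicate point is to propagate the dependence on the angular indices through it so that the $B$-term emerges exactly as the single power $(1-\alpha)^{-\frac{4B}{8\Omega}h}$; for $B=0$ this is the computation of \cite{c,GrWu03-2}, which here must be redone keeping the extra real exponential factor. One also has to justify exchanging the $t$-integral with the finite sum over $u$ and with the infinite sums defining the matrix entries, which follows from absolute bounds of the type used in Propositions~\ref{thm-th1}--\ref{prop:bound4}.
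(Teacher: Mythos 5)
The paper itself does not prove this lemma (nor Lemma~\ref{HinXspace}); Section~6 merely quotes results from~\cite{toolbox05}, so there is no in-text proof to compare against. Your approach is sound, and your second paragraph already contains the whole argument: $[H_{0},L]=0$ because $H_{0}$ is rotation-invariant in each symplectic plane, and $L$ is anti-Hermitian with $Lf_{mn}=\imath\kappa(n-m)f_{mn}$ for a real $\kappa$ (immediate from $a\mapsto e^{-\imath\phi}a$, $\bar a\mapsto e^{\imath\phi}\bar a$ under rotation and $f_{mn}\propto\bar a^{\star m}\star f_{00}\star a^{\star n}$). Hence the matrix kernel of $e^{-tH}=e^{-tH_{0}}\,e^{\imath tBL}$ is the $B=0$ kernel multiplied by the \emph{real} diagonal factor $e^{-\kappa tBh}$ with $h=n-m$ conserved along the line, and under $1-\alpha=e^{-2\Ot t}$ this turns into the announced $(1-\alpha)^{-4Bh/(8\Omega)}$ once the rescalings $\Omega\to 2\Omega/\theta$, $B\to 2B\theta$ are inserted. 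Your fourth paragraph is then inconsistent with your second: there is no ``extra real exponential factor'' to drag through a re-derived Mehler--Laguerre generating identity, because once the commutation argument is used that factor already sits outside the $\alpha$-integrand, and the remainder is verbatim the $B=0$ computation of~\cite{c,GrWu03-2}. You should commit to the commutation route and drop the brute-force alternative (or present them as two independent proofs, not a plan followed by an obstacle). What remains to be checked on the commutation route is modest: the two facts about $L$ stated above, the precise numerical constant $\kappa$ so that $4B/(8\Omega)$ comes out rather than something else, and the benign analytic continuation implicit in treating $e^{\imath tBL}$ as a rotation by an imaginary angle acting on the entire Gaussian kernel $e^{-tH_{0}}(x,x')$.
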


\subsection{Fermionic kernel}

On the Moyal space, we modified the commutative Gross-Neveu model by adding a $\xts$ term (see 
lemma \ref{xpropa1GN}). We have
\begin{eqnarray}
G(x,y) &=& -\frac{\Omega}{\theta\pi}\int_{0}^{\infty}\frac{dt}{\sinh(2\Ot
t)}\, e^{-\frac{\Ot}{2}\coth(2\Ot t)(x-y)^{2}+\imath\Ot
x\wedge y}
\\ 
&&    \lb\imath\Ot\coth(2\Ot t)(\xs-\ys)+\Omega(\xts-\yts)- \mu \rb
e^{-2\imath\Ot
t\gamma^{0}\gamma^{1}}e^{-t\mu^{2}} \; . \nonumber
\end{eqnarray}
It will be useful to express $G$ in terms of commutators:
\begin{eqnarray}    
G(x,y)  &=&-\frac{\Omega}{\theta\pi}\int_{0}^{\infty}dt\,\lb \imath\Ot\coth(2\Ot
t)\lsb\xs, \Gamma^t  \rsb(x,y) \right.
\nonumber\\
&&
\left. +\Omega\lsb\xts, \Gamma^t \rsb(x,y)  -\mu \Gamma^t (x,y)  \rb
e^{-2\imath\Ot t\gamma^{0}\gamma^{1}}e^{-t\mu^{2}}, 
\label{xfullprop}
\end{eqnarray}
where
\begin{eqnarray}
\Gamma^t (x,y)  &=&
\frac{1}{\sinh(2\Ot t)}\,
e^{-\frac{\Ot}{2}\coth(2\Ot t)(x-y)^{2}+\imath\Ot x\wedge y}
\end{eqnarray}
with $\Ot=\frac{2\Omega}{\theta}$ and $x\wedge y=x^{0}y^{1}-x^{1}y^{0}$.\\

We now give the expression of the Fermionic kernel \eqref{xfullprop} in the matrix basis. The inverse of 
the quadratic form
\begin{equation}
\Delta=p^{2}+\mu^{2}+\frac{4\Omega^{2}}{\theta^2} x ^{2} +\frac{4B}{\theta}L_{2}
\end{equation}
is given by \eqref{eq:propbosonmatrix} in the preceeding section:
\begin{align}
  \Gamma_{m, m+h; l + h, l} 
  &= \frac{\theta}{8\Omega} \int_0^1 d\alpha\,  
  \dfrac{(1-\alpha)^{\frac{\mu^2 \theta}{8 \Omega}-\frac{1}{2}}}{  
    (1 + C\alpha )} 
  \Gamma^{\alpha}_{m, m+h; l + h, l}\,\label{eq:propinit}
\\
  \Gamma^{(\alpha)}_{m, m+h; l + h, l}
  &= \left(\frac{\sqrt{1-\alpha}}{1+C \alpha} 
  \right)^{m+l+h}\left( 1-\alpha\right)^{-\frac{Bh}{2\Omega}} \label{eq:propinit-b}\\
  &
  \sum_{u=0}^{\min(m,l)} {\cal A}(m,l,h,u)\ 
  \left( \frac{C \alpha (1+\Omega)}{\sqrt{1-\alpha}\,(1-\Omega)} 
  \right)^{m+l-2u}.\notag
\end{align}
The Fermionic propagator $G$ (\ref{xfullprop}) in the matrix basis may be deduced from the kernel \eqref
{eq:propinit}. We just set $B= \Omega$, add the missing term with $\gamma^0 \gamma^1$ and compute 
the action of $-\ps-\Omega\xts+\mu$ on $\Gamma$. We must then evaluate $\lsb x^{\nu},\Gamma\rsb$ in 
the matrix basis:
\begin{align}
  \lsb x^{0},\Gamma\rsb_{m,n;k,l}=&2\pi\theta\sqrt\frac{\theta}{8}\lb\sqrt{m+1}
  \Gamma_{m+1,n;k,l}-\sqrt{l}\Gamma_{m,n;k,l-1}+\sqrt{m}\Gamma_{m-1,n;k,l}
\right.\nonumber\\
&-\sqrt{l+1}\Gamma_{m,n;k,l+1}+\sqrt{n+1}\Gamma_{m,n+1;k,l}-\sqrt{k}
\Gamma_{m,n;k-1,l}\nonumber\\
&\left.+\sqrt{n}\Gamma_{m,n-1;k,l}-\sqrt{k+1}
  \Gamma_{m,n;k+1,l}\rb,\label{x0Gamma}\\
  \lsb
  x^{1},\Gamma\rsb_{m,n;k,l}=&2\imath\pi\theta\sqrt\frac{\theta}{8}\lb\sqrt{m+1}
  \Gamma_{m+1,n;k,l}-\sqrt{l}\Gamma_{m,n;k,l-1}-\sqrt{m}
  \Gamma_{m-1,n;k,l} \right.
\nonumber\\
&+\sqrt{l+1}\Gamma_{m,n;k,l+1}
-\sqrt{n+1}\Gamma_{m,n+1;k,l}+\sqrt{k}\Gamma_{m,n;k-1,l}
\nonumber\\
&\left.+\sqrt{n}\Gamma_{m,n-1;k,l}-\sqrt{k+1}\Gamma_{m,n;k+1,l}\rb.
  \label{x1Gamma}
\end{align}
This allows to prove:
\begin{lemma}
Let $G_{m,n;k,l}$ the kernel, in the matrix basis, of the operator\\
$\lbt\ps+\Omega\xts+\mu\rbt^{-1}$. We have:
\begin{align}
G_{m,n;k,l}=& 
-\frac{2\Omega}{\theta^{2}\pi^{2}} \int_{0}^{1} 
d\alpha\, G^{\alpha}_{m,n;k,l},\label{eq:propaFermiomatrix}
\\
G^{\alpha}_{m,n;k,l}=&\lbt\imath\Ot\frac{2-\alpha}{\alpha}\lsb\xs,
\Gamma^{\alpha}\rsb_{m,n;k,l}
+\Omega\lsb\slashed{\tilde{x}},\Gamma^{\alpha}\rsb_{m,n;k,l} - \mu\,\Gamma^{\alpha}_{m,n;k,l}\rbt
\nonumber\\
&\times\lbt\frac{2-\alpha}{2\sqrt{1-\alpha}}
\mathds{1}_{2}-\imath\frac{\alpha}{2\sqrt{1-\alpha}}\gamma^{0}\gamma^{1}
\rbt.\label{eq:matrixfullprop}
\end{align}
where $\Gamma^{\alpha}$ is given by (\ref{eq:propinit-b}) and the commutators by the formulas (\ref
{x0Gamma}) and (\ref{x1Gamma}).
\end{lemma}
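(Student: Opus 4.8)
The plan is to take the closed $x$-space expression \eqref{xfullprop} for $G=(\ps+\Omega\xts+\mu)^{-1}$ as the starting point and carry it over to the matrix basis term by term, the only extra inputs being the matrix-basis kernel \eqref{eq:propinit-b} of the scalar Mehler-type operator $\Gamma$ with the magnetic parameter set to $B=\Omega$ (this is the Bosonic computation \eqref{eq:propbosonmatrix}, established in \cite{toolbox05}) and the matrix-basis formulas \eqref{x0Gamma}--\eqref{x1Gamma} for $x^0,x^1$ acting on $\Gamma$. Writing $\xs=\gamma^0x^0+\gamma^1x^1$ and $\xts=\gamma^\mu\tilde x_\mu$ with $\tilde x=2\Theta^{-1}x$, the brackets $\lsb\xs,\Gamma^t\rsb$ and $\lsb\xts,\Gamma^t\rsb$ are, in the matrix basis, the $\gamma$-matrix combinations of \eqref{x0Gamma}--\eqref{x1Gamma}; this step is purely mechanical once the $t$-integral has been recast in the variable $\alpha$.

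The substitution that does the work is $e^{-2\Ot t}=\sqrt{1-\alpha}$, i.e.\ $\alpha=1-e^{-4\Ot t}$, the very change of variables used to pass from the Schwinger representation to the closed form \eqref{eq:propinit} in the Bosonic case. Under it $dt=\tfrac{\theta}{8\Omega}\tfrac{d\alpha}{1-\alpha}$ and $\int_0^\infty dt\mapsto\int_0^1 d\alpha$; moreover $\coth(2\Ot t)=\tfrac{2-\alpha}{\alpha}$, $\tfrac1{\sinh(2\Ot t)}=\tfrac{2\sqrt{1-\alpha}}{\alpha}$, $e^{-t\mu^2}=(1-\alpha)^{\mu^2\theta/8\Omega}$, and, since $(\gamma^0\gamma^1)^2=-\mathds{1}_2$, one has $e^{-2\imath\Ot t\gamma^0\gamma^1}=\cosh(2\Ot t)\mathds{1}_2-\imath\sinh(2\Ot t)\gamma^0\gamma^1=\tfrac{2-\alpha}{2\sqrt{1-\alpha}}\mathds{1}_2-\imath\tfrac{\alpha}{2\sqrt{1-\alpha}}\gamma^0\gamma^1$, which is exactly the $\gamma$-matrix prefactor of \eqref{eq:matrixfullprop} (note it must stay to the right of the bracket throughout). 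Thus the first term $\imath\Ot\coth(2\Ot t)\lsb\xs,\Gamma^t\rsb$ of \eqref{xfullprop} becomes $\imath\Ot\tfrac{2-\alpha}{\alpha}\lsb\xs,\Gamma^{\alpha}\rsb_{m,n;k,l}$ and the other two terms transcribe directly. What remains is to check that the scalar Gaussian factor $\tfrac1{\sinh(2\Ot t)}\,e^{-\frac{\Ot}2\coth(2\Ot t)(x-y)^2+\imath\Ot x\wedge y}$ of $\Gamma^t$ goes, under the matrix-basis transformation together with this substitution, into $\Gamma^{(\alpha)}_{m,n;k,l}$ of \eqref{eq:propinit-b}: identifying $\Gamma^t$ with the Bosonic heat kernel of Lemma \ref{HinXspace} at $B=\Omega$ (where the quadratic form $A$ collapses to $\tfrac{\Ot\cosh(2\Ot t)}{2\sinh(2\Ot t)}(x-y)^2$ plus the symplectic phase), this is the computation already behind \eqref{eq:propbosonmatrix}, which produces the factors $\big(\tfrac{\sqrt{1-\alpha}}{1+C\alpha}\big)^{m+l+h}$, $(1-\alpha)^{-h/2}$, the constrained sum over $u$ with weights ${\mathcal A}(m,l,h,u)$, and the weight $\tfrac{(1-\alpha)^{\mu^2\theta/8\Omega-1/2}}{1+C\alpha}$ absorbed in \eqref{eq:propinit}. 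Collecting the prefactor $-\tfrac{\Omega}{\theta\pi}$ of \eqref{xfullprop} with the Jacobian $\tfrac{\theta}{8\Omega}$ and the normalisation of the matrix-basis transformation produces the constant $-\tfrac{2\Omega}{\theta^2\pi^2}$ of \eqref{eq:propaFermiomatrix}, and reassembling the pieces gives \eqref{eq:propaFermiomatrix}--\eqref{eq:matrixfullprop}.

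The main obstacle is precisely this last point: showing that the full $x$-dependence of the Mehler-type kernel, after the $\alpha$-substitution and the infinite-dimensional change of basis, reproduces the combinatorial expression \eqref{eq:propinit-b}, with its binomial weights and the constrained sum over $u$ --- this generating-function manipulation is where all the genuine work sits. Everything else --- the $\gamma$-matrix algebra, the hyperbolic identities, the action of the differential prefactor $-\ps-\Omega\xts+\mu$ rewritten through the commutators of \eqref{x0Gamma}--\eqref{x1Gamma}, and the bookkeeping of the constants --- is routine. Since that kernel computation is the one already carried out in \cite{toolbox05} for the Bosonic case \eqref{eq:propbosonmatrix}, in the write-up I would invoke it at $B=\Omega$ and devote the proof to tracking the $\gamma^0\gamma^1$ rotation and the scalar weights.
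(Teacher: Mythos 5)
Your approach is exactly the paper's: both start from the $x$-space expression \eqref{xfullprop} (itself the action of $-\ps-\Omega\xts+\mu$ on the Bosonic-type kernel), transport the scalar Mehler kernel to the matrix basis by specialising the already-computed case \eqref{eq:propbosonmatrix} at $B=\Omega$, rewrite the first-order part through the commutator formulas \eqref{x0Gamma}--\eqref{x1Gamma}, and convert the $t$-integral to the $\alpha$-integral via $e^{-4\Ot t}=1-\alpha$, under which $\coth(2\Ot t)=(2-\alpha)/\alpha$, $e^{-2\imath\Ot t\gamma^0\gamma^1}=\tfrac{2-\alpha}{2\sqrt{1-\alpha}}\mathds{1}_2-\imath\tfrac{\alpha}{2\sqrt{1-\alpha}}\gamma^0\gamma^1$, and the Jacobian combines with the normalisations to give $-2\Omega/\theta^2\pi^2$. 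The write-up is correct and appropriately identifies the reusable Bosonic computation as the genuine content, the rest being algebraic bookkeeping.
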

The first two terms in the equation (\ref{eq:matrixfullprop}) contain commutators and will be gathered 
under the name $G^{\alpha, {\rm comm}}_{m,n;k,l}$. The last term will be called $G^{\alpha, {\rm mass}}_
{m,n;k,l}$:
\begin{align}\label{commterm}
G^{\alpha, {\rm comm}}_{m,n;k,l}=& \lbt\imath\Ot\frac{2-\alpha}{\alpha}\lsb\xs,
\Gamma^{\alpha}\rsb_{m,n;k,l} +\Omega\lsb\slashed{\tilde{x}},\Gamma^{\alpha}\rsb_{m,n;k,l} \rbt  
\nonumber\\
&\times\lbt\frac{2-\alpha}{2\sqrt{1-\alpha}}
\mathds{1}_{2}-\imath\frac{\alpha}{2\sqrt{1-\alpha}}\gamma^{0}\gamma^{1} \rbt,\\
\notag\\
G^{\alpha, {\rm mass}}_{m,n;k,l}=& - \mu\, \Gamma^{\alpha}_{m,n;k,l}
\times\lbt\frac{2-\alpha}{2\sqrt{1-\alpha}}
\mathds{1}_{2}-\imath\frac{\alpha}{2\sqrt{1-\alpha}}\gamma^{0}\gamma^{1} \rbt.\label{massterm}
\end{align}

\subsection{Bounds}
\label{sec:bornes}

We use the multi-scale analysis to study the behavior of the propagator \eqref{eq:matrixfullprop} and 
revisit more finely the bounds \eqref{th1} to \eqref{thsummax}. In a slice $i$, the propagator is
\begin{align}
  \Gamma^i_{m,m+h,l+h,l} 
  &=\frac{\theta}{8\Omega}  \int_{M^{-2i}}^{M^{-2(i-1)}} d\alpha\; 
  \dfrac{(1-\alpha)^{\frac{\mu_0^2 \theta}{8 \Omega}-\frac{1}{2}}}{  
    (1 + C\alpha )} 
  \Gamma^{(\alpha)}_{m, m+h; l + h, l}\;.
  \label{prop-slice-i}
\end{align}
\begin{eqnarray}
G_{m,n;k,l}&=& \sum_{i=1}^\infty G^i_{m,n;k,l} \ ; \ G^i_{m,n;k,l} = 
-\frac{2\Omega}{\theta^{2}\pi^{2}} \int_{M^{-2i}}^{M^{-2(i-1)}} 
d\alpha\, G^{\alpha}_{m,n;k,l}  \ .
\label{eq:matrixfullpropsliced}
\end{eqnarray}
Let $h= n-m$ and $p=l-m$. Without loss of generality, we assume $h \ges 0 $ and $p\ges 0$. Then the 
smallest index among $m,n,k,l$ is $m$ and the biggest is $k=m+h+p$. We have:
\begin{thm}\label{maintheorem}
Under the assumptions $h =n-m\ges 0$ and $p=l-m \ges 0$, there exists $K,c\in\R_{+}$ ($c$ depends 
on $\Omega$) such that the propagator of the \encv{} Gross-Neveu model in a slice $i$ obeys the bound
\begin{eqnarray}\label{mainbound1}  
\vert G^{i,{\rm comm}}_{m,n;k,l}\vert&\les&   
K M^{-i} \bigg( \chi(\alpha k>1)\frac{\exp \{- \frac{c p ^2  }{1+ kM^{-2i}}
- \frac{ c M^{-2i}}{1+k} (h - \frac{k}{1+C})^2 \}}{(1+\sqrt{ kM^{-2i}}) }  
\nonumber\\
&&+ \min(1,(\alpha k)^{p})e^{- c k M^{-2i} - c  p }\bigg).
\end{eqnarray}
The mass term is slightly different:
\begin{align} \label{mainbound2}  
\vert  G^{i,{\rm mass}}_{m,n;k,l}\vert\les&   
K M^{-2i} \bigg( \chi(\alpha k>1) \frac{\exp \{- \frac{c p ^2  }{1+ kM^{-2i}}
- \frac{ c M^{-2i}}{1+k} (h - \frac{k}{1+C})^2 \}}{1+\sqrt{ kM^{-2i}}}\notag
\\
&+\min(1,(\alpha k)^{p}) e^{- c k M^{-2i} - c  p }\bigg).
\end{align}
\end{thm}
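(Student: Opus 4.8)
The plan is to reduce the bound to three ingredients and reassemble them after the $\alpha$-integration over the slice $[M^{-2i},M^{-2(i-1)}]$. The first ingredient is a sharp estimate on the ``radial'' kernel $\Gamma^{(\alpha)}_{m,m+h;l+h,l}$ of \eqref{eq:propinit-b}. Up to the harmonic-oscillator rescaling and the choice $B=\Omega$, this is the same object that enters the bosonic propagator \eqref{eq:propaPhimatrix}, so I would start from Propositions \ref{thm-th1}--\ref{prop:bound4} and refine them. Concretely, I would control the sum $\sum_{u}\mathcal A(m,l,h,u)\big(\tfrac{C\alpha(1+\Omega)}{\sqrt{1-\alpha}(1-\Omega)}\big)^{m+l-2u}$ by a ratio test on consecutive values of $u$: the ratio of the $(u{+}1)$-th to the $u$-th term is a rational function of $m,l,h,u,\alpha$ whose behaviour dictates two regimes. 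When $\alpha k\les 1$ (recall that, under the assumption $h,p\ges 0$, $k$ is the largest of the four indices) the sum is dominated by its endpoint $u=\min(m,l)$ and one reads off $\min(1,(\alpha k)^p)e^{-c\alpha k-cp}$; when $\alpha k>1$ the sum spreads over $O(\sqrt{\alpha k})$ values of $u$ and a Laplace evaluation produces the Gaussian factors $\exp\{-cp^2/(1+\alpha k)-c\alpha(h-k/(1+C))^2/(1+k)\}$ together with the spreading prefactor $(1+\sqrt{\alpha k})^{-1}$. Setting $\alpha\asymp M^{-2i}$ inside the slice turns these into the two terms displayed in \eqref{mainbound1}--\eqref{mainbound2}.

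The second ingredient is the treatment of the commutator combinations $\lsb\xs,\Gamma^{\alpha}\rsb$ and $\lsb\slashed{\tilde{x}},\Gamma^{\alpha}\rsb$ given explicitly by \eqref{x0Gamma}--\eqref{x1Gamma}. Each is a sum of eight nearest-neighbour shifts of $\Gamma^{(\alpha)}$ with weights $\sqrt{m},\sqrt{m+1},\sqrt{l},\sqrt{l+1},\dots$. A naive bound replaces each weight by $\sqrt{k}$ and each shifted kernel by the estimate of the first ingredient; combined with the prefactor $\imath\Ot\tfrac{2-\alpha}{\alpha}\asymp M^{2i}$ and the slice measure $\int_{\text{slice}}d\alpha\asymp M^{-2i}$ this gives only $O(\sqrt k)$, which is useless since $\sqrt k$ can itself reach $M^{i}$. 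The real point is that the eight shifts arrange themselves into \emph{discrete derivatives} of the Gaussian $\Gamma^{(\alpha)}$, so that the combination is smaller than any individual term by the factor $\sqrt\alpha$ expected from differentiating a kernel of width $\asymp\alpha^{-1/2}$; this converts $\tfrac{1}{\alpha}\,\lsb\xs,\Gamma^{\alpha}\rsb\asymp\tfrac{1}{\sqrt\alpha}\asymp M^{i}$, and after $\int_{\text{slice}}d\alpha\asymp M^{-2i}$ one recovers exactly the $M^{-i}$ prefactor of \eqref{mainbound1} --- the matrix-basis incarnation of the fact that the Dirac propagator scales one power better than the Klein--Gordon one. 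The mass term $-\mu\,\Gamma^{\alpha}_{m,n;k,l}$ carries neither $\tfrac1\alpha$ nor a shift, so $\int_{\text{slice}}d\alpha$ directly yields the $M^{-2i}$ of \eqref{mainbound2}.

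The third, routine, ingredient is that the remaining factors are harmless throughout a slice: the $2\times2$ matrix $\tfrac{2-\alpha}{2\sqrt{1-\alpha}}\mathds{1}_2-\imath\tfrac{\alpha}{2\sqrt{1-\alpha}}\gamma^0\gamma^1$ has bounded norm since $\alpha<1$, and the density $(1-\alpha)^{\mu^2\theta/8\Omega-1/2}(1+C\alpha)^{-1}(1-\alpha)^{-Bh/2\Omega}$ equals $1+O(M^{-2i})$ uniformly for $\alpha\in[M^{-2i},M^{-2(i-1)}]$, the $h$-dependent piece being absorbed into a slight weakening of the constant $c$ in the Gaussian. Assembling the three ingredients and using the twist $e^{-2\imath\Ot t\gamma^0\gamma^1}$ only through the bounded matrix above then gives \eqref{mainbound1}--\eqref{mainbound2}.

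I expect the main obstacle to be the second ingredient: making the cancellation in $\lsb\xs,\Gamma^{\alpha}\rsb$ quantitative. Because $\sqrt k$ can be as large as $M^{i}$, a loss of even half a power of $k$ in the cancellation estimate already spoils the stated bound, so one must extract the full ``derivative'' gain uniformly across $\alpha k\ll1$, $\alpha k\asymp1$ and $\alpha k\gg1$, and uniformly in how far $h$ sits from the saddle value $k/(1+C)$. Technically this means differencing the explicit $u$-sum term by term, controlling the shift of the summation range $\max(0,-h)\le u\le\min(m,l)$ under $m\to m\pm1$ and $l\to l\pm1$, and checking that the boundary terms so produced are themselves of the advertised size; this is where the bulk of the estimates of \cite{toolbox05} is spent.
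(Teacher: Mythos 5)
The paper itself does not reproduce the proof of Theorem~\ref{maintheorem}; it cites \cite{toolbox05}, so there is no in-text argument to compare against sentence by sentence. Your three-ingredient plan — a sharp estimate on the radial $u$-sum of $\Gamma^{(\alpha)}$ with the two regimes governed by $\alpha k$, a cancellation argument for the eight-term commutator combinations, and a routine bound on the remaining bounded-matrix and density factors — is the right overall strategy and mirrors the multiscale analysis pattern of Propositions~\ref{thm-th1}--\ref{prop:bound4}. Your physical picture is also correct: $[\xs,\Gamma^\alpha]$ realizes multiplication by $x-y$, the Mehler kernel gives $|x-y|\asymp\sqrt\alpha$, and the $\coth\asymp 1/\alpha$ prefactor then produces the $M^{-i}$ scaling of the Dirac propagator versus $M^{-2i}$ for Klein--Gordon.

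However, your second ingredient as written is quantitatively inconsistent, and in a way that matters. You say the commutator combination is smaller than any individual term ``by the factor $\sqrt\alpha$.'' The individual terms carry weights of order $\sqrt k$ and kernels whose size you propose to control by ingredient 1; so a cancellation factor of $\sqrt\alpha$ would give $[\xs,\Gamma^\alpha]\asymp\sqrt{\alpha k}\,|\Gamma^\alpha|$, not the $\sqrt\alpha\,|\Gamma^\alpha|$ you subsequently use to derive $\tfrac1\alpha[\xs,\Gamma^\alpha]\asymp\alpha^{-1/2}$. What is actually required is a cancellation by $\sqrt{\alpha/k}$, i.e.\ by roughly $\alpha$ in the critical region $\alpha k\asymp1$. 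The discrepancy is a full factor of $\sqrt k$, which can reach $M^i$, so it is not absorbed into constants. Moreover, the phrase ``eight nearest-neighbour shifts'' hides a second difficulty: shifting a single index changes $p=l-m$, and in the regime $\alpha k\ll1$ the kernel at $p-1$ is larger than at $p$ by a factor $\asymp 1/(\alpha k)$ (see the factor $\min(1,(\alpha k)^p)$ in the bound). Thus the shifted kernels are not comparable in size to the unshifted one, so bounding each term separately and invoking ``discrete-derivative smallness'' cannot work term by term; the cancellation has to be organized across pairs whose $\sqrt m$, $\sqrt{l}$, $\sqrt{m+1}$, $\sqrt{l+1}$ weights nearly match, together with the summation-range boundary terms you mention at the end. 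You correctly flag this as the main obstacle, but the heuristic you offer does not yet supply the needed factor, and before the assembly step one must pin down exactly which pairings of the eight terms produce the compensations and verify they hold uniformly over $\alpha k\ll1$, $\alpha k\asymp1$, $\alpha k\gg1$, and in the $h$-saddle variable.

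Two smaller points. First, your remark that the density $(1-\alpha)^{\mu^2\theta/8\Omega-1/2}(1+C\alpha)^{-1}(1-\alpha)^{-Bh/2\Omega}$ is $1+O(M^{-2i})$ requires care when $h$ is large, since $(1-\alpha)^{-Bh/2\Omega}$ can be exponentially large in $\alpha h$; it can indeed be absorbed into a weakening of the constant $c$ inside the Gaussian $\exp\{-c\alpha(h-k/(1+C))^2/(1+k)\}$, but only after observing that $h\les k$ and that the Gaussian already localizes $h$ near $k/(1+C)$, so this absorption should be made explicit rather than treated as automatic. Second, the term $\Omega[\slashed{\tilde x},\Gamma^\alpha]$ carries no $1/\alpha$ prefactor and hence is subdominant, contributing $O(M^{-3i})$; it is worth stating that explicitly so the reader sees the $M^{-i}$ bound comes entirely from the $\coth$-weighted commutator.
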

\begin{rem}
  We can redo the same analysis for the $\Phi^{4}$ propagator and get
  \begin{equation}\label{eq:boundphi4}  
    G^i_{m,n;k,l}\les K M^{-2i}\min\lbt 1,(\alpha k)^{p}\rbt e^{-c(M^{-2i}k+p)}
  \end{equation}
which allows to recover the bounds \eqref{th1} to \eqref{thsummax}.
\end{rem}

\subsection{Propagators and renormalizability}
\label{sec:prop-et-renorm}

Let us consider the propagator \eqref{eq:propaFermiomatrix} of the \encv{} Gross-Neveu model. We saw 
in section \ref{sec:bornes} that there exists two regions in the space of indices where the propagator 
behaves very differently. In one of them it behaves as the $\Phi^{4}$ propagator and leads then to the 
same power counting. In the critical region, we have
\begin{align}
  G^{i}\les&K\frac{M^{-i}}{1+\sqrt{ kM^{-2i}}}\,e^{- \frac{c p ^2  }{1+ kM^{-2i}}
    -\frac{ c M^{-2i}}{1+k} (h - \frac{k}{1+C})^2}.
\end{align}
The point is that such a propagator does not allow to sum two reference indices with a unique line. This 
fact was useful in the proof of the power counting of the $\Phi^{4}$ model. This leads to a \emph
{renormalizable} UV/IR mixing.

Let us consider the graph in figure \ref{fig:sunsetj} where the two external lines bear an index $i\gg 1$ 
and the internal one an index $j<i$. The propagator \eqref{eq:propaFermiomatrix} obeys the bound in 
Prop.~\eqref{thsum} which means that it is almost local. We only have to sum over one index per internal 
face.
\begin{figure}[!htbp]
  \begin{center}
    \subfloat[At scale $i$]{{\label{fig:sunseti}}\includegraphics[scale=.7]{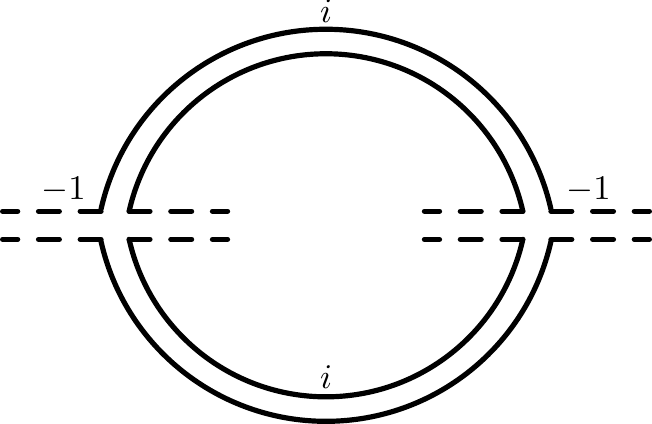}}\qquad
    \subfloat[At scale $j$]{{\label{fig:sunsetj}}\includegraphics[scale=.7]{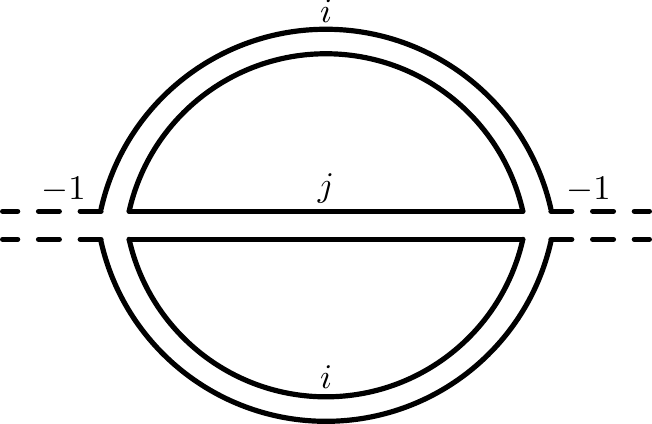}}
  \end{center}
  \caption{Sunset Graph}
  \label{figsunset}
\end{figure}

On the graph of the figure \ref{fig:sunseti}, if the two lines inside are true external ones, the graph has 
two broken faces and there is no index to sum over. Then by using Prop.~\eqref{th1} we get $A_{G}\les 
M^{-2i}$. The sum over $i$ converges and we have the same behavior as the $\Phi^{4}$ theory, that is 
to say the graphs with $B\ges 2$ broken faces are finite. But if these two lines belongs to a line of scale 
$j<i$ (see figure \ref{fig:sunsetj}), the result is different. Indeed, at scale $i$, we recover the graph of 
figure \ref{fig:sunseti}. To maintain the previous result ($M^{-2i}$), we should sum the two indices 
corresponding to the internal faces with the propagator of scale $j$. This is not possible. Instead we 
have:
\begin{align}
\sum_{k,h}M^{-2i-j}\,e^{-M^{-2i}k}\frac{e^{-\frac{ c M^{-2j}}{1+k} 
(h - \frac{k}{1+C})^2}}{1+\sqrt{ kM^{-2j}}}\les KM^{j}.
\end{align}
The sum over $i$ diverges logarithmically. The graph of figure \ref{fig:sunseti} converges if it is linked to 
true external legs et diverges if it is a subgraph of a graph at a lower scale. The power counting depends 
on the scales lower than the lowest scale of the graph. It can't then be factorized into the connected 
components: this is UV/IR mixing.\\

Let's remark that the graph of figure \ref{fig:sunseti} is not renormalizable by a counter-term in the 
Lagrangian. Its logarithmic divergence can't be absorbed in a redefinition of a coupling constant. 
Fortunately the renormalization of the two-point graph of figure \ref{fig:sunsetj} makes the four-point 
subdivergence finite \cite{RenNCGN05}. This makes the \encv{} Gross-Neveu model renormalizable.

\section{Direct space}
\label{sec:direct-space}

We want now to explain how the power counting analysis
can be performed  in direct space, and the ``Moyality'' of the necessary counterterms
can be checked by a Taylor expansion which is a generalization of the one used in direct commutative 
space.

In the commutative case there is translation invariance, hence each propagator depends on a single
difference variable which is short in the ultraviolet regime; in the non-commutative case the propagator
depends both of the difference of end positions, which is again short in the uv regime, but also of the 
sum which is long in the uv regime, considering the explicit form (\ref{eq:Mehler}) of the Mehler kernel.

This distinction between short and long variables is at the basis of the  power counting analysis
in direct space.

\subsection{Short and long variables}

Let $G$ be an arbitrary connected graph. 
The amplitude associated with this graph is in direct space
(with hopefully self-explaining notations):
\begin{align}
A_G=&\int \prod_{v,i=1,...4} dx_{v,i} \prod_l dt_l     \\
& \prod_v \left[ \delta(x_{v,1}-x_{v,2}+x_{v,3}-x_{v,4})e^{\imath
\sum_{i<j}(-1)^{i+j+1}x_{v,i}\theta^{-1} x_{v,j}} \right] \prod_l C_l \; , 
\nonumber  \\
C_l=&  
\frac{\Omega^2}{[2\pi\sinh(\Omega t_l)]^2}e^{-\frac{\Omega}{2}\coth(\Omega 
t_l)(x_{v,i(l)}^{2}+x_{v',i'(l)}^{2})
+\frac{\Omega}{\sinh(\Omega t_l)}x_{v,i(l)} . x_{v',i'(l)}   - \mu_0^2 t_l}\; .\nonumber
\label{amplitude1}
\end{align} 

For each line $l$ of the graph joining positions $x_{v,i(l)}$ and $x_{v',i'(l)}$, 
we choose an orientation and we define 
the  ``short'' variable $u_l=x_{v,i(l)}-x_{v',i'(l)}$ and the 
``long'' variable $v_l=x_{v,i(l)}+x_{v',i'(l)}$.

With these notations, defining $\Omega t_l=\alpha_l$, the propagators in our graph can be 
written as:
\begin{equation}
\int_{0}^{\infty} \prod_l \frac{\Omega d\alpha_l}{[2\pi\sinh(\alpha_l)]^2}
e^{-\frac{\Omega}{4}\coth(\frac{\alpha_l}{2})
{ u_l^2}- \frac{\Omega}{4}\tanh(\frac{\alpha_l}{2})
{ v_l^2}  - \frac{\mu_0^2}{\Omega} \alpha_l}\; .\label{tanhyp}
\end{equation} 

As in matrix space we  can slice each propagator according to the size of its $\alpha$ parameter
and obtain the multiscale representation of each Feynman amplitude:

\begin{align}
A_G=& \sum_{\mu}  A_{G,\mu}\quad, \quad A_{G,\mu} = \int \prod_{v,i=1,...4} dx_{v,i} \prod_l C_{l}^{i_
{\mu}(l)}(u_{l},v_{l})\label{amplitude2}\\ 
&\prod_v \left[ \delta(x_{v,1}-x_{v,2}+x_{v,3}-x_{v,4})e^{\imath
\sum_{i<j}(-1)^{i+j+1}x_{v,i}\theta^{-1} x_{v,j}} \right]\nonumber\\
 C^i (u,v) =& \int_{M^{-2i}}^{M^{-2(i-1)}} 
\frac{\Omega d\alpha}{[2\pi\sinh(\alpha)]^2}
e^{-\frac{\Omega}{4}\coth(\frac{\alpha}{2})
{ u^2}- \frac{\Omega}{4}\tanh(\frac{\alpha}{2})
{ v^2}  - \frac{\mu_0^2}{\Omega} \alpha}\; ,
\end{align} 
where $\mu$ runs over scales attributions $\{i_{\mu}(l) \}$ for each line $l$ of the graph, 
and the sliced propagator $C^i$ in slice $i \in {\mathbb N}$
obeys the crude bound:
\begin{lemma} For some constants $K$ (large) and $c$ (small):
\begin{equation}\label{eq:propbound-phi4}
C^i (u,v) \les K M^{2i}e^{-c [ M^{i}\Vert u \Vert + M^{-i}\Vert v\Vert ] }
\end{equation} 
(which a posteriori justifies the terminology of ``long'' and ```short'' variables).
\end{lemma}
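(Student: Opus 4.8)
The plan is to bound the slice integral crudely, factor by factor, with no cancellations exploited — consistent with the statement being a "crude bound''. First, since $e^{-\mu_0^2\alpha/\Omega}\les 1$, the mass factor may be dropped for an upper bound. Using $\sinh\alpha\ges\alpha$ for $\alpha\ges 0$, the prefactor integral is controlled by
\begin{equation*}
\int_{M^{-2i}}^{M^{-2(i-1)}}\frac{\Omega\,d\alpha}{[2\pi\sinh\alpha]^2}\les\frac{\Omega}{4\pi^2}\int_{M^{-2i}}^{M^{-2(i-1)}}\frac{d\alpha}{\alpha^2}=\frac{\Omega}{4\pi^2}\bigl(M^{2i}-M^{2(i-1)}\bigr)\les\frac{\Omega}{4\pi^2}M^{2i},
\end{equation*}
which already yields the announced $M^{2i}$ prefactor. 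It then remains to extract, \emph{uniformly in $\alpha$ over the slice}, a factorised Gaussian bound on $e^{-\frac{\Omega}{4}\coth(\alpha/2)u^2-\frac{\Omega}{4}\tanh(\alpha/2)v^2}$.

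For that I would use two elementary monotonicity facts, valid on $(0,M^{-2(i-1)}]\subset(0,1]$, i.e.\ for $i\ges 1$ (the single remaining slice $\int_1^\infty$ obeys an even simpler, bounded estimate handled separately). On the one hand $\coth x\ges 1/x$ for all $x>0$ — the numerator of $\coth x-1/x$ vanishes at $0$ and has nonnegative derivative $x\sinh x$ — so $\coth(\alpha/2)\ges 2/\alpha\ges 2M^{2(i-1)}$ throughout the slice. On the other hand, concavity of $\tanh$ on $[0,\infty)$ gives $\tanh(\alpha/2)\ges\tanh(1/2)\,\alpha\ges\tanh(1/2)M^{-2i}$ on the slice. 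Hence, for every $\alpha\in[M^{-2i},M^{-2(i-1)}]$,
\begin{equation*}
e^{-\frac{\Omega}{4}\coth(\alpha/2)u^2-\frac{\Omega}{4}\tanh(\alpha/2)v^2}\les e^{-c_1M^{2i}\Vert u\Vert^2-c_2M^{-2i}\Vert v\Vert^2},\qquad c_1=\tfrac{\Omega}{2M^2},\ c_2=\tfrac{\Omega\tanh(1/2)}{4},
\end{equation*}
with $c_1,c_2$ independent of $i$.

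The last step is to convert these Gaussians into linear exponential decay via the elementary inequality $e^{-at^2}\les e^{b^2/(4a)}e^{-bt}$ for $a,b,t\ges 0$ (since $bt-at^2\les b^2/(4a)$). I would apply it with $t=\Vert u\Vert$, $a=c_1M^{2i}$, $b=cM^i$, giving $e^{-c_1M^{2i}\Vert u\Vert^2}\les e^{c^2/(4c_1)}e^{-cM^i\Vert u\Vert}$, and with $t=\Vert v\Vert$, $a=c_2M^{-2i}$, $b=cM^{-i}$, giving $e^{-c_2M^{-2i}\Vert v\Vert^2}\les e^{c^2/(4c_2)}e^{-cM^{-i}\Vert v\Vert}$. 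The conversion constants $c^2/(4c_1)$ and $c^2/(4c_2)$ are $i$-independent precisely because the chosen decay rates $M^{\pm i}$ are the square roots of the Gaussian rates $M^{\pm 2i}$. Absorbing the $M^{2i}$ prefactor, $\Omega/4\pi^2$, and these constants into one large $K$, and keeping $c$ small, yields $C^i(u,v)\les KM^{2i}e^{-c[M^i\Vert u\Vert+M^{-i}\Vert v\Vert]}$.

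There is no real obstacle: the bound is deliberately lossy. The only two points needing a little care are (i) selecting the decay rates as $M^{\pm i}$ so the Gaussian-to-exponential conversion does not degrade with the slice index, and (ii) checking the lower bounds on $\coth(\alpha/2)$ and $\tanh(\alpha/2)$ hold throughout the slice, not merely at its endpoints — immediate from monotonicity. The genuinely delicate estimates are the refined ones (Theorem~\ref{maintheorem} and, in the matrix basis, Propositions~\ref{thm-th1}--\ref{prop:bound4}); the present lemma is just the input that legitimises the names "short'' and "long'' for the variables $u$ and $v$.
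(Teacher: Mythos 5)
Your proof is correct. The paper itself only says ``The proof is elementary.'' and supplies no details, so there is nothing to compare against; your route---bounding $\int \Omega\,d\alpha/[2\pi\sinh\alpha]^2$ over the slice by $\Omega M^{2i}/(4\pi^2)$, bounding the Gaussian factor uniformly via $\coth(\alpha/2)\ges 2/\alpha$ and concavity of $\tanh$, then converting the Gaussians to linear exponential decay at rates $M^{\pm i}$ so the conversion constants are $i$-independent---is precisely the elementary argument the paper has in mind.
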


The proof is elementary.

\subsection{Routing, Filk moves}
\label{sec:routing-filk-moves}
\subsubsection{Oriented graphs}

We pick a tree $T$ of lines of the graph, hence connecting all vertices, pick with a root vertex and build 
an \textit{orientation} of all the lines of the graph in an inductive way. Starting from an arbitrary 
orientation of a field at the 
root of the tree, we climb in the tree and at each vertex of the tree 
we impose cyclic order to alternate entering and exiting tree lines and loop half-lines, as in figure \ref
{otree}.
\begin{figure}[!htb]
  \centering
  \subfloat[Orientation of a tree]{\label{otree}\includegraphics[scale=0.7]{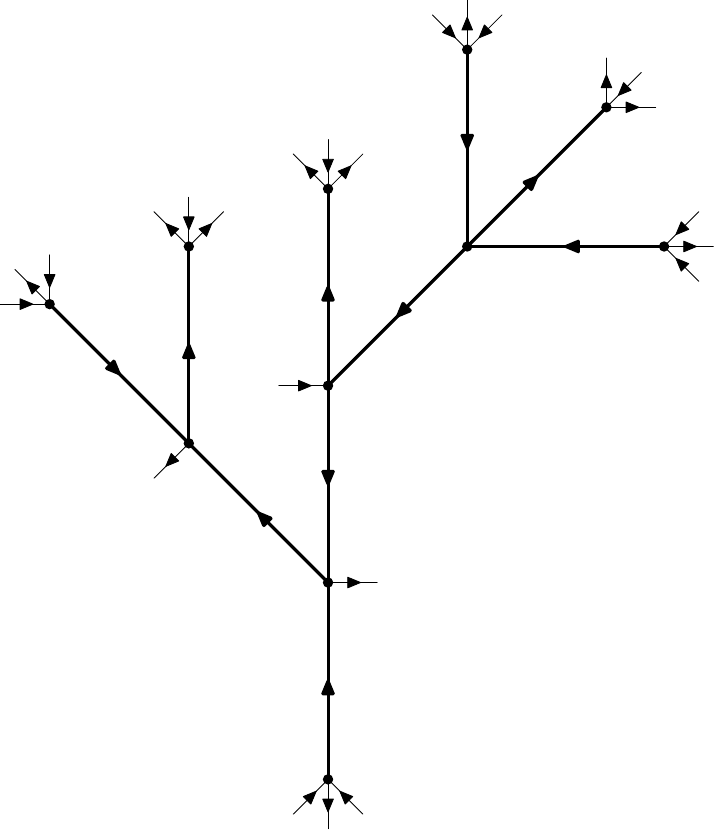}}\qquad
  \subfloat[A non-orientable graph]{\label{nono}\includegraphics[scale=0.6]{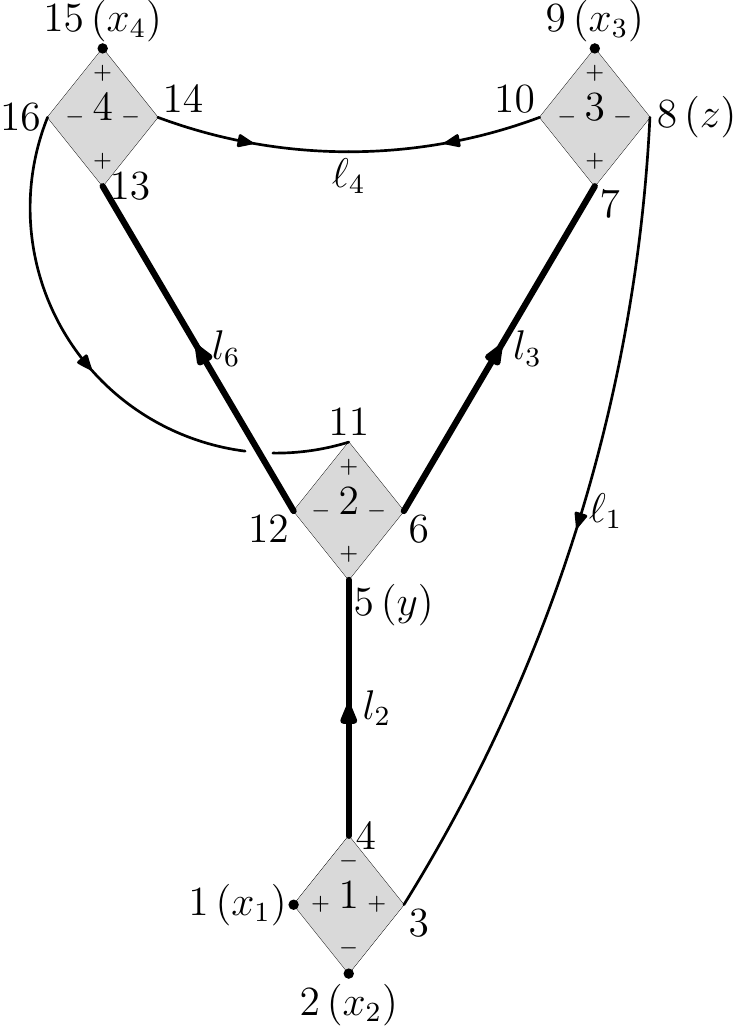}}
  \caption{Orientation}
  \label{fig:orientation}
\end{figure}
Then we look at the loop lines. If every loop lines consist in the contraction of an 
entering and an exiting line, the graph is called orientable. Otherwise we call it non-orientable
as in figure \ref{nono}.

\subsubsection{Position routing}

There are $n$ $\delta$ functions in an amplitude with $n$ vertices,
hence $n$ linear equations for the $4n$ positions,
one for each vertex. The {\it position routing} associated to the tree $T$ 
solves this system by passing to another equivalent system of $n$ linear equations,
one for each branch of the tree. This is a triangular change of variables,
of Jacobian 1. This equivalent system is obtained by summing the arguments of the 
$\delta$ functions of the vertices in each branch. This change of variables is exactly the 
$x$-space analog of the resolution of momentum conservation called \textit{momentum routing}
in the standard physics literature of commutative field theory, except that 
one should now take care of the additional $\pm$ cyclic signs.

One can prove \cite{xphi4-05} that the rank of the system of $\delta$ functions 
in an amplitude with $n$ vertices is
\begin{itemize}
\item $n-1$ if the graph is orientable
\item $n$ if the graph is non-orientable
\end{itemize}

The position routing change of variables is summarized by the following lemma:

\begin{lemma}[Position Routing]
We have, calling $I_G$ the remaining integrand in (\ref{amplitude2}):
\begin{align}
A_G =& \int \Big[ \prod_v  \big[ \delta(x_{v,1}-x_{v,2}+x_{v,3}-x_{v,4})\big] \, \Big]\;
I_G(\{x_{v,i}  \}  )   \\
=& \int \prod_{b}
\delta \left(   \sum_{l\in T_b \cup L_b }u_{l} + \sum_{l\in L_{b,+}}v_{l}-\sum_{l\in L_{b,-}}v_{l}
+\sum_{f\in X_b}\epsilon(f) x_f \right) I_G(\{x_{v,i}  \}), \nonumber 
\end{align} 
where $\epsilon(f)$ is $\pm 1$ depending on whether the field $f$ enters or exits the branch.
\end{lemma}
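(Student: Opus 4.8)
The plan is to exhibit the position routing as an explicit linear change of variables on the set of vertex delta functions and to check that it has unit Jacobian, so that the integral is unchanged. First I would set up notation: label the vertices $v_1,\dots,v_n$, fix the spanning tree $T$ with its root, and for each vertex $v$ write $\delta_v = \delta(x_{v,1}-x_{v,2}+x_{v,3}-x_{v,4})$. Each tree line $l \in T$, when removed, splits $T$ into two components; I call $b(l)$ the branch hanging below $l$ (the component not containing the root), and for the root itself the branch is the whole graph. This gives a bijection between branches $b$ and tree lines, plus the root branch. The key combinatorial fact I would establish is that summing the arguments of $\delta_v$ over all vertices $v$ in a branch $b$ produces exactly the expression appearing in the lemma: all short variables $u_l$ for $l$ internal to the branch cancel in pairs except those of lines crossing the branch boundary, the long variables $v_l$ of loop lines crossing the boundary survive with a sign $\pm$ according to orientation (this is where the cyclic $\pm$ signs of the Moyal vertex enter, in contrast to the commutative case), and the truly external fields $f \in X_b$ contribute $\epsilon(f)x_f$.

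The second step is the linear algebra. The map sending the $n$-tuple of vertex delta arguments to the $n$-tuple of branch delta arguments is, when branches are ordered by decreasing distance from the root (leaves first), triangular: the argument attached to branch $b$ is the sum of the vertex argument at the top vertex of $b$ plus the already-processed arguments of the sub-branches directly below it. Hence the transformation matrix is unipotent triangular, so its determinant is $1$, and $\prod_v \delta_v = \prod_b \delta(\text{branch argument})$ as distributions. I would note this is precisely the $x$-space analogue of momentum routing: solving momentum conservation at vertices by routing momenta along a spanning tree, here transcribed to positions with the extra bookkeeping of the vertex oscillation signs. Combining the rewriting of the product of deltas with the fact that $I_G$ is untouched gives the stated identity.

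The main obstacle I expect is not the Jacobian argument — that is routine — but the careful verification that the per-branch sum of vertex arguments has exactly the claimed form, keeping track of the signs. Concretely one must check that a line $l$ with both endpoints inside the branch contributes $x_{v,i(l)} - x_{v',i'(l)}$ from one vertex delta and its negative from the other (so $u_l$ cancels, not survives), while a line with exactly one endpoint inside contributes a single $u_l$ or, for a loop line, the combination that reassembles into $\pm v_l$; and that the signs $\epsilon(f)$ and the loop-line signs $L_{b,+}$ vs $L_{b,-}$ are governed by the orientation of the graph built earlier from the tree. This sign bookkeeping — reconciling the alternating $(-1)^{i+j+1}$ structure of the vertices, the chosen orientation of each line, and the entering/exiting convention at tree vertices — is the delicate point, and I would handle it by induction on the number of vertices in the branch, peeling off one leaf vertex at a time. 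For the rank statement ($n-1$ orientable, $n$ non-orientable) I would invoke the result of \cite{xphi4-05} as already available, since the lemma itself only asserts the change of variables, not the rank count.
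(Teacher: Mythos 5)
Your plan tracks the explanation the paper itself gives immediately before the lemma (which is the extent of the paper's own ``proof''): the branch deltas are obtained by summing vertex deltas over the vertices of each branch, the passage from vertex deltas to branch deltas is a unipotent triangular change of variables when branches are ordered leaves-first, hence of Jacobian one, and the whole construction is the $x$-space analogue of momentum routing. So the route is the same, and your proposal to settle the sign bookkeeping by induction on branch size is a reasonable way to supply the detail the paper leaves implicit.

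There is, however, a concrete error in your description of what the per-branch sum produces, and it would make your induction prove a false identity. You write that a line $l$ with both endpoints inside the branch ``contributes $x_{v,i(l)} - x_{v',i'(l)}$ from one vertex delta and its negative from the other (so $u_l$ cancels, not survives).'' But the delta at $v$ contains only $v$'s own four fields, so its contribution to the branch sum is a single term $\pm x_{v,i(l)}$ with its cyclic sign, and the delta at $v'$ separately contributes $\pm x_{v',i'(l)}$. Adding these two gives $\pm u_l$ when the signs are opposite (the orientable case: one entering, one exiting field) or $\pm v_l$ when they agree (the non-orientable case); nothing cancels to zero. This is exactly what the lemma's formula records: the $u_l$'s for $l \in T_b \cup L_b$ are the surviving orientable contributions, and the $v_l$'s for $l \in L_{b,\pm}$ are the non-orientable ones. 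The companion claim that a line with only one endpoint in the branch ``contributes a single $u_l$'' is also off: a lone half-line variable is $x_{v,i(l)} = \tfrac12(v_l \pm u_l)$, a mixture of short and long; the lemma books such boundary half-lines directly as $\epsilon(f) x_f$ with $f \in X_b$ rather than resolving them into $u,v$ components. With these two points corrected, the leaf-peeling induction you outline does produce the stated formula.
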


We can now use the system of delta functions to eliminate variables. It is of course better to eliminate
long variables as their integration costs a factor $M^{4i}$ whereas the integration of a short
variable brings $M^{-4i}$. Rough power counting, neglecting all oscillations of the vertices
leads therefore, in the case of an orientable graph with $N$ external fields, $n$ internal vertices
and $l= 2n - N/2$ internal lines at scale $i$ to:
\begin{itemize}
\item a factor $M^{2i(2n - N/2)}$ coming from the $M^{2i}$ factors for each line of scale $i$
  in (\ref{eq:propbound-phi4}),
\item a factor $M^{-4i(2n - N/2)}$ for  the $l = 2n - N/2$ short variables integrations,
\item a factor $M^{4i (n - N/2 +1)}$ for the long variables after eliminating $n-1$ of them 
using the delta functions.
\end{itemize}
The total factor is therefore $M^{-(N-4)i}$, the ordinary scaling of $\phi^4_4$, which means that
only two and four point subgraphs ($N \les 4)$ diverge when $i$ has to be summed.

In the non-orientable case, we can eliminate one additional long variable since the rank of
the system of delta functions is larger by one unit! Therefore we get a power counting bound
$M^{-Ni}$, which proves that  only {\it orientable} graphs may diverge.

In fact we of course know that not all {\it orientable}  two and four point subgraphs diverge
but only the planar ones with a single external face. 
(It is easy to check that all such planar graphs are indeed orientable).

Since only these planar subgraphs with a single external face can be renormalized by Moyal 
counterterms, we need to prove that orientable, non-planar graphs or orientable planar graphs with 
several external faces have in fact a better power  counting than this crude estimate. 
This can be done only by exploiting their vertices oscillations. 
We explain now how to do this with minimal effort.

\subsubsection{Filk moves and rosettes}

Following Filk \cite{Filk1996dm}, we can contract all lines of a spanning tree $T$
and reduce $G$ to a single vertex with ``tadpole loops'' called a ``rosette graph''.
This rosette is a cycle (which is the border of the former tree) bearing loops lines on it (see figure \ref
{fig:rosette}):
\begin{figure}[!htb]
  \centering
    \includegraphics[scale=0.4]{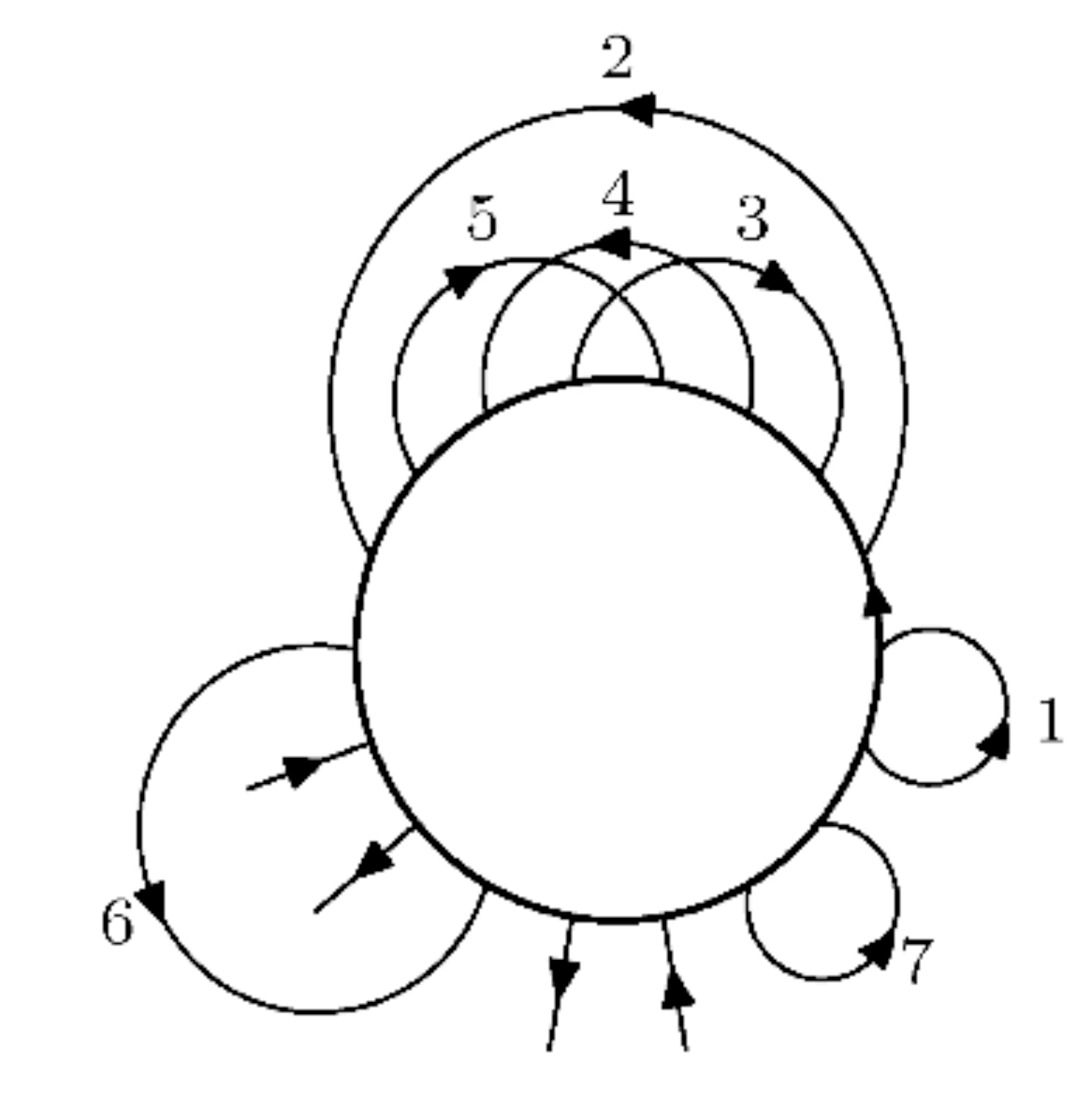}
    \caption{A rosette}
    \label{fig:rosette}
\end{figure}
Remark that the rosette can also be considered 
as a big vertex, with $r=2n+2$ fields, on which $N$
are external fields with external variables $x$ and $2n+2-N$ are loop fields for the corresponding
$n+1-N/2$ loops. When the graph is orientable, the rosette is also orientable,
which means that turning around the rosette the lines alternatively enter and exit.
These lines correspond to the contraction of the fields on the border of the tree $T$
before the Filk contraction, also called the ``first Filk move''.

\subsubsection{Rosette factor}

We start from the root and turn around the tree in the trigonometrical sense. We number
separately all the fields as $1,\dots,2n+2$ and all 
the tree lines as $1,\dots,n-1$ in the order they are met.
\begin{lemma}
The rosette contribution after a complete first Filk reduction is exactly:
\begin{equation}
\delta(v_1-v_2+\dots-v_{2n+2}+\sum_{l\in T}u_l)
e^{i VQV + iURU + iUSV}
\end{equation}
where the $v$ variables are the long or external variables
of the rosette, counted with their signs,
and the quadratic oscillations for these variables is
\begin{equation}
VQV= \sum_{0\les i<j\les r}(-1)^{i+j+1}v_i\theta^{-1} v_j 
\end{equation}
\end{lemma}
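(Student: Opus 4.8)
The plan is to prove the identity by induction on $n$, contracting the spanning tree $T$ one leaf line at a time, each contraction being a single ``first Filk move''. The inductive object is a \emph{generalized rosette}: a large cyclic vertex carrying exactly one $\delta$ function and one quadratic oscillation of the announced type $e^{i(VQV+URU+USV)}$, where $V$ collects the ``long or external'' variables sitting on its border, $U$ collects the short variables $u_\ell$ of the tree lines already contracted, $Q$ is the antisymmetric matrix with entries $(-1)^{i+j+1}\theta^{-1}$, and $R$, $S$ are the matrices being built up along the way.

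First I would dispose of the base case $n=1$: the bare Moyal vertex is a rosette with $T=\emptyset$, $r=4$, no $u$-variables, $\delta(v_1-v_2+v_3-v_4)$ and oscillation $\sum_{i<j}(-1)^{i+j+1}x_i\theta^{-1}x_j=VQV$, which is the claimed form with $URU$ and $USV$ empty. For the inductive step, assuming the statement for all graphs with fewer than $n$ vertices, I would pick a leaf vertex $w$ of $T$, with unique tree line $\ell$; by the induction hypothesis the Filk reduction of $T\setminus\{\ell\}$ on the rest of the graph produces a rosette $R$ of the stated form, to which $w$ is attached by $\ell$. I then perform the move: use the $\delta$ function of $w$ to eliminate one of its four positions, identify the two half-lines glued by $\ell$ and trade their positions for the pair $(u_\ell,v_\ell)$. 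One checks that (i) the two $\delta$'s merge, by addition of their arguments, into $\delta\big(\sum_i(-1)^{i+1}v_i+\sum_{\ell'\in T}u_{\ell'}\big)$ on the enlarged variable set; (ii) the long variable $v_\ell$ of the contracted line disappears from both the $\delta$ and the oscillation — it survives only inside the propagator $C_\ell^{i(\ell)}(u_\ell,v_\ell)$, hence is immaterial for this lemma; (iii) substituting back into the product of the two oscillations and collecting terms reproduces an oscillation of the same three-block form, the new cross terms feeding respectively $VQV$ (over the cyclic word in which the three surviving fields of $w$ have been spliced at the slot vacated by the contracted pair), $URU$, and $USV$. Step (iii) is the only genuine computation, and it is exactly Filk's; I would carry it out once and invoke it thereafter.

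The part I expect to be the main obstacle — really the heart of the argument — is the control of the cyclic ordering and of the alternating signs $(-1)^{i+j+1}$ through a Filk move: one must verify that the re-indexing of the rosette fields after splicing in $w$'s three fields is compatible with the orientation of Figure~\ref{otree} (around the rosette the fields alternately enter and exit), so that $VQV$ keeps \emph{precisely} the form $\sum_{0\les i<j\les r}(-1)^{i+j+1}v_i\theta^{-1}v_j$ with no spurious signs or factors, and so that the signs $\epsilon(f)=\pm1$ carried by external fields and the pairing of the two half-lines of each loop line are transported correctly. Orientability of $G$ enters here in an essential way: it guarantees that the pair glued at each tree line is of type (entering, exiting), which is exactly what makes the sign pattern stable under the move. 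Once the induction closes there is a single rosette left; reading off its $\delta$ function and its oscillation yields the formula as stated.
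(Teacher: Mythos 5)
Your inductive scheme---peel off a leaf of the spanning tree, fold its vertex oscillation and its $\delta$ into a growing ``generalized rosette'' of the three-block form, iterate---is the right strategy and is essentially Filk's. But as written it has a genuine gap: the entire content of the lemma is the explicit form of the surviving phase, with its alternating $(-1)^{i+j+1}$ pattern and its clean split into $VQV$, $URU$, $USV$, and that is precisely the computation you defer in step~(iii). Writing ``it is exactly Filk's; I would carry it out once'' is a statement of intent, not a proof. Nothing you have written establishes that, after using the leaf's $\delta$ to eliminate one of its positions and substituting into the product of the two oscillations, the cross terms in which $u_\ell$ or $v_\ell$ appears land in the $URU$ and $USV$ blocks and leave the $VQV$ block in the stated normal form, rather than producing a term quadratic in the $v$'s with the wrong index pattern. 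That single Filk move has to be done explicitly and its stability under the splicing re-indexing verified; until then you have a plausibility argument.

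A secondary but real misstatement: you claim ``orientability of $G$ enters here in an essential way.'' It does not. The rosette formula holds for \emph{any} connected $G$. What is needed is only the tree orientation of Figure~\ref{otree} (alternately entering and exiting half-lines around each vertex), which can always be chosen; that choice is what makes each tree line join an entering to an exiting field and what propagates the alternating sign of the Moyal vertex to the border of the contracted tree. Orientability of $G$ is the separate and stronger statement that \emph{loop} lines also respect this orientation; it plays no role in the first Filk reduction and is only used afterwards, in the power-counting analysis of the rosette's quadratic form. Pinning the sign bookkeeping on orientability of $G$ misidentifies what is actually doing the work.
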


We have now to analyze in detail this quadratic oscillation of the remaining long loop variables 
since it is essential to improve power counting. We can neglect the
secondary oscillations $URU$ and $USV$ which imply short variables.

The second Filk reduction \cite{Filk1996dm} further simplifies the rosette factor by erasing the
loops of the rosette which do not cross any other loops or arch over external fields. 
It can be shown that the loops which disappear in this operation correspond to those 
long variables who do not appear in the quadratic form $Q$.

Using the remaining {\it oscillating factors}
one can prove that non-planar graphs with genus larger than one or with more than
one external face {\it do not diverge}.

The basic mechanism to improve the
power counting of a single non-planar subgraph is the following:
\begin{align}\label{gainoscill}
&\int dw_1dw_2 e^{-M^{-2i_1}w_1^2-M^{-2i_2}w_2^2
- iw_1\theta^{-1}w_2+w_1 . E_1(x,u)+w_2 E_2(x,u)}
\nonumber\\
=& \int dw'_1dw'_2 e^{-M^{-2i_1}(w_1')^2
-M^{-2i_2}(w'_2)^2 +iw'_1\theta^{-1}w'_2 + (u,x)Q(u,x)}
\nonumber\\
=&  K  M^{4i_1} \int dw'_2
e^{- (M^{2i_1}+ M^{-2i_2})(w'_2)^2 }=
K M^{4i_1}M^{-4i_2} \; .
\end{align} 
In these equations we used for simplicity $M^{-2i}$ 
instead of the correct but more complicated factor $(\Omega /4) \tanh (\alpha /2 )$
(of course this does not change the argument) and we performed
a unitary linear change of variables $w'_1 = w_1 + \ell_1 (x, u)$, $w'_2 = w_2 + \ell_2 (x, u)$
to compute the oscillating $w'_1$ integral. The gain in (\ref{gainoscill}) is
$M^{-8i_2}$, which is the difference between $M^{-4i_2}$ and 
the normal factor $M^{4i_2}$ that the $w_2$ integral would have cost if
we had done it with the regular $e^{-M^{-2i_2}w_2^2}$ factor for long variables. 
To maximize this gain we can assume $i_1 \les i_2$.

This basic argument must then be generalized to each non-planar 
subgraph in the multiscale analysis, which is possible.

Finally it remains to consider the case of subgraphs which are planar orientable
but with more than one external face. In that case there are no crossing loops in the rosette but
there must be at least one loop line arching over a non trivial subset
of external legs (see e.g. line $6$ in figure \ref{fig:rosette}). We have then a non trivial integration over at 
least one external variable, called $x$, of at least one long loop variable called $w$. This ``external'' $x$ 
variable without the oscillation improvement 
would be integrated with a test function of scale 1 (if it is a true external line of scale $1$)
or better (if it is a higher long loop variable)\footnote{Since the loop line arches 
over a non trivial (i.e. neither full nor empty) subset
of external legs of the rosette, the variable $x$ cannot be the full combination 
of external variables in the ``root'' $\delta$ function.}. But we get now
\begin{align}\label{gainoscillb}
&\int dx dw e^{-M^{-2i}w^2
- iw\theta^{-1}x  +w.E_1(x',u)}
\nonumber\\
=&  K  M^{4i} \int dx 
e^{-M^{+2i} x^2 }=
K' \ ,
\end{align} 
so that a factor $M^{4i}$ in the former bound becomes $\cO(1)$ hence is improved by $M^{-4i}$.

In this way we can reduce the convergence of the multiscale analysis to the
problem of renormalization of planar two- and four-point subgraphs 
with a single external face, which we treat in the next section.

Remark that the power counting obtained in this way is still not optimal. To get the same level 
of precision than with the matrix base requires e.g. to display $g$ independent improvements
of the type (\ref{gainoscill}) for a graph of genus $g$. This is doable but basically requires
a reduction of the quadratic form $Q$ for single-faced rosette (also called ``hyperrosette'')
into $g$ standard symplectic blocks
through the so-called ``third Filk move'' introduced in \cite{gurauhypersyman}.
We return to this question in section \ref{hyperbo}.

\subsection{Renormalization}

\subsubsection{Four-point function}

Consider the amplitude of a four-point graph $G$ which in the multiscale expansion
has all its internal scales higher than its four external scales.

The idea is that one should compare its amplitude to a similar amplitude with a
``Moyal factor'' $\exp\Big(
2\imath\theta^{-1}\lbt x_{1}\wedge x_{2}+x_{3}\wedge x_{4}\rbt\Big)\delta(\Delta)$
factorized in front, where $\Delta= x_1- x_{2}+x_{3} - x_{4}$.
But precisely because the graph is planar with a single external face
we understand that the external positions $x$ only couple to \textit{short variables} $U$ 
of the internal amplitudes through the 
global delta function and the oscillations. Hence we can break this coupling 
by a systematic Taylor expansion to first order. This separates a piece proportional to 
``Moyal factor'', then absorbed into the effective coupling constant, and a remainder 
which has at least one additional small factor which gives him improved power counting.

This is done by expressing the amplitude for a graph with $N=4$, $g = 0$ and $B = 1$ as:
\begin{align}
A(G)(x_1, x_2, x_3, x_4) =&\int  {\exp\Big(
2\imath\theta^{-1}\lbt x_{1}\wedge x_{2}+x_{3}\wedge x_{4}\rbt\Big)}
\prod_{\ell\in T^{i}_{k} }  du_{\ell} \ C_{\ell}(u_\ell, U_\ell, V_\ell) 
\nonumber\\
&
\bigg[ \prod_{l \in G^{i}_{k} \, \ l \not \in T}  du_{l}  d v_{l} C_{l}(u_l, v_l) \bigg]
\ e^{\imath URU+\imath USV}
\\
&\hspace{-3cm}
\Bigg\{ {\delta(\Delta)}
+ \int_{0}^{1}dt\bigg[ \mathfrak{U}\cdot \nabla \delta(\Delta+t\mathfrak{U})
+\delta(\Delta+t\mathfrak{U}) [\imath XQU  + {\mathfrak R}' (t)]  \bigg] 
e^{\imath tXQU + {\mathfrak R}(t)}  \Bigg\} \ .\nonumber
\end{align}
where
$C_{\ell}(u_\ell, U_\ell, V_\ell) $ is the propagator taken at $X_\ell=0$, $\mathfrak{U}= \sum_\ell u_\ell$
and ${\mathfrak R}(t)$ is a correcting term involving $\tanh \alpha_\ell [X.X + X.(U+V)]$.

The {first term} is of the initial $\int Tr \phi \star \phi \star \phi \star \phi $ form.
The rest no longer diverges, since the $U$ and ${\mathfrak R}$
provide the necessary small factors.

\subsubsection{Two-point function}

Following the same strategy we have to Taylor-expand the coupling
between external variables and $U$ factors in two point planar graphs with a single external face 
to \textit{third order} and some non-trivial symmetrization of the terms
according to the two external arguments to cancel some odd contributions.
The corresponding factorized relevant and marginal contributions 
can be then shown to give rise only to

\begin{itemize}
\item A mass counterterm,
\item A wave function counterterm,
\item An {harmonic potential counterterm}.
\end{itemize}
and the remainder has convergent power counting.
This concludes the construction of the effective expansion in this direct space multiscale analysis.

Again the BPHZ theorem itself for the renormalized expansion follows by developing the
counterterms still hidden in the effective couplings and its finiteness
follows from the standard classification of forests. 
See however the remarks 
at the end of section \ref{sec:vari-indep}.

Since the bound (\ref{eq:propbound-phi4}) works for any $\Omega \ne 0$,
an additional bonus of the $x$-space
method is that it proves renormalizability of the model for any $\Omega$ in $]0,1]$\footnote{The case $
\Omega$ in $[1,+\infty[$ is irrelevant since it can be rewritten by LS duality as an equivalent
model with $\Omega$ in $]0,1]$.},
whether the matrix method proved it only for $\Omega$ in $]0.5,1]$.

\subsubsection{The Langmann-Szabo-Zarembo model}  

It is a four-dimensional theory of a Bosonic complex field defined by the action
\begin{align}
S=&\int \frac{1}{2} \bar \phi (- D^\mu D_\mu + \Omega^2 x^2  )\phi + \lambda 
\bar \phi \star \phi\star\bar \phi \star \phi
\end{align}
where $D^\mu= \imath\partial_\mu+B_{\mu \nu}x^\nu$ is the covariant derivative in a magnetic field $B$.

The interaction $\bar \phi \star \phi\star\bar \phi \star \phi$
ensures that perturbation theory contains only orientable graphs.
For $\Omega >0$ the $x$-space propagator still decays as in the ordinary $\phi^4_4$ case 
and the model has been shown renormalizable by an easy extension of the methods of the previous 
section \cite{xphi4-05}.

However at $\Omega =0$, there is no longer any harmonic potential in addition 
to the covariant derivatives and the bounds are lost. 
We call models in this category \emph{covariant}.

\subsubsection{Covariant models}  

Consider the $x$-kernel of the operator
\begin{align}
H^{-1}&=\lbt p^2 + \Omega^2\xt^2 - 2\imath B\lbt x^0p_1-x^1p_0 \rbt \rbt^{-1}\\
H^{-1}(x,y)&=\frac{\Ot}{8\pi}\int_0^\infty \frac{dt}{\sinh(2\Ot t)}\, \exp\lbt -\frac{\Ot}{2}\frac{\cosh(2Bt)}{\sinh(2
\Ot t)} (x-y)^2\right.\\
&\hspace{1.5cm}\left. -\frac{\Ot}{2}{\frac{\cosh(2\Ot t)-\cosh(2Bt)}{\sinh(2\Ot t)}(x^2+y^2)} \right.\\
&\hspace{1.5cm}\left. +2\imath\Ot{\frac{\sinh(2Bt)}{\sinh(2\Ot t)}x\wedge y} \rbt\text{\hspace{0.5cm}with }
\Ot=\frac{2\Omega}{\theta}
\end{align}
The Gross-Neveu model or the covariant Langmann-Szabo-Zarembo models correspond to 
the case $B=\Ot$.
In these models there is no longer any confining decay for the ``long variables''
but only an oscillation:
\begin{align}
Q^{-1}=H^{-1}&=\frac{\Ot}{8\pi}\int_0^\infty \frac{dt}{\sinh(2\Ot t)}\, \exp\lbt -\frac{\Ot}{2}\coth(2\Ot t)(x-y)^2 +
{2\imath\Ot x\wedge y}\rbt\label{eq:CriticalPropa}
\end{align}

The construction of these covariant models is more difficult, since
sufficiently many oscillations must be proven independent before power counting
can be established. The prototype paper which solved this problem is \cite{RenNCGN05},
which we briefly summarize now.

The main technical difficulty of the covariant models is the absence of decreasing functions for the long $v$ 
variables in the propagator replaced by an oscillation, see (\ref{eq:CriticalPropa}). Note that these 
decreasing functions are in principle created by integration over the $u$ variables\footnote{In all the 
following we restrict ourselves to the dimension $2$.}:
\begin{align}
  \int du\,e^{-\frac{\Ot}{2}\coth(2\Ot t)u^{2}+\imath u\wed v}=&K\tanh(2\Ot t)\,e^{-k\tanh(2\Ot t)v^{2}}.
\end{align}
But to perform all these Gaussian integrations for a general graph is a difficult task (see \cite{RivTan}) 
and is in fact not necessary for a BPHZ theorem. We can instead exploit the vertices and propagators 
oscillations to get rational decreasing functions in some linear combinations of the long $v$ variables. 
The difficulty is then to prove that all these linear combinations are independent and hence allow to 
integrate over all the $v$ variables. To solve this problem we need the exact expression of the total 
oscillation in terms of the short and long variables. This consists in a generalization of the Filk's work 
\cite{Filk1996dm}. This has been done in \cite{RenNCGN05}. Once the oscillations are proven 
independent, one can just use the same arguments than in the $\Phi^{4}$ case (see section \ref
{sec:routing-filk-moves}) to compute an upper bound for the power counting:
\begin{lemma}[Power counting $\GN$]\label{lem:compt-puissGN}
  Let $G$ a connected orientable graph. For all $\Omega\in\lsb 0,1\right)$, there exists $K\in\R_{+}$ such 
that its amputated amplitude $A_{G}$ integrated over test functions is bounded by
  \begin{align}
    \labs A_{G}\rabs\les&K^{n}M^{-\frac 12\omega(G)}\label{eq:compt-bound}\\
    \text{with } \omega(G)=&
    \begin{cases}
      N-4&\text{if ($N=2$ or $N\ges 6$) and $g=0$,}\\
      &\text{if $N=4$, $g=0$ and $B=1$,}\\
      &\text{if $G$ is critical,}\\
      N&\text{if $N=4$, $g=0$, $B=2$ and $G$ non-critical,}\\
      N+4&\text{if $g\ges 1$.}
    \end{cases}
  \end{align}
\end{lemma}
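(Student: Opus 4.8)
The plan is to run the multiscale analysis directly in $x$-space, in complete parallel with the treatment of $\Phi^{\star 4}_4$ in Section~\ref{sec:direct-space}, the one genuinely new ingredient being that the confining decay of the ``long'' variables provided by the Mehler kernel is, for covariant propagators, replaced by the independence of a sufficiently rich family of oscillations. First I would slice every propagator according to its Schwinger parameter, writing $C=\sum_i C^i$ as in~\eqref{eq:matrixfullpropsliced}, and expand the tested amplitude as a multiscale sum $A_G=\sum_\mu A_{G,\mu}$ over scale attributions $\mu=\{i_\ell\}_{\ell\in G}$. On each quasi-local connected component $G^i_k$ I would separate the index space into the ``non-critical'' region, where by Theorem~\ref{maintheorem} (compare~\eqref{eq:boundphi4}) the propagator is bounded exactly like the $\Phi^{\star 4}$ one, so the local power counting there is that of $\Phi^{\star 4}_4$, and the ``critical'' region, where only the purely oscillating bound~\eqref{eq:CriticalPropa} is available; essentially all of the work is concentrated in the critical region.

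Second, I would fix a spanning tree $T$, orient $G$ along $T$ as in Section~\ref{sec:routing-filk-moves}, and use position routing to solve the $n$ vertex $\delta$-functions. The rank of that linear system is $n-1$ when $G$ is orientable and $n$ when it is not, so a non-orientable graph enjoys one extra independent $\delta$-function and hence one extra power of decay, which by itself makes it convergent; this is why only orientable $G$ need be considered, exactly as in the statement. For orientable $G$, contracting $T$ by the first Filk move reduces it to a hyperrosette carrying the explicit total oscillation $e^{\imath VQV+\imath URU+\imath USV}$, with $U$ the short variables, $V$ the long plus external variables, and $Q$ the quadratic form on the long loop variables; obtaining this exact oscillation for a general orientable $\GN$ graph is the generalization of Filk's computation~\cite{Filk1996dm} carried out in~\cite{RenNCGN05}.

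Third, and this is the heart of the matter, I would integrate out the long variables. Because~\eqref{eq:CriticalPropa} gives no decay in $v$, each such integration must be paid for by an oscillation of the form $\int dv\,e^{\pm\imath v\wed w}(\dots)$, which yields only a \emph{rational} decaying factor in the conjugate combination $w$ (the full Gaussian integrations of~\cite{RivTan} are not needed for a BPHZ statement). This succeeds only if the linear forms in the $v$'s occurring in $Q$, together with those produced by the remaining vertex oscillations, are \emph{linearly independent}; establishing this independence for every orientable graph is the main obstacle, and is precisely the content of the refined Filk analysis of~\cite{RenNCGN05} (a sharpening of the second Filk move, the optimal genus-$g$ improvement requiring in addition the ``third Filk move'' of~\cite{gurauhypersyman}). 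Granted independence, each long integration costs $\cO(1)$ instead of the naive positive power of $M^{i}$, the underlying gains being exactly~\eqref{gainoscill} for a crossing pair of loop lines and~\eqref{gainoscillb} for a loop line arching over a non-trivial subset of external legs.

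Finally I would collect the counting on each component $G^i_k$: the propagator prefactors contribute positive powers of $M^{i}$ per line (and, in the critical region, the $G^{i,\mathrm{comm}}$ and $G^{i,\mathrm{mass}}$ pieces of the $\GN$ propagator carry one, respectively two, fewer such powers, hence are even more convergent), the $l=2n-N/2$ short integrations contribute $M^{-4i}$ each in $d=4$ (resp.\ $M^{-2i}$ in $d=2$), $n-1$ long variables are removed by the $\delta$-functions, and the surviving long variables are absorbed by the now-independent oscillations. Adding up the exponents, the power of $M$ attached to $G^i_k$ is $-\tfrac12\omega(G^i_k)$ with $\omega$ given by the stated case analysis: $\omega=N-4$ for a planar one-broken-face component — so only $N=2$ and $N=4$, $B=1$ are relevant or marginal, the ``critical'' components being exactly those single-face planar subgraphs whose internal scales block the oscillation improvement and leave the marginal value — while $\omega=N$ for planar $B=2$ non-critical graphs and $\omega=N+4$ for $g\ges 1$, both strictly convergent. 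Summing the geometric series over $\mu$ using the strictly negative $\omega(G^i_k)$ on all convergent components then yields $\left|A_G\right|\les K^n M^{-\frac12\omega(G)}$, which is the claimed bound.
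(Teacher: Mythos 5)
Your overall architecture matches the paper's route: multiscale slicing, orientability controlled by the rank of the vertex $\delta$-system, Filk reduction to the rosette oscillation, and the replacement of the missing long-variable decay by proving a sufficient family of oscillations independent, which you correctly identify as the content of the refined Filk analysis in \cite{RenNCGN05}. The high-level gains \eqref{gainoscill}, \eqref{gainoscillb} and the final dictionary between exponents and $\omega(G)$ are the right ones. A secondary slip is that you cite the matrix-basis slicing \eqref{eq:matrixfullpropsliced} and the matrix-basis bounds of Theorem \ref{maintheorem} inside an argument meant to run in $x$-space; the correct starting point is the direct-space sliced covariant kernel obtained from \eqref{eq:CriticalPropa}, for which there is no analog of the confining $v$-decay of \eqref{eq:propbound-phi4}.

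The genuine gap is in your treatment of critical components, which is exactly where the new content of this lemma sits. You describe them as ``single-face planar subgraphs whose internal scales block the oscillation improvement'', i.e. $B=1$. But a critical subgraph is by definition planar with $N=4$, $g=0$ and $B=2$, where the two external points of the \emph{second} broken face are linked by a \emph{single} line of lower scale $j<i$. The mechanism is a cross-scale cancellation your sketch never touches: the oscillation $v\wedge(x-y)$ from the subgraph forces $x-y \sim M^{-i}$, which, if $x,y$ are true external legs tested against $\mathcal{O}(1)$ functions, gives the usual $B\geq 2$ improvement and $\omega=N$; but when they are joined by the lower-scale propagator $C^j(x,y)\simeq M^j e^{-M^{2j}(x-y)^2+\imath x\wedge y}$, the improvement $M^{-2(i-j)}$ is exactly cancelled because the $x\wedge y$ oscillation now forces $x+y\sim M^{2i}$ instead of $M^{2j}$. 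The net result is $\omega=N-4=0$, logarithmic divergence, despite $B=2$. This is the direct-space avatar of renormalizable UV/IR mixing — the power counting of a quasi-local component depends on scales \emph{below} it — and it is the one case that distinguishes the $\GN$ power counting from $\Phi^{\star 4}_4$. Your final case analysis is in fact internally inconsistent: you list $\omega=N$ for ``planar $B=2$ non-critical'' graphs, implicitly conceding that critical ones also have $B=2$, while a sentence earlier you called them single-face. Without the correct definition and the cross-scale cancellation argument, one would wrongly conclude that all $B=2$ planar four-point subgraphs are convergent, and then miss the fact that these marginal critical subgraphs must be tamed not by a vertex counterterm but by the two-point renormalization at scale $j$, as the paper emphasizes.
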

As in the non-commutative $\Phi^{4}$ case, only the planar graphs are divergent. But the behavior of 
the graphs with more than one broken face is different. Note that we already discussed such a feature in 
the matrix basis (see section \ref{sec:prop-et-renorm}). In the multiscale framework, the Feynman 
diagrams are endowed with a scale attribution which gives each line a scale index. The only subgraphs 
we meet in this setting have all their internal scales higher than their external ones. Then a subgraph $G
$ of scale $i$ is called \emph{critical} if it has $N=4, g=0, B=2$ and that the two ``external'' points in the 
second broken face are only linked by a single line of scale $j<i$. The typical example is the graph of 
figure \ref{fig:sunseti}. In this case, the subgraph is logarithmically divergent whereas it is convergent in 
the $\Phi^{4}$ model. Let us now show roughly how it happens in the case of figure \ref{fig:sunseti} but 
now in $x$-space.

The same arguments than in the $\Phi^{4}$ model prove that the integrations over the internal points of 
the graph \ref{fig:sunseti} lead to a logarithmic divergence which means that $A_{G^{i}}\simeq\cO(1)$ 
in the multiscale framework. But remind that there is a remaining oscillation between a long variable of 
this graph and the external points in the second broken face of the form $v\wed(x-y)$. But $v$ is of order 
$M^{i}$ which leads to a decreasing function implementing $x-y$ of order $M^{-i}$. If these points are 
true external ones, they are integrated over test functions of norm $1$. Then thanks to the additional 
decreasing function for $x-y$ we gain a factor $M^{-2i}$ which makes the graph convergent. But if $x$ 
and $y$ are linked by a single line of scale $j<i$ (as in figure \ref{fig:sunsetj}), instead of test functions 
we have a propagator between $x$ and $y$. This one behaves like (see \eqref{eq:CriticalPropa}):
\begin{align}
  C^{j}(x,y)\simeq&M^{j}\,e^{-M^{2j}(x-y)^{2}+\imath x\wed y}.  
\end{align}
The integration over $x-y$ instead of giving $M^{-2j}$ gives $M^{-2i}$ thanks to the oscillation $v\wed(x-
y)$. Then we have gained a good factor $M^{-2(i-j)}$. But the oscillation in the propagator $x\wed y$ 
now gives $x+y\simeq M^{2i}$ instead of $M^{2j}$ and the integration over $x+y$ cancels the 
preceeding gain. The critical component of figure \ref{fig:sunseti} is logarithmically divergent.

This kind of argument can be repeated and refined for more general graphs to prove that this problem 
appears only when the external points of the auxiliary broken faces are linked only by a \emph{single} 
lower line \cite{RenNCGN05}. This phenomenon can be seen as a mixing between scales. Indeed the 
power counting of a given subgraph now depends on the graphs at lower scales. This was not the case 
in the commutative realm. Fortunately this mixing doesn't prevent renormalization. Note that whereas the 
critical subgraphs are not renormalizable by a vertex-like counterterm, they are regularized by the 
renormalization of the two-point function at scale $j$. The proof of this point relies heavily on the fact that 
there is only one line of lower scale.

Let us conclude this section by mentioning the flows of the covariant models. One very interesting feature 
of the non-commutative $\Phi^{4}$ model is the boundedness of its flows and even the vanishing of its 
beta function for a special value of its bare parameters \cite
{GrWu04-2,DisertoriRivasseau2006,beta2-06}. Note that its commutative counterpart (the usual $\phi^
{4}$ model on $\R^{4}$) is asymptotically free in the infrared and has then an unbounded flow. It turns 
out that the flow of the covariant models are not regularized by the non-commutativity. The one-loop 
computation of the beta functions of the \encv{} Gross-Neveu model \cite{betaGN1Loop} shows that it is 
asymptotically free in the ultraviolet region as in the commutative case. 

\section{Parametric Representation}

\subsection{Ordinary Symanzik polynomials}

In ordinary commutative field theory, Symanzik's polynomials are obtained
after integration over internal position variables. 
The amplitude of an 
amputated graph $G$ with external momenta $p$ is, up to a normalization,
in space-time dimension $D$:
\begin{align}
A_G (p) =& \delta(\sum p)\int_0^{\infty} 
\frac{e^{- V_G(p,\alpha)/U_G (\alpha) }}{U_G (\alpha)^{D/2}} 
\prod_l  ( e^{-m^2 \alpha_l} d\alpha_l )\ .\label{symanzik} 
\end{align}
The first and second Symanzik polynomials $U_G$ and $V_G$ are
\begin{subequations}
  \begin{align}
    U_G =& \sum_T \prod_{l \not \in T} \alpha_l \ ,\label{symanzik1}\\
    V_G =& \sum_{T_2} \prod_{l \not \in T_2} \alpha_l  (\sum_{i \in E(T_2)} p_i)^2 \ , \label{symanzik2}
  \end{align}
\end{subequations}
where the first sum is over spanning trees $T$ of $G$
and the second sum  is over two trees $T_2$, i.e. forests separating the graph
in exactly two connected components $E(T_2)$ and $F(T_2)$; the corresponding
Euclidean invariant $ (\sum_{i \in E(T_2)} p_i)^2$ is, by momentum conservation, also
equal to $ (\sum_{i \in F(T_2)} p_i)^2$.

There are many interesting features in the parametric representation:

- It is more compact than direct or momentum space for dimension $D>2$, hence
it is adapted to numerical computations.

- The dimension $D$ appears now as a simple parameter. This allows to make it non integer or even complex,
at least in perturbation theory. This opens the road to the definition of dimensional
regularization and renormalization, which respect the symmetries of gauge theories.
This technique was the key to the first proof of the renormalizability
of non-Abelian gauge theories \cite{tHVe}.

- The form of the first and second Symanzik show an explicit \emph{positivity} and \emph{democracy}
between trees (or two-trees): each of them appears with positive and equal coefficients.

- The locality of the counterterms is still visible (although less obvious
than in direct space). It corresponds to the 
factorization of $U_G$ into $U_{S}U_{G/S}$ plus smaller terms
under scaling of all the parameters of a subgraph $S$, because the leading terms are the trees
whose restriction to $S$ are subtrees of $S$.
One could remark that this factorization also plays a key
role in the constructive RG analysis and multiscale bounds of the theory \cite{Riv1}.

In the next two subsections we shall derive the analogs of the
corresponding statements in NCVQFT. But before
that let us give a brief proof of formulas (\ref{symanzik}). The proof
of (\ref{symanzik2}) is similar.

Formula (\ref{symanzik}) is equivalent to the computation of the determinant, namely that
of the quadratic form gathering the heat kernels of all the internal lines in position space,
when we integrate over all vertices \emph{save one}. The role of this saved vertex
is crucial because otherwise the determinant of the quadratic form vanishes, i.e.
the computation becomes infinite by translation invariance.

But the same determinants and problems already arose a century before Feynman graphs
in the XIX century theory of electric circuits, where wires play the role\
of propagators and the conservation of currents at each node of the circuit play the role
of conservation of momenta or translation invariance. In fact the parametric representation follows from
the tree matrix theorem of Kirchoff \cite{Kirchoff}, 
which is a key result of combinatorial theory which in its simplest form
may be stated as:

\begin{thm}[Tree Matrix Theorem]\label{treematrix}
Let $A$ be an $n$ by $n$ matrix such that 
\be \label{sumnulle}
\sum_{i=1}^n A_{ij} = 0 \ \ \forall j \ . 
\ee
Obviously $\det A = 0$. But let $A^{11}$ be the matrix $A$ with line 1 and column 1 deleted.
Then 
\be \det A^{11} = \sum_{T} \prod_{\ell \in T} A_{i_{\ell},j_{\ell}} ,
\ee
where the sum runs over all directed trees on $\{1, ... , n \}$, directed away from root 1. 
\end{thm}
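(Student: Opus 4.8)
The plan is to prove the identity by expanding $\det A^{11}$ through multilinearity of the determinant in its columns, and then matching the terms that survive to directed spanning trees rooted at $1$. Index the rows and columns of $A^{11}$ by $\{2,\dots,n\}$, write $e_2,\dots,e_n$ for the standard basis of $\R^{\{2,\dots,n\}}$, and adopt the convention $e_1:=0$. The hypothesis $\sum_{i=1}^n A_{ij}=0$ gives $A_{jj}=-\sum_{i\neq j}A_{ij}$, so the $j$-th column of $A^{11}$ is exactly $\sum_{i\neq j}A_{ij}\,(e_i-e_j)$. Expanding by multilinearity,
\[
\det A^{11}=\sum_{f}\Big(\prod_{j=2}^{n}A_{f(j),j}\Big)\,D(f),\qquad
D(f):=\det\big(e_{f(2)}-e_2,\ \dots,\ e_{f(n)}-e_n\big),
\]
where $f$ runs over all maps $f\colon\{2,\dots,n\}\to\{1,\dots,n\}$ with $f(j)\neq j$. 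To each such $f$ I attach its functional digraph on $\{1,\dots,n\}$, with an edge $j\to f(j)$ for every $j\neq 1$; vertex $1$ is then the unique vertex of out-degree $0$.

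The combinatorial heart of the argument is the evaluation of $D(f)$. If the digraph of $f$ contains a directed cycle $j_1\to j_2\to\cdots\to j_k\to j_1$ (which necessarily avoids $1$), then the corresponding columns $e_{f(j_1)}-e_{j_1},\dots,e_{f(j_k)}-e_{j_k}$ telescope to zero, hence are linearly dependent and $D(f)=0$. If the digraph is acyclic, then — having $n$ vertices, $n-1$ edges, a single sink and no cycle — it is a spanning tree all of whose edges point toward $1$; equivalently, $f$ corresponds bijectively to an undirected spanning tree $T$ of $\{1,\dots,n\}$ rooted at $1$. In that case I would compute $D(f)=(-1)^{n-1}$ by induction: pick a leaf $j_0\in\{2,\dots,n\}$ of $T$ (one with no $f$-preimage); then row $j_0$ of the matrix has its unique nonzero entry, $-1$, in column $j_0$, so Laplace expansion along that row yields $D(f)=-D(f')$, where $f'$ is $f$ with $j_0$ deleted, a tree function on the smaller vertex set; iterating $n-1$ times down to the empty determinant gives $D(f)=(-1)^{n-1}$. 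Summing, only acyclic $f$ contribute, each by $(-1)^{n-1}\prod_{j\neq 1}A_{f(j),j}$, so $\det A^{11}=(-1)^{n-1}\sum_{T}\prod_{\ell\in T}A_{i_\ell,j_\ell}$, the sum over spanning arborescences $T$ rooted at $1$, each tree edge $\ell$ labelled $(i_\ell,j_\ell)=(f(j_\ell),j_\ell)$ with $f(j_\ell)$ the neighbour of $j_\ell$ towards the root; reversing orientations this is the sum over trees directed away from $1$ of the statement. With the sign convention natural to the application — off-diagonal entries of $A$ carrying the minus sign of a weighted graph Laplacian, as they do for the electric-circuit/Symanzik quadratic form — the global $(-1)^{n-1}$ is absorbed and the weighted tree sum appears with the advertised positive, ``democratic'' coefficients.

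The genuinely routine content here is the linear algebra (cycle $\Rightarrow$ dependent columns; leaf peeling of the incidence-type matrix); the part that needs care is purely bookkeeping, namely keeping the orientation convention for the edges $(i_\ell,j_\ell)$, the direction of the functional digraph, and the overall sign $(-1)^{n-1}$ mutually consistent — that is where I expect the only real friction. One alternative worth recording: when $A$ is the symmetric Laplacian $A=BB^{\mathsf T}$ of an undirected weighted graph with incidence matrix $B$, the formula follows in one line from the Cauchy--Binet expansion of $\det A^{11}=\det(\widetilde B\widetilde B^{\mathsf T})$, since every maximal square submatrix of the reduced incidence matrix $\widetilde B$ has determinant $0$ or $\pm 1$ according to whether the corresponding edge set is a spanning tree; but for the general non-symmetric $A$ of the statement the multilinearity proof above is the clean route, and it is also the one that transparently exhibits both the tree structure \emph{and} the sign structure used in the sections that follow.
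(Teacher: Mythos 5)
Your proof is correct, but it takes a genuinely different route from the paper's. The paper proves the theorem with Grassmann variables: it represents $\det A^{11}$ as the Berezin integral $\int d\bar\psi\,d\psi\,(\psi_1\bar\psi_1)e^{-\bar\psi A\psi}$, uses the column-sum-zero hypothesis to rewrite $\bar\psi A\psi=\sum_{i,j}(\bar\psi_i-\bar\psi_j)A_{ij}\psi_j$, expands the exponential by Pauli exclusion, kills every configuration containing a cycle by the same telescoping cancellation you use, and forces each surviving configuration to reach the root because otherwise the $d\psi_1$ integration is not saturated. Your expansion of $\det A^{11}$ by multilinearity in the columns $\sum_{i\neq j}A_{ij}(e_i-e_j)$ (with the convention $e_1=0$) is precisely the commutative shadow of that computation, and your cycle argument is the same; the genuinely different ingredient is the evaluation of the tree determinant $D(f)=(-1)^{n-1}$ by leaf peeling, where the paper reads off the weight of a tree configuration directly from the Grassmann integration rules. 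As for what each approach buys: yours is elementary and self-contained, and it keeps the global sign explicit --- your $(-1)^{n-1}$ is in fact faithful to the computation, since the paper's own expansion produces tree weights $\prod_{\ell}(-A_{i_\ell,j_\ell})$, so the identity as literally stated is to be read with that sign absorbed into the Laplacian-type convention $A_{ij}=-l(ij)$ used in the application, exactly as you observe. The Grassmann route, on the other hand, is chosen in the paper deliberately: it extends verbatim to general minors (Lemma \ref{exercisegrass}, formula (\ref{multicramer})) and to Pfaffians, which is the mechanism behind the positivity of the hyperbolic polynomials $HU_{G,\bar v}$, $HV_{G,\bar v}$ in the non-commutative parametric representation; your Cauchy--Binet alternative, as you note yourself, covers only the symmetric $A=BB^{\mathsf T}$ case and would not serve that later purpose.
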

This theorem is a particular case of
a more general result that can compute any minor of a matrix
as a graphical sum over forests
and more \cite{Abdesselam}.

To deduce (\ref{symanzik}) from that theorem one defines $A_{ii}$ as the coordination of
the graph at vertex $i$ and $A_{ij}$ as $-l(ij)$ where $l(ij)$ is the number of lines from
vertex $i$c to vertex $j$. The line 1 and column 1 deleted correspond e.g. to fix
the first vertex 1 at the origin to break translation invariance.

We include now a proof of this Theorem using Grassmann variables
derived from \cite{Abdesselam}, because
this proof was essential for us to find the correct non commutative generalization
of the parametric representation. 
Recall that Grasmann variables anticommute
\be
\ch_i \ch_j +
\ch_j \ch_i = 0
\label{anti1}
\ee
hence in particular $\chi_i^2 = 0$, and that the Grassmann rules of integration are
\be
\int d\ch  = 0 \  ; \ 
\int \ch_d\ch  = 1.
\label{anti2}
\ee
Therefore we have:
\begin{lemma}
Consider a set of $2n$ independent Grasmann variables 
\be
{\Br\psi}_1, ... {\Br\psi}_n,  \psi_1 , ... , \psi_n
\ee
and the integration measure
\be
{\rm d}{\Br\psi} {\rm d}\psi =
{\rm d}{\Br\psi}_1, ... {\rm d}{\Br\psi}_n,  {\rm d}\psi_1 , ... , {\rm d}\psi_n
\ee
The bar is there for convenience, but it is not complex conjugation. Prove that
for any matrix $A$,
\be
\det A =
\int {\rm d}{\Br\psi} {\rm d}\psi
e^{-{\Br\psi}A\psi} \ .
\ee

More generally,
if $p$ is an integer $0\le p\le m$, and
$I=\{i_1,\ldots,i_p\}$,
$J=\{j_1,\ldots,j_p\}$ are two ordered subsets with $p$ elements
$i_1<\cdots<i_p$ and
$j_1<\cdots<j_p$, if also $A^{I,J}$
denotes the $(n-p)\times(n-p)$ matrix obtained
by erasing the rows of $A$ with index in $I$ and
the columns of $A$ with index in $J$, then
\be
\int {\rm d}{\Br\psi} {\rm d}\psi 
\ (\psi_J {\Br \psi}_I)
e^{-{\Br\psi}A\psi}
=
(-1)^{\Si I+\Si J}{\rm det}(A^{I,J})
\label{multicramer}
\ee
where $(\psi_J {\Br \psi}_I)\eqdef
\psi_{j_1}{\Br\psi}_{i_1}
\psi_{j_2}{\Br\psi}_{i_2}\ldots
\psi_{j_p}{\Br\psi}_{i_p}$,
$\Si I\eqdef i_1+\cdots+i_p$ and likewise for $\Si J$.
\label{exercisegrass}
\end{lemma}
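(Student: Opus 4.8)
The plan is to prove the Grassmann Gaussian integral formula first and then obtain the minor identity \eqref{multicramer} as a corollary. For the basic identity $\det A = \int {\rm d}{\Br\psi}\,{\rm d}\psi\, e^{-{\Br\psi}A\psi}$, I would expand the exponential. Since ${\Br\psi}A\psi = \sum_{i,j} {\Br\psi}_i A_{ij} \psi_j$ is a sum of commuting nilpotent monomials, the exponential series terminates: $e^{-{\Br\psi}A\psi} = \sum_{k} \frac{(-1)^k}{k!}\left(\sum_{i,j}{\Br\psi}_i A_{ij}\psi_j\right)^k$. The only term that survives the integration $\int {\rm d}{\Br\psi}\,{\rm d}\psi$ is the one containing each of ${\Br\psi}_1,\dots,{\Br\psi}_n,\psi_1,\dots,\psi_n$ exactly once, i.e. the $k=n$ term, and within it only those monomials where the $i$-indices are a permutation of $\{1,\dots,n\}$ and likewise the $j$-indices. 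Reordering the factors ${\Br\psi}_{i_1}\psi_{j_1}\cdots{\Br\psi}_{i_n}\psi_{j_n}$ to the canonical order ${\Br\psi}_1\psi_1\cdots{\Br\psi}_n\psi_n$ produces a sign; I would track this sign and check it equals $\mathrm{sgn}(\sigma)$ where $\sigma$ is the permutation relating the two index sequences, so that the sum over contributing monomials reproduces exactly $\sum_\sigma \mathrm{sgn}(\sigma)\prod_i A_{i\sigma(i)} = \det A$. The combinatorial factor $n!/n! = 1$ from the multinomial expansion takes care of overcounting.

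For the more general formula \eqref{multicramer}, I would insert the extra Grassmann monomial $\psi_J{\Br\psi}_I = \psi_{j_1}{\Br\psi}_{i_1}\cdots\psi_{j_p}{\Br\psi}_{i_p}$ into the integral. Because $\psi_{j_a}^2 = {\Br\psi}_{i_a}^2 = 0$, the surviving terms from the exponential must now supply each of the remaining variables ${\Br\psi}_i$ for $i\notin I$ and $\psi_j$ for $j\notin J$ exactly once; that is, only the $k = n-p$ term of the series contributes, and its $i$-indices range over the complement of $I$, its $j$-indices over the complement of $J$. Collecting these gives $\det(A^{I,J})$ up to a sign. The sign bookkeeping is the delicate part: one must move the block $\psi_J{\Br\psi}_I$ past the product coming from the exponential and reorder everything into the canonical measure order; the net sign is $(-1)^{\Si I + \Si J}$, which I would verify by first handling the case $I=J=\{1,\dots,p\}$ (where one factors out ${\Br\psi}_1\psi_1\cdots{\Br\psi}_p\psi_p$ cleanly) and then reducing the general case to it by transpositions, each transposition in $I$ contributing to $\Si I \bmod 2$ and similarly for $J$.

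Finally, to deduce the Tree Matrix Theorem \ref{treematrix} itself from the $p=1$ case $\det A^{11} = \int {\rm d}{\Br\psi}\,{\rm d}\psi\, \psi_1{\Br\psi}_1\, e^{-{\Br\psi}A\psi}$, I would use the hypothesis \eqref{sumnulle}, $\sum_i A_{ij} = 0$, to rewrite $A_{jj} = -\sum_{i\neq j} A_{ij}$ and substitute into ${\Br\psi}A\psi = \sum_j\sum_{i\neq j} A_{ij}({\Br\psi}_i - {\Br\psi}_j)\psi_j$. Expanding the exponential and integrating with the $\psi_1{\Br\psi}_1$ insertion, each surviving term is a product of $n-1$ factors $A_{ij}({\Br\psi}_i-{\Br\psi}_j)\psi_j$ indexed by a function $j\mapsto i(j)$ on $\{2,\dots,n\}$, i.e. a functional graph with one edge out of each non-root vertex; the Grassmann integral is nonzero precisely when this graph is a tree directed toward the root $1$ (otherwise a cycle forces a repeated $\psi$ or the root-deletion kills it), in which case it evaluates to $+1$. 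This is the combinatorial heart, and I expect the main obstacle to be precisely making the "cycles give zero, trees give $+1$" claim rigorous: one has to argue that in any functional graph containing a cycle the corresponding Grassmann monomial vanishes (the telescoping ${\Br\psi}_i - {\Br\psi}_j$ differences around a cycle produce a linearly dependent, hence nilpotent-collapsing, set of factors), and that for a genuine rooted tree the orientation and the ${\Br\psi}_1\psi_1$ at the root make all signs $+1$. Once that lemma is in hand the sum over trees $\sum_T\prod_{\ell\in T} A_{i_\ell j_\ell}$ drops out directly, completing the proof.
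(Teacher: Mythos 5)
Your proposal is correct and is the standard solution: the paper itself states this lemma as an exercise (no proof is given there), and the canonical route is exactly yours — expand $e^{-{\Br\psi}A\psi}$, observe that only the order-$n$ (resp. order-$(n-p)$ after inserting $\psi_J{\Br\psi}_I$) term saturates the integration, and track the reordering signs, for which your plan of first treating $I=J=\{1,\dots,p\}$ and then moving indices one step at a time (each step flipping both the sign and the parity of $\Si I$ or $\Si J$) does close the only schematic point. Your concluding deduction of the Tree Matrix Theorem from the $p=1$ case reproduces the paper's own argument (rewriting ${\Br\psi}A\psi=\sum_{i,j}A_{ij}({\Br\psi}_i-{\Br\psi}_j)\psi_j$ via $\sum_i A_{ij}=0$, with cycles killed by the telescoping of the differences and the Pauli principle, and only trees directed away from the root surviving with coefficient $1$), so no divergence from the paper there.
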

We return now to

\noindent{\bf Proof of Theorem \ref{treematrix}:}
We use Grassmann variables to write the determinant of a matrix with one line and one raw deleted
as a Grassmann integral with two corresponding sources:
\be
\det A^{11}=
\int ({\rm d}{\Br\psi} {\rm d}\psi)
\ (\psi_1 {\Br \psi}_1)
e^{-{\Br\psi}A\psi}
\ee
The trick is to use (\ref{sumnulle}) to write
\be
{\Br \psi}A\psi=
\sum_{i,j=1}^n
({\Br\psi}_i-{\Br\psi}_j)A_{ij}\psi_j
\ee
Let, for any $j$, $1\le j\le n$, $B_j\eqdef\sum_{i=1}^n A_{ij}$, one
then obtains by Lemma \ref{exercisegrass}:
\be
\det A^{11} =
\int {\rm d}{\Br\psi} {\rm d}\psi
\ (\psi_1 {\Br \psi}_1)
\exp\lp
-\sum_{i,j=1}^n A_{ij}({\Br\psi}_i-{\Br\psi}_j)\psi_j
\rp
\ee
\be
=
\int {\rm d}{\Br\psi} {\rm d}\psi
\ (\psi_1 {\Br \psi}_1)
\left[
\prod_{i,j=1}^n
\lp 1-A_{ij}({\Br\psi}_i-{\Br\psi}_j)\psi_j \rp
\right]
\ee
by the Pauli exclusion principle. We now expand to get
\be
\det A^{11} =
\sum_{\cG}
\lp
\prod_{\ell=(i,j)\in\cG}(-A_{ij})
\rp
\Om_{\cG}
\ee
where $\cG$ is {\em any} subset of $[n]\times[n]$, and we used the notation
\be
\Om_{\cG}\eqdef
\int {\rm d}{\Br\psi} {\rm d}\psi
\ (\psi_1 {\Br \psi}_1)
\lp
\prod_{(i,j)\in\cG}
\left[ ({\Br\psi}_i-{\Br\psi}_j)\psi_j \right]
\rp
\ee

The theorem will now follow from the following
\begin{lemma} 
$\Om_{\cG}=0$
unless the graph $\cG$
is a tree directed away from 1 in which case
$\Om_{\cG}=1$.
\end{lemma}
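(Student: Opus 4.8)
The strategy is to analyze the Grassmann integral $\Om_{\cG}$ term by term and show it vanishes unless $\cG$ is a spanning tree directed away from the root $1$. The key structural fact is that $\Om_{\cG}$ is a Grassmann integral over $2n$ variables $\Br\psi_i,\psi_j$ with the integrand being a product of $2n$ Grassmann factors: the special factor $\psi_1\Br\psi_1$, together with the factors $(\Br\psi_i - \Br\psi_j)\psi_j$ for each edge $(i,j)\in\cG$. Since each factor contributes degree $2$ and we need total degree exactly $2n$ (one $d\psi$ and one $d\Br\psi$ for each index), the integral vanishes immediately unless $\cG$ has exactly $n-1$ edges.

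First I would record the vanishing conditions coming from the Pauli principle (i.e. $\chi_i^2=0$). Expanding $\prod_{(i,j)\in\cG}[(\Br\psi_i - \Br\psi_j)\psi_j]$, each edge $(i,j)$ forces a $\psi_j$ into the monomial; if two edges of $\cG$ share the same target $j$, we get $\psi_j^2=0$. Hence every vertex $j\neq 1$ must be the target of \emph{at most} one edge, and vertex $1$ must be the target of \emph{no} edge (because $\psi_1$ is already used by the source factor $\psi_1\Br\psi_1$). Combined with the counting $|\cG|=n-1$, this means: vertices $2,\dots,n$ each have exactly one incoming edge, vertex $1$ has none. This is precisely the out-degree sequence forcing $\cG$ (if connected) to be a tree rooted at $1$ with edges directed away from $1$ — but I still must rule out the disconnected case, where $\cG$ could be a union of a tree containing $1$ and some cycles among the other vertices.

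Second, I would kill the cyclic (disconnected) configurations. For each edge $(i,j)$ we must choose \emph{either} $\Br\psi_i$ \emph{or} $-\Br\psi_j$ from the factor $(\Br\psi_i - \Br\psi_j)$; since $\psi_1\Br\psi_1$ already consumes $\Br\psi_1$, none of the chosen $\Br\psi$'s may equal $\Br\psi_1$. Now view $\cG$ with its functional structure $j\mapsto i(j)$ (parent map). Following the parent map from any vertex either reaches $1$ or cycles. If $\cG$ contains a cycle $C = (j_1\to j_2\to\cdots\to j_k\to j_1)$, then on the $k$ factors $(\Br\psi_{j_{m+1}} - \Br\psi_{j_m})$ (indices mod $k$) we must pick $k$ distinct $\Br\psi$'s from the $k$-element pool $\{\Br\psi_{j_1},\dots,\Br\psi_{j_k}\}$ — i.e. a bijection — but the ``identity choice'' $\Br\psi_{j_{m+1}}\mapsto\Br\psi_{j_{m+1}}$ is excluded on each factor (that would be choosing the $-\Br\psi_{j_m}$ term's complement... more precisely, on factor $m$ the two options give $\Br\psi_{j_{m+1}}$ or $\Br\psi_{j_m}$), so the only surviving choices form a derangement-type permutation, and a short argument shows the sum over these choices telescopes to zero (equivalently: the determinant of the incidence-type $k\times k$ matrix with row sums zero vanishes). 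This is the heart of the matter and the step I expect to be the main obstacle — making the cancellation on each cycle clean and rigorous rather than hand-wavy.

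Finally, once cycles are excluded, $\cG$ must be a spanning tree directed away from $1$. For such a tree I would evaluate $\Om_{\cG}=1$ directly: order the edges so that each vertex $j\neq 1$ appears as a target exactly once, and on the factor $(\Br\psi_{i(j)} - \Br\psi_j)\psi_j$ note that the $-\Br\psi_j\psi_j$ choice is the only one that ultimately produces a nonzero monomial (any repeated choice of a $\Br\psi$ reachable higher in the tree dies by $\chi^2=0$, processed leaf-to-root). Collecting the surviving monomial $\psi_1\Br\psi_1\prod_{j\neq 1}(-\Br\psi_j\psi_j) = \pm\prod_i \psi_i\Br\psi_i$ and comparing signs with the measure convention $\int \Br\psi_d\psi\, (\chi_d\chi)=1$ from (\ref{anti2}) gives exactly $+1$, using Lemma \ref{exercisegrass} (or a direct reordering count). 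This yields $\det A^{11}=\sum_{T}\prod_{\ell=(i,j)\in T}(-A_{ij})$, and since $\cG$ a tree directed away from root $1$ has $-A_{i_\ell j_\ell}=+A_{i_\ell j_\ell}$ by the definition $A_{ij}=-l(ij)$ for $i\neq j$, we recover the statement of Theorem \ref{treematrix}. $\qed$
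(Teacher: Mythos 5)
Your overall strategy is sound and reaches the correct conclusion, and the decomposition (degree counting $\Rightarrow |\cG|=n-1$; Pauli on the $\psi_j$'s $\Rightarrow$ vertex $1$ has no incoming edge and every other vertex exactly one, i.e.\ a functional-graph structure; then kill directed cycles; then evaluate on trees) is a legitimate variant of the paper's argument. The paper instead kills \emph{undirected} cycles right away by a telescoping identity $\Br\psi_{\tau(k)}-\Br\psi_{\tau(1)}=\sum_{m}(\Br\psi_{\tau(m)}-\Br\psi_{\tau(m-1)})$, observing that each term then repeats a factor and dies by Pauli exclusion; only afterwards does it use the $\psi$-saturation and connectivity to force a spanning tree directed away from $1$. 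So the two proofs differ mainly in \emph{when} the cycle-killing happens (before vs.\ after pinning down the in-degree structure) and \emph{how} (telescoping vs.\ your determinant observation). Both are correct; yours is a bit more explicitly combinatorial, the paper's a bit slicker.

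The one place your write-up genuinely wobbles is exactly where you predicted. Your description of the surviving choices on a directed cycle as a ``derangement-type permutation'' is wrong, and ``telescopes to zero'' is not what happens. After forcing the cycle edges to pick a bijection onto $\{\Br\psi_{j_1},\dots,\Br\psi_{j_k}\}$, the constraint that on factor $m$ one may only pick $\Br\psi_{j_{m+1}}$ or $\Br\psi_{j_m}$ leaves \emph{exactly two} surviving permutations: the identity (every factor picks $-\Br\psi_{j_m}$) and the full cyclic shift (every factor picks $\Br\psi_{j_{m+1}}$) --- any fixed point propagates around the cycle. Only the shift is a derangement; the identity certainly is not, and no sum ``telescopes.'' Fortunately the parenthetical you added is the correct and sufficient justification: writing the barred part of the $k$ cycle factors as $\prod_m L_m$ with $L_m=\Br\psi_{j_{m+1}}-\Br\psi_{j_m}$, one has $\prod_m L_m=(\det M)\,\Br\psi_{j_1}\cdots\Br\psi_{j_k}$ where $M_{m,m}=-1$, $M_{m,m+1}=+1$ and the rest vanish; each row of $M$ sums to zero, so $\det M=0$, and the cycle contribution dies. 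I would promote that parenthetical to the main argument and drop the permutation/derangement framing. Also a terminology slip: what forces the tree is the \emph{in}-degree sequence (in-degree $1$ at every $j\neq 1$, in-degree $0$ at $1$), not the out-degree sequence.
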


\noindent{\bf Proof:}
Trivially, if $(i,i)$ belongs to $\cG$, then the integrand of
$\Om_{\cG}$ contains a factor ${\Br\psi}_i-{\Br\psi}_i=0$ and
therefore $\Om_{\cG}$ vanishes. 

But the crucial observation is that if 
there is a loop in $\cG$ then again $\Om_{\cG}=0$.
This is because then the integrand of $\Om_{\cF,\cR}$ contains the factor
\be
{\Br\psi}_{\ta(k)}-{\Br\psi}_{\ta(1)}=
({\Br\psi}_{\ta(k)}-{\Br\psi}_{\ta(k-1)})+\cdots+
({\Br\psi}_{\ta(2)}-{\Br\psi}_{\ta(1)})
\ee
Now, upon inserting this telescoping expansion of the factor
${\Br\psi}_{\ta(k)}-{\Br\psi}_{\ta(1)}$ into the integrand of 
$\Om_{\cF,\cR}$, the latter breaks into a sum of $(k-1)$ products.
For each of these products, there exists an $\al\in\ZZ/k\ZZ$
such that the factor $({\Br\psi}_{\ta(\al)}-{\Br\psi}_{\ta(\al-1)})$
appears {\em twice} : once with the $+$ sign from the telescopic
expansion of $({\Br\psi}_{\ta(k)}-{\Br\psi}_{\ta(1)})$, and once more
with a $+$ (resp. $-$) sign if $(\ta(\al),\ta(\al-1))$
(resp. $(\ta(\al-1),\ta(\al))$) belongs to $\cF$.
Again, the Pauli exclusion principle entails that $\Om_{\cG}=0$.

Now  every connected component of $\cG$ must contain 
1, otherwise there is no way to saturate the $d\psi_1$ integration.

This means that $\cG$ has to be a directed tree on $\{1,... n\}$.
It remains only to see now that $\cG$ has to be directed away from 1,
which is not too difficult.
\endproof

Now Theorem \ref{treematrix} follows immediately.
\endproof

\subsection{Non-commutative hyperbolic polynomials, the non-covariant case}

\label{hyperbo}

Since the Mehler kernel is still quadratic in position space it is possible
to also integrate explicitly all positions to reduce Feynman amplitudes
of e.g. non-commutative $\Phi^{\star 4}_4$ purely to parametric formulas, but of course
the analogs of Symanzik polynomials are now hyperbolic polynomials which encode 
the richer information about ribbon graphs. These polynomials
were first computed in \cite{gurauhypersyman}
in the case of the non-covariant vulcanized $\Phi^{\star 4}_4$ theory. The computation
relies essentially on a Grassmann variable analysis of Pfaffians 
which generalizes the tree matrix theorem of the previous section.

Defining the antisymmetric matrix $\sigma$ as
\begin{align}
\sigma=&\begin{pmatrix} \sigma_2 & 0 \\ 0 & \sigma_2 \end{pmatrix} \mbox{ with}\\
\sigma_2=&\begin{pmatrix} 0 & -i \\ i & 0 \end{pmatrix}
\end{align}
\noi
the $\delta-$functions appearing in the vertex contribution can be
rewritten as an integral over some new variables $p_V$. We refer to these
variables as to {\it hypermomenta}. Note that one associates such
a hypermomenta $p_V$ to any vertex $V$ {\it via} the relation
\begin{align}
\label{pbar1}
\delta(x_1^V -x_2^V+x_3^V-x_4^V ) =& \int  \frac{d p'_V}{(2 \pi)^4}
e^{ip'_V(x_1^V-x_2^V+x_3^V-x_4^V)}\nonumber\\
=&\int  \frac{d p_V}{(2 \pi)^4}
e^{p_V \sigma (x_1^V-x_2^V+x_3^V-x_4^V)} \ .
\end{align}
\noi

Consider a particular ribbon graph $G$.
Specializing to dimension 4 and choosing a particular root vertex $\bar{V}$ of the graph,
one can write the Feynman amplitude for $G$ in the condensed way
\begin{align}
\label{a-condens}
{\cal A}_G =& \int \prod_{\ell}\big[\frac{1-t_{\ell}^2}{t_{\ell}}\big]^{2} d\alpha_{\ell} \int d x d p e^{-\frac
{\Omega}{2} X G X^t}
\end{align}
where $t_\ell = \tanh\frac{\alpha_\ell}{2}$, 
$X$ summarizes all positions and hypermomenta and $G$ is a certain quadratic form. If we call 
$x_e$ and $p_{\bar{V}}$ the external variables we can decompose $G$ according to an internal 
quadratic form $Q$, an external one $M$ and a coupling part $P$ so that 
\begin{align}
\label{defX}
X =& \begin{pmatrix}
x_e & p_{\bar{V}} & u & v & p\\
\end{pmatrix} \ \ , \ \  G= \begin{pmatrix} M & P \\ P^{t} & Q \\
\end{pmatrix}\ ,
\end{align}
Performing the gaussian integration over all internal variables one obtains:
\begin{align}
\label{aQ}
{\cal A}_G  =& \int \big[\frac{1-t^2}{t}\big]^{2} d\alpha\frac{1}{\sqrt{\det Q}}
e^{-\frac{\Ot}{2}
\begin{pmatrix} x_e & \bar{p} \\
\end{pmatrix} [M-P Q^{-1}P^{t}]
\begin{pmatrix} x_e \\ \bar{p} \\
\end{pmatrix}}\ .
\end{align}
\noi
This form allows to define the polynomials $HU_{G, \bar{v}}$ and $HV_{G, \bar{v}}$, analogs 
of the Symanzik polynomials $U$ and $V$ of the commutative case (see \eqref{symanzik}). 
They are defined by
\begin{align}
\label{polv}
{\cal A}_{{\bar V}}  (\{x_e\},\;  p_{\bar v}) =& K'  \int_{0}^{\infty} \prod_l  [ d\alpha_l (1-t_l^2)^{2} ]
HU_{G, \bar{v}} ( t )^{-2}   
e^{-  \frac {HV_{G, \bar{v}} ( t , x_e , p_{\bar v})}{HU_{G, \bar{v}} ( t )}}.
\end{align}
\noi
They are polynomials in the set of variables $t_\ell$ ($\ell =1,\ldots, L$), the hyperbolic tangent of the
half-angle of the parameters $\alpha_\ell$.

Using now \eqref{aQ} and \eqref{polv} the polynomial $HU_{G,\bar{v}}$ writes
\begin{align}
\label{hugvq}
HU_{\bar{v}}=&(\det Q)^\frac14 \prod_{\ell=1}^L t_\ell
\end{align}

The main results (\cite{gurauhypersyman}) are

\begin{itemize}
 \item The polynomials $HU_{G, \bar{v}}$ and $HV_{G, \bar{v}}$ 
have a strong \emph{positivity property}. Roughly speaking they are sums of monomials
with positive integer coefficients. This positive integer property 
comes from the fact that each such coefficient is the square of a Pfaffian 
with integer entries,

\item {Leading terms} can be identified in a given ``Hepp sector'', 
at least for \textit{orientable graphs}.
A Hepp sector is a complete ordering of the $t$ parameters.
These leading terms which can be shown strictly positive in $HU_{G, \bar{v}}$ correspond to super-trees
which are the disjoint union of a tree in the direct graph and a tree in the dual graph.
Hypertrees in a graph with $n$ vertices and $F$ faces have therefore $n+F-2$ lines.
(Any connected graph has hypertrees, and under reduction of the hypertree, the graph becomes
a hyperrosette). Similarly one can identify ``super-two-trees'' $HV_{G, \bar{v}}$ 
which govern the leading behavior of $HV_{G, \bar{v}}$ in any Hepp sector.
\end{itemize}

From the second property, one can deduce the \textit{exact power counting} of any orientable
ribbon graph of the theory, just as in the matrix base.

Let us now borrow from \cite{gurauhypersyman} some examples of these hyperbolic polynomials. 
We put $s =(4\theta\Omega)^{-1}$.
\begin{figure}[htb] 
  \centering
  \includegraphics{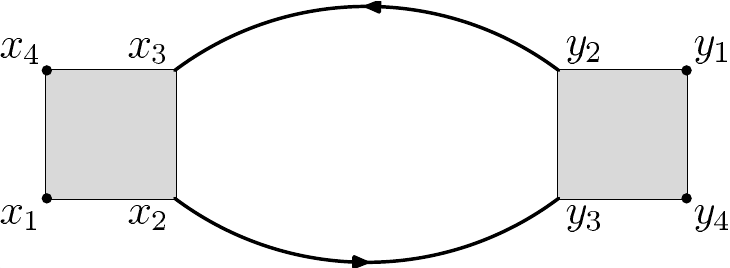}
  \caption{The bubble graph}
  \label{figex1}
\end{figure}
For the bubble graph of figure \ref{figex1}:
\begin{align}
HU_{G,v}=&(1+4s^2)(t_1+t_2+t_1^2t_2+t_1t_2^2)\,,\nonumber\\
HV_{G,v}=&t_2^2\Big{[}p_2+2s(x_4-x_1)\Big{]}^2+t_1t_2\Big{[}2p_2^2+(1+16s^4)(x_1-x_4)^2
                \Big{]}\,,\nonumber\\
                &+t_1^2\Big{[}p_2+2s(x_1-x_4)\Big{]}^2\nonumber\\
\end{align}

For the sunshine graph fig.~\ref{figex2}:
\begin{figure}[htb] 
  \centering
  \includegraphics{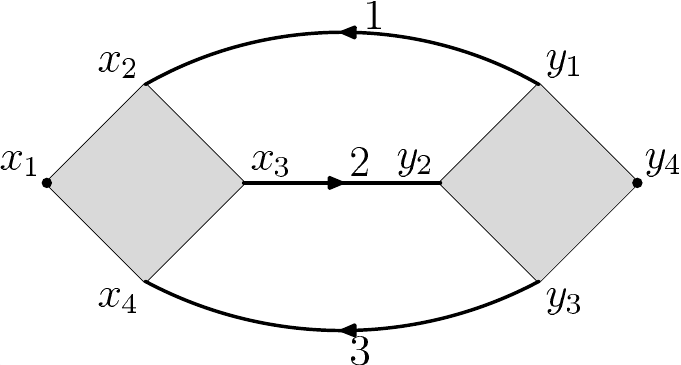}
  \caption{The Sunshine graph}
  \label{figex2}
\end{figure}
\begin{align}
  HU_{G,v}=&\Big{[} t_1t_2+t_1t_3+t_2t_3+t_1^2t_2t_3+t_1t_2^2t_3+t_1t_2t_3^2\Big{]}
  (1+8s^2+16s^4)\nonumber\\
  &+16s^2(t_2^2+t_1^2t_3^2)\, ,\nonumber\\
\end{align}

For the non-planar sunshine graph (see fig.~\ref{figex3}) we have:
\begin{figure}[htb] 
  \centering
  \includegraphics{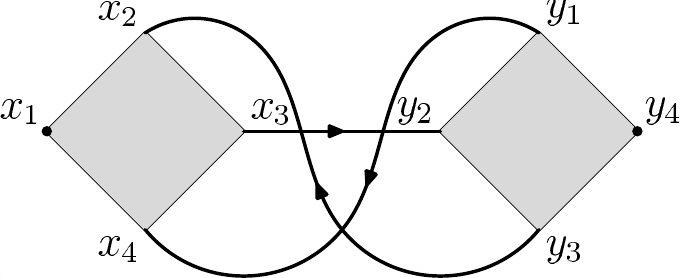}
  \caption{The non-planar sunshine graph}
  \label{figex3}
\end{figure}
\begin{align}
  HU_{G,v}=&\Big{[} t_1t_2+t_1t_3+t_2t_3+t_1^2t_2t_3+t_1t_2^2t_3+t_1t_2t_3^2\Big{]}
  (1+8s^2+16s^4)\nonumber\\
  &+4s^2\Big{[}1+t_1^2+t_2^2+t_1^2t_2^2+t_3^2+t_1^2t_3^2+t_2^2t_3^2+
  t_1^2t_2^2t_3^2\Big{]}\,,\nonumber
\end{align}
We note the improvement in the genus with respect to its planar counterparts.

For the broken bubble graph (see fig. \ref{figex4}) we have:
\begin{figure}[htb] 
  \centering
  \includegraphics{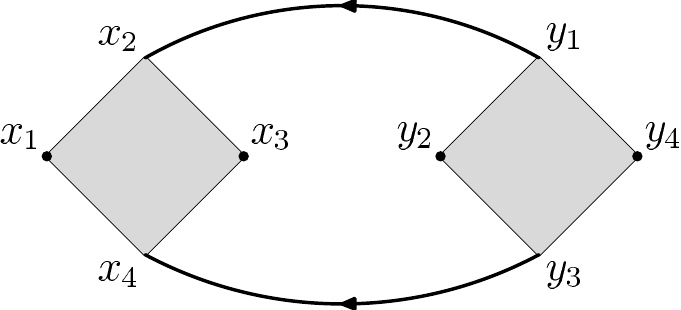}
  \caption{The broken bubble graph}
  \label{figex4}
\end{figure}
\begin{align}
  HU_{G,v}=&(1+4s^2)(t_1+t_2+t_1^2t_2+t_1t_2^2)\,,\nonumber\\
  HV_{G,v}=& t_2^2 \Big{[}4s^2(x_1+y_2)^2+(p_2-2s(x_3+y_4))^2\Big{]}+t_1^2\Big{[}p_2
  +2s(x_3-y_4) \Big{]}^2\,,\nonumber\\
  &+t_1t_2\Big{[}8s^2y_2^2+2(p_2-2sy_4)^2+(x_1+x_3)^2+16s^4(x_1-x_3)^2\Big{]}\nonumber\\
  &+t_1^2t_2^24s^2(x_1-y_2)^2\,,\nonumber
\end{align}
Note that $HU_{G,v}$ is identical to the one of the bubble with only one broken face. 
The power counting improvement comes from the broken face and can be seen only in $HV_{G,v}$. 

\begin{figure}[htb] 
  \centering
  \includegraphics{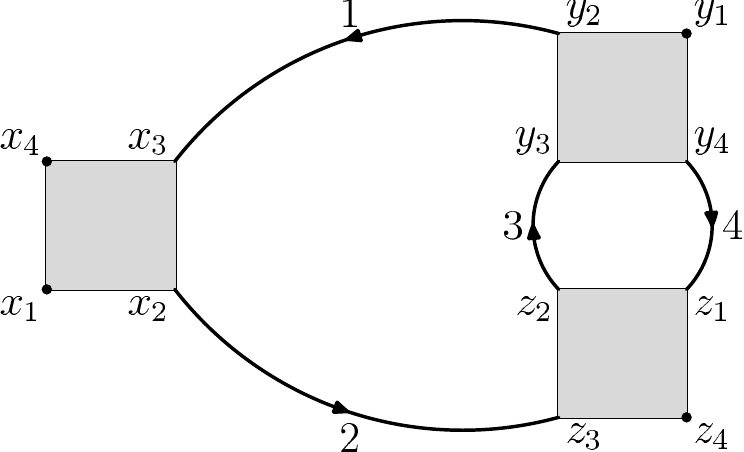}
  \caption{The half-eye graph}
  \label{figeye}
\end{figure}
Finally, for the half-eye graph (see Fig. \ref{figeye}), we start by defining:
\begin{align}
  A_{24}=&t_1t_3+t_1t_3t_2^2+t_1t_3t_4^2+t_1t_3t_2^2t_4^2\,.
\end{align}
The $HU_{G,v}$ polynomial with fixed hypermomentum corresponding to the vertex with two external 
legs is:
\begin{align}\label{hueye1}
  HU_{G,v_1}=&(A_{24}+A_{14}+A_{23}+A_{13}+A_{12})(1+8s^2+16s^4)\nonumber\\
  &+t_1t_2t_3t_4(8+16s^2+256s^4)+4t_1t_2t_3^2+4t_1t_2t_4^2\nonumber\\
  &+16s^2(t_3^2+t_2^2t_4^2+t_1^2t_4^2+t_1^2t_2^2t_3^2)\nonumber\\
  &+64s^4(t_1t_2t_3^2+t_1t_2t_4^2)\,,
\end{align}
whereas with another fixed hypermomentum we get:
\begin{align}
HU_{G,v_2}=&(A_{24}+A_{14}+A_{23}+A_{13}+A_{12})(1+8s^2+16s^4)\nonumber\\
&+t_1t_2t_3t_4(4+32s^2+64s^4)+32s^2t_1t_2t_3^2+32s^2t_1t_2t_4^2\nonumber\\
&+16s^2(t_3^2+t_1^2t_4^2+t_2^2t_4^2+t_1^2t_2^3t_3^2)\,.\label{hueye2}
\end{align}

Note that the leading terms are identical and the choice of the root perturbs only the non-leading ones. 
Moreover note the presence of the $t_3^2$ term. Its presence can be understood by the fact that in the 
sector $t_1,t_2,t_4>t_3$ the subgraph formed by the 
lines $1,2,4$ has two broken faces. This is the sign of 
a power counting improvement due to the additional broken face in that sector. 
To exploit it, we have just to integrate over the variables of line $3$
in that sector, using the second polynomial $HV_{G',v}$ for the triangle subgraph $G'$ 
made of lines $1,2,4$.

\subsection{Non-commutative hyperbolic polynomials, the covariant case}

In the \textit{covariant case} the diagonal coefficients on the long variables disappear
but there are new antisymmetric terms proportional to $\Omega$ due to 
the propagator oscillations. 

It is possible to reproduce easily the positivity theorem of the previous non-covariant
case, because we still have sums of squares of Pfaffians. But identifying the leading terms
of the polynomials under a rescaling associating to a subgraph is more difficult.
It is easy to see that for transcendental values of $\Omega$, the desired leading terms
cannot vanish because that would correspond to 
$\Omega$ being the root of a polynomial with integer coefficients. But 
power counting under a transcendentality condition is not very satisfying,
especially because continuous RG flows also necessarily cross non transcendental points.

But thanks to a slightly more difficult analysis inspired
by \cite{vignes-tourneret06:PhD} and which involve
a kind of new fourth Filk move, it is possible to prove that 
except again for some special cases of four point graphs with two 
broken faces, the power counting goes through at $\Omega <1$.

The corresponding analysis together with many examples 
are given in \cite{RivTan}. 

The covariant case at $\Omega =1$, also called the \emph{self-dual} covariant  case
 is very interesting, because it may be the most relevant for
the study of e.g. the quantum Hall effect. 
Apparently it corresponds to a very degenerate
non renormalizable situation because even the four point function
has non logarithmic divergences as can be seen easily in the 
matrix basis, where the propagator is now either $1/(2m+A)$ or $1/(2n+A)$
depending on the sign of the ``magnetic field" $\Omega$ . But 
there is a huge gauge invariance and we feel that the Ward identities of section
\ref{Noghost} should allow renormalization of the theory even in that case.

Let us also recall that the parametric representation 
can be used to derive the dimensional regularization of the
theory, hence perturbative quantum field theory
on non-integer-dimensional Moyal space, and the associated 
dimensional renormalization which may be useful
for renormalizing non commutative gauge theories
\cite{GurTan}.

\section{Conclusion}

Non-commutative QFT seemed initially to have non-renormalizable divergencies,
due to UV/IR mixing. But following the Grosse-Wulkenhaar breakthrough, 
there has been recent rapid progress in our understanding of renormalizable QFT on Moyal spaces.
We can already propose a preliminary classification of these models into 
different categories, according to the behavior of their propagators:
\begin{itemize}
\item ordinary models at $0 < \Omega < 1$ such as $\Phi^{\star 4}_4$ 
(which has non-orientable graphs) or $(\bar\phi\phi)^2$ models 
(which has none). Their
propagator, roughly $(p^2 + \Omega^2 \tilde x^2 + A)^{-1}$ is LS covariant and has good decay both in 
matrix space (\ref{th1}-\ref{thsummax}) and direct space (\ref{tanhyp}). They have non-logarithmic mass 
divergencies and definitely require ``vulcanization'' i.e. the $\Omega$ term.
\item self-dual models at $\Omega = 1$ in which the propagator is LS 
invariant.
Their propagator is even better. In the matrix base 
it is diagonal, e.g. of the form $G_{m,n}= (m+n + A)^{-1}$, where
$A$ is a constant. The supermodels seem generically 
ultraviolet fixed points of the ordinary models, at which non-trivial Ward identities
force the vanishing of the beta function. The flow of $\Omega$ to the $\Omega = 1$ fixed
point is very fast (exponentially fast in RG steps).
\item covariant models such as orientable versions of LSZ or Gross-Neveu (and presumably 
orientable gauge theories
of various kind: Yang-Mills, Chern-Simons...). They may have only logarithmic divergencies and 
apparently no perturbative UV/IR mixing. However the vulcanized version 
still appears the most generic framework for their treatment.
The propagator is then roughly $(p^2 + \Omega^2 \tilde x^2 + 2\Omega \tilde x \wedge p)^{-1}$.
In matrix space this propagator shows definitely a weaker decay (\ref{mainbound1})
than for the ordinary models, because of the presence of a non-trivial saddle point.
In direct space the propagator no longer decays with respect to the long variables, 
but only oscillates. Nevertheless the main lesson is that 
in matrix space the weaker decay can still be used; and in
$x$ space the oscillations can never be completely killed by the vertices
oscillations. Hence these models retain therefore essentially the power counting 
of the ordinary models, up to some nasty details concerning the
four-point subgraphs with two external faces.
Ultimately, thanks to a little conspiration in which the
four-point subgraphs with two external faces are renormalized by the mass renormalization,
the covariant models remain renormalizable. This is the main message of \cite{RenNCGN05,vignes-tourneret06:PhD}.
\item self-dual covariant models which are of the previous type but at $\Omega = 1$. 
Their propagator in the 
matrix base is diagonal and depends only on one index $m$
(e.g. always the left side of the ribbon). It is of the form $G_{m,n}= (m + A)^{-1}$.
In $x$ space the propagator oscillates in a way that often
exactly compensates the vertices oscillations. These models have definitely
worse power counting than in the ordinary case, with e.g. quadratically divergent 
four point-graphs (if sharp cut-offs are used). Nevertheless Ward identities
can presumably still be used to show that they can still be renormalized. This 
probably requires a much larger conspiration to generalize
the Ward identities of the supermodels.
\end{itemize}
Notice that the status of non-orientable covariant theories is not yet clarified.

Parametric representation can be derived in the non-commutative case. It implies 
hyperbolic generalizations of the Symanzik polynomials which condense the 
information about the rich topological structure of a ribbon graph.
Using this representation, dimensional regularization and dimensional renormalization 
should extend to the non-commutative framework.

Remark that trees, which are the building blocks of the Symanzik polynomials, are also 
at the heart of (commutative) constructive theory, whose philosophy
could be roughly summarized as ``You shall use trees\footnote{These trees
may be either true trees of the graphs in the Fermionic case or trees associated to 
cluster or Mayer expansions in the Bosonic case, but this distinction is not essential.}, 
but you shall \textit{not} develop their loops or else you shall diverge''.
It is quite natural to conjecture that hypertrees, which are the natural non-commutative
objects intrinsic to a ribbon graph, should play a key combinatoric role
in the yet to develop non-commutative constructive field theory.

In conclusion we have barely started to scratch the world of 
renormalizable QFT on non-commutative spaces.
The little we see through the narrow window now open
is extremely tantalizing. There {exists renormalizable NCQFTs} e.g. $\Phi^{\star 4}$ on ${\mathbb R}^4_\theta$, 
Gross-Neveu on ${\mathbb R}^2_\theta$ and they 
enjoy better properties than their commutative counterparts,
since they have no Landau ghosts. The constructive program looks \emph{easier}
on non commutative geometries than on commutative ones.
Non-commutative non relativistic field theories with a chemical potential
seem the right formalism for a study {ab initio} of various problems in 
presence of a magnetic field, and in particular of the quantum Hall effect.
The correct scaling and RG theory of this effect presumably requires to build
a very singular theory (of the self-dual covariant type) because of the huge
degeneracy of the Landau levels. To understand this theory and 
the gauge theories on non-commutative spaces
seem the most obvious challenges ahead of us.
An exciting possibility is that the
non-commutativity of space time which killed the Landau ghost
might be also a good substitute to supersymmetry  
by taming ultraviolet flows
without requiring any new particles.


\end{document}